\newtheorem{theorem}{Theorem}[section]
\newtheorem{lemma}[theorem]{Lemma}
\theoremstyle{definition}
\newtheorem{definition}[theorem]{Definition}
\theoremstyle{definition}
\theoremstyle{definition}
\newtheorem{remark}{Remark}
\theoremstyle{definition}
\newtheorem{assumption}{Assumption}
\newcommand{\mE}{\mathbb{E}}
\renewcommand{\epsilon}{\varepsilon}
\newcommand{\cA}{\mathcal{A}}
 \title{Consumption-investment decisions with endogenous reference point and drawdown constraint}
 \author{Zongxia Liang$^a$\thanks{Email: \texttt{liangzongxia@mail.tsinghua.edu.cn}}\ \ \ \ \  Xiaodong Luo$^a$\thanks{Corresponding author, email: \texttt{luoxd21@mails.tsinghua.edu.cn}}\ \ \ \ \  Fengyi Yuan$^a$\thanks{Email: \texttt{yfy19@mails.tsinghua.edu.cn}}
 }	
\affil{$^a$Department of Mathematical Sciences, Tsinghua University, China}
\numberwithin{equation}{section}
\begin{document}
\maketitle
\begin{abstract}
We propose a consumption-investment decision model where past consumption peak $h$ plays a crucial role. There are two important consumption levels: the lowest constrained level and a reference level, at which the risk aversion in terms of consumption rate is changed. We solve this stochastic control problem and derive the value function, optimal consumption plan, and optimal investment strategy in semi-explicit forms. We find five important thresholds of wealth, all as functions of $h$, and most of them are nonlinear functions. As can be seen from numerical results and theoretical analysis, this intuitive and simple model has significant economic implications, and there are at least three important predictions: the marginal propensity to consume out of wealth is generally decreasing but can be increasing for intermediate wealth levels, and it jumps inversely proportional to the risk aversion at the reference point; the implied relative risk aversion is roughly a smile in wealth; the welfare of the poor is more vulnerable to wealth shocks than the wealthy. Moreover, locally changing the risk aversion influences the optimal strategies globally, revealing some risk allocation behaviors.
\vskip 10 pt \noindent
2010
{\bf Mathematics Subject Classification:}  91B08, 91B42, 91G10, 93E20.
 \vskip 10pt  \noindent
{\bf Keywords:} Consumer behavior; Past consumption peak; Drawdown constraint; Endogenous reference point; Stochastic control.
\vskip 5pt\noindent
\end{abstract}
\section{Introduction}
It is intuitive that the historical peak of past consumption has great impact on individual consumption decision. For example, to consume below a certain ratio of the historical peak brings an impulsion to ``reclaim the past glory". If the consumption is forced to further decline to a level that is far below the historical peak, it becomes unbearable and people will try to increase their wealth (by financing, borrowing or selling illiquid assets) at any cost to satisfy consumption at a certain (but low) ratio of past peak. Therefore it is not surprising that there has been literature studying so-called drawdown constraint (\cite{Dybvig1995} and \cite{arun2012}).
\vskip 5pt
We develop a theoretical model based on dynamic portfolio choice theory to incorporate aforementioned psychological insights and study the effects of them on consumption and risky investment decisions. In our model, the process of {\it standard of living} represented by running maximum of consumption,  $h_t=h_0\vee\sup\limits_{0\leq s\leq t}\{c_s\}$, plays a central role. Here $c\triangleq\{c_s, s\geq 0\}$ is the process of consumption rate, and $h_0$ is the {\it inherited} running maximum level, which is usually determined by exogenous factors, say,
family fortunes. To be specific, { the } consumption is {\it constrained to be no less than} $\lambda h_t$, and the preference on consumption is assumed to be of the form:
\begin{equation*}
	U(c,h)=\left\{
	\begin{array}{l}
    \frac{1}{\beta_{1}}\left[1-e^{-\beta_{1}(c-\alpha h)}\right],\ \lambda h\le c<\alpha h,\\
    \frac{1}{\beta_{2}}\left[1-e^{-\beta_{2}(c-\alpha h)}\right],\hskip 10pt \alpha h\le c,
    \end{array}
	\right.
\end{equation*}
 where   $\lambda$ and $\alpha$ with $0<\lambda<\alpha<1$ are exogenous constants, representing two important thresholds of consumption level, and $ \beta_1 $ and $\beta_2 $ with  $\beta_1,\beta_2>0$ are absolute risk aversions.  We allow $\beta_1>\beta_2$, $\beta_1<\beta_2$ or $\beta_1=\beta_2$. 
 \vskip 5pt
 In our model, the utility is produced by the difference between the agent's instantaneous consumption and a reference point $\alpha h$. Besides, we also consider a change of risk aversions on different sides of the reference point. This reflects psychological effects when consumption rises above or falls below the reference point. In economical literature, changes of the individual risk aversions are discussed and empirical evidences have been found. On the one hand, based on both naturally
 occurring data and lab data, it is acknowledged that people become more risk averse when experiencing crisis or fear, see \cite{Cohn2015}, \cite{Guiso2018} and references therein. In this paper, choosing $\beta_1>\beta_2$, we model such crisis as the decline of consumption level. In this case, people are more risk averse in crisis ($c<\alpha h$). On the other hand, as an important part of the Nobel-wining prospect theory, loss aversion is an effect that people become risk seeking when the pay-offs fall below some certain reference point. When incorporating such extreme gambling behavior into consumption decision, somewhat extreme optimal decisions are derived: people never consume between 0 (or lowest constrained level) and the reference point, see \cite{VanBilsen2020} and \cite{Li2021} for examples. One possible reason is that consumption falling below certain reference level is not generally treated as loss, but rather as bad luck or temporary crisis. To incorporate gambling effect into consumption decision, we can conveniently choose $\beta_1<\beta_2$ in our model. In this case, people are willing to take more risk when their consumption is in danger ($c<\alpha h$). Our main interests are investigating consumption and portfolio behaviors under the aforementioned preference change at the reference point. The reference point itself, however, can be further generalized from the particular choice $\alpha h$. We just list main results of this generalization in Appendix \ref{genpre} because solution techniques we use are still applicable.
\vskip 5pt  
 It turns out that our simple risk-aversion-changing preference leads to consumption and investment decisions with significant economic implications.

\vskip 5pt
\begin{figure}
	\centering
	\includegraphics[width=\linewidth]{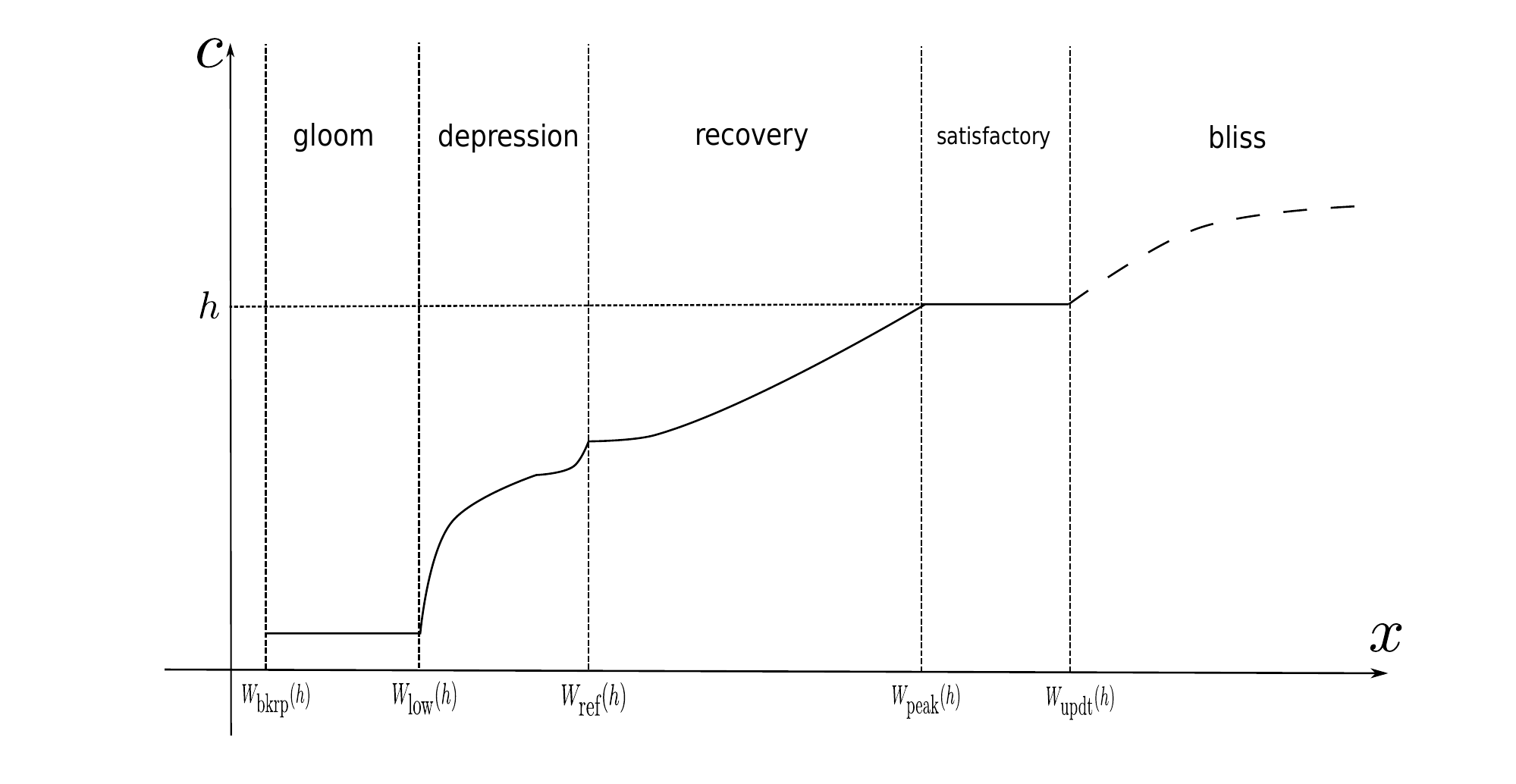}\\
	\includegraphics[width=\linewidth]{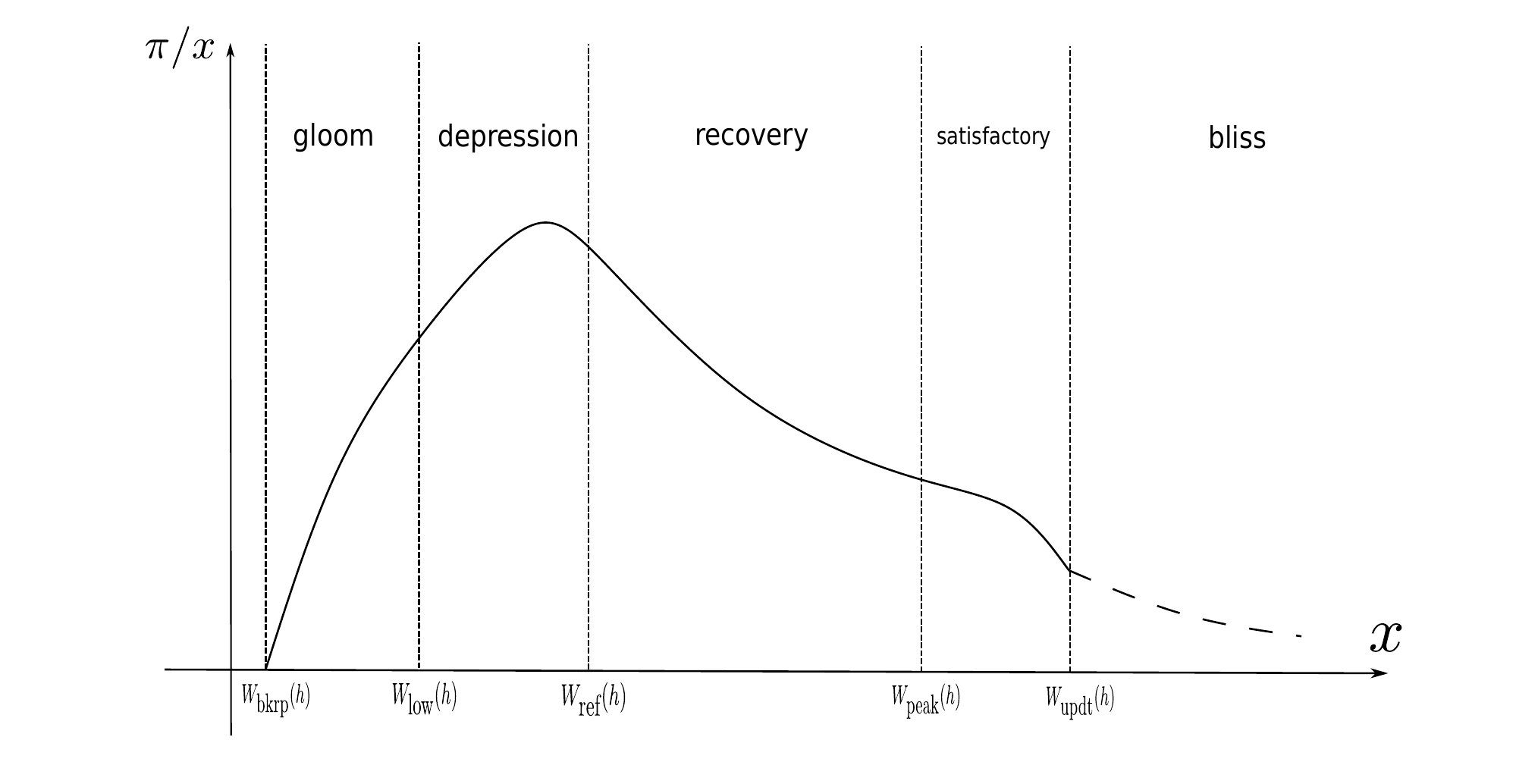}
	\caption{Optimal consumption (above) and optimal risky investment proportion (below) when fixing an $h$.}
	\label{appendixfigure}
\end{figure}

We find five important thresholds of wealth, all depending on $h$ and  denoted respectively by $W_{\rm bkrp}(h),W_{\rm low}(h),W_{\rm ref}(h),W_{\rm peak}(h)$ and $W_{\rm updt}(h)$, which are crucial to describe the derived consumption and investment decisions. For those with $x\geq W_{\rm updt}(h)$, i.e., the very wealthy ones, the best choice of consumption is to update the consumption peak in order to move to or maintain on the line $x=W_{\rm updt}(h)$. If $x<W_{\rm bkrp}(h)$, the wealth of the agent is insufficient for keeping a consumption rate $c\geq \lambda h$ considering his standard of living in the past and leads to bankruptcy, hence is not allowed in our discussion. The above results indicate that only those $x\in [W_{\rm bkrp}(h),W_{\rm updt}(h)]$ are of special interests, and they are said to be in {\it the effective region}. For the wealthiest and poorest people in effective region, the optimal consumption rate is clear: the wealthiest ($W_{\rm peak}(h)<x\leq W_{\rm updt}(h)$, named as satisfactory region) revisit the historical peak $c=h$, and the poorest ($W_{\rm bkrp}(h)\leq x\leq W_{\rm low}(h)$, named as gloom region) choose to consume at the lowest level $c=\lambda h$. As for the region of intermediate wealth $W_{\rm low}(h)<x\leq W_{\rm peak}(h)$, by the reference threshold $W_{\rm ref}(h)$, we divide it into two sub-regions: depression and recovery. We document two important phenomena. {\it First}, the (instantaneous) marginal propensity to consume (MPC) out of wealth $\frac{\partial c^*}{\partial x}$ is decreasing for people in lower part of depression-recovery region (most or all part of depression, sometimes plus part of recovery), which is consistent with empirical studies  (e.g. \cite{arrondel2015wealth}), and we predict that it is increasing for wealthier people, i.e., those in the rest of depression-recovery region. However, if the wealth level is $x=W_{\rm updt}(h)$, called bliss curve by us, the MPC out of wealth is then decreasing again. {\it Second}, the MPC out of wealth jumps by a fixed proportion $\beta_1/\beta_2$ at the threshold $W_{\rm ref}(h)$, at which the risk aversion is changed. This fact reveals one of salient features of our model and indicates that, generally, when getting wealthier, people have lower or higher MPC based on its particular type of risk attitude (gambling type or stop-loss type). See Figure \ref{appendixfigure} for a graphic illustration of optimal consumption decision.
\vskip 5pt
We also obtain the optimal investment strategy $\pi^*$, representing the {\it amount of wealth} invested in risky assets, and optimal investment proportion $\pi^*/x$ is obtained as a by-product. Recall that in classical Merton's problem, the optimal investment proportion is $\frac{\mu-r}{\sigma^2\gamma}$ if relative risk aversion is $\gamma$. Therefore, the inverse of $\pi^*/x$ can be treated equivalently as relative risk aversion, which we call implied relative risk aversion (see also \cite{Jeon2020}). Keeping this in mind, we find decreasing relative risk aversion (DRRA) and increasing relative risk aversion (IRRA) are both possible even for a single agent and the implied relative risk aversion is roughly a $U$-shaped curve (smile) in variable $x$ with trough around $W_{\rm ref}(h)$. Equivalently, risky investment proportion is a hump with peak around $W_{\rm ref}(h)$. We further predict that this effect is more pronounced for poorer people. In economic literature, there has been long standing debate on how relative risk aversion varies in wealth distribution, and evidence for both DRRA and IRRA are found (\cite{RES82} and \cite{Bellante2004}). We provide an explanation for this: the RRA can decrease in wealth because there is impulsion to get back to a higher consumption above $\alpha h$ if $x$ is not so large; the RRA can also increase in $x$ if $x$ has been enough for maintaining a satisfactory standard of living. As for the portfolio choice itself, although it is widely admitted that for {\it macro} data, the wealthier people tends to invest more proportion of their wealth in risky assets, there is no solid agreement on the same question in {\it micro} aspect. That is to say, what happens to risky investment if the wealth increases for {\it a given} household? Our model predicts that it is very likely that people proportionally reduce risky investment if their wealth grows, which is consistent with empirical studies (\cite{Brunnermeier2008} and \cite{Paya2016}) or other possible alternative models (\cite{Wachter2010}). Our model also allows opposite result, provided that the wealth is not enough, which provides explanations for co-existence of both phenomena in some literature such as \cite{Brunnermeier2008}.  See Figure \ref{appendixfigure} for a graphic illustration of optimal risky assets allocation.
\vskip 5pt
 The literature most closely related to the present paper is \cite{deng2021}. In the aspect of reference point, we adopt the setting of \cite{deng2021} and rely on their solving techniques as well as other celebrated tools such as dynamic programming, dual transformation and region-wise solving method. Our choice of the reference point is for simplicity. It turns out that the set of solving techniques we have used, inspired by \cite{deng2021}, is widely applicable to other forms of reference point, see Appendix \ref{genpre}. In addition to the reference point itself, our main interests include investigating consumption and portfolio behaviors under the preference change at the reference point. There are several distinctive features of our model, serving as complement to the one proposed and solved in \cite{deng2021}. First of all, the preference change leads to an upward or downward turn of optimal consumption at the threshold $W_{\rm ref}(h)$, instead of a relatively smooth curve in \cite{deng2021}. Moreover, we emphasize that the influences of preference change on optimal consumption choice are global. For example, adjusting the risk aversion from $\beta_2$ to $\beta_1(<\beta_2)$ in the region $x<W_{\rm ref}(h)$ even changes the consumption behavior in the region $x>W_{\rm ref}(h)$, revealing a kind of strategical risk allocation among different states of economics. In terms of risky investment proportion, the preference change results in significant increase or decrease with a wide range of wealth level. The preference change together with drawdown constraint leads to a totally different optimal investment proportion curve comparing with \cite{deng2021}. Another advantage of the present model is that considering preference change highlights the importance of the wealth threshold $W_{\rm ref}(h)$ that has been more or less neglected before. For example, the risky investment proportion attains its maximum around $W_{\rm ref}(h)$, and the value function has very different sensitivity to wealth shock on different sides of the curve $x=W_{\rm ref}(h)$. The detailed differences between our paper and \cite{deng2021}, as well as other implications of our model, will become clear in Section \ref{numana}.
\vskip 5pt
To sum up, the model studied in this paper has considered and combined three aspects of the economical and psychological background of the consumption and investment problem: (1) the running maximum of past consumption is taken as the habit formation process; (2) a drawdown constraint is imposed on consumption; (3) risk aversion is changing in the model and a reference point is added.  We have applied the solution method inspired by \cite{deng2021} as well as other celebrated tools. We have used dynamic programming principle to derive the HJB equation of the problem, solved the optimal consumption and portfolio strategy thanks to the duality method and numerically analyzed the solution and sensitivity of thresholds to certain parameters. From this simple and intuitive model, we have found several interesting economic implications such as MPC jump and RRA smile.
\vskip 5pt
The rest of the paper is organized as follows: Section \ref{modfor} is devoted to mathematically formulating the optimal consumption and investment problem focused in this paper. We deduce the HJB equation and obtain the feedback form of solution in dual form in Section \ref{dersol}. In Section \ref{veropt}, the verification theorem is established and the optimal strategy is obtained in primal form. Numerical analysis with fixed parameters are in Section \ref{numana}, while in Section \ref{sean} we present sensitivity analysis. Section \ref{concl} gives a brief conclusion. Technical proofs and some generalizations are presented in Appendices.
\vskip 5pt
{\bf Related literature.} Our model is based on dynamic consumption-investment decision model, whose classical form dates back to \cite{merton1969}. A wide range of literature extends this problem by taking habit formation into account, see \cite{pollak1970}, \cite{detemple1992} and \cite{chapman1998} for instance. The habit formation is usually modelled through habit formation process and habit formation preference. Habit formation process is a process whose value at time $t$ is determined by the consumption process up to time $t$. Habit formation preference suggests that the decision maker's utility is depending on both the consumption process and the habit formation process, which sketches how consumption habit affects current consumption behavior.
\vskip 5pt
One may just set the habit formation process as the average of the past consumption process. A more reasonable and flexible choice of the habit formation process is the so-called linear habit formation, i.e., the weighted average of the past consumption where more weight is placed on the recent consumption. Such form of habit formation process has dominated the research in habit formation setting since early literature such as \cite{ryder1973}. 
Recently, taking the running maximum process of past consumption as the habit formation process has opened another stream of research in the study of habit formation. The running maximum process is non-decreasing and only updates if the consumption level exceeds the historical running maximum, which brings about mathematical challenges because of the connection to singular control theory. We follow the running maximum habit formation model recently studied by \cite{guasoni2020} and \cite{deng2021}. However, our model takes into account more factors that may influence the decision maker's consumption and portfolio choice, including the constraint on consumption and the change of risk aversion.
\vskip 5pt
How utility depends on the consumption $c$ and habit $h$ is another topic in habit formation. The habit formation preference usually refers to the case that the utility is a function of $c-h$, which suggests that the habit has an inverse impact on the utility. A wide range of literature, \cite{chapman1998} for instance, adopts the above habit formation preference setting. A more flexible model, in \cite{deng2021}, requires the utility to depend on $c-\alpha h$ where $\alpha$ may vary in $[0,1]$. \cite{guasoni2020} uses a different approach by setting the utility to be a function of $\frac{c}{h^{\alpha}}$ where $\alpha\in(0,1)$. All the above settings insure the inverse impact of the habit. How consumption is constrained, on the other hand, is also important when studying habit-related model. Usually, the constraint imposes a lower bound on consumption. In the extreme case, the consumption is prohibited from falling below the habit (\cite{muraviev2011} and \cite{yu2015} for linear habit, \cite{Dybvig1995} and \cite{Jeon2020} for running maximum), which is termed as addictive habit formation. In other literature it is required instead that the consumption is at least a fraction of the habit, such as \cite{arun2012} and \cite{bahman2021}. In this paper, we also impose such drawdown constraint, in which we require $c\ge\lambda h$, where $\lambda\in(0,\alpha)$. We assume $\lambda<\alpha$ to ensure the validity of the reference point $\alpha h$ where risk aversion changes. For $\lambda=0$, our model reduces to a model with no drawdown constraint; for $\lambda=\alpha$, our model reduces to a model with no risk aversion change. It is worth mentioning that both \cite{arun2012} and \cite{bahman2021} obtain a threshold of wealth/habit ratio below which the agent chooses to consume the lowest. We derive similar phenomenon, but with a rather complicated threshold curve ($x=W_{\rm low}(h)$) of wealth-habit pair, instead of a simple ray. 
\vskip 5pt
The paradigm of our setting is utility with reference points, which has been widely studied in different economical problems, such as \cite{jin2008} and \cite{he2019}. Another common model under this paradigm is the S-shaped utility developed by D. Kahneman and A. Tversky.  \cite{Li2021} has studied such S-shaped utility in the context of running maximum habit formation. However, such non-concave utility results in an extreme optimal strategy where the decision maker never consumes between $0$ and the reference point. Instead, we analyze a utility with risk aversion change but in a concave form. As a result, we obtain a more reasonable optimal strategy where the optimal consumption varies from the lower bound $\lambda h$ to the running maximum $h$ in continuous values. Another related work is \cite{VanBilsen2020}, which permits the agent to be risk averse in loss domain (setting $\gamma_L>1$ therein). This setting is consistent with ours, but they model risk aversion change in terms of relative risk aversions.
\vskip 15pt

\section{Model Formulation}
\label{modfor}
	The financial market consists of one risk-free asset and one risky asset in our model. The risk-free asset $\{S^{0}_{t},  t\ge 0\}$ satisfies the dynamic
	
	\begin{equation*}
	{dS^{0}_{t}}={S^{0}_{t}}rdt,
	\end{equation*}
	where $r>0$ is the constant interest rate. The risky asset $\{S^{1}_{t},  t\ge 0\}$ satisfies
	
	\begin{equation*}
	{dS^{1}_{t}}={S^{1}_{t}}\left[\mu dt+\sigma dB_{t}\right],
	\end{equation*}
	where $\mu\ge r$ is the expected return and $\sigma>0$ is the volatility. The Brownian motion $B$ is defined on the filtered probability space $(\Omega,\mathcal{F},\{\mathcal{F}_{t}\}_{t\ge 0},\mathbb{P})$  satisfying the usual conditions.
\vskip 5pt	
The decision maker determines his dynamic spending rate $c\triangleq\{c_{t},  t\ge 0\}$ and the dynamic amount of his wealth invested in the risky asset denoted by $\pi\triangleq\{\pi_{t},  t\ge 0\}$. Let $x_{0}$ be the investor's initial wealth, then his wealth process $\{X_{t},  t\ge 0\}$  follows the following stochastic differential equation (abbr. SDE):
	\begin{equation}\label{wealproc}
	\left\{
	\begin{array}{l}
	dX_{t}=rX_{t}dt+\pi_{t}(\mu-r)dt+\pi_{t}\sigma dB_{t}-c_{t}dt,\\
	X_{0}=x_{0}.
	\end{array}
	\right.
	\end{equation}
\vskip 5pt
Given the consumption process $\{c_{t}, t\ge 0\}$, the running maximum process of past consumption is defined as  $h=\{h_{t}, t\ge 0\}$:
	\begin{equation}
	\label{maxproc}
	    h_{t}=h_{0}\vee \sup\limits_{s\le t}c_{s},\ t>0
	\end{equation}
	and $h_{0}$ is the inherited running maximum level. It is naturally required that the wealth level is always non-negative under any admissible strategy to avoid bankruptcy. Thus we now introduce the admissible strategies.
	\begin{definition}
		Process $(c,\pi)=\big\{(c_{t},\pi_{t}),\ t\ge 0\big\}$ is an admissible strategy of Problem (\ref{problem}) if it is adapted w.r.t $\{\mathcal{F}_{t}\}_{t\ge 0}$ with

		\begin{align*}
		&\int_{0}^{t}c_{s}ds<\infty,\ a.s.,\\
		&\int_{0}^{t}\pi_{s}^{2}ds<\infty,\ a.s.,\\
		&c_{t}\ge \lambda h_{t},\ a.s., \\
		&X_{t}>0,\ a.s.,
		\end{align*}
		
		for all $t\ge 0$,  where  $\{h_{t},  t\ge 0\}$ is the corresponding running maximum process given by (\ref{maxproc}) and $\{X_{t}, t\ge 0\}$ is the corresponding wealth process governed by SDE  (\ref{wealproc}). We denote by $\cA$ the set of admissible strategies.
	\end{definition}
		The goal of the decision maker is to maximize his expected total discounted utility on  infinite planning horizon $[0, \infty )$, which can be formulated as the optimization problem:
	\begin{equation}
	\label{problem}
	\sup\limits_{(c,\pi)\in\mathcal{A}}\mathbb{E}_{x_{0},h_{0}}\int_{0}^{\infty}e^{-\gamma t}U(c_{t},h_{t})dt,
	\end{equation}
	where $\gamma>0$ is the discount factor. The utility function $U(c,h)$ depends on both the consumption $c$ and the running maximum $h$. Precisely,
	\begin{equation*}
	U(c,h)=\left\{
	\begin{array}{l}
    \frac{1}{\beta_{1}}\left[1-e^{-\beta_{1}(c-\alpha h)}\right],\ \lambda h\le c<\alpha h,\\
    \frac{1}{\beta_{2}}\left[1-e^{-\beta_{2}(c-\alpha h)}\right],\ \alpha h\le c\le h,
    \end{array}
	\right.
	\end{equation*}
where $0\le\lambda\le\alpha<1$, $\beta_{1},\beta_{2}>0$. $\alpha h$ is the reference point where risk aversion increases as the consumption exceeds it. The utility is segmented with both segments taking the form of constant absolute risk aversion (abbr.CARA) utility. The absolute risk aversion above reference $\alpha h$ is $\beta_{2}$, while the absolute risk aversion below reference is $\beta_{1}$. The utility is concave and the marginal utilities at the two sides of the reference point $\alpha h$  are equal and finite.

\begin{remark}
When $\lambda=0$, our model reduces to the model without drawdown constraint. When $\alpha=\lambda$ or $\beta_{1}=\beta_{2}$, our model reduces to the model without reference $\alpha h$ and the corresponding risk aversion change. However, our model does not reduces to the model without reference $\alpha h$ if $\alpha$ approaches $1$ from below. The reason lies in the fact that the updating of running maximum is under the leveled-up risk aversion coefficient $\beta_{2}$ when $\alpha<1$ but under risk aversion coefficient $\beta_{1}$ when $\alpha=1$ (the case $\alpha=1$ actually reduces to the model without reference $\alpha h$). In other words, the model is somewhat ``not continuous" at $\alpha=1$. Hence, we just consider the case $\alpha<1$. Another two limiting cases $\beta_{1}\rightarrow 0$ and $\beta_{2}\rightarrow 0$ are discussed in Subsection \ref{lc}.
\end{remark}
\begin{remark}
     Alternatively, we can consider more general reference point instead of $\alpha h$. It turns out that the solution techniques we used are still applicable. Base on these techniques and tools, we give the results with an alternative reference in Appendix \ref{genpre}, which is more general and reduces to the current model in a special case. 
\end{remark}

\vskip 5pt

For convenience, we only deal with the case that $r=\gamma$ in this paper. For the general case, all the results are paralleled to the case $r=\gamma$ but more complicated. Interested readers can refer to Appendix \ref{genpara} for more details.
	\vskip 5pt
	At the end of this section, we provide a lemma which gives a necessary condition for an admissible strategy, which will be used later in Section \ref{dersol}.
\vskip 5pt
\begin{lemma}\label{Lemma1}
For any $(c,\pi)\in\mathcal{A}$ (if $\mathcal{A}\neq\emptyset$), the corresponding  wealth process $\{X_{t},t\ge 0\}$ and running maximum process $\{h_{t},t\ge 0\}$ must satisfy $X_{t}\ge \frac{\lambda}{r}h_{t},\ a.s.$ for $\forall t\ge 0$.
\end{lemma}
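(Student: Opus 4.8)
The plan is to show that if the wealth ever dips below $\frac{\lambda}{r}h_t$, then the drawdown constraint $c_s \ge \lambda h_s$ forces bankruptcy in finite time with positive probability, contradicting admissibility. The key observation is that once the running maximum reaches a level $h_t = \bar h$, it never decreases, so $h_s \ge \bar h$ for all $s \ge t$, and hence the consumption rate is bounded below by $\lambda \bar h$ from time $t$ onward. This turns the wealth SDE into one with a constant minimal drain, which lets us compare $X_s$ against an explicit supermartingale-type benchmark.

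First I would fix $t \ge 0$ and work on the event where $h_t = \bar h$ is known; by conditioning it suffices to argue pathwise-in-distribution from time $t$. For $s \ge t$, since $c_s \ge \lambda h_s \ge \lambda \bar h$, the dynamics \eqref{wealproc} give
\begin{equation*}
dX_s \le rX_s\,ds + \pi_s(\mu - r)\,ds + \pi_s\sigma\,dB_s - \lambda\bar h\,ds.
\end{equation*}
Next I would consider the discounted (here $r=\gamma$, but only $r$ matters) process $Y_s := e^{-r(s-t)}X_s + \frac{\lambda \bar h}{r}\big(e^{-r(s-t)} - 1\big)$; applying It\^o's formula and the inequality above, the finite-variation part of $Y_s$ is nonpositive, so $Y_s$ is a local supermartingale that is bounded below (because $X_s > 0$ and the additive term is bounded), hence a genuine supermartingale. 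Therefore $\mathbb{E}[Y_s \mid \mathcal F_t] \le Y_t$ for all $s \ge t$, which rearranges to
\begin{equation*}
e^{-r(s-t)}\,\mathbb{E}[X_s \mid \mathcal F_t] + \frac{\lambda\bar h}{r}\big(e^{-r(s-t)} - 1\big) \le X_t.
\end{equation*}
Letting $s \to \infty$ and using $X_s > 0$ gives $-\frac{\lambda \bar h}{r} \le X_t$ in expectation — which is too weak by itself — so instead I would exploit the supermartingale more carefully: since $Y_s \ge -\frac{\lambda \bar h}{r}$, the supermartingale convergence theorem gives an a.s. limit $Y_\infty \ge -\frac{\lambda\bar h}{r}$, and if $X_t < \frac{\lambda \bar h}{r}$ on a positive-probability set then $Y_t < 0$ there, and one shows the process $X_s$ must hit $0$ in finite time with positive probability, violating $X_s > 0$ a.s. Concretely, on $\{X_t < \frac{\lambda\bar h}{r}\}$ the benchmark $\tilde X_s := e^{r(s-t)}X_t - \frac{\lambda\bar h}{r}(e^{r(s-t)} - 1)$ (the solution with $\pi \equiv 0$ and $c \equiv \lambda \bar h$) is deterministic, strictly decreasing, and hits $0$ at a finite time $s^*$; a localization/comparison argument together with the optional sampling identity for the stopped supermartingale $Y_{s\wedge\tau}$, $\tau$ the first time $X$ hits $0$, yields $\mathbb{P}(\tau \le s^*) > 0$.

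The main obstacle I anticipate is the integrability/localization bookkeeping needed to upgrade the local supermartingale $Y$ to a true supermartingale and to justify optional sampling at the (possibly infinite) bankruptcy time — the admissibility conditions only give $\int_0^t \pi_s^2\,ds < \infty$ a.s. and $X_s > 0$ a.s., not uniform integrability, so I would introduce the stopping times $\tau_n = \inf\{s : |X_s| + \int_t^s \pi_u^2\,du \ge n\}$, apply optional sampling to $Y_{s \wedge \tau_n}$, and pass to the limit using the lower bound $Y \ge -\frac{\lambda\bar h}{r}$ (which gives the one-sided control needed for Fatou). The remaining steps — It\^o's formula, monotonicity of $h$, and the deterministic ODE comparison — are routine.
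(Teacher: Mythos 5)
Your overall strategy --- a budget-constraint/supermartingale argument rather than the paper's explicit solution of the wealth SDE --- is a legitimate and potentially cleaner route, but as written it has a genuine gap at its central step: the process $Y_s$ is \emph{not} a local supermartingale under the physical measure $\mathbb{P}$. After discounting, the drift you are left with still contains the excess-return term $e^{-r(s-t)}\pi_s(\mu-r)\,ds$, and since $\mu\ge r$ and admissibility places no sign or size restriction on $\pi_s$ beyond $\int_0^t\pi_s^2\,ds<\infty$ a.s., this term can be arbitrarily large and positive. The finite-variation part of $Y$ is therefore not nonpositive, and the inequality $\mathbb{E}[Y_s\mid\mathcal F_t]\le Y_t$ fails in general. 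The missing ingredient is the change to the risk-neutral measure $\tilde{\mathbb P}$ (with density $e^{-\frac{\mu-r}{\sigma}B_t-\frac{(\mu-r)^2}{2\sigma^2}t}$), under which $\pi_s(\mu-r)\,ds+\pi_s\sigma\,dB_s=\pi_s\sigma\,d\tilde B_s$ is a local-martingale increment; this is precisely the Girsanov step the paper itself uses to control its stochastic-integral term. Since $\tilde{\mathbb P}\sim\mathbb P$, the a.s.\ conclusion transfers back, so the repair is routine --- but without it the key step is simply false.

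A secondary issue is the sign of your annuity term. With $Y_s=e^{-r(s-t)}X_s+\frac{\lambda\bar h}{r}\bigl(e^{-r(s-t)}-1\bigr)$ the correction is negative, which is exactly why your own limiting inequality yields only the vacuous bound $-\frac{\lambda\bar h}{r}\le X_t$ and forces the detour through hitting times, optional sampling at the bankruptcy time, and localization. If instead you take $Y_s=e^{-r(s-t)}X_s+\int_t^s e^{-r(u-t)}c_u\,du$, then under $\tilde{\mathbb P}$ this is a nonnegative local martingale, hence a true supermartingale with no further integrability bookkeeping; combining $\tilde{\mathbb E}[Y_s\mid\mathcal F_t]\le X_t$ with $c_u\ge\lambda h_u\ge\lambda h_t$ and $X_s>0$, and letting $s\to\infty$, gives $X_t\ge\frac{\lambda}{r}h_t$ directly. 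For comparison, the paper argues by contradiction: it lower-bounds consumption by $\lambda h_{t_0}$, solves the wealth SDE explicitly, and shows that by a fixed finite time the deterministic part drives $X$ negative with positive probability while the Girsanov-transformed stochastic integral is nonpositive with positive probability. Your (corrected) route would avoid that positive-probability bookkeeping entirely.
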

\begin{proof}
Suppose that $(c,\pi)\in\mathcal{A}$ and there exists $t_{0}$ such that $\mathbb{P}(X_{t_{0}}<\frac{\lambda}{r}h_{t_{0}})>0$. Then there exists $\epsilon>0$ such that $\mathbb{P}(X_{t_{0}}<(1-\epsilon)\frac{\lambda}{r}h_{t_{0}})>0$. We only need to show that even for strategies with lowest consumption, i.e., $c_{t}=\lambda h_{t_{0}},\ \forall t\ge t_{0}$, there exists $t_{1}>t_{0}$ such that $\mathbb{P}(X_{t_{1}}<0)>0$. For strategies with $c_{t}=\lambda h_{t_{0}},\ \forall t\ge t_{0}$, the dynamic of wealth becomes

\begin{equation*}
    dX_{t}=rX_{t}dt+\pi_{t}(\mu-r)dt+\pi_{t}\sigma dB_{t}-\lambda h_{t_{0}}dt,\ \forall t\ge t_{0}.
\end{equation*}
Solving this SDE, we obtain

\begin{equation}
\label{xtT}
    X_{t}=e^{r(t-t_{0})}X_{t_{0}}+\frac{\lambda h_{t_{0}}}{r}\left[1-e^{r(t-t_{0})}\right]+(\mu-r)e^{rt}\int_{t_{0}}^{t}e^{-ru}\pi_{u}du+\sigma e^{rt}\int_{t_{0}}^{t}e^{-ru}\pi_{u}dB_{u},\ \forall t\ge t_{0}.
\end{equation}
For $t=t_{0}+\frac{1}{r}\ln(\frac{1}{\epsilon})$, we have

\begin{align*}
    &\mathbb{P}\left(e^{r(t-t_{0})}X_{t_{0}}+\frac{\lambda h_{t_{0}}}{r}\left[1-e^{r(t-t_{0})}\right]<0\right)\\
    =&\mathbb{P}\left(X_{t_{0}}<(1-\epsilon)\frac{\lambda}{r}h_{t_{0}}\right)\\
    >&0.
\end{align*}
To handle the sum of last two terms, we introduce the probability measure $\tilde{\mathbb{P}}$ by

\begin{equation*}
    \frac{d\tilde{\mathbb{P}}}{d\mathbb{P}}\bigg|_{\mathcal{F}_{t}}:=e^{-\frac{\mu-r}{\sigma}B_{t}-\frac{(\mu-r)^{2}}{2\sigma^{2}}t},\ \forall t\ge 0.
\end{equation*}
Then $\{\tilde{B}_{t}:=B_{t}+\frac{\mu-r}{\sigma}t,t\ge 0\}$ is standard Brownian motion under $\tilde{\mathbb{P}}$ and the sum of last two terms equals $\sigma e^{rt}\int_{t_{0}}^{t}e^{-ru}\pi_{u}d\tilde{B}_{u}$, which has zero expectation under $\tilde{\mathbb{P}}$. Hence

\begin{equation*}
    \tilde{\mathbb{P}}\Big(\sigma e^{rt}\int_{t_{0}}^{t}e^{-ru}\pi_{u}d\tilde{B}_{u}\le 0\Big)>0,\ \forall t\ge 0,
\end{equation*}
then 

\begin{align*}
    &\mathbb{P}\left((\mu-r)e^{rt}\int_{t_{0}}^{t}e^{-ru}\pi_{u}du+\sigma e^{rt}\int_{t_{0}}^{t}e^{-ru}\pi_{u}dB_{u}\right)\\
    =&\mathbb{P}\Big(\sigma e^{rt}\int_{t_{0}}^{t}e^{-ru}\pi_{u}d\tilde{B}_{u}\le 0\Big)\\
    >&0,\ \forall t\ge 0.
\end{align*}
Since $(c,\pi)\in\mathcal{A}$, we deduce that $X_{t_{0}}$ and $h_{t_{0}}$ are $\mathcal{F}_{t_{0}}$ measurable. Then the sum of first two terms and the sum of last two terms of (\ref{xtT}) are independent. Therefore, choosing $t_{1}=t_{0}+\frac{1}{r}\ln(\frac{1}{\epsilon})$, we have $\mathbb{P}(X_{t_{1}}<0)>0$ and the proof is complete.
\end{proof}
	
\vskip 15pt
\section{Derivation of the Solution in Dual Form}
\label{dersol}
In this section, we apply the martingale optimality principle to derive the HJB equation of Problem (\ref{problem}) and use the duality method to obtain the solution in dual form.

To begin with, the value function of Problem (\ref{problem}) is denoted by

	\begin{align*}
	V(x_{0},h_{0})\triangleq\sup\limits_{(c,\pi)\in\mathcal{A}}\mathbb{E}_{x_{0},h_{0}}\int_{0}^{\infty}e^{-\gamma t}U(c_{t},h_{t})dt.
	\end{align*}
	\begin{definition}
		An optimal strategy $(c^*,\pi^*)$ of Problem (\ref{problem}) is an admissible strategy which satisfies
		
		\begin{equation*}
		\mathbb{E}_{x_{0},h_{0}}\int_{0}^{\infty}e^{-\gamma t}U(c^{*}_{t},h^{*}_{t})dt=V(x_{0},h_{0}).
		\end{equation*}
	\end{definition}
	The martingale optimality principle shows that the process $\{\Gamma_{t}, t\ge 0\}$
	
	\begin{equation*}
	\Gamma_{t}\triangleq e^{-\gamma t}V(X_{t},h_{t})+\int_{0}^{t}e^{-\gamma s}U(c_{s},h_{s})ds
	\end{equation*}
	is a local supermartingale for all admissible $(c,\pi)$ and is a local martingale for the optimal $(c^{*},\pi^{*})$. If the value function is smooth enough, applying the It\^{o}'s rule to $\{\Gamma_{t}, t\ge 0\}$, we derive the HJB equation of Problem (\ref{problem}) as follows\footnote{For notational simplicity, we write $x,h,c,\pi$ instead of $X_{t},h_{t},c_{t},\pi_{t}$ in (\ref{HJB}) and (\ref{dHJB}). $\frac{dh}{dt}$ in the second line refers to $\frac{dh_{t}}{dt}$, which is the derivative of $h_t$ in the sense of distribution. Heuristically, $\frac{dh}{dt}\neq0$ means that the process $\{h_{t}\}$ strictly increases at the instant $t$.}:
	\begin{equation}\label{HJB}
	\!\!\!\left\{
	\begin{aligned}
	&\sup_{c\in[0,h],\pi\in\mathbb{R}}\left\{-\gamma V(x,h)+V_{x}(x,h)\big(rx+\pi(\mu-r)\!-\!c\big)\!+\!\frac{1}{2}V_{xx}(x,h)\sigma^{2}\pi^{2}\!+\!U(c,h)\right\}=0,\\
	&V_{h}(x,h)=0\ {\rm on} \ (x,h) \  s.t.\  \frac{dh}{dt}\neq0,
	\end{aligned}
	\right.
	\end{equation}
and the optimal feedback form of $\pi$ is

\begin{equation*}
    \pi^{*}_{\rm primal}(x,h)=-\frac{\mu-r}{\sigma^{2}}\frac{V_{x}(x,h)}{V_{xx}(x,h)}.
\end{equation*}
 As $\frac{x}{h}\rightarrow\frac{\lambda}{r}$, the optimal investment should decline to zero to avoid bankruptcy,  we have
 
\begin{equation}\label{boundcon1}
    \lim\limits_{\frac{x}{h}\rightarrow\frac{\lambda}{r}^{+}}\frac{V_{x}(x,h)}{V_{xx}(x,h)}=0.
\end{equation}
 To solve HJB equation (\ref{HJB}) with the  boundary condition (\ref {boundcon1}) based on the duality method, we introduce the conjugate of the value function as follows:
 
	\begin{equation*}
	\tilde{V}(y,h)=\sup_{x\ge 0}\{V(x,h)-xy\},\ \ y>0.
	\end{equation*}
	Then we have the duality transform
	\begin{equation}
	\label{tranx}
	\left\{
	\begin{array}{l}
	x=-\tilde{V}_{y}(y,h),\\
	V(x,h)=\tilde{V}(y,h)-y\tilde{V}_{y}(y,h),\\
	V_{x}(x,h)=y,\\
	V_{xx}(x,h)=-\frac{1}{\tilde{V}_{yy}(y,h)},\\
	V_{h}(x,h)=\tilde{V}_{h}(y,h),
	\end{array}
	\right.
	\end{equation}
	and $(\ref{HJB})$ is rewritten in dual form as follows:
	\begin{equation}\label{dHJB}
	\!\!\!\!\!\!\!\!\left\{
	\begin{aligned}
	&\sup_{c\in[0,h],\pi\in\mathbb{R}}\Big\{-\gamma \left[\tilde{V}(y,h)-y\tilde{V}_{y}(y,h)\right]+y\left[-r\tilde{V}_{y}(y,h)+\pi(\mu-r)-c\right]-\frac{1}{2\tilde{V}_{yy}(y,h)}\sigma^{2}\pi^{2}\\&\quad\quad\quad\quad+U(c,h)\Big\}=0,\\
	&\tilde{V}_{h}(y,h)=0\, {\rm on} \,(y,h)\, \ s.t.\  \, \frac{dh}{dt}\neq 0.
	\end{aligned}
	\right.
	\end{equation}
	For (\ref{dHJB}), the optimal feedback form $\pi^{*}=\pi^{*}(y,h)$ is
	\begin{equation}
	\label{pi}
	\pi^{*}(y,h)=\frac{(\mu-r)y\tilde{V}_{yy}(y,h)}{\sigma^{2}}.
	\end{equation}
	And the optimal feedback form $c^{*}=c^{*}(y,h)$ maximizing $\bar{U}(c)\triangleq U(c,h)-cy$ on $[\lambda h,h]$ is
	\begin{equation}
	\label{c}
	c^{*}(y,h)=\left\{
	\begin{array}{l}
	\lambda h,\hskip 2.4cm e^{(\alpha-\lambda)\beta_{1}h}\le y,\\
	-\frac{1}{\beta_{1}}\ln(y)+\alpha h,\ \ 1\le y<e^{(\alpha-\lambda)\beta_{1}h},\\
	-\frac{1}{\beta_{2}}\ln(y)+\alpha h,\ \  e^{-(1-\alpha)\beta_{2}h}\le y<1,\\
	h,\ \ \hskip 2.4cm 0\le y<e^{-(1-\alpha)\beta_{2}h}.
	\end{array}
	\right.
	\end{equation}

	For the region where $c^{*}(y,h)=h$, three sub cases need to be distinguished in order to apply the second equation of (\ref{dHJB}) to solve the HJB equation. The first case is that the current consumption just reaches the past consumption peak but does not update it; the second case is that the current consumption reaches the past consumption peak and updates it; the last case is that the current consumption exceeds the past consumption peak and forces the running maximum process to jump. The last case can only happen at $t=0$ where the inherited running maximum level is lower but the initial wealth is abundant, which switches to the former two cases for $t>0$. Hence we only need to consider the first two cases for $t>0$. The second equation of (\ref{dHJB}) refers to the second case where the running maximum $h$ is updated, which instructs us to further separate the region according to the above different sub cases. Specifically, the running maximum is updated if and only if $\mathop{\rm arg max}\limits_{c}\{U(c,c)-cy\}\ge h$. Moreover, the running maximum $h$ jumps if strict inequality holds. As $\mathop{\rm arg max}\limits_{c}\{U(c,c)-cy\}\ge h$ is equivalent to $y\le (1-\alpha)e^{-(1-\alpha)\beta_{2}h}$, we deduce that for any initial point $(y_{0},h_{0})$ s.t. $y_{0}<(1-\alpha)e^{-(1-\alpha)\beta_{2}h_{0}}$, it will jump immediately to $\big(y_{0},\frac{1}{(1-\alpha)\beta_{2}}\ln(\frac{1-\alpha}{y_{0}})\big)$ which is on the curve $y= (1-\alpha)e^{-(1-\alpha)\beta_{2}h}$. Hence we only need to consider initial points $(y_{0},h_{0})$ in the dual region where $y\ge (1-\alpha)e^{-(1-\alpha)\beta_{2}h}$.  Meanwhile, the second formula of (\ref{dHJB}) is equivalent to
	\begin{equation}
	\label{secbound}
	    V_{h}(y,h)=0\ {\rm for} \ y= (1-\alpha)e^{-(1-\alpha)\beta_{2}h}.
	\end{equation}
By the duality transform (\ref{tranx}), the constraint $x\ge \frac{\lambda h}{\gamma}$ is equivalent to $-\tilde{V}_{y}(y,h)\ge \frac{\lambda h}{\gamma}$. As such, the dual effective region $\mathcal{C}_{d}$ can be defined by

	\begin{equation*}
	    \mathcal{C}_{d}\triangleq\big\{(y,h)\big|y\ge (1-\alpha)e^{-(1-\alpha)\beta_{2}h},\ -\tilde{V}_{y}(y,h)\ge \frac{\lambda h}{\gamma},\ h>0\big\}.
	\end{equation*}
	Applying the duality transform again, the effective region $\mathcal{C}$ is
	\begin{equation} \label{effective}    \mathcal{C}\triangleq\big\{(x,h)\big|V_{x}(x,h)\ge (1-\alpha)e^{-(1-\alpha)\beta_{2}h},\ x\ge \frac{\lambda h}{\gamma},\ h>0 \big\}.
	\end{equation}
	Using the first equation of (\ref{dHJB}), we obtain
	\begin{equation}
	-\gamma\tilde{V}(y,h)+\frac{(r-\mu)^{2}}{2\sigma^{2}}y^{2}\tilde{V}_{yy}(y,h)=-\tilde{U}(y,h),
	\label{ode}
	\end{equation}
	where $\tilde{U}(y,h)=\sup\limits_{\lambda h\le c\le h}\{U(c,h)-cy\}$ with
	
	\begin{equation*}
	\tilde{U}(y,h)=\left\{
	\begin{array}{l}
	\frac{1}{\beta_{1}}(1-e^{(\alpha-\lambda)\beta_{1}h})-\lambda hy,\ \ \ \ \  e^{(\alpha-\lambda)\beta_{1}h}\le y,\\
	\frac{1}{\beta_{1}}\big[1-y+y\ln(y)\big]-\alpha hy,\ \ \ 1\le y<e^{(\alpha-\lambda)\beta_{1}h},\\
	\frac{1}{\beta_{2}}\big[1-y+y\ln(y)\big]-\alpha hy,\  \ e^{-(1-\alpha)\beta_{2}h}\le y<1,\\
	\frac{1}{\beta_{2}}\big[(1-e^{-(1-\alpha)\beta_{2}h}\big)-hy,\ \ \ (1-\alpha)e^{-(1-\alpha)\beta_{2}h}\le y<e^{-(1-\alpha)\beta_{2}h}.
	\end{array}
	\right.
	\end{equation*}
		Define $k=\frac{(r-\mu)^{2}}{2\sigma^{2}},\ q_{1}=\frac{k-\sqrt{k^{2}+4k\gamma}}{2k}<0,\ q_{2}=\frac{k+\sqrt{k^{2}+4k\gamma}}{2k}>1$, then the general solution of (\ref{ode}) is
\begin{equation}\label{tilv}
		\!\!\!  \!\!\! \!\!\!  \tilde{V}(y,h)\!\!=\!\!\left\{
		    \begin{array}{l}
		   C_{1}(h)y^{q_{1}}+C_{2}(h)y^{q_{2}}-\frac{1}{\gamma}\lambda hy+\frac{1}{\gamma\beta_{1}}\big[1-e^{(\alpha-\lambda)\beta_{1}h}\big],\ \ \ \ \  e^{(\alpha-\lambda)\beta_{1}h}\le y,\\
		   C_{3}(h)y^{q_{1}}\!\!+\!\!C_{4}(h)y^{q_{2}}\!\!+\!\!\frac{1}{\gamma\beta_{1}}\big[1-y+y\ln(y)\big]+
\frac{k}{\gamma^{2}\beta_{1}}y-\frac{1}{\gamma}\alpha hy,\ 1\le y<e^{(\alpha-\lambda)\beta_{1}h},\\
		   C_{5}(h)y^{q_{1}}\!\!+\!\!C_{6}(h)y^{q_{2}}\!\!+\!\!\frac{1}{\gamma\beta_{2}}\big[1-y+y\ln(y)\big]+\frac{k}{\gamma^{2}\beta_{2}}y-\frac{1}{\gamma}\alpha hy,\ e^{-(1-\alpha)\beta_{2}h}\le y<1,\\
		   C_{7}(h)y^{q_{1}}\!\!+\!\!C_{8}(h)y^{q_{2}}\!\!-\!\!\frac{1}{\gamma}hy\!\!+
\!\!\frac{1}{\gamma\beta_{2}}\big[1\!-\!e^{-(1\!-\!\alpha)\beta_{2}h}\big],\ (1\!-\!\alpha)e^{-(1\!-\!\alpha)\beta_{2}h}\!\le\! y\!<\!e^{-(1-\alpha)\beta_{2}h}.
		    \end{array}
		    \right.
\end{equation}
Based on  the duality transform (\ref{tranx}), the boundary  condition (\ref{boundcon1}) can be rewritten as follows:
		\begin{equation}
		\label{boundcon2}
		  \lim\limits_{\tilde{V}_{y}(y,h)\rightarrow -\frac{\lambda h}{\gamma}^{-}}y\tilde{V}_{yy}(y,h)=0.
		 \end{equation}
Then we deduce from (\ref{boundcon2}) and (\ref{tilv}) that $\tilde{V}_{y}(y,h)\rightarrow -\frac{\lambda h}{\gamma}\Leftrightarrow y\rightarrow\infty$ and that $C_{2}(h)=0,\ C_{1}(h)>0$. Using (\ref{tilv}), we obtain $-\tilde{V}_{y}(y,h)\ge\frac{\lambda h}{\gamma}$ and the dual effective region is simplified to
		 \begin{equation}
	    \label{dualeffective}
	    \mathcal{C}_{d}=\big\{(y,h)\big|y\ge (1-\alpha)e^{-(1-\alpha)\beta_{2}h},\ h>0\big\}.
	\end{equation}
In addition, when $y=(1-\alpha)e^{-(1-\alpha)\beta_{2}h}$ and $h\rightarrow\infty$, the initial wealth $x\rightarrow\infty$ and the utility keeps near its maximum $\frac{1}{\beta_{2}}$ for infinitely long time. Thus the value function tends to $\frac{1}{\gamma\beta_{2}}$. This boundary condition can be expressed as

		\begin{equation*}
		  \lim\limits_{h\rightarrow\infty}\big[\tilde{V}(y,h)-y\tilde{V}_{y}(y,h)\big]\Big|_{y=(1-\alpha)e^{-(1-\alpha)\beta_{2}h}}=\frac{1}{\gamma\beta_{2}}.
		\end{equation*}
		The above boundary condition together with $C_{2}(h)=0$, (\ref{secbound}) and smooth-fit conditions
		
		\begin{align*}
	 \tilde{V}(y+,h)=\tilde{V}(y-,h),\\
	 \tilde{V}_{y}(y+,h)=\tilde{V}_{y}(y-,h)
	\end{align*}
	leads to
	
	\begin{align}
	C_{2}(h)=&0, \ \ \   C_{4}(h)=-\frac{k}{\gamma^{2}\beta_{1}}\frac{1-q_{1}}{q_{2}-q_{1}}e^{-(\alpha-\lambda)(q_{2}-1)\beta_{1}h},\label{C1}\\
	    C_{6}(h)=&C_{4}(h)+\frac{k}{\gamma^{2}}\frac{\beta_{2}-\beta_{1}}{\beta_{1}\beta_{2}}\frac{1-q_{1}}{q_{2}-q_{1}},\label{C2}\\
	    C_{8}(h)=&C_{6}(h)+\frac{k}{\gamma^{2}\beta_{2}}\frac{1-q_{1}}{q_{2}-q_{1}}e^{(1-\alpha)(q_{2}-1)\beta_{2}h},\label{C3}\\
	    C_{7}(h)=&\frac{(1-\alpha)^{q_{2}-q_{1}}\frac{k}{\gamma^{2}}\frac{1-q_{1}}{q_{2}-q_{1}}
(\alpha-\lambda)(q_{2}-1)}{(1-\alpha)(q_{2}-q_{1})\beta_{2}
+(\alpha-\lambda)(q_{2}-1)\beta_{1}}e^{-\big[(1-\alpha)(q_{2}-q_{1})\beta_{2}
+(\alpha-\lambda)(q_{2}-1)\beta_{1}\big]h}\notag\\
	    &+(1-\alpha)^{q_{2}-q_{1}}\frac{k}{\gamma^{2}\beta_{2}}\frac{q_{2}-1}{q_{2}-q_{1}}
e^{-(1-\alpha)(1-q_{1})\beta_{2}h},\label{C5}\\
	    C_{5}(h)=&C_{7}(h)-\frac{k}{\gamma^{2}\beta_{2}}\frac{q_{2}-1}{q_{2}-q_{1}}e^{-(1-\alpha)(1-q_{1})\beta_{2}h},\notag\\
	    C_{3}(h)=&C_{5}(h)-\frac{k}{\gamma^{2}}
\frac{\beta_{2}-\beta_{1}}{\beta_{1}\beta_{2}}\frac{q_{2}-1}{q_{2}-q_{1}},\label{C6}\\
	    C_{1}(h)=&C_{3}(h)+\frac{k}{\gamma^{2}\beta_{1}}
\frac{q_{2}-1}{q_{2}-q_{1}}e^{(\alpha-\lambda)(1-q_{1})\beta_{1}h}.\label{C7}
	\end{align}
We can  directly show

	    \begin{align*}
	        &C_{1}(h)>0,\ \ C_{4}(h)<0,\\
	        &C_{7}(h)>0, \ \ C_{8}(h)>0
	        \end{align*}
	 and obtain the following order estimates of the coefficients $C_{i}(h),\ 1\le i\le 8$, which will later be used in the proof of the verification theorem.
\begin{lemma}	\label{hinfty}
	    As $h\rightarrow\infty$,
	    
	    \begin{align*}
	    &C_{1}(h)=\mathcal{O}(e^{(\alpha-\lambda)(1-q_{1})\beta_{1}h}),\\
	    &C_{3}(h)=\mathcal{O}(1),\ \ C_{4}(h)=\mathcal{O}(e^{-(\alpha-\lambda)(q_{2}-1)\beta_{1}h}),\\
	    &C_{5}(h)=\mathcal{O}(e^{-(1-\alpha)(1-q_{1})\beta_{2}h}),\ \ C_{6}(h)=\mathcal{O}(1),\\
	    &C_{7}(h)=\mathcal{O}(e^{-(1-\alpha)(1-q_{1})\beta_{2}h}),\ \ C_{8}(h)=\mathcal{O}(e^{(1-\alpha)(q_{2}-1)\beta_{2}h}).
	    \end{align*}
	\end{lemma}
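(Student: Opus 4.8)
The plan is to read each order estimate straight off the closed forms \eqref{C1}--\eqref{C7}: every $C_i(h)$ is a linear combination of a constant and finitely many exponentials $e^{\kappa h}$, so its behaviour as $h\to\infty$ is dictated by the largest exponent $\kappa$ that occurs with a nonzero coefficient. The only preliminary facts I need are the signs $q_1<0$ (so $1-q_1>1$), $q_2>1$ (so $q_2-1>0$), $q_2-q_1>0$, together with $\alpha-\lambda>0$ and $1-\alpha>0$, all immediate from the definitions of $q_1,q_2$ and the standing assumption $0\le\lambda<\alpha<1$. With these, every exponent appearing below has a definite sign, so the dominant term in each sum is unambiguous and the claimed $\mathcal{O}$-bounds follow.

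I would start from the explicitly given coefficients. From the formula for $C_4(h)$ in \eqref{C1}, $C_4(h)$ is a constant multiple of $e^{-(\alpha-\lambda)(q_2-1)\beta_1 h}$, which gives the stated estimate and, in particular, $C_4(h)\to0$. Then \eqref{C2} reads $C_6(h)=C_4(h)+\mathrm{const}$, so $C_6(h)=\mathcal{O}(1)$; the constant vanishes when $\beta_1=\beta_2$, but that only improves the bound. Next, \eqref{C3} gives $C_8(h)=C_6(h)+\mathrm{const}\cdot e^{(1-\alpha)(q_2-1)\beta_2 h}$ with the exponential growing, so it dominates the bounded term $C_6(h)$ and $C_8(h)=\mathcal{O}(e^{(1-\alpha)(q_2-1)\beta_2 h})$.

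The remaining four coefficients are handled along the recursion $C_7\to C_5\to C_3\to C_1$. The expression for $C_7(h)$ in \eqref{C5} is a sum of two decaying exponentials with rates $A:=(1-\alpha)(q_2-q_1)\beta_2+(\alpha-\lambda)(q_2-1)\beta_1$ and $B:=(1-\alpha)(1-q_1)\beta_2$; a one-line computation gives $A-B=(q_2-1)\big[(1-\alpha)\beta_2+(\alpha-\lambda)\beta_1\big]>0$, so $0<B<A$, the $e^{-Bh}$ term dominates, and $C_7(h)=\mathcal{O}(e^{-(1-\alpha)(1-q_1)\beta_2 h})$. From the recursion in \eqref{C6}, $C_5(h)=C_7(h)-\mathrm{const}\cdot e^{-(1-\alpha)(1-q_1)\beta_2 h}$, a sum of two terms of the same order, so $C_5(h)=\mathcal{O}(e^{-(1-\alpha)(1-q_1)\beta_2 h})$ (only an upper bound is asserted, so a possible cancellation of the leading coefficients is irrelevant); since $C_5(h)\to0$, the same recursion gives $C_3(h)=C_5(h)-\mathrm{const}=\mathcal{O}(1)$; and finally \eqref{C7} gives $C_1(h)=C_3(h)+\mathrm{const}\cdot e^{(\alpha-\lambda)(1-q_1)\beta_1 h}$ with the growing exponential dominating, hence $C_1(h)=\mathcal{O}(e^{(\alpha-\lambda)(1-q_1)\beta_1 h})$. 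The case $C_2(h)=0$ is trivial.

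There is no serious difficulty in any of this; the whole argument is a comparison of exponents of prescribed sign propagated through the recursion. The one spot that genuinely requires a (short) calculation rather than mere inspection is the comparison $A>B$ used for $C_7(h)$. The only bookkeeping point to watch is that the additive constants in \eqref{C2} and \eqref{C6} are proportional to $\beta_2-\beta_1$ and therefore vanish in the symmetric case $\beta_1=\beta_2$; since the conclusions are $\mathcal{O}(\cdot)$ upper bounds, that degenerate case is automatically covered.
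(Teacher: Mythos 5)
Your proposal is correct and takes essentially the same approach the paper intends: the paper gives no written proof of this lemma, asserting that the estimates follow directly from the closed forms (\ref{C1})--(\ref{C7}), and your argument is exactly that direct inspection, including the one nontrivial exponent comparison $A>B$ for $C_7(h)$.
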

\vskip 15pt

\section{Verification Theorem and Optimal Strategy}\label{veropt}
In this section, we establish the verification theorem and apply the duality transform to obtain the optimal strategy. First, we state the verification theorem which gives the optimal consumption and investment policy in dual form.
\begin{theorem}[Verification Theorem]\label{verificationth}
For any $(x_{0},h_{0})\in\mathcal{C}$, where $x_{0}$ and $h_{0}$ are respectively the initial wealth and initial past spending maximum, and $\mathcal{C}$ is the effective region given by (\ref{effective}), the value function $V(x_{0},h_{0})$ can be attained by the optimal consumption and investment strategy given by

		\begin{equation*}
		(c^{*},\pi^{*})=\bigg\{\Big(c^{*}\big(Y_{t}(y^{*}),H^{\dagger}_{t}(y^{*})\big),\pi^{*}\big(Y_{t}(y^{*}),H^{\dagger}_{t}(y^{*})\big)\Big),t\ge 0\bigg\}
		\end{equation*}
		where
		$Y_{t}(\cdot)$ is given by
		
		\begin{equation*}
		Y_{t}(y)\triangleq ye^{\gamma t}M_{t}
		\end{equation*}
with $M \triangleq \left \{ M_{t}:=e^{-(\gamma+\frac{(\mu-r)^{2}}{2\sigma^{2}})t-\frac{\mu-r}{\sigma}B_{t}}, t\geq 0 \right \}$ being the discounted state price density process,  $\left\{H^{\dagger}_{t}(\cdot), t\ge 0\right \}$ is determined by

		\begin{equation*}
		    H^{\dagger}_{t}(y)\triangleq h_{0}\vee \sup\limits_{s\le t}c^{*}\big(Y_{s}(y),H^{\dagger}_{s}(y)\big),
		\end{equation*}
		$y^{*}=y^{*}(x_{0},h_{0})$ is the unique solution of
		\begin{equation}
		\label{ystar}
		   \mE_{x_{0},h_{0}}\int_{0}^{\infty}c^{*}\big(Y_{t}(y),H^{\dagger}_{t}(y)\big)M_{t}dt=x_{0}
		\end{equation}
		and the feedback functions $c^{*}(\cdot,\cdot)$ and $\pi^{*}(\cdot,\cdot)$ are respectively given by
		
		\begin{align}
		    &c^{*}(y,h)=\left\{
	\begin{array}{l}
	\lambda h,\hskip 2.3cm e^{(\alpha-\lambda)\beta_{1}h}\le y,\\
	-\frac{1}{\beta_{1}}\ln(y)+\alpha h,\ 1\le y<e^{(\alpha-\lambda)\beta_{1}h},\\
	-\frac{1}{\beta_{2}}\ln(y)+\alpha h,\ e^{-(1-\alpha)\beta_{2}h}\le y<1,\\
	h,\hskip 2.3cm (1-\alpha)e^{-(1-\alpha)\beta_{2}h}\le y<e^{-(1-\alpha)\beta_{2}h}.
	\end{array}
	\right.\label{cstar}\\
	&\pi^{*}(y,h)=\frac{\mu-r}{\sigma^{2}}\left\{
	\begin{array}{l}
	\frac{r}{k}\big[C_{1}(h)y^{q_{1}-1}+C_{2}(h)y^{q_{2}-1}\big],\hskip 1.3cm e^{(\alpha-\lambda)\beta_{1}h}\le y,\\
	\frac{r}{k}\big[C_{3}(h)y^{q_{1}-1}+C_{4}(h)y^{q_{2}-1}\big]+\frac{1}{\gamma\beta_{1}},\ 1\le y<e^{(\alpha-\lambda)\beta_{1}h},\\
	\frac{r}{k}\big[C_{5}(h)y^{q_{1}-1}+C_{6}(h)y^{q_{2}-1}\big]+\frac{1}{\gamma\beta_{2}},\ e^{-(1-\alpha)\beta_{2}h}\le y<1,\\
	\frac{r}{k}\big[C_{7}(h)y^{q_{1}-1}+C_{8}(h)y^{q_{2}-1}\big],\hskip 0.1cm (1-\alpha)e^{-(1-\alpha)\beta_{2}h}\le y<e^{-(1-\alpha)\beta_{2}h}.\label{pistar}
	\end{array}
	\right.
	\end{align}
	\end{theorem}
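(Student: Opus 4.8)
The plan is a standard martingale–duality verification in two halves — an upper bound valid for every admissible strategy, and attainability by the constructed candidate — followed by the resolution of the static budget equation (\ref{ystar}). Write $V$ for the function obtained from $\tilde V$ of (\ref{tilv}), with coefficients (\ref{C1})--(\ref{C7}), by inverting the duality transform (\ref{tranx}). The smooth-fit conditions imposed in the construction make $V(\cdot,h)$ of class $C^{1}$ and piecewise $C^{2}$ on $\mathcal{C}$, strictly concave in $x$ since $\tilde V(\cdot,h)$ is strictly convex (read off from (\ref{tilv}) using $q_{1}<0<1<q_{2}$ and the sign conditions $C_{1},C_{7},C_{8}>0$, $C_{4}<0$); by construction $V$ solves the first line of (\ref{HJB}) region-wise, the updating condition (\ref{secbound}), and the bankruptcy condition (\ref{boundcon1}), and the order estimates of Lemma \ref{hinfty} pin down the growth of $V,V_{x},V_{xx}$ as $h\to\infty$.

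For the upper bound, take an arbitrary $(c,\pi)\in\mathcal{A}$ with wealth $X$ and running maximum $h$ and apply the generalized It\^o formula to $e^{-\gamma t}V(X_{t},h_{t})$: the process $h$ has non-decreasing paths of bounded variation that increase only on the updating boundary, where $V_{h}=0$ by (\ref{secbound}), so the $dh$ term vanishes; no local-time term appears because $V_{x}$ is continuous; the finite-variation $dt$ term is $\le 0$ by the HJB inequality; and the $dB$ term is a local martingale. Localizing along $\tau_{n}\uparrow\infty$, using $U\le\max\{1/\beta_{1},1/\beta_{2}\}$ and Fatou, and using Lemma \ref{Lemma1} together with the explicit growth of $V$ to secure $\liminf_{t}\mathbb{E}[e^{-\gamma t}V(X_{t},h_{t})]\ge 0$, we obtain $V(x_{0},h_{0})\ge\mathbb{E}_{x_{0},h_{0}}\int_{0}^{\infty}e^{-\gamma s}U(c_{s},h_{s})\,ds$; taking the supremum over $\mathcal{A}$ yields $V\ge$ value function.

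For attainability, build $(c^{*},\pi^{*})$ from the feedback maps and the processes $Y_{t}(y),H^{\dagger}_{t}(y)$, and show that $X^{*}_{t}:=-\tilde V_{y}\big(Y_{t}(y),H^{\dagger}_{t}(y)\big)$ solves (\ref{wealproc}) with control $(c^{*},\pi^{*})$ and $X^{*}_{0}=-\tilde V_{y}(y,h_{0})$. Applying It\^o and using $dY_{t}=-\tfrac{\mu-r}{\sigma}Y_{t}\,dB_{t}$, the ODE (\ref{ode}) and its $y$-derivative (which, with $r=\gamma$, yields $\tilde U_{y}=-c^{*}$ by the envelope theorem), the identity $\pi^{*}=(\mu-r)y\tilde V_{yy}/\sigma^{2}$ from (\ref{pistar}), and the fact that $dH^{\dagger}$ is carried by the updating boundary on which $\tilde V_{yh}=0$ (the $y$-derivative of (\ref{secbound})), all drift and diffusion terms match. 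Admissibility then follows: $c^{*}(y,h)\in[\lambda h,h]$ by (\ref{cstar}), $X^{*}_{t}\ge\frac{\lambda}{\gamma}H^{\dagger}_{t}>0$ on $\mathcal{C}_{d}$, and $\int_{0}^{t}c^{*}_{s}\,ds<\infty$, $\int_{0}^{t}(\pi^{*}_{s})^{2}\,ds<\infty$ from Lemma \ref{hinfty} plus Gaussian moment bounds for $\ln Y_{t}$ and $Y_{t}^{q_{i}-1}$. With admissibility, $\Gamma_{t}$ is a local martingale for $(c^{*},\pi^{*})$, and the localization argument — now with equalities and with $\lim_{t}\mathbb{E}[e^{-\gamma t}V(X^{*}_{t},h^{*}_{t})]=0$ — gives $\mathbb{E}_{x_{0},h_{0}}\int_{0}^{\infty}e^{-\gamma s}U(c^{*}_{s},h^{*}_{s})\,ds=V(x_{0},h_{0})$.

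Finally, evaluating the wealth identity of the previous step at $t=0$, after checking $\mathbb{E}[X^{*}_{t}M_{t}]\to0$, shows $\mathbb{E}_{x_{0},h_{0}}\int_{0}^{\infty}c^{*}\big(Y_{t}(y),H^{\dagger}_{t}(y)\big)M_{t}\,dt=-\tilde V_{y}(y,h_{0})$; since $y\mapsto-\tilde V_{y}(y,h_{0})$ is continuous and strictly decreasing on $\mathcal{C}_{d}$ with range $(\frac{\lambda h_{0}}{\gamma},\infty)$ as $y$ ranges over $[(1-\alpha)e^{-(1-\alpha)\beta_{2}h_{0}},\infty)$, and $(x_{0},h_{0})\in\mathcal{C}$ says exactly that $x_{0}$ lies in this range, there is a unique $y^{*}$ solving (\ref{ystar}), and then $X^{*}_{0}=x_{0}$, completing the verification. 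I expect the main obstacle to be the analytic bookkeeping at infinity: justifying both transversality limits (where $U$ and $V$ are unbounded below) and the admissibility integrals requires the sharp decay/growth rates of Lemma \ref{hinfty} together with precise Gaussian tail bounds for $Y_{t}$, and because the $C_{i}(h)$ have genuinely different asymptotics the estimates must be redone in each of the four feedback regions; handling the singular $dH^{\dagger}$ term correctly — showing it is supported where $\tilde V_{yh}=0$ and so drops from the wealth identity — is the other delicate point.
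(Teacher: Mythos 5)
Your proposal takes a genuinely different route from the paper, and its first half contains a gap that the paper's route is specifically designed to avoid. The paper never applies It\^o's formula to $e^{-\gamma t}V(X_t,h_t)$ along an arbitrary admissible strategy. Instead, the upper bound is obtained from the pointwise Legendre--Fenchel inequality $U(c,h)-yc\le\tilde U(y,h)$ combined with the budget constraint $\mathbb{E}_{x_0,h_0}\int_0^\infty c_tM_t\,dt\le x_0$ (the chain (\ref{iqofv}), via Lemmas \ref{lemma1} and \ref{lemma2}), after which $\mathbb{E}_{x_0,h_0}\int_0^\infty e^{-\gamma t}\tilde U\big(Y_t(y),\hat H_t(y)\big)\,dt$ is identified with $\tilde V(y,h_0)$ by It\^o's formula applied along the \emph{explicit dual} process (Lemma \ref{lemma4}); the only transversality needed (Lemma \ref{lemma3}) is along $\big(Y_t(y),\hat H_t(y)\big)$, a geometric Brownian motion and the logarithm of its running infimum, for which all moments are computable once the order estimates of Lemma \ref{hinfty} are in hand. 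What the dual route buys is precisely that one never has to control the state process of an arbitrary admissible control.

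The gap in your primal supermartingale argument is the step ``using Lemma \ref{Lemma1} together with the explicit growth of $V$ to secure $\liminf_{t}\mathbb{E}[e^{-\gamma t}V(X_{t},h_{t})]\ge 0$.'' Here $V$ is bounded above but unbounded \emph{below}: on the bankruptcy boundary $V(\lambda h/\gamma,h)=\tfrac{1}{\gamma\beta_1}\big(1-e^{(\alpha-\lambda)\beta_1 h}\big)\to-\infty$ exponentially fast in $h$, so the required limit amounts to showing $\mathbb{E}\big[e^{-\gamma t}e^{(\alpha-\lambda)\beta_1 h_t}\big]\to 0$ for every admissible strategy. Lemma \ref{Lemma1} only gives $h_t\le\tfrac{r}{\lambda}X_t$, and admissibility yields no more than the positive-supermartingale bound $\mathbb{E}[X_tM_t]\le x_0$, i.e.\ a first moment of $h_t$ weighted by the state-price density; exponential moments of $h_t$ are not controlled (an admissible gambler can make them infinite), so this transversality step fails as stated, and with it the inequality $V\ge$ value function. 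The attainability half (constructing $X^*_t=-\tilde V_y(Y_t,H^\dagger_t)$ and matching the wealth SDE) and your resolution of (\ref{ystar}) via strict monotonicity of $-\tilde V_y(\cdot,h_0)$ are sound in outline and close to what the paper does in Theorem \ref{optpolicy} — note only that the range of $-\tilde V_y(\cdot,h_0)$ over $\mathcal{C}_d$ is $(\lambda h_0/\gamma,\,W_{\rm updt}(h_0)]$ rather than $(\lambda h_0/\gamma,\infty)$, which is exactly why $(x_0,h_0)\in\mathcal{C}$ is needed, and that the paper instead deduces uniqueness of $y^*$ from monotonicity of $c^*\big(Y_t(y),\hat H_t(y)\big)$ in $y$ together with monotone convergence. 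To close the argument you should reroute the upper bound through the dual inequality chain as the paper does.
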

\begin{proof}
See  Appendix \ref{appa}.
\end{proof}
Now, to  apply the dual transform to present the primal value function  as well as the optimal consumption and investment policy with the primal variable, we need the following lemma in the dual transform, and its proof is given in Appendix \ref{appc}.
	
	\begin{lemma}
	\label{vyypo}
	    $\tilde{V}_{yy}(y,h)>0$ for $(y,h)\in\mathcal{C}_{d}$ and hence the inverse of $-\tilde{V}_{y}(\cdot,h)$ exists.
	\end{lemma}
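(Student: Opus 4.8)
The plan is to show $\tilde{V}_{yy}(y,h)>0$ on $\mathcal{C}_{d}$ in two stages --- first $\tilde{V}_{yy}\ge 0$, then strict positivity --- after which invertibility of $-\tilde{V}_{y}(\cdot,h)$ follows at once, since $\tilde{V}_{yy}>0$ is precisely the statement that $-\tilde{V}_{y}(\cdot,h)$ is strictly decreasing. I will use two preliminary facts. (i) On the interior of each of the four regions in (\ref{tilv}), $\tilde{V}$ is $C^{\infty}$ in $y$, being there a solution of the linear ODE (\ref{ode}) with smooth coefficients. (ii) $\tilde{V}\in C^{3}(\mathcal{C}_{d})$: $\tilde{V}$ and $\tilde{V}_{y}$ are continuous across the three interfaces by the smooth-fit conditions, and $\tilde{U}$ and $\tilde{U}_{y}=-c^{*}$ are continuous (direct inspection of the piecewise formulas), so by (\ref{ode}) $\tilde{V}_{yy}=(\gamma\tilde{V}-\tilde{U})/(ky^{2})$ and its $y$-derivative are continuous. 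I also use $q_{1}+q_{2}=1$ and $q_{1}q_{2}=-\gamma/k$, hence $q_{1}(q_{1}-1)=q_{2}(q_{2}-1)=\gamma/k>0$.

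\emph{Step 1: $\tilde{V}_{yy}\ge 0$.} For each fixed $h$ the value function $V(\cdot,h)$ of Problem (\ref{problem}) is concave. Indeed, given admissible strategies $(c^{i},\pi^{i})$ from $x_{i}$ ($i=1,2$), both with initial past peak $h$ and with running maxima $h^{i}$, the convex combination $(c^{\theta},\pi^{\theta})=\big(\theta c^{1}+(1-\theta)c^{2},\,\theta\pi^{1}+(1-\theta)\pi^{2}\big)$ started from $\theta x_{1}+(1-\theta)x_{2}$ has wealth $\theta X^{1}+(1-\theta)X^{2}>0$ and running maximum $h^{\theta}_{t}\le\theta h^{1}_{t}+(1-\theta)h^{2}_{t}$, hence is admissible (as $c^{\theta}_{t}\ge\lambda(\theta h^{1}_{t}+(1-\theta)h^{2}_{t})\ge\lambda h^{\theta}_{t}$); since $U(\cdot,\cdot)$ is jointly concave and non-increasing in $h$, its expected discounted payoff dominates the corresponding convex combination, which gives concavity. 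As $\tilde{V}$ is (by Theorem \ref{verificationth} and (\ref{tranx})) the conjugate $\sup_{x\ge 0}\{V(x,h)-xy\}$ of the concave $V(\cdot,h)$, it is convex in $y$, and since $\tilde{V}_{yy}$ is continuous, $\tilde{V}_{yy}\ge 0$ on $\mathcal{C}_{d}$. Moreover $\tilde{V}_{yy}>0$ already on the two outer regions by inspection: for $y\ge e^{(\alpha-\lambda)\beta_{1}h}$ one has $C_{2}(h)=0$ and $\tilde{V}_{yy}=\frac{\gamma}{k}C_{1}(h)y^{q_{1}-2}>0$, and for $(1-\alpha)e^{-(1-\alpha)\beta_{2}h}\le y<e^{-(1-\alpha)\beta_{2}h}$ one has $\tilde{V}_{yy}=\frac{\gamma}{k}\big(C_{7}(h)y^{q_{1}-2}+C_{8}(h)y^{q_{2}-2}\big)>0$, using $C_{1}(h),C_{7}(h),C_{8}(h)>0$.

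\emph{Step 2: strict positivity.} Suppose $\tilde{V}_{yy}(y_{0},h)=0$. By Step 1, $y_{0}$ lies in one of the two middle regions, i.e.\ $e^{-(1-\alpha)\beta_{2}h}<y_{0}<e^{(\alpha-\lambda)\beta_{1}h}$, where $\tilde{U}(y,h)=\frac{1}{\beta}(1-y+y\ln y)-\alpha h y$ with $\beta\in\{\beta_{1},\beta_{2}\}$. Because $\tilde{V}_{yy}\ge 0$ and $\tilde{V}_{yy}\in C^{1}$, the point $y_{0}$ is a local minimum of $\tilde{V}_{yy}$, so $\tilde{V}_{yyy}(y_{0})=0$ and $\tilde{V}_{yyyy}(y_{0})\ge 0$ (interpreting these one-sidedly, from an adjacent smooth piece, in the boundary case $y_{0}=1$). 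Differentiating $-\gamma\tilde{V}+ky^{2}\tilde{V}_{yy}=-\tilde{U}$ twice in $y$ gives $(2k-\gamma)\tilde{V}_{yy}+4ky\tilde{V}_{yyy}+ky^{2}\tilde{V}_{yyyy}=-\tilde{U}_{yy}=-\frac{1}{\beta y}$, and at $y_{0}$ this reduces to $ky_{0}^{2}\tilde{V}_{yyyy}(y_{0})=-\frac{1}{\beta y_{0}}<0$, contradicting $\tilde{V}_{yyyy}(y_{0})\ge 0$. Hence $\tilde{V}_{yy}>0$ everywhere on $\mathcal{C}_{d}$, so $-\tilde{V}_{y}(\cdot,h)$ is strictly decreasing and thus a bijection onto its image.

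The delicate point is Step 1. Concavity of $V(\cdot,h)$ deserves care because the running-maximum habit makes the feasible set non-convex in general; the convex-combination argument survives only because $h_{0}$ is held fixed (joint concavity in $(x,h)$ actually fails). One also has to check that $U(c,h)=\frac{1}{\beta}\big(1-e^{-\beta(c-\alpha h)}\big)$ --- a concave, increasing function of $c-\alpha h$, patched together $C^{1}$ at the reference $c=\alpha h$ --- is jointly concave and non-increasing in $h$, together with the domination $h^{\theta}_{t}\le\theta h^{1}_{t}+(1-\theta)h^{2}_{t}$. A self-contained alternative to Step 1 is to verify $\tilde{V}_{yy}\ge 0$ directly from the explicit coefficients in (\ref{tilv}), using Lemma \ref{hinfty} and the identities above; this is the more laborious route, but Step 2 then applies unchanged.
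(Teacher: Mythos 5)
Your proposal takes a genuinely different route from the paper's. The paper proves the lemma by direct computation on the explicit formula (\ref{tilv}), region by region: the two outer regions follow from the signs $C_{1}(h)>0=C_{2}(h)$ and $C_{7}(h),C_{8}(h)>0$, and in the two middle regions it shows that $\psi(y)=y\tilde{V}_{yy}(y,h)$ is unimodal (because $C_{4}(h)<0$) and then verifies $\psi>0$ at the endpoints $y=1$ and $y=e^{(\alpha-\lambda)\beta_{1}h}$ by explicit manipulation of the coefficients. Your Step 2 is a clean substitute for that middle-region computation: the regularity claims ($\tilde{V}\in C^{3}$ across the interfaces via smooth fit and continuity of $\tilde{U}$ and $\tilde{U}_{y}=-c^{*}$) check out, the twice-differentiated ODE identity is correct, and the contradiction $ky_{0}^{2}\tilde{V}_{yyyy}(y_{0})=-\tfrac{1}{\beta y_{0}}<0$ at an interior zero of the nonnegative $C^{1}$ function $\tilde{V}_{yy}$ is sound, including the one-sided treatment at $y_{0}=1$.

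The gap is in Step 1, in the sentence ``As $\tilde{V}$ is (by Theorem \ref{verificationth} and (\ref{tranx})) the conjugate $\sup_{x}\{V(x,h)-xy\}$ of the concave $V(\cdot,h)$, it is convex.'' First, the concavity of $V(\cdot,h)$ (which you argue correctly) is beside the point: $\sup_{x}\{V(x,h)-xy\}$ is a supremum of affine functions of $y$ and hence convex for \emph{any} $V$. What actually needs proof is that the explicit function (\ref{tilv}) --- constructed as a particular solution of the dual ODE with smooth-fit and boundary conditions, not as a conjugate of anything --- coincides with $\sup_{x}\{V(x,h)-xy\}$ on all of $\mathcal{C}_{d}$. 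Theorem \ref{verificationth} delivers the opposite half of the duality, $V(x_{0},h_{0})=\inf_{y}\{\tilde{V}(y,h_{0})+yx_{0}\}$, which only yields $\tilde{V}(y,h)\ge\sup_{x}\{V(x,h)-xy\}$; and appealing to Fenchel--Moreau to upgrade this to equality would presuppose exactly the convexity of $\tilde{V}$ you are trying to establish. The inequality does become an equality at $y=y^{*}(x_{0},h_{0})$ for each $x_{0}$ in the effective region, so the repair is to show that $x_{0}\mapsto y^{*}(x_{0},h_{0})$ is onto the dual slice $\{y\ge(1-\alpha)e^{-(1-\alpha)\beta_{2}h_{0}}\}$; this is available from the strict monotonicity and range of $y\mapsto\mathbb{E}_{x_{0},h_{0}}\int_{0}^{\infty}c^{*}\big(Y_{t}(y),\hat{H}_{t}(y)\big)M_{t}dt$ used in the proof of Theorem \ref{verificationth} to solve (\ref{ystar}), but it is a step you must make explicit rather than a citation to wave at. With that inserted, your argument goes through; your stated fallback of checking $\tilde{V}_{yy}\ge0$ directly from the coefficients is, in essence, the paper's own proof.
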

Let $f(\cdot,h)$ be the inverse of $-\tilde{V}_{y}(\cdot,h)$ based on Lemma \ref{vyypo}, then, using dual transform (\ref{tranx}), we have
	\begin{equation}
	\label{xy}
	\left\{
	\begin{array}{l}
	x=-\tilde{V}_{y}(y,h),\\
	y=f(x,h).
	\end{array}
	\right.
	\end{equation}
	As such, based on (\ref{xy}) and Lemma \ref{vyypo}, the function $f(\cdot,h)$ is implicitly and uniquely determined by
	\begin{equation}
	\label{detf}
	x=-\tilde{V}_{y}\big(f(x,h),h\big).
	\end{equation}
	  Plugging (\ref{tilv}) into (\ref{detf}), we obtain the following forms of $f(x,h)$ according to different regions:\\
	(1) For $e^{(\alpha-\lambda)\beta_{1}h}\le f(x,h)$, we have $f(x,h)=f_{1}(x,h)$ with $f_{1}(x,h)$ satisfying
	\begin{equation}
	x=-C_{1}(h)q_{1}\big[f_{1}(x,h)\big]^{q_{1}-1}-C_{2}(h)q_{2}\big[f_{1}(x,h)\big]^{q_{2}-1}+\frac{\lambda h}{\gamma}.\label{f1}
	\end{equation}
Based on Lemma \ref{vyypo}, $V_{yy}(\cdot,h)>0$, then the inequality $e^{(\alpha-\lambda)\beta_{1}h}\le f(x,h)$ is equivalent to $x\le W_{\rm low}(h)$ with
	\begin{equation} \label{w1}
	    W_{\rm low}(h)=-C_{1}(h)q_{1}e^{-(\alpha-\lambda)(1-q_{1})\beta_{1}h}-C_{2}(h)q_{2}e^{(\alpha-\lambda)(q_{2}-1)\beta_{1}h}+\frac{\lambda h}{\gamma}.
	\end{equation}
	(2) For $1\le f(x,h)<e^{(\alpha-\lambda)\beta_{1}h}$,  $f(x,h)=f_{2}(x,h)$ with $f_{2}(x,h)$ satisfying
	\begin{equation}
	x=-C_{3}(h)q_{1}\big[f_{2}(x,h)\big]^{q_{1}-1}-C_{4}(h)q_{2}\big[f_{2}(x,h)\big]^{q_{2}-1}-\frac{1}{\gamma\beta_{1}}\ln\big[f_{2}(x,h)\big]-\frac{k}{\gamma^{2}\beta_{1}}+\frac{\alpha h}{\gamma}.\label{f2}
	\end{equation}
	The inequality $1\le f(x,h)<e^{(\alpha-\lambda)\beta_{1}h}$ is equivalent to $W_{\rm low}(h)<x\le W_{\rm ref}(h)$, where
	\begin{equation}
	    \label{w2}
	    W_{\rm ref}(h)=-C_{3}(h)q_{1}-C_{4}(h)q_{2}-\frac{k}{\gamma^{2}\beta_{1}}+\frac{\alpha h}{\gamma}.
	\end{equation}
	(3) For $e^{-(1-\alpha)\beta_{2}h}\le f(x,h)<1$, $f(x,h)=f_{3}(x,h)$ with $f_{3}(x,h)$ satisfying
	\begin{equation}
	x=-C_{5}(h)q_{1}\big[f_{3}(x,h)\big]^{q_{1}-1}-C_{6}(h)q_{2}\big[f_{3}(x,h)\big]^{q_{2}-1}
-\frac{1}{\gamma\beta_{2}}\ln\big[f_{3}(x,h)\big]-\frac{k}{\gamma^{2}\beta_{2}}+\frac{\alpha h}{\gamma}.\label{f3}
	\end{equation}
	The inequality $e^{-(1-\alpha)\beta_{2}h}\le f(x,h)<1$ is equivalent to $W_{\rm ref}(h)<x\le W_{\rm peak}(h)$ with
	\begin{equation}
	    \label{w3}
	    W_{\rm peak}(h)=-C_{5}(h)q_{1}e^{(1-\alpha)(1-q_{1})\beta_{2}h}-C_{6}(h)q_{2}e^{-(1-\alpha)(q_{2}-1)\beta_{2}h}-\frac{k}{\gamma^{2}\beta_{2}}+\frac{h}{\gamma}.
	\end{equation}
		(4) For $(1-\alpha)e^{-(1-\alpha)\beta_{2}h}\le f(x,h)<e^{-(1-\alpha)\beta_{2}h}$, $f(x,h)=f_{4}(x,h)$ with $f_{4}(x,h)$ satisfying
	\begin{equation}
	x=-C_{7}(h)q_{1}\big[f_{4}(x,h)\big]^{q_{1}-1}-C_{8}(h)q_{2}\big[f_{4}(x,h)\big]^{q_{2}-1}+\frac{h}{\gamma}.\label{f4}
	\end{equation}
	The inequality $(1-\alpha)e^{-(1-\alpha)\beta_{2}h}\le f(x,h)<e^{-(1-\alpha)\beta_{2}h}$ is equivalent to $W_{\rm peak}(h)<x\le W_{\rm updt}(h)$ with
	\begin{equation}
	    \label{w4}
	    W_{\rm updt}(h)=-C_{7}(h)q_{1}(1-\alpha)^{q_{1}-1}e^{(1-\alpha)(1-q_{1})\beta_{2}h}-C_{8}(h)q_{2}(1-\alpha)^{q_{2}-1}e^{-(1-\alpha)(q_{2}-1)\beta_{2}h}+\frac{h}{\gamma}.
	\end{equation}
We summarize the forms of the primal value function as well as the optimal consumption and investment policy in terms of primal variable in the following Theorems \ref{vfunction} and \ref{optpolicy}:
	\begin{theorem}\label{vfunction}
		For $(x,h)\in\mathcal{C}$, where $\mathcal{C}$ is the effective region given by (\ref{effective}), the value function of Problem (\ref{problem}) is
		\begin{equation}
		    \label{v}
		    V(x,h)=\left\{
		    \begin{array}{l}
		   C_{1}(h)\big[f_{1}(x,h)\big]^{q_{1}}+C_{2}(h)\big[f_{1}(x,h)\big]^{q_{2}}-\frac{1}{\gamma}\lambda hf_{1}(x,h)\\+\frac{1}{\gamma\beta_{1}}(1-e^{(\alpha-\lambda)\beta_{1}h}),\ \frac{\lambda h}{\gamma}\le x\le W_{\rm low}(h),\\
		   C_{3}(h)\big[f_{2}(x,h)\big]^{q_{1}}+C_{4}(h)\big[f_{2}(x,h)\big]^{q_{2}}
+\frac{1}{\gamma\beta_{1}}\Big[1-f_{2}(x,h)+f_{2}(x,h)\ln\big[f_{2}(x,h)\big]\Big]\\+\frac{k}{\gamma^{2}\beta_{1}}f_{2}(x,h)-\frac{1}{\gamma}\alpha hf_{2}(x,h),\ W_{\rm low}(h)<x\le W_{\rm ref}(h),\\
		   C_{5}(h)\big[f_{3}(x,h)\big]^{q_{1}}+C_{6}(h)\big[f_{3}(x,h)\big]^{q_{2}}
+\frac{1}{\gamma\beta_{2}}\Big[1-f_{3}(x,h)+f_{3}(x,h)\ln\big[f_{3}(x,h)\big]\Big]\\
+\frac{k}{\gamma^{2}\beta_{2}}f_{3}(x,h)-\frac{1}{\gamma}\alpha hf_{3}(x,h),\ W_{\rm ref}(h)<x\le W_{\rm peak}(h),\\
		   C_{7}(h)\big[f_{4}(x,h)\big]^{q_{1}}+C_{8}(h)\big[f_{4}(x,h)\big]^{q_{2}}
-\frac{1}{\gamma}hf_{4}(x,h)\\
+\frac{1}{\gamma\beta_{2}}\big[1-e^{-(1-\alpha)\beta_{2}h}\big],\ W_{\rm peak}(h)<x\le W_{\rm updt}(h),
		    \end{array}
		    \right.
		\end{equation}
		where $W_{\rm low}(h),W_{\rm ref}(h),W_{\rm peak}(h),W_{\rm updt}(h)$ and $f_{i}(x,h),\ 1\le i\le 4$ are given by (\ref{f1})$\sim$(\ref{w4}).
	\end{theorem}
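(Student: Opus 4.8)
The plan is to establish formula (\ref{v}) for $V$ as an essentially routine consequence of the verification theorem (Theorem \ref{verificationth}) and the duality transform (\ref{tranx}): all of the analytic substance---existence of an optimal strategy and the fact that the function built from the dual solution really is the value function---has been settled in Theorem \ref{verificationth}, so what remains is to push the dual-form solution $\tilde V$ of (\ref{tilv}) back into the primal variable $x$ and to keep careful track of the region bookkeeping.

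Concretely, I would proceed as follows. By Lemma \ref{vyypo}, $\tilde V_{yy}(\cdot,h)>0$ on $\mathcal{C}_{d}$, so $-\tilde V_{y}(\cdot,h)$ is strictly increasing and admits an inverse $f(\cdot,h)$; thus for $(x,h)\in\mathcal{C}$ the value $y=f(x,h)$ is the unique solution of $x=-\tilde V_{y}(y,h)$, and the duality transform (\ref{tranx}) gives $V(x,h)=\tilde V(y,h)-y\tilde V_{y}(y,h)$ evaluated at $y=f(x,h)$. Inserting the four branches of (\ref{tilv}) (recalling $C_{2}(h)=0$) together with the defining relations (\ref{f1})--(\ref{f4}) for the $f_{i}$, and using those relations to simplify the terms involving $\tilde V_{y}$, yields precisely the four displayed expressions for $V$. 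The threshold curves are read off the same way: on each of the four strips cut out of $\mathcal{C}_{d}$ (eq.\ (\ref{dualeffective})) by the curves $y=1$, $y=e^{(\alpha-\lambda)\beta_{1}h}$, $y=e^{-(1-\alpha)\beta_{2}h}$---strips on which $\tilde V$ has its four forms---the map $y\mapsto-\tilde V_{y}(y,h)$ is strictly increasing, so it carries each strip onto a wealth interval whose endpoints are just $-\tilde V_{y}$ at the boundary $y$-values; evaluating those boundary values reproduces $W_{\rm low}(h),W_{\rm ref}(h),W_{\rm peak}(h),W_{\rm updt}(h)$ of (\ref{w1})--(\ref{w4}). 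Since $0<\lambda<\alpha<1$ forces $(1-\alpha)e^{-(1-\alpha)\beta_{2}h}<e^{-(1-\alpha)\beta_{2}h}<1<e^{(\alpha-\lambda)\beta_{1}h}$, the strips are correctly ordered, whence $\tfrac{\lambda h}{\gamma}\le W_{\rm low}(h)\le W_{\rm ref}(h)\le W_{\rm peak}(h)\le W_{\rm updt}(h)$ and $(x,h)$ with $x$ in the $i$-th interval corresponds to $f(x,h)$ lying in the $i$-th strip, which is exactly what selects the branch in the statement.

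Finally I would check that the four pieces in (\ref{v}) fit together into a single function on $\mathcal{C}$: this is inherited from the smooth-fit conditions used to pin down the $C_{i}(h)$, which make $\tilde V$ of class $C^{1}$ across the internal curves and hence make $V$ continuous (indeed $C^{1}$) across $W_{\rm low},W_{\rm ref},W_{\rm peak}$. The only genuine labor is the branch-by-branch algebraic simplification and the boundary-value computations for the thresholds; the step I expect to need the most care is the region correspondence---verifying that the monotone map $y\mapsto-\tilde V_{y}(y,h)$ really sends the four dual strips onto the four claimed primal intervals with endpoints matching (\ref{w1})--(\ref{w4})---for which the strict positivity of $\tilde V_{yy}$ from Lemma \ref{vyypo} is the crucial ingredient.
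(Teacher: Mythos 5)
Your proposal is correct and takes essentially the same approach as the paper: apply the duality transform to get $V(x,h)=\tilde{V}\big(f(x,h),h\big)+xf(x,h)$, plug in the four branches of (\ref{tilv}), and read off the thresholds from the boundary $y$-values using the monotonicity supplied by Lemma \ref{vyypo}. (One harmless slip: since $\tilde{V}_{yy}>0$, the map $y\mapsto-\tilde{V}_{y}(y,h)$ is strictly \emph{decreasing}, not increasing, so the largest dual strip $y\ge e^{(\alpha-\lambda)\beta_{1}h}$ corresponds to the smallest wealth interval $x\le W_{\rm low}(h)$; your final branch assignment is nonetheless the correct one.)
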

\begin{proof}
		Applying dual transform (\ref{tranx}) and (\ref{xy}) yields
		
		\begin{equation*}
		V(x,h)=\tilde{V}\big(f(x,h),h\big)+xf(x,h).
		\end{equation*}
		Plugging in (\ref{tilv}),  the desired result follows.
\end{proof}
	
\begin{theorem}	\label{optpolicy}
		For $(x_{0},h_{0})\in\mathcal{C}$, where $\mathcal{C}$ is the effective region given by (\ref{effective}), let $c^{*}_{\rm primal}(\cdot,\cdot)$ and $ \pi^{*}_{\rm primal}(\cdot,\cdot)$ be the feedback functions in terms of primal variable given respectively by
		
		\begin{align*}
		    &c_{\rm primal}^{*}(x,h)=\left\{
	\begin{array}{l}
	\lambda h,\hskip 3.3cm \frac{\lambda h}{\gamma}\le x\le W_{\rm low}(h),\\
	-\frac{1}{\beta_{1}}\ln\big[f_{2}(x,h)\big]+\alpha h,\ W_{\rm low}(h)<x\le W_{\rm ref}(h),\\
	-\frac{1}{\beta_{2}}\ln\big[f_{3}(x,h)\big]+\alpha h,\ W_{\rm ref}(h)<x\le W_{\rm peak}(h),\\
	h,\hskip 3.3cm W_{\rm peak}(h)<x\le W_{\rm updt}(h),
	\end{array}
	\right.\\
	&\pi_{\rm primal}^{*}(x,h)=\frac{\mu-r}{\sigma^{2}}\left\{
	\begin{array}{l}
	\frac{r}{k}\left\{C_{1}(h)\big[f_{1}(x,h)\big]^{q_{1}-1}+C_{2}(h)\big[f_{1}(x,h)\big]^{q_{2}-1}\right\},\hskip 0.4cm \frac{\lambda h}{\gamma}\le x\le W_{\rm low}(h),\\
	\frac{r}{k}\left\{C_{3}(h)\big[f_{2}(x,h)\big]^{q_{1}-1}+C_{4}(h)\big[f_{2}(x,h)\big]^{q_{2}-1}\right\}
+\frac{1}{\gamma\beta_{1}},\ W_{\rm low}(h)<x\le W_{\rm ref}(h),\\
	\frac{r}{k}\left\{C_{5}(h)\big[f_{3}(x,h)\big]^{q_{1}-1}+C_{6}(h)\big[f_{3}(x,h)\big]^{q_{2}-1}\right\}+\frac{1}{\gamma\beta_{2}},\ W_{\rm ref}(h)<x\le W_{\rm peak}(h),\\
	\frac{r}{k}\left\{C_{7}(h)\big[f_{4}(x,h)\big]^{q_{1}-1}+C_{8}(h)\big[f_{4}(x,h)\big]^{q_{2}-1}\right\},\hskip 0.3cm W_{\rm peak}(h)<x\le W_{\rm updt}(h),
	\end{array}
	\right.
	\end{align*}
	where $W_{\rm low}(h),W_{\rm ref}(h),W_{\rm peak}(h),W_{\rm updt}(h)$ and $f_{i}(x,h),\ 1\le i\le 4$ are given by (\ref{f1})$\sim$(\ref{w4}).
	
		Then SDE
		\begin{equation*}
		\left\{
		\begin{array}{l}
		dX_{t}=rX_{t}dt+\pi^{*}_{\rm primal}(X_{t},H^{*}_{t})(\mu-r)dt+\pi^{*}_{\rm primal}(X_{t},H^{*}_{t})\sigma dW_{t}-c^{*}_{\rm primal}(X_{t},H^{*}_{t})dt,\\
		X_{0}=x_{0}
		\end{array}
		\right.
		\end{equation*}
		with $H^{*}_{t}\triangleq h_{0}\vee \sup\limits_{s\le t}c^{*}_{\rm primal}(X_{s},H^{*}_{s})$ and $H^{*}_{0}=h_{0}$,  has a unique strong solution $\{X^{*}_{t},\ t\ge 0\}$. The optimal consumption and investment strategy is
		
		\begin{equation*}
		\Big\{\big(c^{*}_{\rm primal}(X^{*}_{t},H^{*}_{t}),\pi^{*}_{\rm primal}(X^{*}_{t},H^{*}_{t})\big),\ t\ge 0\Big\}.
		\end{equation*}
	\end{theorem}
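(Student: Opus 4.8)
The plan is to obtain Theorem \ref{optpolicy} as a consequence of the dual-form verification result in Theorem \ref{verificationth} together with the duality dictionary (\ref{tranx})--(\ref{xy}). First I would observe that Theorem \ref{verificationth} already asserts that the optimal strategy is $(c^*,\pi^*)=\{(c^*(Y_t(y^*),H^\dagger_t(y^*)),\pi^*(Y_t(y^*),H^\dagger_t(y^*))),\ t\ge0\}$ in terms of the dual state process $Y_t(y^*)$, so the content of Theorem \ref{optpolicy} is purely a change of variables: replacing the dual variable $y$ by the primal pair $(x,h)$ via $y=f(x,h)$, where $f(\cdot,h)$ is the inverse of $-\tilde V_y(\cdot,h)$ provided by Lemma \ref{vyypo}. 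The key algebraic step is that along the optimal path the dual state $Y_t(y^*)$ and the primal wealth $X_t^*$ are linked by $X_t^*=-\tilde V_y(Y_t(y^*),H^\dagger_t(y^*))$, equivalently $Y_t(y^*)=f(X_t^*,H^\dagger_t(y^*))$; this is exactly the relation (\ref{detf}) specialized to the optimal trajectory, and it follows from the martingale characterization of the optimum used in the proof of Theorem \ref{verificationth}. Substituting this identity into the feedback maps (\ref{cstar}) and (\ref{pistar}) and using the region identifications established in (\ref{f1})--(\ref{w4}) (which translate the dual inequalities $e^{(\alpha-\lambda)\beta_1 h}\le y$, $1\le y<e^{(\alpha-\lambda)\beta_1 h}$, etc., into the primal inequalities $\frac{\lambda h}{\gamma}\le x\le W_{\rm low}(h)$, $W_{\rm low}(h)<x\le W_{\rm ref}(h)$, etc.) yields precisely the stated forms of $c^*_{\rm primal}$ and $\pi^*_{\rm primal}$.

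Next I would address the well-posedness claim, namely that the primal SDE with coefficients $\pi^*_{\rm primal}(X_t,H^*_t)$ and $c^*_{\rm primal}(X_t,H^*_t)$, coupled with the running-maximum equation $H^*_t=h_0\vee\sup_{s\le t}c^*_{\rm primal}(X_s,H^*_s)$, admits a unique strong solution. The approach is to transfer the existence/uniqueness already implicit on the dual side: the dual state $Y_t(y^*)=y^*e^{\gamma t}M_t$ is an explicit geometric-Brownian-type process (hence trivially well defined), and $H^\dagger_t(y^*)$ is defined by a pathwise running-maximum fixed-point equation whose solvability is part of Theorem \ref{verificationth}. Setting $X^*_t:=-\tilde V_y(Y_t(y^*),H^\dagger_t(y^*))$ and $H^*_t:=H^\dagger_t(y^*)$, one checks by It\^o's formula — using (\ref{ode}) to handle the diffusion/drift terms and (\ref{secbound})/$V_h=0$ on the updating boundary to kill the $\tilde V_h\,dH$ contribution — that this pair solves the primal SDE with the asserted feedback coefficients; this gives existence. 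For uniqueness, I would argue that any strong solution $(X_t,H_t)$ of the primal system, pushed through the duality map $y=f(X_t,H_t)$, must satisfy the dual dynamics, and then invoke the uniqueness already available for the dual system (or, more directly, note that the feedback functions, while only piecewise smooth, are continuous and locally Lipschitz away from the region boundaries, and that standard local-Lipschitz/linear-growth SDE theory applies on each region with the running maximum providing monotone, well-controlled behaviour at the updating boundary).

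The main obstacle I anticipate is the regularity of the feedback coefficients at the interfaces $x=W_{\rm low}(h),W_{\rm ref}(h),W_{\rm peak}(h),W_{\rm updt}(h)$: the consumption map $c^*_{\rm primal}$ is continuous there (the smooth-fit conditions on $\tilde V$ guarantee $f$ is $C^1$ across the first three and the Skorokhod-type reflection handles $W_{\rm updt}$), but one must verify that $\pi^*_{\rm primal}$ does not develop discontinuities or non-Lipschitz kinks that would obstruct strong existence, and that the reflecting behaviour at $x=W_{\rm updt}(h)$ (where consumption updates the peak and $h$ increases) is genuinely of Skorokhod type so that the coupled $(X,H)$ system is well posed. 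I would handle this by checking $\pi^*_{\rm primal}$ is continuous across each interior interface using $V_{xx}<0$ (Lemma \ref{vyypo} via $\tilde V_{yy}>0$) and the matching of $C_i(h)$'s in (\ref{C1})--(\ref{C7}), and by treating the region $x>W_{\rm peak}(h)$, where $c^*_{\rm primal}=h$, as a reflected diffusion with reflection at the $W_{\rm updt}$ boundary, so that the pair $(X^*,H^*)$ is the unique solution of a Skorokhod problem; the finiteness conditions from the admissibility definition and the order estimates in Lemma \ref{hinfty} ensure the resulting process stays in $\mathcal C$ and is admissible, completing the identification of the optimal primal strategy.
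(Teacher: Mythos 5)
Your proposal is correct and follows essentially the same route as the paper: the theorem is obtained from the dual verification result (Theorem \ref{verificationth}) by the change of variables $y=f(x,h)$ furnished by Lemma \ref{vyypo}, with well-posedness of the primal SDE coupled to the running-maximum equation resting on the regularity of $f$ and the Lipschitz character of the feedback maps — precisely the content of Lemmas \ref{fc1} and \ref{lipsc}, to which the paper reduces its (otherwise omitted, \cite{deng2021}-style) proof. The one caution is that continuity of $\pi^{*}_{\rm primal}$ across the interfaces is not by itself enough for strong existence and uniqueness; the load-bearing fact is the global Lipschitz bound of Lemma \ref{lipsc} (boundedness of $\partial\pi^{*}_{\rm primal}/\partial x$ and $\partial\pi^{*}_{\rm primal}/\partial h$ via the lower bounds on $-B_{2}$ etc.), which your ``no non-Lipschitz kinks'' step must ultimately deliver rather than merely assert.
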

\begin{proof}
	    The proof is based on the following Lemmas \ref{fc1} and \ref{lipsc}. Then, as the proof is similar to that of \cite{deng2021}, we omit it here.
\end{proof}	
\begin{lemma}\label{fc1}
		The function $f$ is $C^{1}$ within each sub-region of $\mathcal{C}$: $\frac{\lambda h}{\gamma}\le x\le W_{\rm low}(h)$, $W_{\rm low}(h)<x\le W_{\rm ref}(h)$, $W_{\rm ref}(h)<x\le W_{\rm peak}(h)$, $W_{\rm peak}(h)<x\le W_{\rm updt}(h)$, and it is continuous at the boundary of $x=W_{\rm low}(h)$, $x=W_{\rm ref}(h)$, $x=W_{\rm peak}(h)$. Moreover,  we have
		
		\begin{align}
		&f_{x}(x,h)=-\frac{1}{\tilde{V}_{yy}(f,h)}\notag\\
		&=\left\{
		\begin{array}{l}
		\frac{k}{r}\left\{-C_{1}(h)\big[f_{1}(x,h)\big]^{q_{1}-2}-C_{2}(h)\big[f_{1}(x,h)\big]^{q_{2}-2}\right\}^{-1},
\hskip 1.7cm \frac{\lambda h}{\gamma}\le x\le W_{\rm low}(h),\\
	\!\bigg(\frac{r}{k}\left\{\!-\!C_{3}(h)\big[f_{2}(x,h)\big]^{q_{1}-2}\!-\!C_{4}(h)\big[f_{2}(x,h)\big]^{q_{2}-2}\right\}\!-\!\frac{1}{\gamma\beta_{1}f_{2}(x,h)}\bigg)^{-1}\!,\!\ W_{\rm low}(h)\!<\!x\!\le\! W_{\rm ref}(h),\\
	\!\bigg(\frac{r}{k}\left\{\!-\!C_{5}(h)\big[f_{3}(x,h)\big]^{q_{1}-2}\!-\!C_{6}(h)\big[f_{3}(x,h)\big]^{q_{2}-2}\right\}\!-\!\frac{1}{\gamma\beta_{2}f_{3}(x,h)}\bigg)^{-1}\!,\!\ W_{\rm ref}(h)\!<\!x\!\le\! W_{\rm peak}(h),\\
	\frac{k}{r}\left\{-C_{7}(h)\big[f_{4}(x,h)\big]^{q_{1}-2}-C_{8}(h)\big[f_{4}(x,h)\big]^{q_{2}-2}\right\}^{-1},\hskip 1.7cm W_{\rm peak}(h)<x\le W_{\rm updt}(h),
		\end{array}
		\right.\label{fx}\\
		&f_{h}(x,h)=\tilde{V}_{yh}(f,h)f_{x}(x,h).\label{fh}
		\end{align}
\end{lemma}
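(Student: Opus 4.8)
The plan is to realize $f(\cdot,h)$ as the genuine inverse of the map $y\mapsto-\tilde V_y(y,h)$ on the dual effective region $\mathcal C_d$, and then transfer the regularity of $\tilde V$ from (\ref{tilv}) through the inverse function theorem.

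\emph{Step 1: global continuity and identification of the interfaces.} By Lemma~\ref{vyypo}, $\tilde V_{yy}(\cdot,h)>0$ throughout $\mathcal C_d$, so $y\mapsto-\tilde V_y(y,h)$ is strictly decreasing; combined with the smooth-fit conditions $\tilde V(y+,h)=\tilde V(y-,h)$ and $\tilde V_y(y+,h)=\tilde V_y(y-,h)$ imposed at $y\in\{e^{-(1-\alpha)\beta_2h},\,1,\,e^{(\alpha-\lambda)\beta_1h}\}$, the map $y\mapsto-\tilde V_y(y,h)$ is continuous and strictly monotone on $\{y\ge(1-\alpha)e^{-(1-\alpha)\beta_2h}\}$, hence a homeomorphism onto its image. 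Its inverse is precisely $f(\cdot,h)$ from (\ref{detf}), so $f$ is automatically continuous in $x$ on the whole effective $x$-range; in particular it is continuous at $x=W_{\rm low}(h),W_{\rm ref}(h),W_{\rm peak}(h)$. Evaluating $-\tilde V_y$ from (\ref{tilv}) at $y=e^{(\alpha-\lambda)\beta_1h}$, $y=1$, $y=e^{-(1-\alpha)\beta_2h}$ reproduces the right-hand sides of (\ref{w1}), (\ref{w2}), (\ref{w3}), and monotonicity of $-\tilde V_y$ turns the four $y$-ranges labeling the branches of (\ref{tilv}) into the four $x$-ranges in the statement, so that $f=f_i$ on the $i$-th sub-region.

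\emph{Step 2: joint $C^1$ regularity inside each sub-region and the formulas for $f_x,f_h$.} On each branch, $\tilde V(y,h)$ is the explicit expression in (\ref{tilv}), whose $h$-dependence enters only through the coefficients $C_i(h)$ (finite sums of exponentials in $h$, hence $C^\infty$) and through smooth elementary terms; therefore $G(x,h,y):=x+\tilde V_y(y,h)$ is jointly $C^\infty$ in $(x,h,y)$, and $G_y=\tilde V_{yy}\neq0$ by Lemma~\ref{vyypo}. The implicit function theorem then gives that $y=f(x,h)$ is $C^1$ (indeed $C^\infty$) jointly in $(x,h)$ in the interior of each sub-region, with $f_x=-1/\tilde V_{yy}(f,h)$ and $f_h=-\tilde V_{yh}(f,h)/\tilde V_{yy}(f,h)=\tilde V_{yh}(f,h)\,f_x(x,h)$, which is exactly (\ref{fh}). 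To pin down the closed form (\ref{fx}) I would differentiate (\ref{tilv}) twice in $y$ on each branch and use the indicial identity $q_i(q_i-1)=\gamma/k=r/k$ (the characteristic relation of the homogeneous part of (\ref{ode}), valid since $r=\gamma$): the power terms contribute $\frac{r}{k}\,C_i(h)\,y^{q_i-2}$, while the particular-solution pieces contribute only $\frac{1}{\gamma\beta_1y}$ on the second branch and $\frac{1}{\gamma\beta_2y}$ on the third (and nothing on the first and fourth). Substituting $y=f_i(x,h)$ and inverting $\tilde V_{yy}$ yields precisely the four expressions in (\ref{fx}).

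I expect the main obstacle to be the interface bookkeeping in Step~1: one must confirm that the smooth-fit conditions genuinely make $-\tilde V_y(\cdot,h)$ continuous at $y\in\{e^{-(1-\alpha)\beta_2h},1,e^{(\alpha-\lambda)\beta_1h}\}$ and that $\tilde V_{yy}$ stays strictly positive up to and at these points — both following from Lemma~\ref{vyypo} together with the explicit formulas for the $C_i(h)$ and their signs recorded just before Lemma~\ref{hinfty} — and that the change of variables $x=-\tilde V_y(y,h)$ correctly matches the $y$-thresholds with $W_{\rm low},W_{\rm ref},W_{\rm peak}$. The algebraic simplification in Step~2 is routine once the indicial identity $q_i(q_i-1)=r/k$ is in hand.
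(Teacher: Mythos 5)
Your proof is correct and follows exactly the route the paper intends: the paper omits the argument (deferring to Lemma 5.6 of \cite{deng2021}), but its own setup — defining $f(\cdot,h)$ as the inverse of $-\tilde V_y(\cdot,h)$ via (\ref{detf}) and invoking Lemma \ref{vyypo} for strict monotonicity — is precisely your inverse/implicit function theorem argument. Your branch-by-branch differentiation of (\ref{tilv}) using the indicial identity $q_i(q_i-1)=r/k$ and the matching of the $y$-thresholds with $W_{\rm low},W_{\rm ref},W_{\rm peak}$ correctly reproduce (\ref{fx}) and (\ref{fh}).
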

\begin{proof}
The proof is similar to Lemma 5.6 in \cite{deng2021} and omitted here.
\end{proof}
\begin{lemma}\label{lipsc}
		The function $c^{*}_{\rm primal}$ is locally Lipschitz on $\mathcal{C}$ and the function $\pi^{*}_{\rm primal}$ is Lipschitz on $\mathcal{C}$.
\end{lemma}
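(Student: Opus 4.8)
The plan is to reduce the claim to the explicit formulas of Theorem~\ref{optpolicy}, using throughout that both feedback functions are continuous on $\mathcal{C}$ and $C^{1}$ on each of the four open subregions $\{\tfrac{\lambda h}{\gamma}<x<W_{\rm low}(h)\}$, $\{W_{\rm low}(h)<x<W_{\rm ref}(h)\}$, $\{W_{\rm ref}(h)<x<W_{\rm peak}(h)\}$, $\{W_{\rm peak}(h)<x<W_{\rm updt}(h)\}$, which are separated by the graphs $x=W_{\rm low}(h),W_{\rm ref}(h),W_{\rm peak}(h)$; these graphs are smooth in $h$ since the coefficients $C_{i}(h)$ of~\eqref{C1}--\eqref{C7} are finite linear combinations of exponentials. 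Continuity of $c^{*}_{\rm primal}$ across them is read off from Lemma~\ref{fc1}: at $x=W_{\rm low}(h)$ one has $f=e^{(\alpha-\lambda)\beta_{1}h}$, so both one-sided values of $c^{*}_{\rm primal}$ equal $\lambda h$; at $x=W_{\rm ref}(h)$, $f=1$ and both equal $\alpha h$; at $x=W_{\rm peak}(h)$, $f=e^{-(1-\alpha)\beta_{2}h}$ and both equal $h$. For $\pi^{*}_{\rm primal}$, write $\pi^{*}_{\rm primal}(x,h)=\tfrac{\mu-r}{\sigma^{2}}\,y\tilde{V}_{yy}(y,h)\big|_{y=f(x,h)}$; since $\tilde{V}$, $\tilde{V}_{y}$ agree across each curve by the smooth-fit conditions and $\tilde{U}(\cdot,h)$ is continuous, the ODE~\eqref{ode} and its $y$-derivatives, differentiated also in $h$, force $\tilde{V}_{yy},\tilde{V}_{yyy},\tilde{V}_{yh},\tilde{V}_{yyh}$ all to be continuous there, so $\pi^{*}_{\rm primal}$ is in fact $C^{1}$ on all of $\mathcal{C}$.

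The local Lipschitz property of $c^{*}_{\rm primal}$ then follows by factoring $c^{*}_{\rm primal}(x,h)=c^{*}\big(f(x,h),h\big)$ with $c^{*}(\cdot,\cdot)$ as in~\eqref{c}: the outer map is smooth on each of its four pieces, continuous across the smooth curves $y=e^{(\alpha-\lambda)\beta_{1}h}$, $y=1$, $y=e^{-(1-\alpha)\beta_{2}h}$, and its only nonsmooth term $\ln y$ is Lipschitz as soon as $y$ stays in a compact subset of $(0,\infty)$, so $c^{*}$ is locally Lipschitz in $(y,h)$; and $f$ is locally Lipschitz on $\mathcal{C}$, because Lemma~\ref{fc1} gives $f_{x}=-1/\tilde{V}_{yy}(f,\cdot)$ and $f_{h}=\tilde{V}_{yh}(f,\cdot)f_{x}$ on each subregion, these extend continuously (hence boundedly on compacta) up to each subregion's closure since $\tilde{V}_{yy}>0$ there by Lemma~\ref{vyypo}, and $f$ is continuous across the separating curves, so a routine chaining argument along polygonal paths gives the bound; composition preserves local Lipschitz continuity.

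The substantive part is the \emph{global} Lipschitz estimate for $\pi^{*}_{\rm primal}$, for which the gradient must stay bounded uniformly over the unbounded region $\mathcal{C}$, in particular as $h\to\infty$ and as $x$ approaches the curved boundaries $x=\tfrac{\lambda h}{\gamma}$ and $x=W_{\rm updt}(h)$. Differentiating $\pi^{*}_{\rm primal}=\tfrac{\mu-r}{\sigma^{2}}y\tilde{V}_{yy}|_{y=f}$ and using $f_{x}=-1/\tilde{V}_{yy}$ yields, on each subregion,
\[
\partial_{x}\pi^{*}_{\rm primal}=-\frac{\mu-r}{\sigma^{2}}\Big(1+\frac{f\,\tilde{V}_{yyy}(f,\cdot)}{\tilde{V}_{yy}(f,\cdot)}\Big),\qquad
\partial_{h}\pi^{*}_{\rm primal}=\frac{\mu-r}{\sigma^{2}}\Big(f\,\tilde{V}_{yyh}(f,\cdot)-\tilde{V}_{yh}(f,\cdot)\Big(1+\frac{f\,\tilde{V}_{yyy}(f,\cdot)}{\tilde{V}_{yy}(f,\cdot)}\Big)\Big),
\]
where in region $i$ one has $\tilde{V}_{yy}=\tfrac{r}{k}\big(C_{2i-1}(h)y^{q_{1}-2}+C_{2i}(h)y^{q_{2}-2}\big)+\tfrac{\delta_{i}}{\gamma\beta_{j(i)}y}$ with $\delta_{i}=1$ in regions $2,3$ and $\delta_{i}=0$ in regions $1,4$ — a clean form that relies on the identity $q_{1}(q_{1}-1)=q_{2}(q_{2}-1)=\gamma/k=r/k$ (valid since $r=\gamma$), which also cancels many of the cross-terms produced by differentiation. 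Region~$1$ is transparent: since $C_{2}(h)\equiv0$, solving~\eqref{f1} gives $\pi^{*}_{\rm primal}(x,h)=\tfrac{\mu-r}{\sigma^{2}}(1-q_{1})\big(x-\tfrac{\lambda h}{\gamma}\big)$, affine and vanishing on $x=\tfrac{\lambda h}{\gamma}$, consistent with~\eqref{boundcon1}. In the remaining regions one expands the displayed derivatives into rational expressions in $C_{i}(h),C_{i}'(h)$ and the powers $f_{i}^{q_{j}-\ell}$ and bounds them term by term: every summand of $f\tilde{V}_{yyh}$ and of $\tilde{V}_{yh}$ is $\mathcal{O}(1)$ once each $C_{i}$, respectively $C_{i}'$ (of the same exponential order, by differentiating~\eqref{C1}--\eqref{C7}), is paired with the corresponding power of $f_{i}$ over the subregion ranges $1\le f_{2}<e^{(\alpha-\lambda)\beta_{1}h}$ and $e^{-(1-\alpha)\beta_{2}h}\le f_{3}<1$ (Lemma~\ref{hinfty}); in region~$4$, $C_{7},C_{8}>0$ make every summand of $\tilde{V}_{yy}$ positive, so $f\tilde{V}_{yyy}/\tilde{V}_{yy}\in[q_{1}-2,q_{2}-2]$ automatically; in regions~$2,3$ one must additionally control $f\tilde{V}_{yyy}/\tilde{V}_{yy}$ in the regime where the summands of $\tilde{V}_{yy}$ nearly cancel, using the sign data ($C_{4}<0$, $C_{1},C_{7},C_{8}>0$), the positivity $\tilde{V}_{yy}>0$ of Lemma~\ref{vyypo}, and the continuity (from the first paragraph) of the ratio across the boundary curves, where it agrees with the bounded region-$1$ or region-$4$ value. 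With a uniform bound $|\nabla\pi^{*}_{\rm primal}|\le L$ on each subregion, the $C^{1}$-gluing together with the quasiconvexity of $\mathcal{C}$ (its boundary curves $x=\tfrac{\lambda h}{\gamma}$ and $x=W_{\rm updt}(h)$ have uniformly bounded slopes, the latter asymptotically linear as $h\to\infty$) upgrades this to a global Lipschitz constant. I expect the last point — controlling $f\tilde{V}_{yyy}/\tilde{V}_{yy}$ and its $h$-analogue uniformly in regions~$2$ and~$3$, where the three additive pieces of $\tilde{V}_{yy}$ can each blow up and a priori cancel — to be the main obstacle.
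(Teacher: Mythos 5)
Your overall strategy coincides with the paper's: local Lipschitz continuity of $c^{*}_{\rm primal}$ and $\pi^{*}_{\rm primal}$ is read off from the $C^{1}$ regularity and boundary continuity of $f$ in Lemma~\ref{fc1}, and the global Lipschitz bound for $\pi^{*}_{\rm primal}$ is sought by bounding $\partial_{x}\pi^{*}_{\rm primal}$ and $\partial_{h}\pi^{*}_{\rm primal}$ region by region. Your identity $\partial_{x}\pi^{*}_{\rm primal}=-\frac{\mu-r}{\sigma^{2}}\bigl(1+f\tilde{V}_{yyy}/\tilde{V}_{yy}\bigr)$ is algebraically equivalent to the paper's $(1-q_{2})+A_{i}/B_{i}$ decomposition (note $B_{2}=-f\tilde{V}_{yy}(f,h)$), and your region-$1$ computation $\pi^{*}_{\rm primal}=\frac{\mu-r}{\sigma^{2}}(1-q_{1})\bigl(x-\frac{\lambda h}{\gamma}\bigr)$ is correct.

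However, the proposal is not a complete proof: the step you yourself flag as ``the main obstacle'' --- a lower bound on $f\,\tilde{V}_{yy}(f,h)$ in regions $2$ and $3$ that is positive \emph{uniformly in $h$} --- is exactly where the substance of the lemma lies, and you offer no mechanism that delivers it. Lemma~\ref{vyypo} gives $\tilde{V}_{yy}>0$ pointwise, and continuity of the ratio across the separating curves controls it on compacta, but neither excludes the possibility that $\inf_{y} y\tilde{V}_{yy}(y,h)$ over the subregion tends to $0$ as $h\to\infty$, in which case $\partial_{x}\pi^{*}_{\rm primal}$ would blow up and no global Lipschitz constant would exist. The paper closes this gap concretely: since $C_{4}(h)<0$ and $q_{1}<0<1<q_{2}$, the function $y\mapsto \frac{r}{k}C_{3}(h)y^{q_{1}-1}+\frac{r}{k}C_{4}(h)y^{q_{2}-1}+\frac{1}{\gamma\beta_{1}}$ is monotone or first increasing then decreasing, so its minimum over $[1,e^{(\alpha-\lambda)\beta_{1}h}]$ is attained at an endpoint (inequality~(\ref{b2ineq})); the endpoint values are explicit continuous functions of $h$ with strictly positive limit $\frac{1}{\gamma\beta_{1}}\frac{q_{2}-1}{q_{2}-q_{1}}$ as $h\to\infty$, which yields a uniform constant $\bar{B}_{2}>0$. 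A parallel explicit estimate (boundedness of $C_{3}'(h)$, $C_{4}'(h)$ combined with the same $\bar{A}_{2}$, $\bar{B}_{2}$) is required for $\partial_{h}\pi^{*}_{\rm primal}$, which your sketch treats only by analogy. Without these $h\to\infty$ asymptotics the global Lipschitz claim remains unproved.
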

\begin{proof}	
See Appendix \ref{appc}.
\end{proof}	
\vskip 15pt
\section{Numerical Analysis with Fixed Parameters}\label{numana}

	This section aims to illustrate and analyze some properties of the optimal policy and relevant boundaries by fixing the market parameters and numerically computing the results presented in Theorems \ref{vfunction} and \ref{optpolicy}.

For simplicity, we define the boundary of the lowest wealth level to satisfy the consumption constraint $c>\lambda h$ as

\begin{equation*}
	    W_{\rm bkrp}(h)=\frac{\lambda h}{\gamma}.
\end{equation*}
	The effective region is then between the two boundaries $x=W_{\rm updt}(h)$ and $x=W_{\rm bkrp}(h)$. Using three boundaries $x=W_{\rm low}(h),x=W_{\rm ref}(h)$ and $x=W_{\rm peak}(h)$, the effective region is further separated into four parts where the investor takes different strategies in consumption and portfolio selection due to different states of wealth and habit.  The ineffective region is separated into two parts $\mathcal{C}^{c}_{1}$ and $\mathcal{C}^{c}_{2}$:
	
	\begin{align*}
	    &\mathcal{C}^{c}_{1}\triangleq\big\{(x,h)\big|x<W_{\rm bkrp}(h),\ h>0 \big\},\\
	    &\mathcal{C}^{c}_{2}\triangleq\big\{(x,h)\big|x>W_{\rm updt}(h),\ h>0 \big\}.
	\end{align*}
	$\mathcal{C}^{c}_{1}$ defines the region where the wealth is too low to maintain the lowest consumption level $c=\lambda h$. The other part $\mathcal{C}^{c}_{2}$ implies that the wealth is so high w.r.t the current running maximum level (it can only happen at time $t=0$ in the optimal case) that it is optimal to consume at a level strictly higher than the running maximum and forces $(x,h)$ to jump to $x=W_{\rm updt}(h)$.
	
	\begin{figure}[ht]
\centering
\includegraphics[width=4in, keepaspectratio]{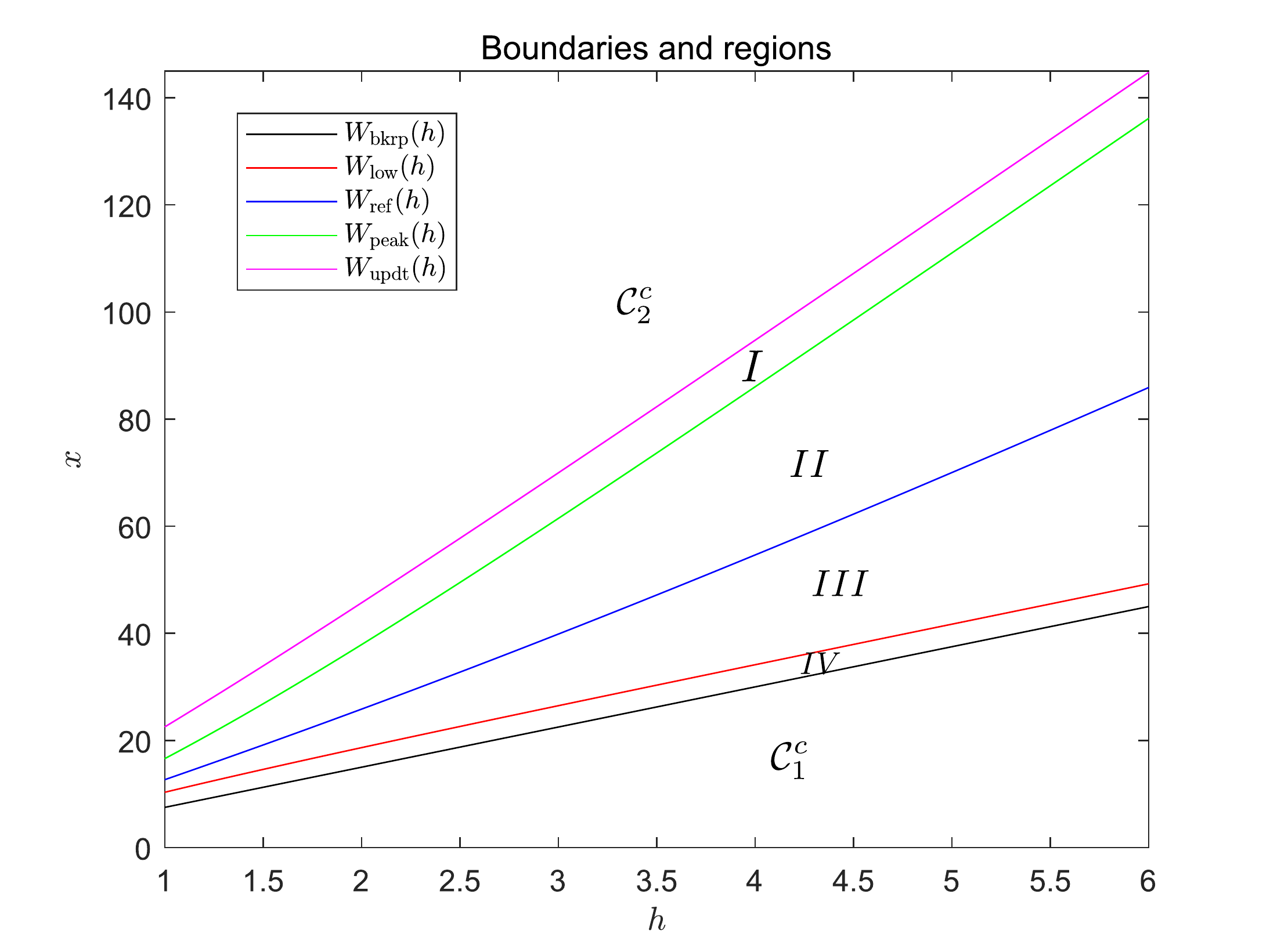}\\
\caption{Boundary curves $x=W_{\rm bkrp}(h),x=W_{\rm low}(h),x=W_{\rm ref}(h),x=W_{\rm peak}(h)$ and $x=W_{\rm updt}(h)$, different sub-regions \uppercase\expandafter{\romannumeral 1}$\sim$\uppercase\expandafter{\romannumeral 4} (\uppercase\expandafter{\romannumeral 1}:satisfactory, \uppercase\expandafter{\romannumeral 2}:recovery, \uppercase\expandafter{\romannumeral 3}:depression, \uppercase\expandafter{\romannumeral 4}:gloom) of the effective region and two sub-regions $\mathcal{C}_{i}^{c},i=1,2$ of the ineffective region with $\lambda=0.3,\ \alpha=0.7,\ \beta_{1}=1,\ \beta_{2}=2,\ r=0.04,\ \mu=0.12,\ \sigma=0.3$.}
\label{boundarygra}
\end{figure}

	We fix parameters $\lambda=0.3,\ \alpha=0.7,\ \beta_{1}=1,\ \beta_{2}=2,\ r=0.04,\ \mu=0.12,\ \sigma=0.3$ and compute all the boundaries that separate the ineffective region and the effective region and different sub-regions of the effective region. As shown in Figure \ref{boundarygra}, all boundaries are graphs of increasing functions w.r.t the variable $h$. For fixed running maximum level $h$, if the wealth $x$ is so low that $(x,h)$ fall into $\mathcal{C}_{1}^{c}$, the investor is too poor to sustain the lower bound consumption constraint; if the wealth is a little higher that $(x,h)$ belongs to sub-region \uppercase\expandafter{\romannumeral 4} of the effective region, then the investor consumes at the lowest level set by the drawdown constraint; if the wealth is higher but not high enough (sub-region \uppercase\expandafter{\romannumeral 3}), the investor chooses to consume at a higher level than the lowest level but no more than the reference point level $\alpha h$ where risk aversion increases; as the wealth increase and $(x,h)$ enters sub-region \uppercase\expandafter{\romannumeral 2}, the investor has enough wealth to consume above the reference point level $\alpha h$ but not enough to reach the running maximum level; with some more wealth than the former case (now in sub-region \uppercase\expandafter{\romannumeral 1}), the investor is able the consume at the running maximum level but not wealthy enough to update it; if the investor has more wealth than $W_{\rm updt}(h)$, then he may update the running maximum immediately by consuming above the historical running maximum level and causes $(x,h)$ to jump onto the boundary $x=W_{\rm updt}(h)$, where he consumes at the running maximum level and continuously updates it. Based on the aforementioned economic interpretation, we name the curve $W_{\rm updt}(h)$ as ``bliss" curve, and four sub-regions \uppercase\expandafter{\romannumeral 1}-\uppercase\expandafter{\romannumeral 4} as ``satisfactory", ``recovery", ``depression" and ``gloom" region, respectively. As can be seen from the analysis below, such a division of state space helps to provide a structural description of both consumption and investment behaviour under our model.
	
	\begin{figure}[ht]
\centering
{%
    \includegraphics[width=.31\linewidth]{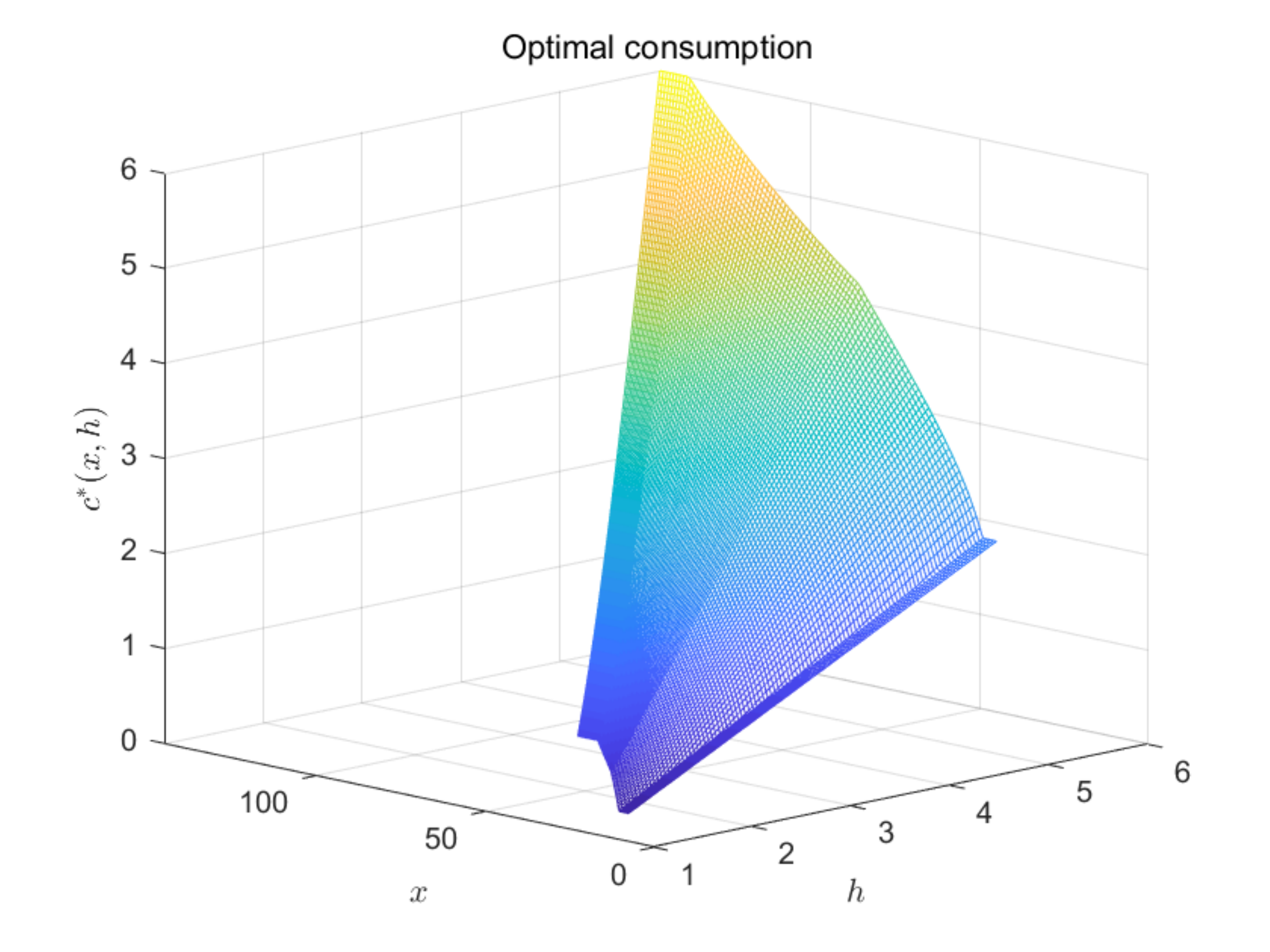}\quad
    \includegraphics[width=.31\linewidth]{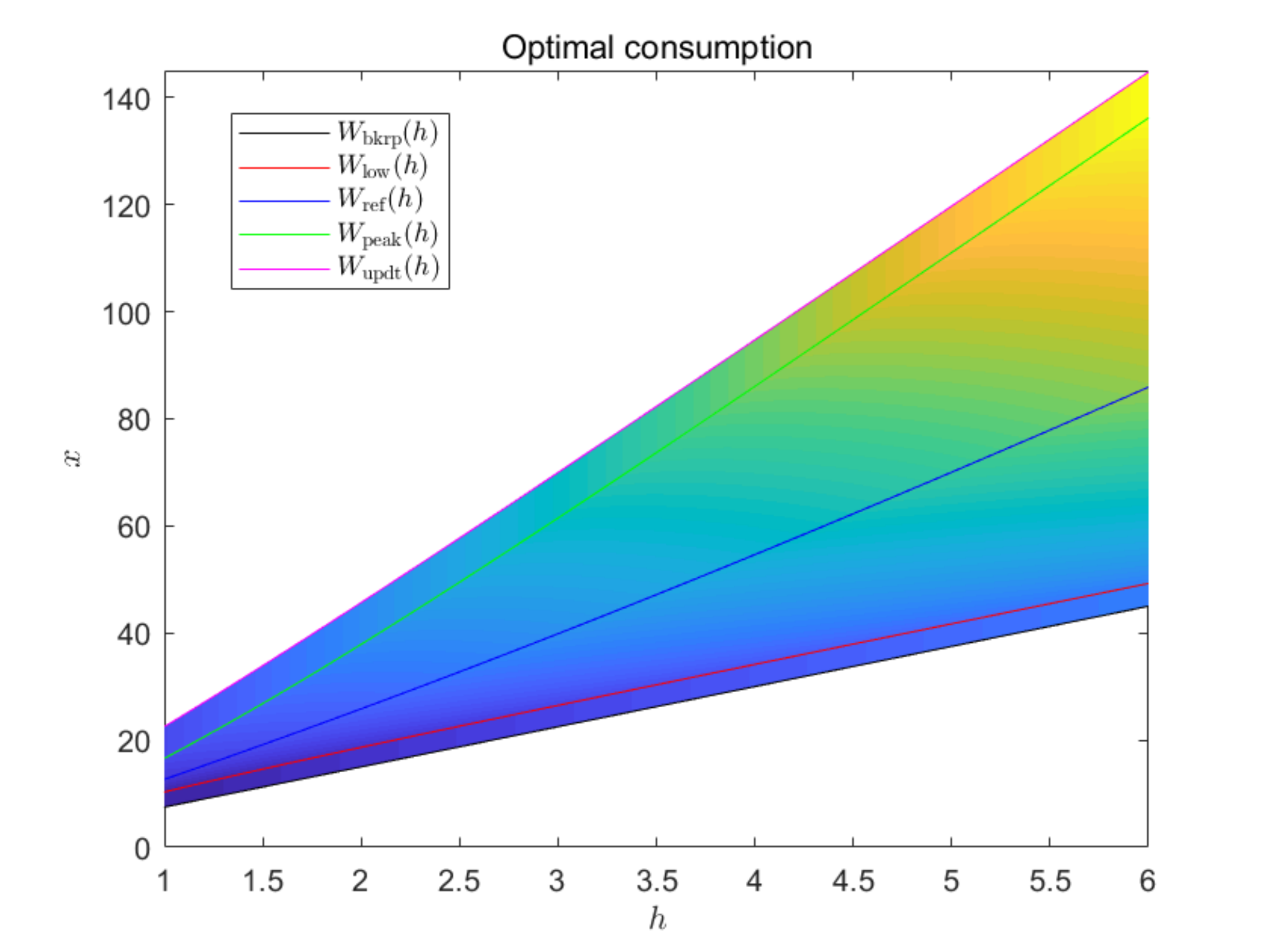}\quad
    \includegraphics[width=.31\linewidth]{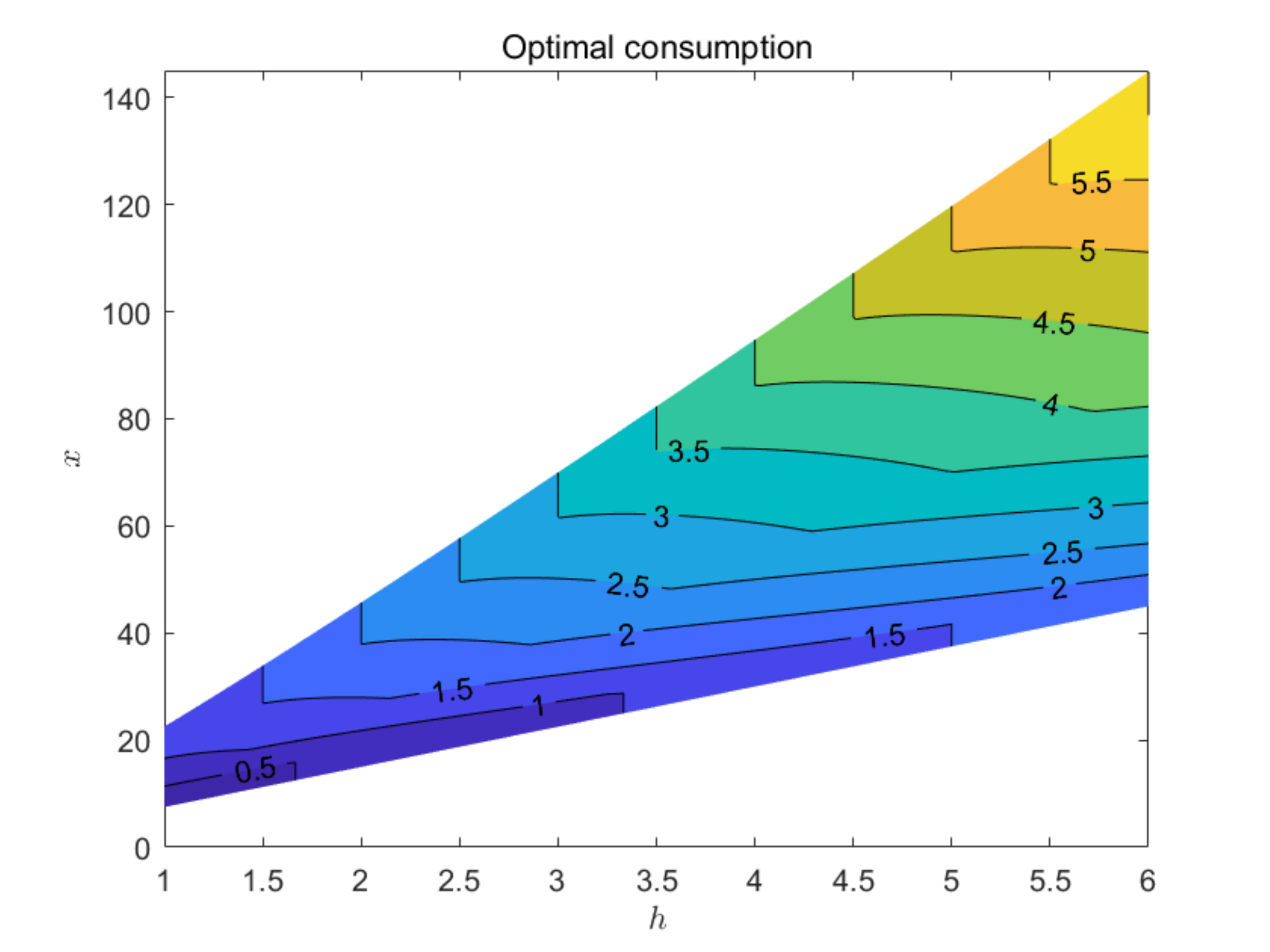}
    }
\caption{Optimal consumption with fixed parameters $\lambda=0.3,\ \alpha=0.7,\ \beta_{1}=1,\ \beta_{2}=2,\ r=0.04,\ \mu=0.12,\ \sigma=0.3$. The left panel plots the optimal consumption as a function of wealth and consumption peak. The middle panel is the two-dimensional projection of the left panel with the boundaries shown. The right panel is the contour plot.}
\label{congra}
\end{figure}
	
	The optimal consumption, shown in Figure \ref{congra}, is non-decreasing in both the wealth $x$ and the habit $h$. In satisfactory region and gloom region, the optimal consumption is indifferent with respect to $x$. That is, for the poorest people in effective region (gloom), they consume as little as possible, while for the wealthiest (satisfactory), they revisit their historical peak of consumption rate. However, in depression region and recovery region, increasing wealth will lead to an increase in optimal consumption and the increase is more substantial in the region with lower risk aversion (i.e., more substantial in depression region when $\beta_{1}<\beta_{2}$, and in recovery region when $\beta_1>\beta_2$). The above analysis suggests that increasing wealth causes one to consume more only in the following two cases: the first case is when the wealth is at least $W_{\rm updt}(h)$, he is so rich that he decides to consume more even at the cost of raising running maximum $h$; the other case is when his wealth is more than $W_{\rm low}(h)$ but less than $W_{\rm peak}(h)$. There are two sub-cases divided by whether the wealth is more than $W_{\rm ref}(h)$ in the second case, and the (marginal propensity to consume) MPC out of wealth is generally higher  in the region with lower risk aversion.
	
		\begin{figure}[ht]
\centering
    {%
    \includegraphics[width=.31\linewidth]{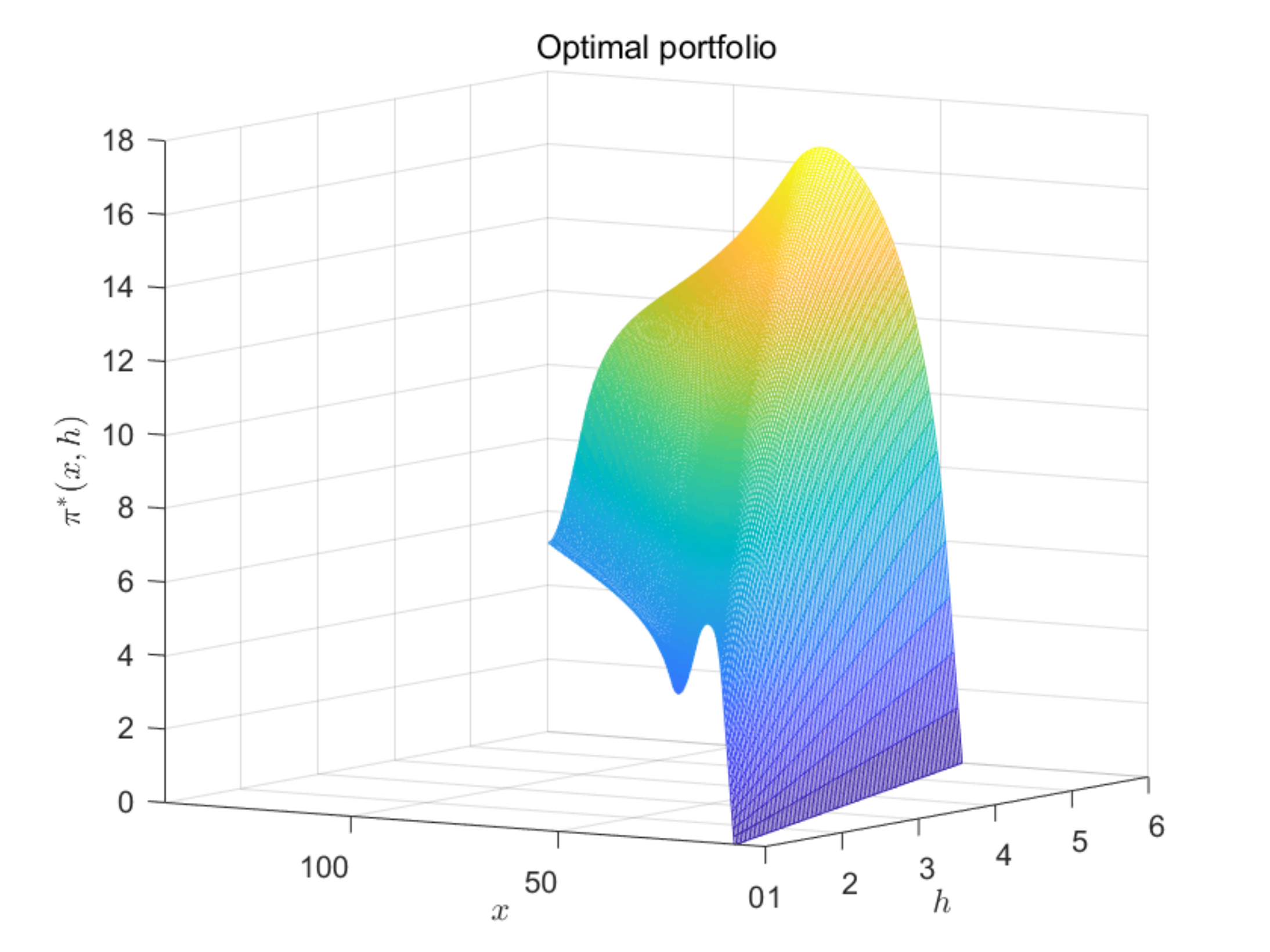}\quad
    \includegraphics[width=.31\linewidth]{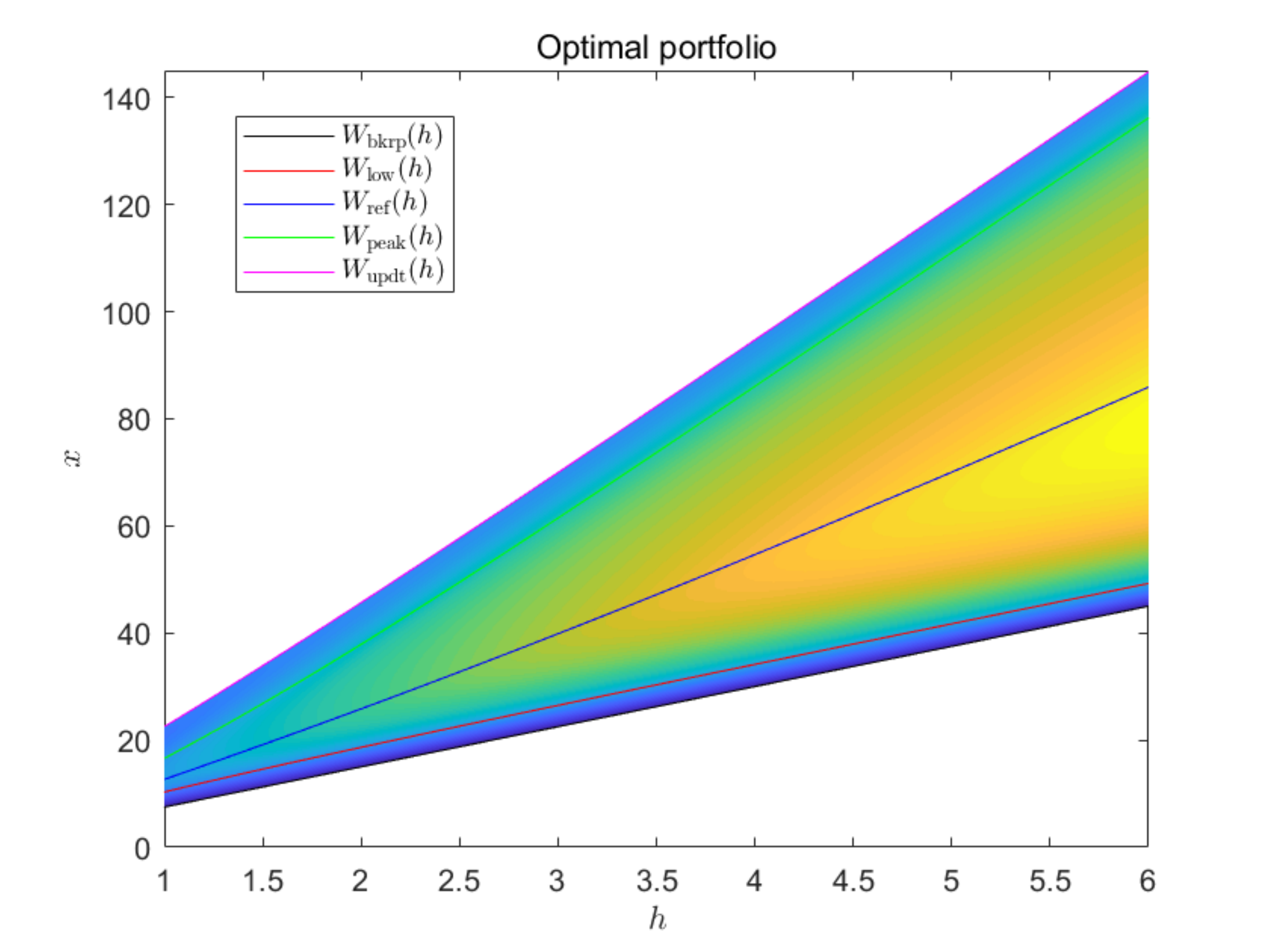}\quad
    \includegraphics[width=.31\linewidth]{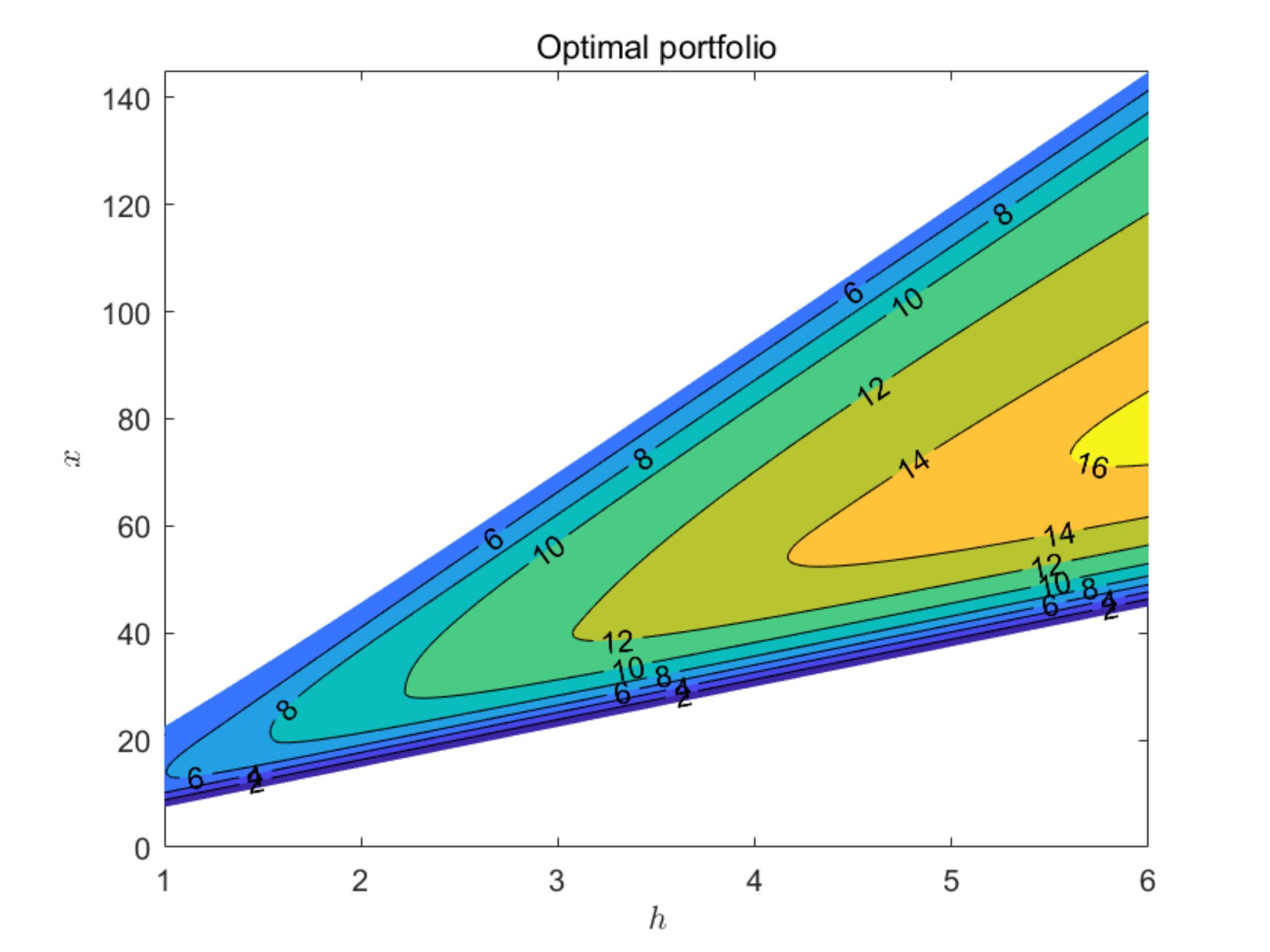}
    }
\caption{Optimal portfolio with fixed parameters $\lambda=0.3,\ \alpha=0.7,\ \beta_{1}=1,\ \beta_{2}=2,\ r=0.04,\ \mu=0.12,\ \sigma=0.3$. The left panel plots the optimal portfolio as a function of wealth and consumption peak. The middle panel is the two-dimensional projection of the left panel with the boundaries shown. The right panel is the contour plot.}
\label{portgra}
\end{figure}

	\begin{figure}[ht]
\centering
    {%
    \includegraphics[width=.31\linewidth]{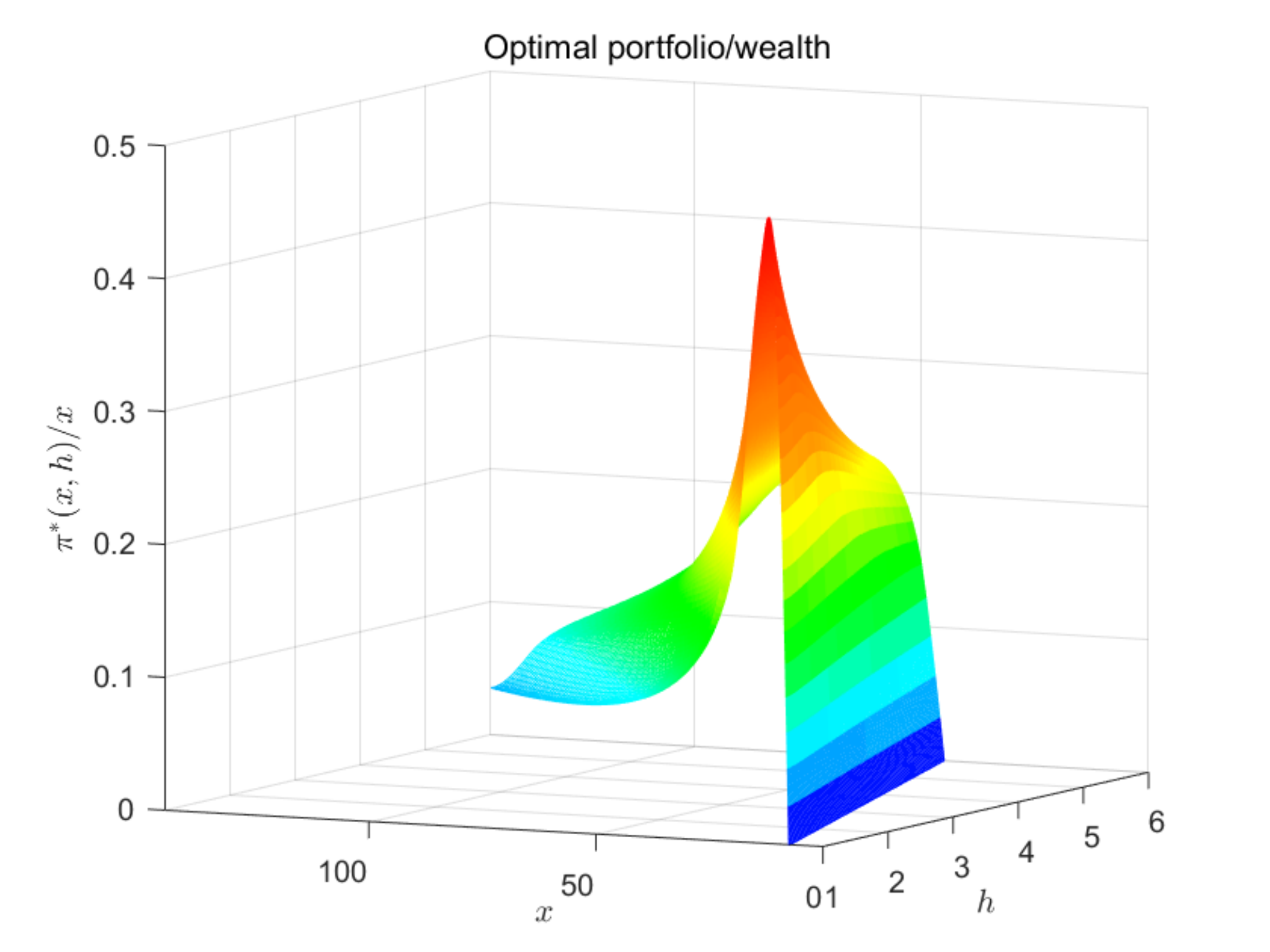}\quad
    \includegraphics[width=.31\linewidth]{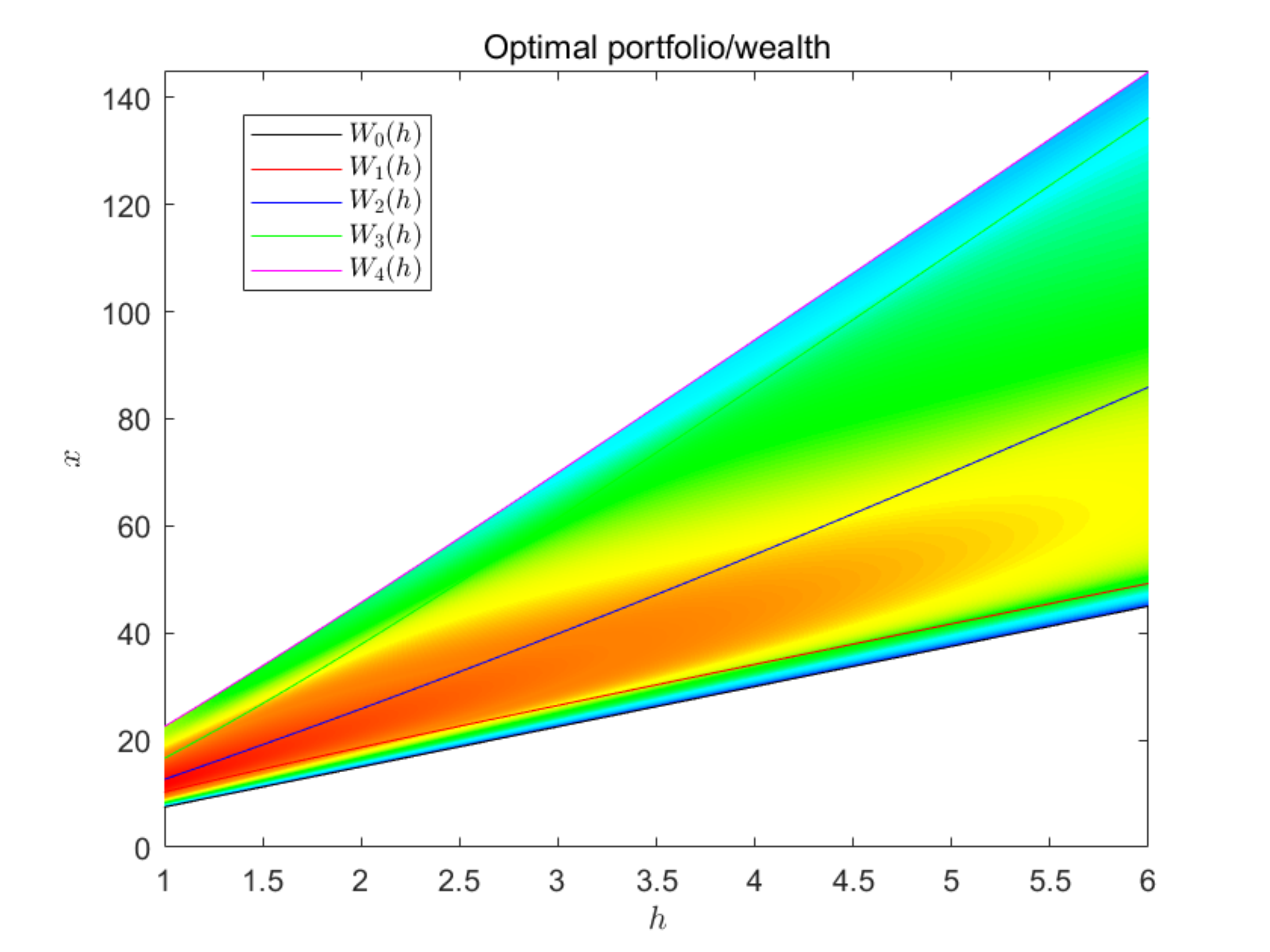}\quad
    \includegraphics[width=.31\linewidth]{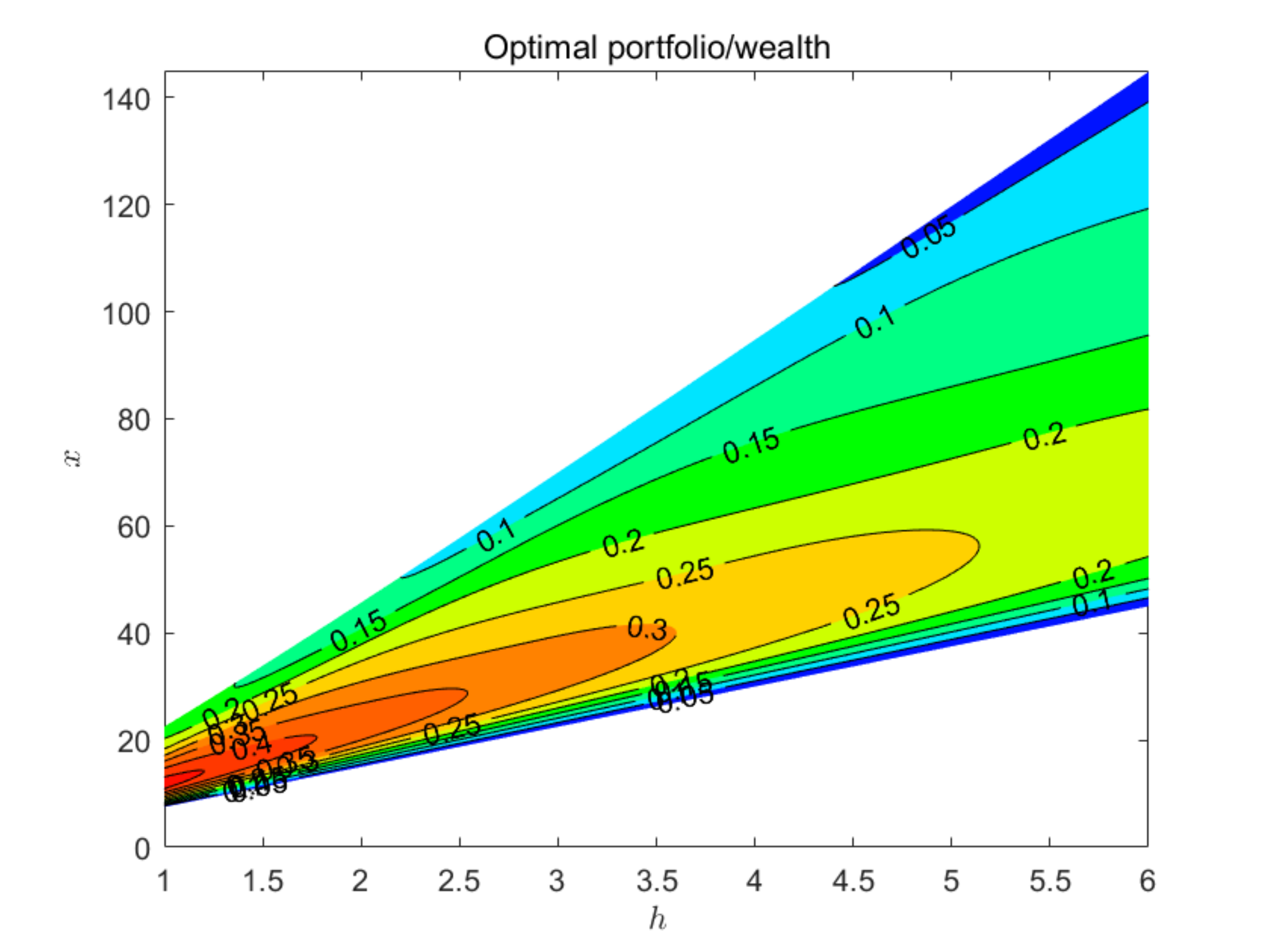}
    }
\caption{Optimal risky investment proportion with fixed parameters $\lambda=0.3,\ \alpha=0.7,\ \beta_{1}=1,\ \beta_{2}=2,\ r=0.04,\ \mu=0.12,\ \sigma=0.3$. The left panel plots the optimal risky investment proportion as a function of wealth and consumption peak. The middle panel is the two-dimensional projection of the left panel with the boundaries shown. The right panel is the contour plot.}
\label{potowe}
\end{figure}

	The optimal portfolio is shown in Figure \ref{portgra}. The behavior of the optimal portfolio varies significantly in different regions. For fixed habit $h$, the optimal portfolio sees a dramatic increase with respect to the variable $x$ in gloom and depression region where the risk aversion is low. However, once crossing $x=W_{\rm ref}(h)$ and the risk aversion shifting to the high level, increasing wealth causes the optimal portfolio to fall instead. The above result indicates that the change of risk aversion has an overwhelming impact on portfolio selection in our model. For those in gloom and depression region, earning money which increases his wealth stimulates him to invest more in risky assert; while for rich people in recovery and satisfactory region, as well as on bliss curve, the more he earns, the less he is willing to invest in risky assert. Similar conclusions can be obtained from the analysis of optimal proportion of wealth invested in risky assets, or economically, optimal portfolio allocation (see Figure \ref{potowe}).
	
	\begin{figure}[ht]
\centering
    {%
    \includegraphics[width=.31\linewidth]{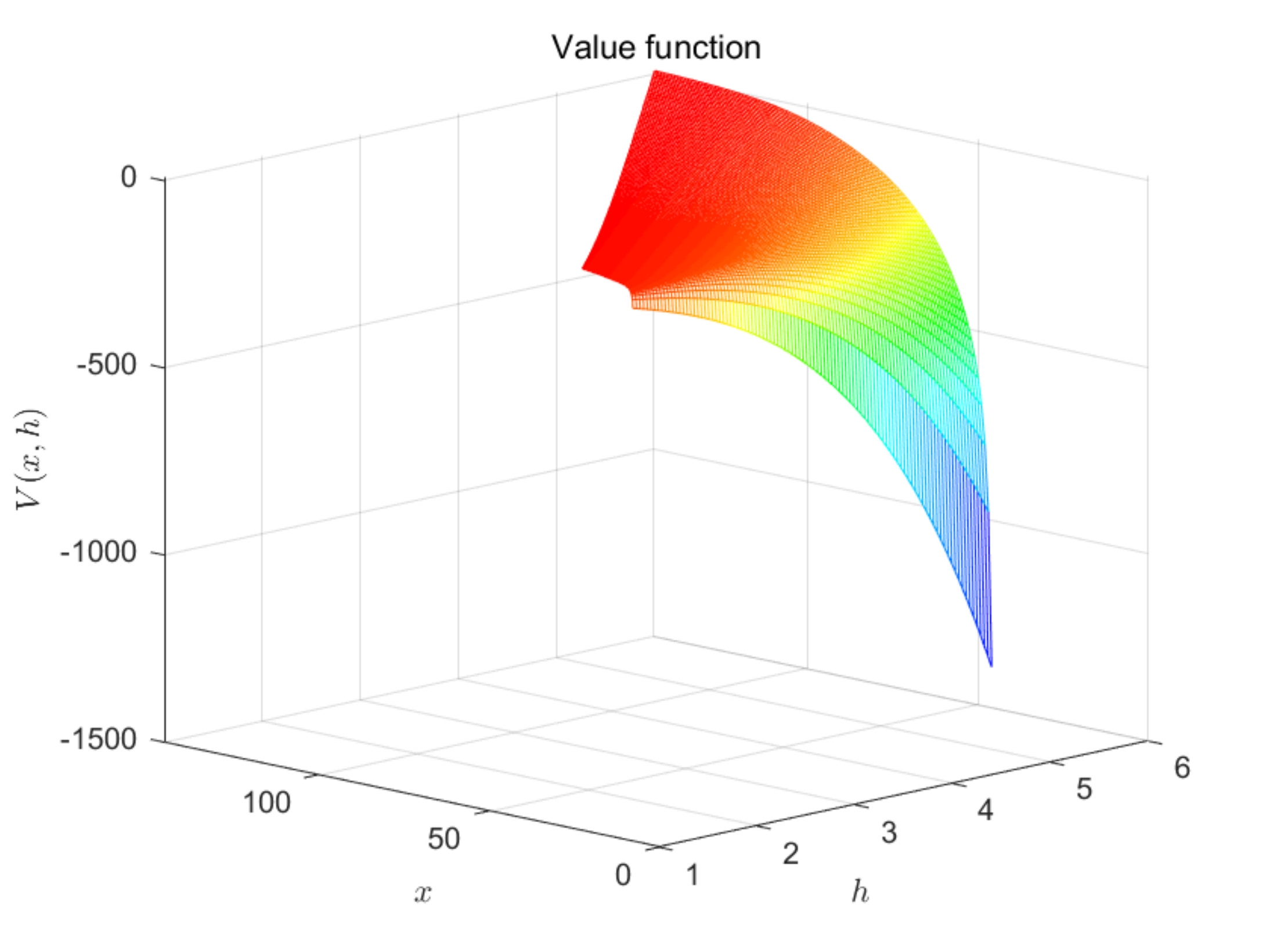}\quad
    \includegraphics[width=.31\linewidth]{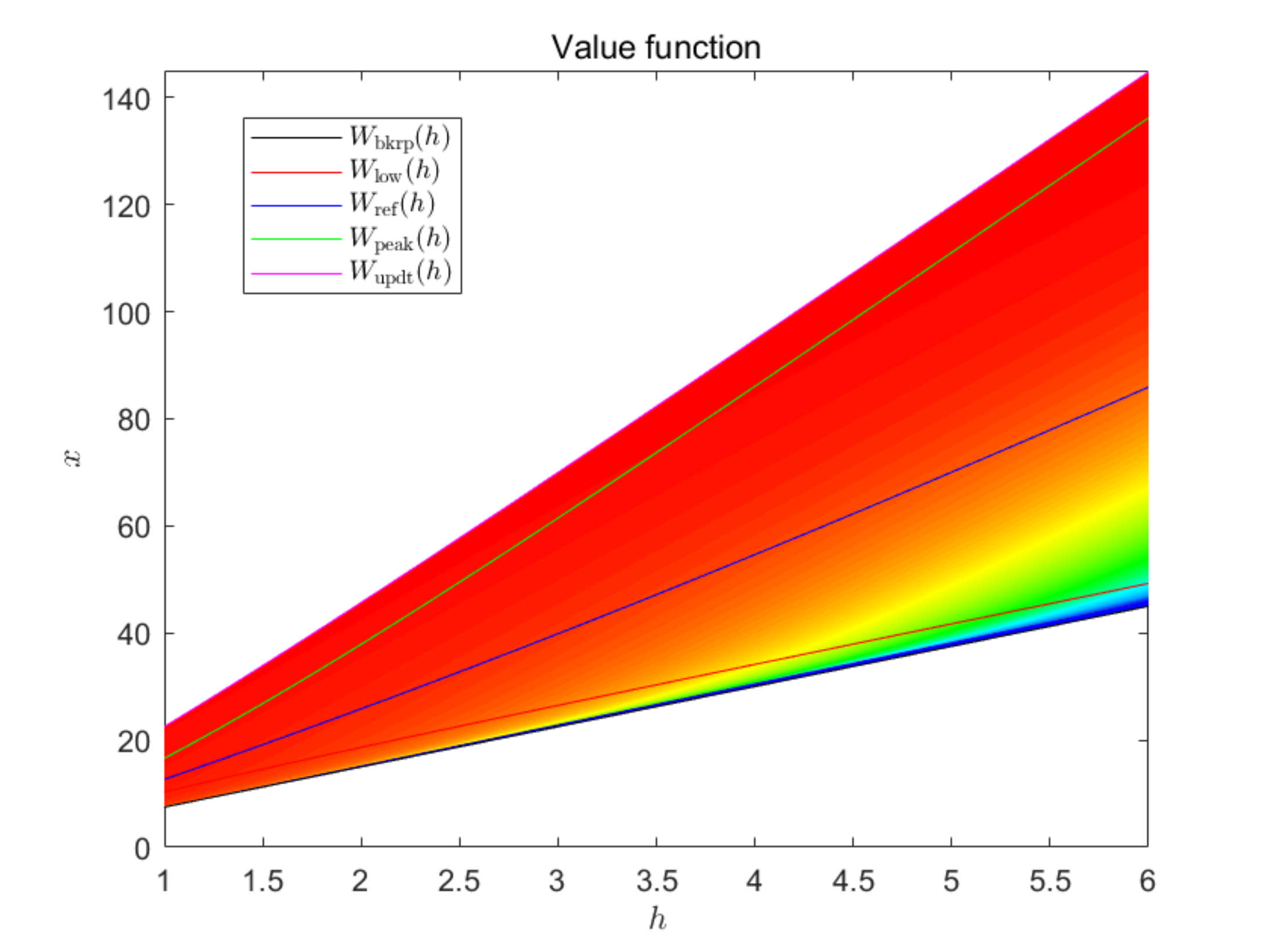}\quad
    \includegraphics[width=.31\linewidth]{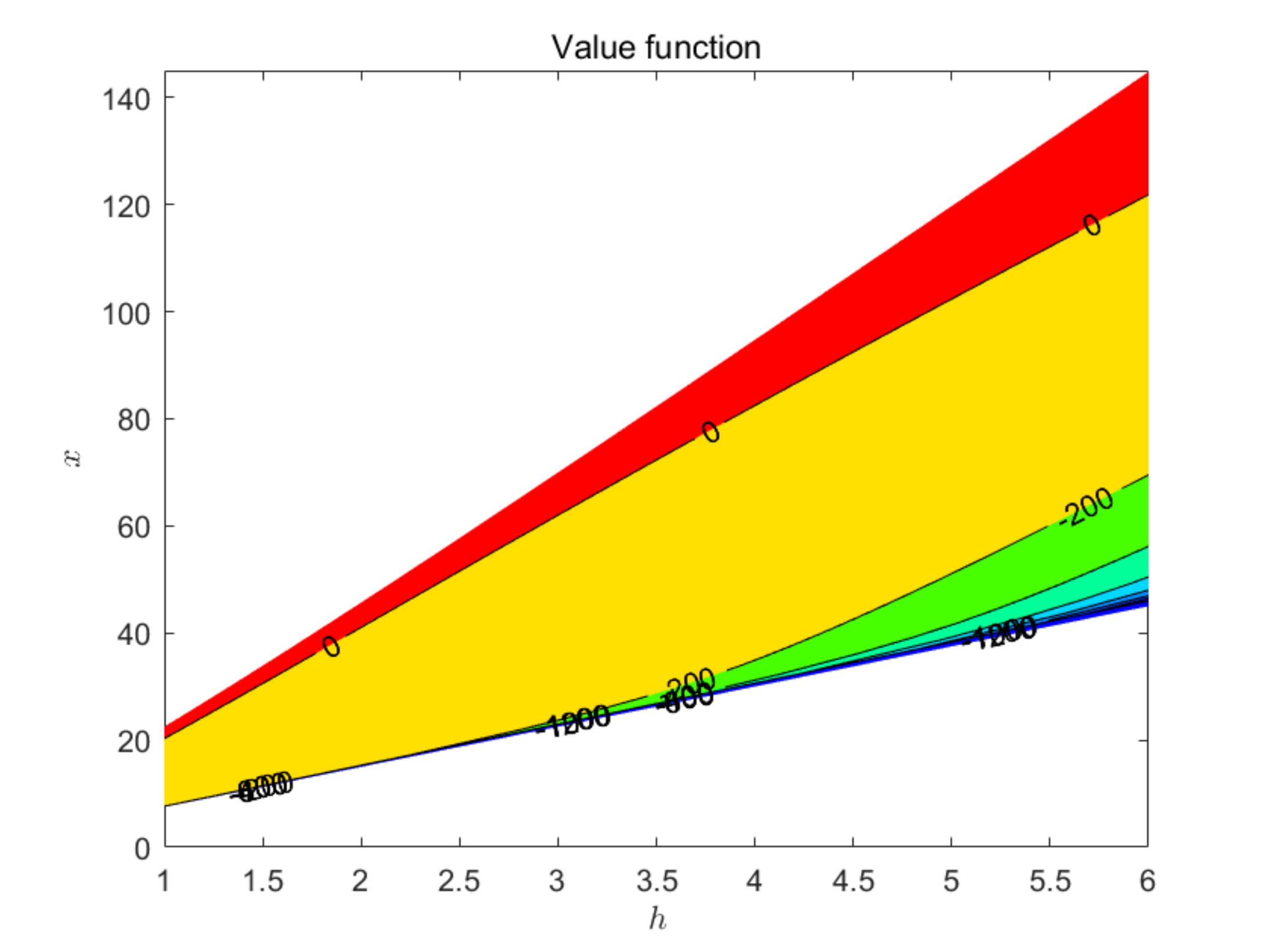}
    }
\caption{Value function with fixed parameters $\lambda=0.3,\ \alpha=0.7,\ \beta_{1}=1,\ \beta_{2}=2,\ r=0.04,\ \mu=0.12,\ \sigma=0.3$. The left panel plots the value function as a function of wealth and consumption peak. The middle panel is the two-dimensional projection of the left panel with the boundaries shown. The right panel is the contour plot.}
\label{valuegra}
\end{figure}

    The value function shown in Figure \ref{valuegra}, is increasing in wealth $x$ and decreasing in habit $h$, which suggests that higher initial wealth and lower inherited past spending maximum result in higher optimal value for Problem (\ref{problem}). Meanwhile, for regions below $x=W_{\rm ref}(h)$ where the risk aversion is low, especially for gloom region, the value function  will fall dramatically due to a slight decrease in $x$ or a slight increase in $h$. Nevertheless, for regions above $x=W_{\rm ref}(h)$ where the risk aversion is high, the value function does not vary significantly as $x$ and $h$ vary. The above result indicates that one can live almost as pleasant as a millionaire with his initial wealth equal to $W_{\rm ref}(h_{0})$ where $h_{0}$ is his historical consumption peak. Another fact shown is that poor people are much more vulnerable to wealth shocks than the wealthy.
	
	\begin{figure}[ht]
\centering
    {%
    \includegraphics[width=.48\linewidth]{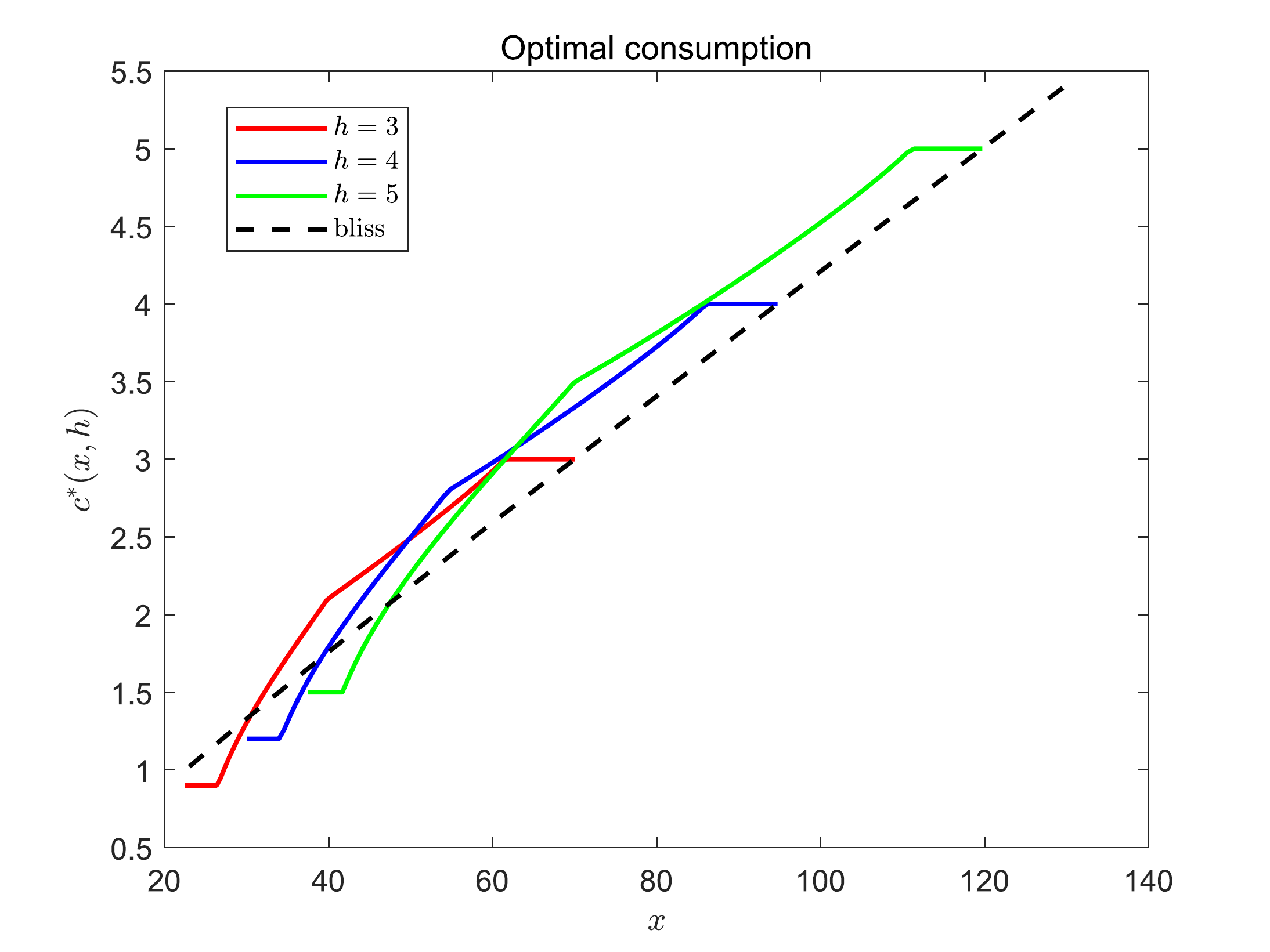}\quad
    \includegraphics[width=.48\linewidth]{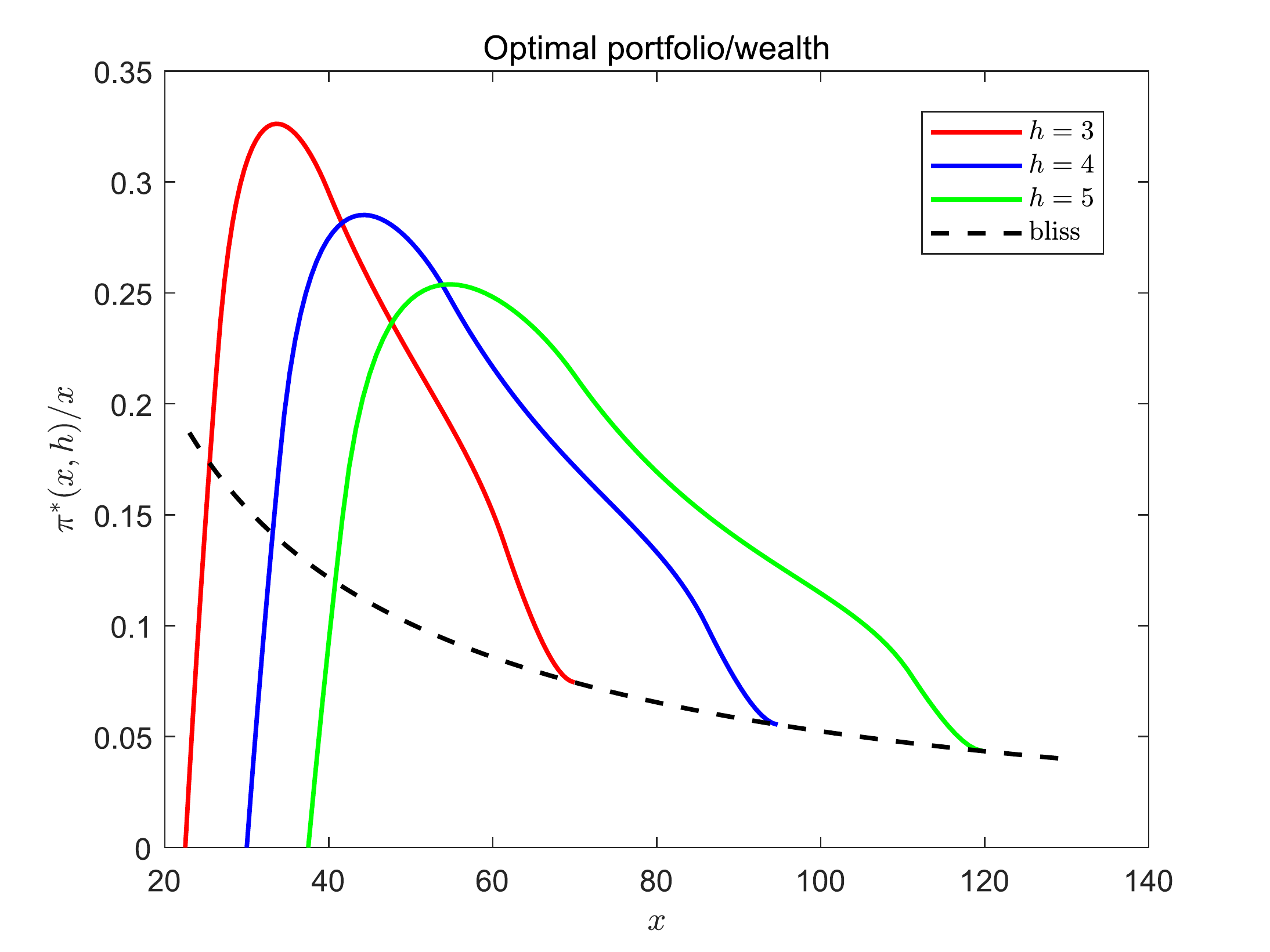}
    }
\caption{Optimal consumption and risky investment proportion in wealth with $h=3,4,5$ and fixed parameters $\lambda=0.3,\ \alpha=0.7,\ \beta_{1}=1,\ \beta_{2}=2,\ r=0.04,\ \mu=0.12,\ \sigma=0.3$. The bliss curve represents the optimal strategies when $x\geq W_{\rm updt}(h)$, and $h$ is updating on this curve.}
\label{difhcpix}
\end{figure}
	We are particularly interested in the decisions of consumption rate and risky investment proportion in terms of wealth, fixing a standard of living, i.e., the function $x\mapsto c^*(x,h)$ and $x\mapsto \pi^*(x,h)/x$, which we present in Figure \ref{difhcpix}. As can be seen, the sensitivity of the optimal consumption with respect to the variable $x$, or economically speaking, the marginal propensity to consume (MPC) out of wealth,  generally decreases with a growing wealth. This is admitted in vast economic literature. However, a dedicated analysis (see Remarks \ref{rmkc1}-\ref{rmkc3}) shows that while MPC out of wealth is indeed decreasing in lower part  of the depression-recovery region and at the bliss curve, it is instead increasing in $x$ in upper part of depression-recovery region. Besides, MPC out of wealth shrinks or swells by $\beta_1/\beta_2$ at $W_{\rm ref}(h)$, which is itself an interesting fact\footnote{Different from other related papers, the marginal utility of consumption is continuous at the reference point $c=\alpha h$, but we still document such an MPC shrink or MPC swell.}. This fact also indicates that the MPC in recovery region can be globally lower or higher than that in depression region, and it is lower in our numerical result (left panel of Figure \ref{difhcpix}). See Figure \ref{appendixfigure} in the introduction for a more illustrative version of optimal consumption when fixing $h$.
\begin{remark}\label{rmkc1}
	    Based on Theorem \ref{optpolicy} and Lemma \ref{fc1}, we have for $W_{\rm low}(h)<x\le W_{\rm peak}(h)$,
	    
	    \begin{equation*}
	        \frac{\partial c^{*}(x,h)}{\partial x}=\left\{
		\begin{array}{l}
		\frac{1}{\beta_{1}}\frac{1}{f_{2}(x,h)\tilde{V}_{yy}(f_{2}(x,h),h)},\ W_{\rm low}(h)<x\le W_{\rm ref}(h),\\
		\frac{1}{\beta_{2}}\frac{1}{f_{3}(x,h)\tilde{V}_{yy}(f_{3}(x,h),h)},\ W_{\rm ref}(h)<x\le W_{\rm peak}(h).\\
	\end{array}
	\right.
	    \end{equation*}
	    $\frac{r}{k}(C_{3}(h)+C_{4}(h))+\frac{1}{\gamma\beta_{1}}=\frac{r}{k}(C_{5}(h)+C_{6}(h))+\frac{1}{\gamma\beta_{2}}$ leads to the continuity of $y\tilde{V}_{yy}(y,h)$ at $y=1$. Hence $\frac{\partial c^{*}(x,h)}{\partial x}|_{x\rightarrow W_{\rm ref}(h)^{+}}=\frac{\beta_{1}}{\beta_{2}}\frac{\partial c^{*}(x,h)}{\partial x}|_{x\rightarrow W_{\rm ref}(h)^{-}}$, which indicates that the MPC out of wealth shrinks or swells by $\frac{\beta_{1}}{\beta_{2}}$ when exceeding $W_{\rm ref}(h)$.
	\end{remark}
	
	\begin{remark}\label{rmkc2}
	The bliss curve for optimal consumption is $c=W_{\rm updt}^{-1}(x)$. Hence the bliss curve is concave if and only if $W''_{\rm updt}(h)>0$. Direct computation shows
	
	\begin{equation*}
	    W''_{\rm updt}(h)=e^{-(1-\alpha)(q_{2}-1)\beta_{2}h}\Big[M_{1}e^{-(\alpha-\lambda)(q_{2}-1)\beta_{1}\big]h}-M_{2}\Big]
	\end{equation*}
	where 
	
	\begin{align*}
	    M_{1}=&\frac{\frac{k}{\gamma^{2}\beta_{1}}(1-q_{1})(q_{2}-1)^{2}(1-\alpha)^{q_{2}-1}\big[(1-\alpha)q_{2}\beta_{2}+(\alpha-\lambda)(q_{2}-1)\beta_{1}\big]}{(1-\alpha)(q_{2}-q_{1})\beta_{2}
+(\alpha-\lambda)(q_{2}-1)\beta_{1}}\Big[(1-\alpha)\beta_{2}+(\alpha-\lambda)\beta_{1}\Big]^{2},\\
M_{2}=&\frac{k}{\gamma^{2}}\frac{\beta_{2}-\beta_{1}}{\beta_{1}\beta_{2}}\frac{1-q_{1}}{q_{2}-q_{1}}q_{2}(1-\alpha)^{q_{2}+1}(q_{2}-1)^{2}\beta_{2}^{2}.
	\end{align*}
	
	Hence for $\beta_{1}<\beta_{2}$, we have $M_{1},M_{2}>0$ and thus there exists $\bar{h}=\frac{\ln(M_{1})-\ln(M_{2})}{(\alpha-\lambda)(q_{2}-1)\beta_{1}}$ such that the bliss curve is concave in $x$ for $h\le\bar{h}$. It implies that the MPC out of wealth decreases when $x\ge W_{\rm updt}(h)$ and $h\le\bar{h}$.  With current parameters, the threshold $\bar{h}$ is approximately $6.6$.
	
	For $\beta_{1}\ge\beta_{2}$, we have $M_{1}>0,M_{2}\le 0$ and thus the bliss curve is concave in $x$, which implies that the MPC out of wealth decreases when $x\ge W_{\rm updt}(h)$.
	
	\end{remark}
	
\begin{remark}\label{rmkc3}
	     With  Lemma \ref{vyypo}, Theorem \ref{optpolicy} and Lemma \ref{fc1}, it can be shown that $\frac{\partial c^{*}(x,h)}{\partial x}$ is increasing (decreasing) in $x$ if and only if $\pi^{*}(y,h)$ is increasing (decreasing) in $y$, where $x$ and $y$ are connected by $y=f(x,h)$. As  for $e^{-(1-\alpha)\beta_{2}h}\le y<e^{(\alpha-\lambda)\beta_{1}h}$,
	     
	    \begin{equation*}
	        \frac{\partial \pi^{*}(y,h)}{\partial y}=\frac{\mu-r}{\sigma^{2}}\frac{r}{k}\left\{
		\begin{array}{l}
		C_{3}(h)(q_{1}-1)y^{q_{1}-2}+C_{4}(h)(q_{2}-1)y^{q_{2}-2},\ 1\le y<e^{(\alpha-\lambda)\beta_{1}h},\\
		C_{5}(h)(q_{1}-1)y^{q_{1}-2}+C_{6}(h)(q_{2}-1)y^{q_{2}-2},\ e^{-(1-\alpha)\beta_{2}h}\le y<1\\
	\end{array}
	\right.
	    \end{equation*}
	    and 
	    
	    \begin{align*}
	        &C_{3}(h)(q_{1}-1)e^{(q_{1}-2)(\alpha-\lambda)\beta_{1}h}+C_{4}(h)(q_{2}-1)e^{(q_{2}-2)(\alpha-\lambda)\beta_{1}h}=C_{1}(h)(q_{1}-1)e^{(q_{1}-2)(\alpha-\lambda)\beta_{1}h}<0,\\
	        &C_{3}(h)(q_{1}-1)+C_{4}(h)(q_{2}-1)=C_{5}(h)(q_{1}-1)+C_{6}(h)(q_{2}-1),
	    \end{align*}
	     we conclude that if
	    \begin{equation}
	    \label{rmk3eq}
	        C_{5}(h)(q_{1}-1)e^{-(q_{1}-2)(1-\alpha)\beta_{2}h}+C_{6}(h)(q_{2}-1)e^{-(q_{2}-2)(1-\alpha)\beta_{2}h}>0,
	    \end{equation}
	    then there exists $\bar{x}(h)\in [W_{\rm low}(h),W_{\rm peak}(h)]$ such that the MPC out of wealth is decreasing for $x\in\big[W_{\rm low}(h),\bar{x}(h)\big]\backslash\{W_{\rm ref}(h)\}$ and increasing for $x\in \big[\bar{x}(h),W_{\rm peak}(h)\big]\backslash\{W_{\rm ref}(h)\}$ ($W_{\rm ref}(h)$ needs to be excluded due to the MPC shrink or swell, see Remark \ref{rmkc1}). Specifically, $\bar{x}(h)=-\tilde{V}_{y}(\bar{y}(h),h)$ with $\bar{y}(h)$ given by
	    
	    \begin{equation*}
	    \bar{y}(h)=\left\{
		\begin{array}{l}
		(-\frac{C_{3}(h)(q_{1}-1)}{C_{4}(h)(q_{2}-1)})^{\frac{1}{q_{2}-q_{1}}},\ {\rm for}\ h \ {\rm s.t.}\ C_{3}(h)(q_{1}-1)+C_{4}(h)(q_{2}-1)>0,\\
		(-\frac{C_{5}(h)(q_{1}-1)}{C_{6}(h)(q_{2}-1)})^{\frac{1}{q_{2}-q_{1}}},\ {\rm for}\ h \ {\rm s.t.}\ C_{3}(h)(q_{1}-1)+C_{4}(h)(q_{2}-1)\le 0\\
	\end{array}
	\right.
	    \end{equation*}
	    It is straight forward to verify numerically that (\ref{rmk3eq}) is satisfied for $h$ in a reasonable range (say, $h\geq 0.2$ with the current parameters).
	    
	    The threshold $\bar{x}(h)$ is above $W_{\rm ref}(h)$ if and only if $C_{3}(h)(q_{1}-1)+C_{4}(h)(q_{2}-1)<0$. We observe two cases: One case is that $\bar{x}(h)$ is always above $W_{\rm ref}(h)$, which is the case for $\beta_{1}\ge\beta_{2}$; the other case is that $\bar{x}(h)$ is above $W_{\rm ref}(h)$ for $h$ above certain threshold $\hat{h}$ and below $W_{\rm ref}(h)$ for $h<\hat{h}$, which is the case for $\beta_{1}<\beta_{2}$. Moreover, as the difference $\beta_{2}-\beta_{1}$ decreases to zero, the threshold $\hat{h}$ increases to $+\infty$ and turns into the first case.
	\end{remark}

	From the right panel of Figure \ref{difhcpix}, it is clear that for sampled $h$, risky investment proportion is a hump in variable $x$, and the peak is around $W_{\rm ref}(h)$ (see also Figure \ref{potowe}) \footnote{Numerical analysis shows that risky investment proportion is indeed a hump in $x$ for $h$ above a relatively small level. For $h$ below that level (i.e, if $h$ is extremely small), there is an apparent increase on the right end of the hump.}. If we interpret the inverse of risky investment proportion as the so-called {\it implied relative risk aversion} (\cite{Jeon2020}), we find that it will be a smile in wealth: people with intermediate wealth level have lowest risk aversion, and hence have the highest risk tolerance. People with either very low or very high level of wealth are much more risk averse. This effect comes intuitively from our model settings. Poor people need to make their deposit above $W_{\rm bkrp}(h)$ in order to satisfy the lowest consumption constraint, hence are very sensitive to risk. It is reasonable for them to keep most part of their wealth in safe assets. Rich people, on the other hand, have already been satisfied by the current level of consumption (or even continuously consume more and more) and they tend to avoid the risk of consumption declining to less than reference $\alpha h$. A more illustrative version of optimal risky investment proportion when fixing $h$ can be found in Figure \ref{appendixfigure}.
	
\begin{figure}[ht]
    \centering
    {%
    \includegraphics[width=.48\linewidth]{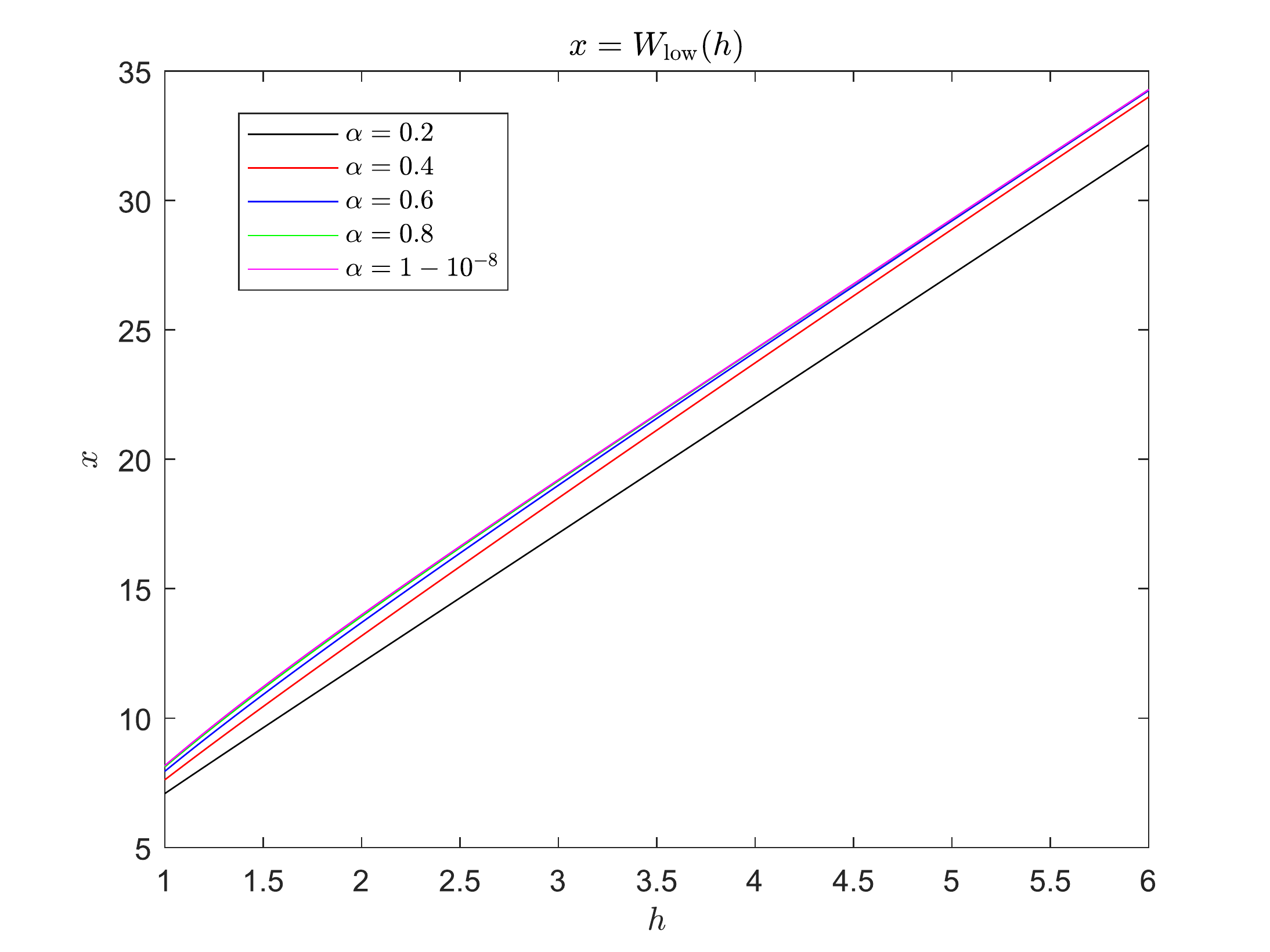}\quad
    \includegraphics[width=.48\linewidth]{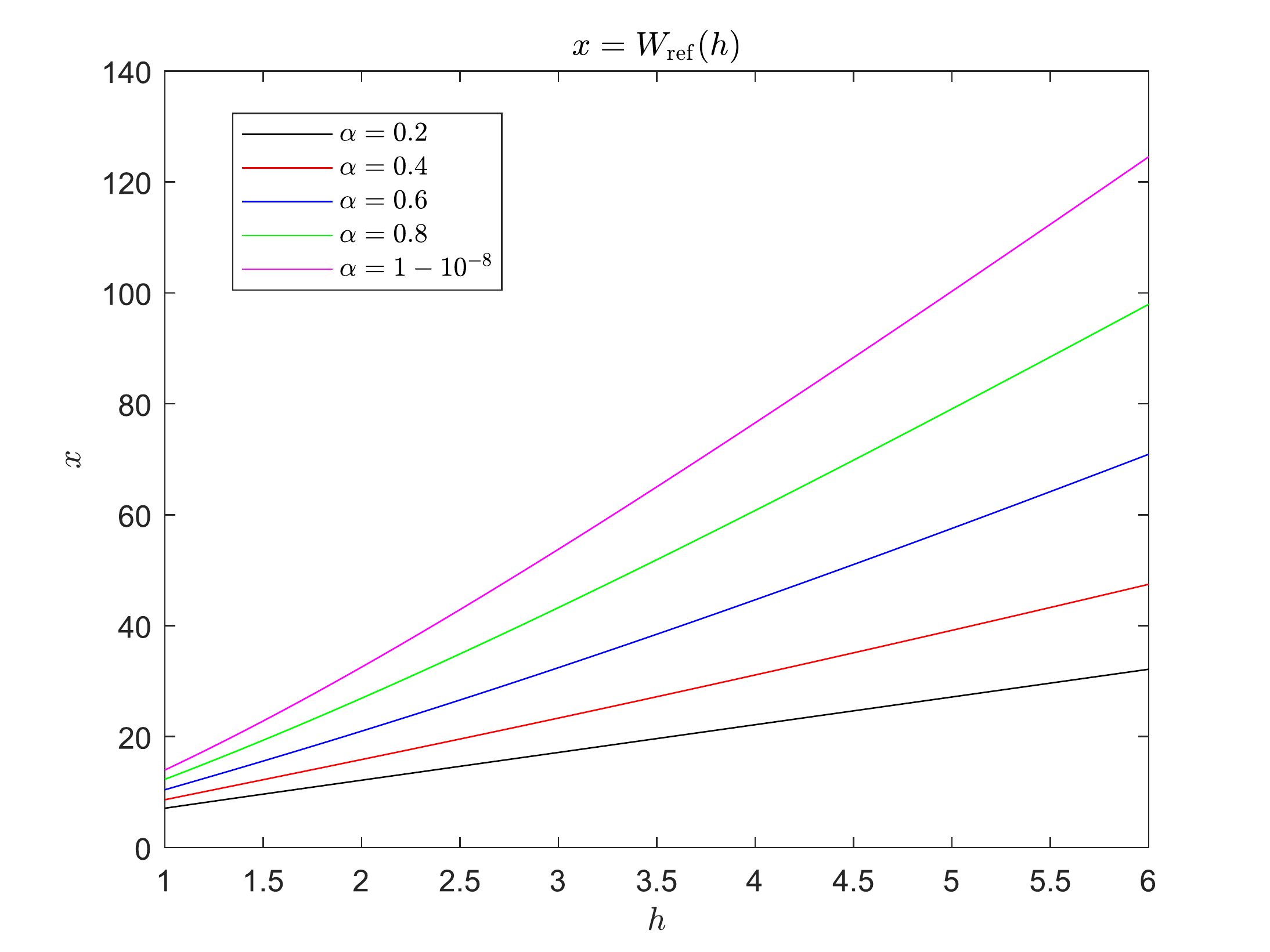}
    }
    {%
    \includegraphics[width=.48\linewidth]{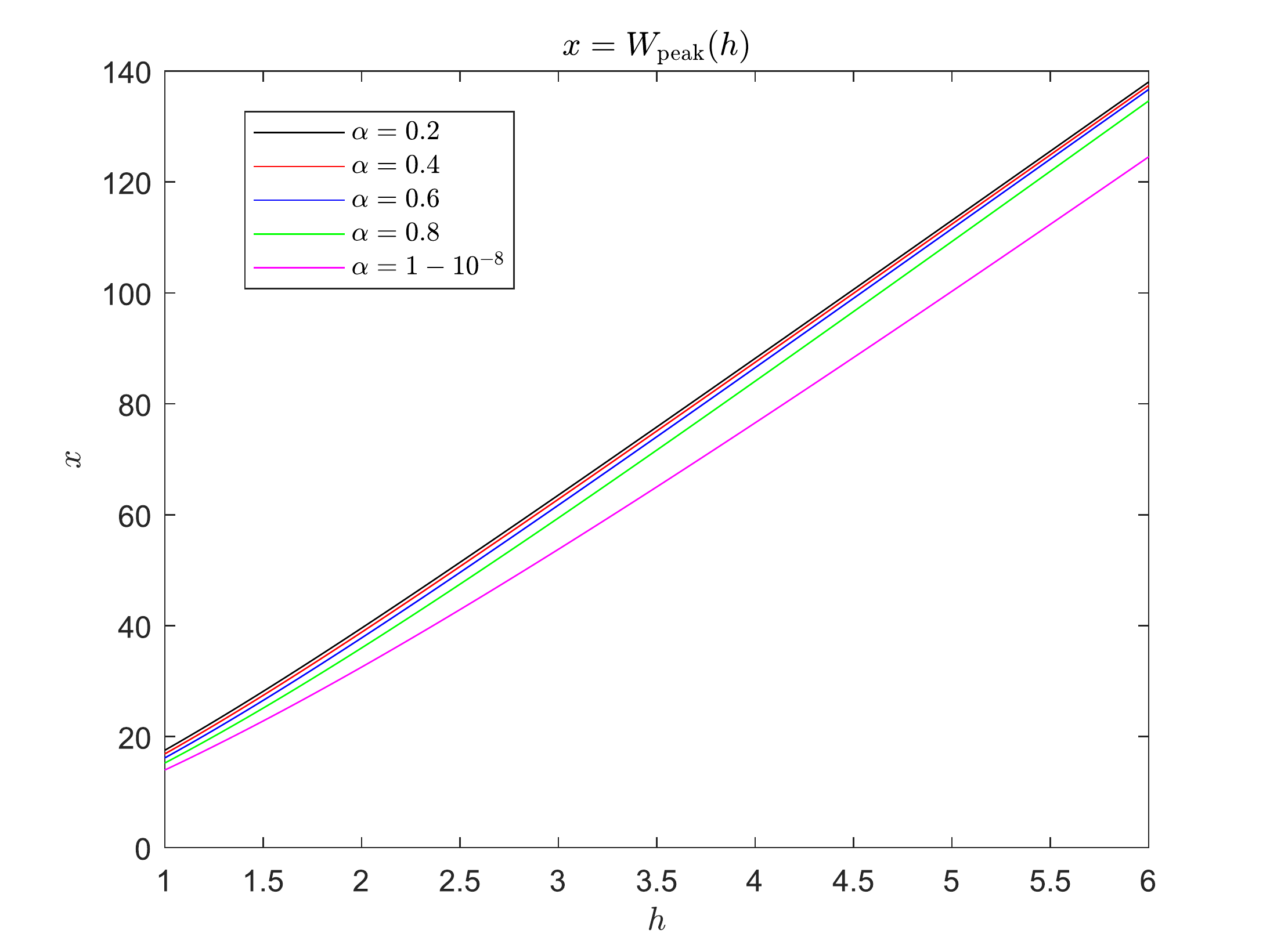}\quad
    \includegraphics[width=.48\linewidth]{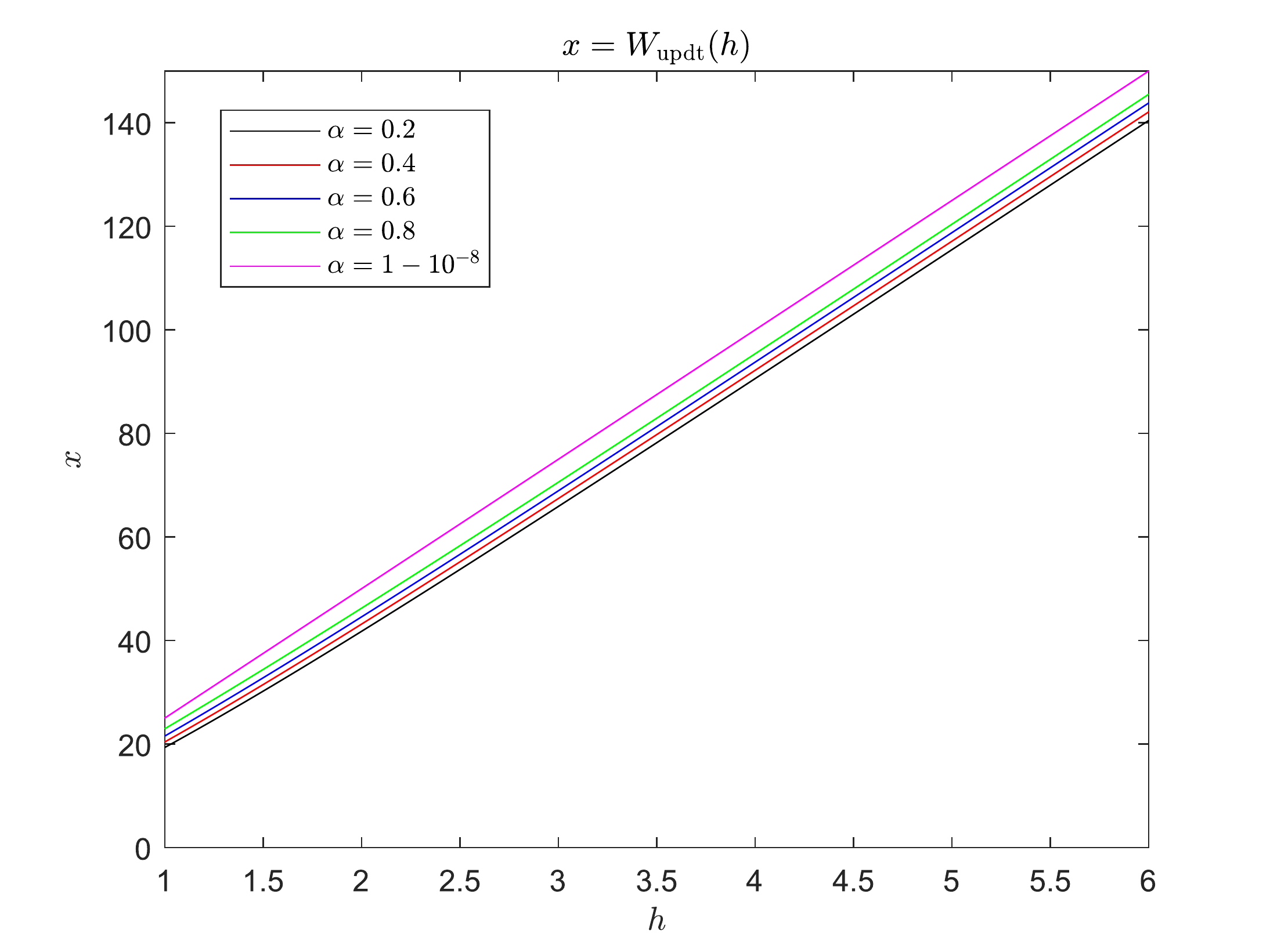}
    }
    \caption{With fixed parameters $\lambda=0.2,\ \beta_{1}=1,\ \beta_{2}=2,\ r=0.04,\ \mu=0.12,\ \sigma=0.3$, the four boundaries varies with different $\alpha$.}
    \label{alphatoboundary}
\end{figure}

\section{Sensitivity Analysis}
\label{sean}
\subsection{Impact of $\lambda$ and $\alpha$ on Thresholds}
	The parameter $\alpha$ determines the reference point $\alpha h$ where risk aversion changes. When $\alpha$ equals $\lambda$, our model reduces to the model without the risk aversion related reference point and the boundary $x=W_{\rm ref}(h)$ coincides with $x=W_{\rm low}(h)$. When $\alpha$ approaches $1$ from below, the reference point approaches the running maximum and the boundary $x=W_{\rm ref}(h)$ coincides with $x=W_{\rm peak}(h)$. To figure out the consequences of the ratio $\alpha$ on different boundaries, we fix $\lambda=0.2,\ \beta_{1}=1,\ \beta_{2}=2,\ r=0.04,\ \mu=0.12,\ \sigma=0.3$, let $\alpha$ vary from $\lambda$ to $1-10^{-8}$ and present the boundaries separately (we do not consider the boundary $W_{\rm bkrp}(h)$ since it does not depend on $\alpha$). As shown in Figure \ref{alphatoboundary}, the boundary $x=W_{\rm peak}(h)$ moves downward as $\alpha$ increases, while the other three boundaries $x=W_{\rm low}(h),W_{\rm ref}(h)$ and $W_{\rm updt}(h)$ have the tendency to move upward as $\alpha$ increases. The above phenomenon illustrates that the criterion for the investor to be able to consume at the historical running maximum level lowers as $\alpha$ increases while other three criteria to enter a higher consumption level would generally raise up as $\alpha$ increases. The change of $\alpha$ is most influential on $x=W_{\rm ref}(h)$, which is the boundary of whether to consume more than $\alpha h$ and an important boundary in our analysis of both optimal portfolio and value function.
	\begin{figure}[ht]
    \centering
    {%
    \includegraphics[width=.48\linewidth]{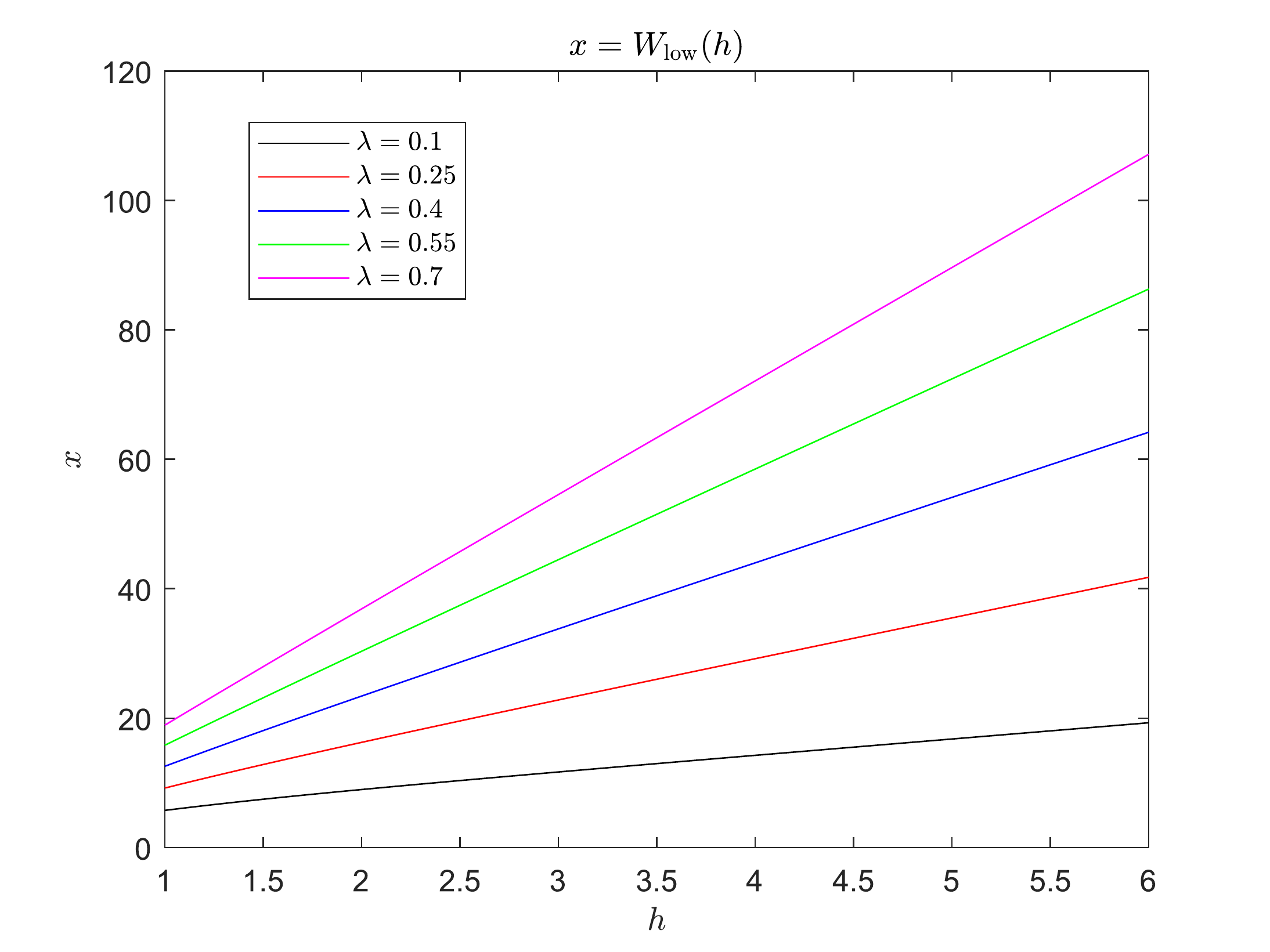}\quad
    \includegraphics[width=.48\linewidth]{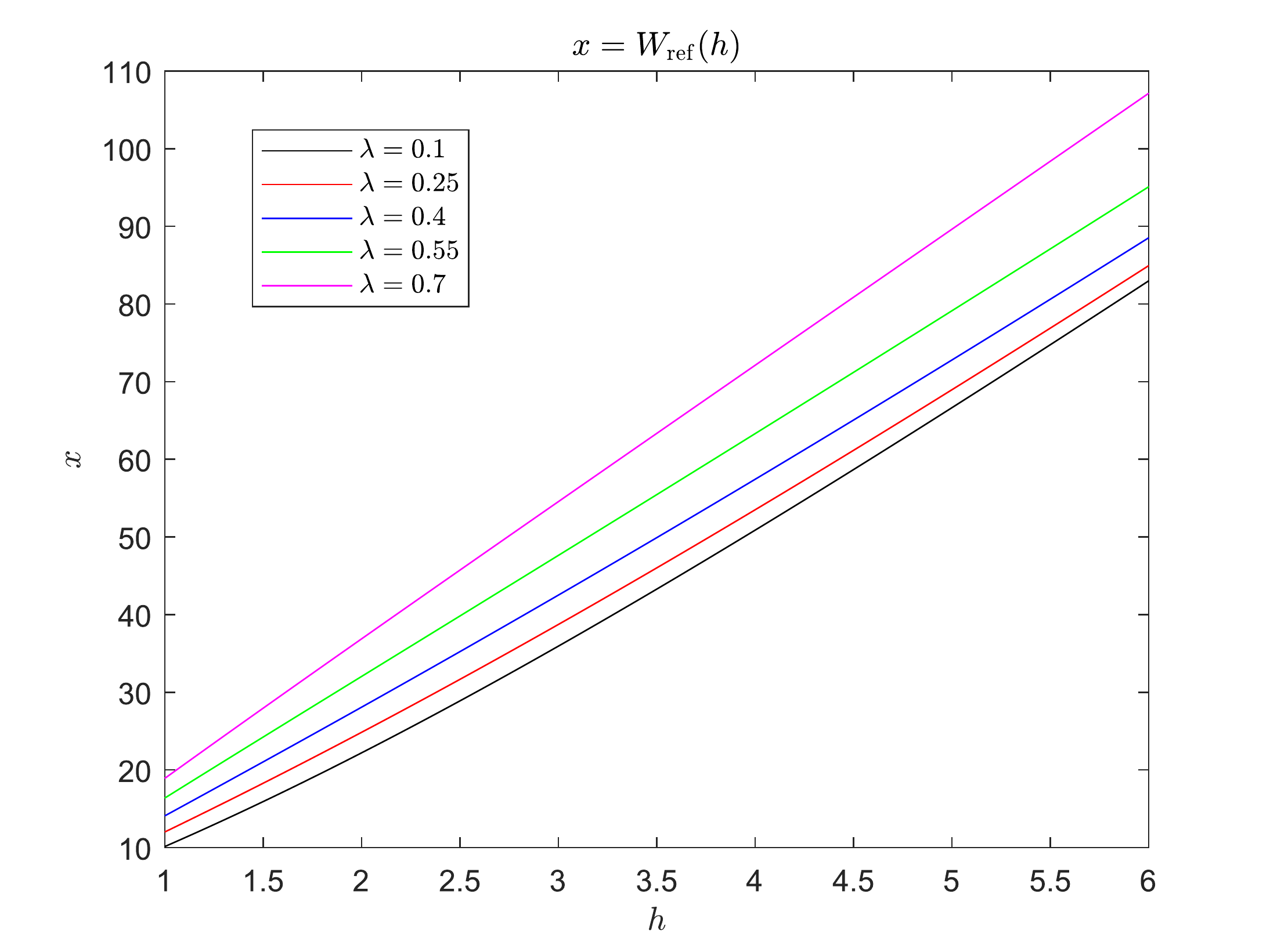}
    }
    {%
    \includegraphics[width=.48\linewidth]{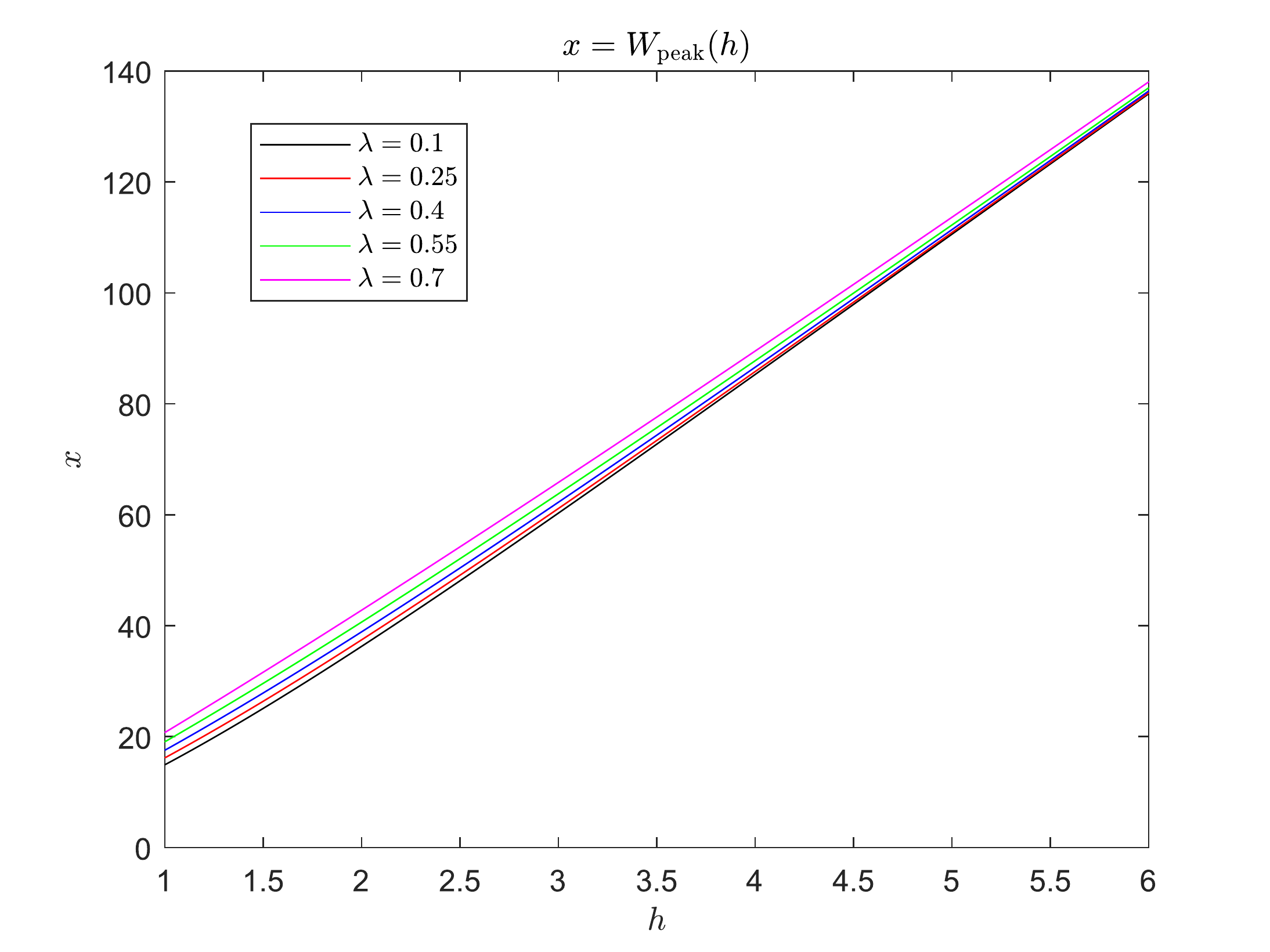}\quad
    \includegraphics[width=.48\linewidth]{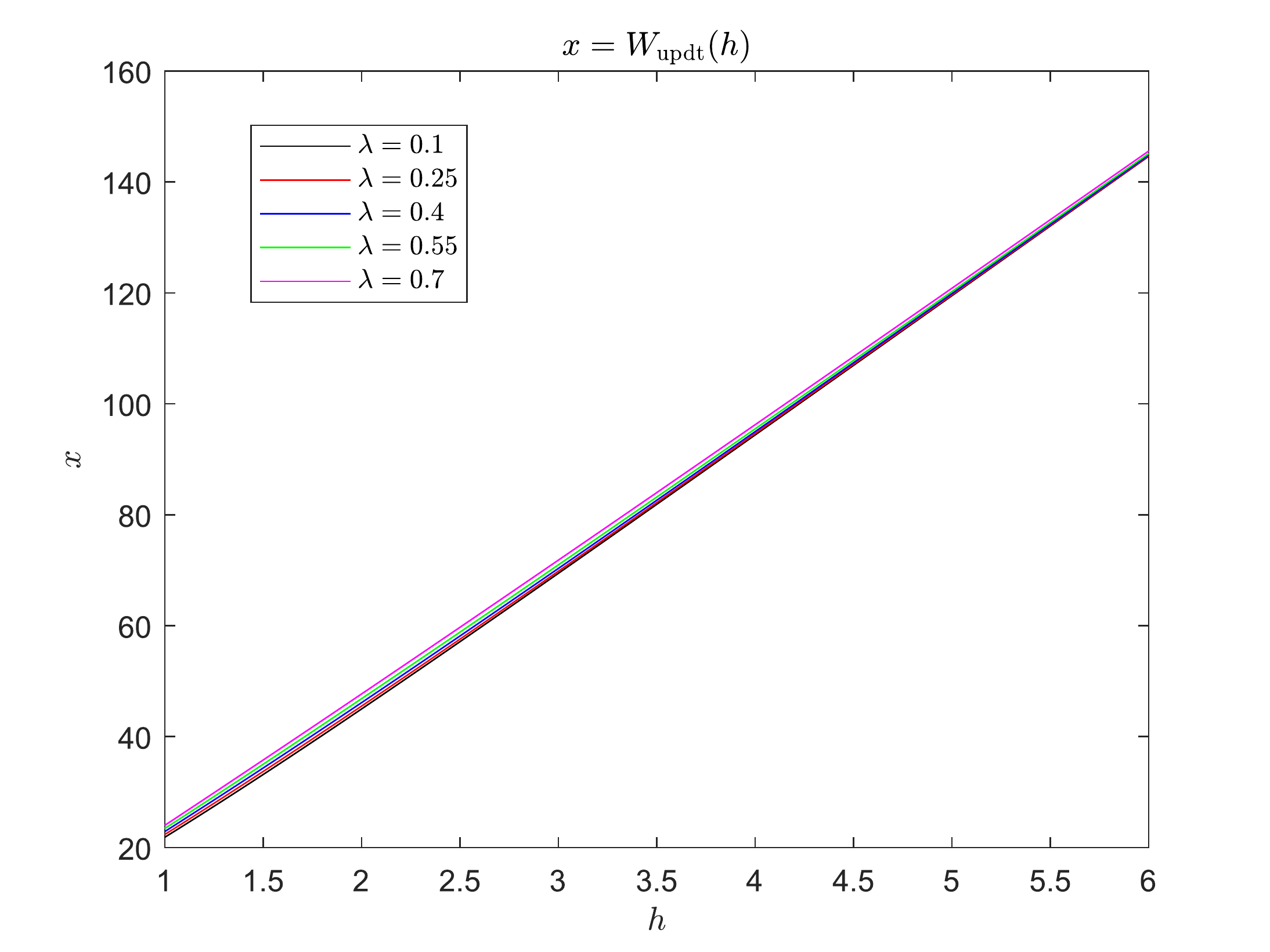}
    }
    \caption{With fixed parameters $\alpha=0.7,\ \beta_{1}=1,\ \beta_{2}=2,\ r=0.04,\ \mu=0.12,\ \sigma=0.3$, the four boundaries varies with different $\lambda$.}
    \label{lambdatoboundary}
\end{figure}

Another parameter of interest in our model is $\lambda$, which reflects the degree of the drawdown constraint on consumption. The consumption level is allowed to be relatively lower with smaller $\lambda$. When $\lambda$ equals $0$, our model reduces to the model without drawdown constraint and the boundary $x=W_{\rm bkrp}(h)$ coincides with the $h$ axis. When $\lambda$ equals $\alpha$, our model reduces to the model without the risk aversion related reference point and the boundary $x=W_{\rm ref}(h)$ coincides with $x=W_{\rm low}(h)$. We fix $\alpha=0.7,\ \beta_{1}=1,\ \beta_{2}=2,\ r=0.04,\ \mu=0.12,\ \sigma=0.3$, let $\lambda$ vary below $\alpha$ and graph the boundaries separately (boundary $x=W_{\rm bkrp}(h)$ is not shown here since it is simply linear) in order to illustrate the impact of $\lambda$ on the boundaries. As shown in Figure \ref{lambdatoboundary}, all four thresholds are higher when $\lambda$ increases. Among the four boundaries, $\lambda$ is most influential on the boundary $x=W_{\rm low}(h)$, which suggests that, with larger $\lambda$, more wealth is needed for getting rid of gloom. However, $\lambda$ has a negligible effect on $x=W_{\rm peak}(h)$ and $x=W_{\rm updt}(h)$, indicating that whether to consume at the historical level and whether to update the historical level have almost no reliance on the degree of drawdown constraint.

\begin{figure}[ht]
    \centering
    {%
    \includegraphics[width=.31\linewidth]{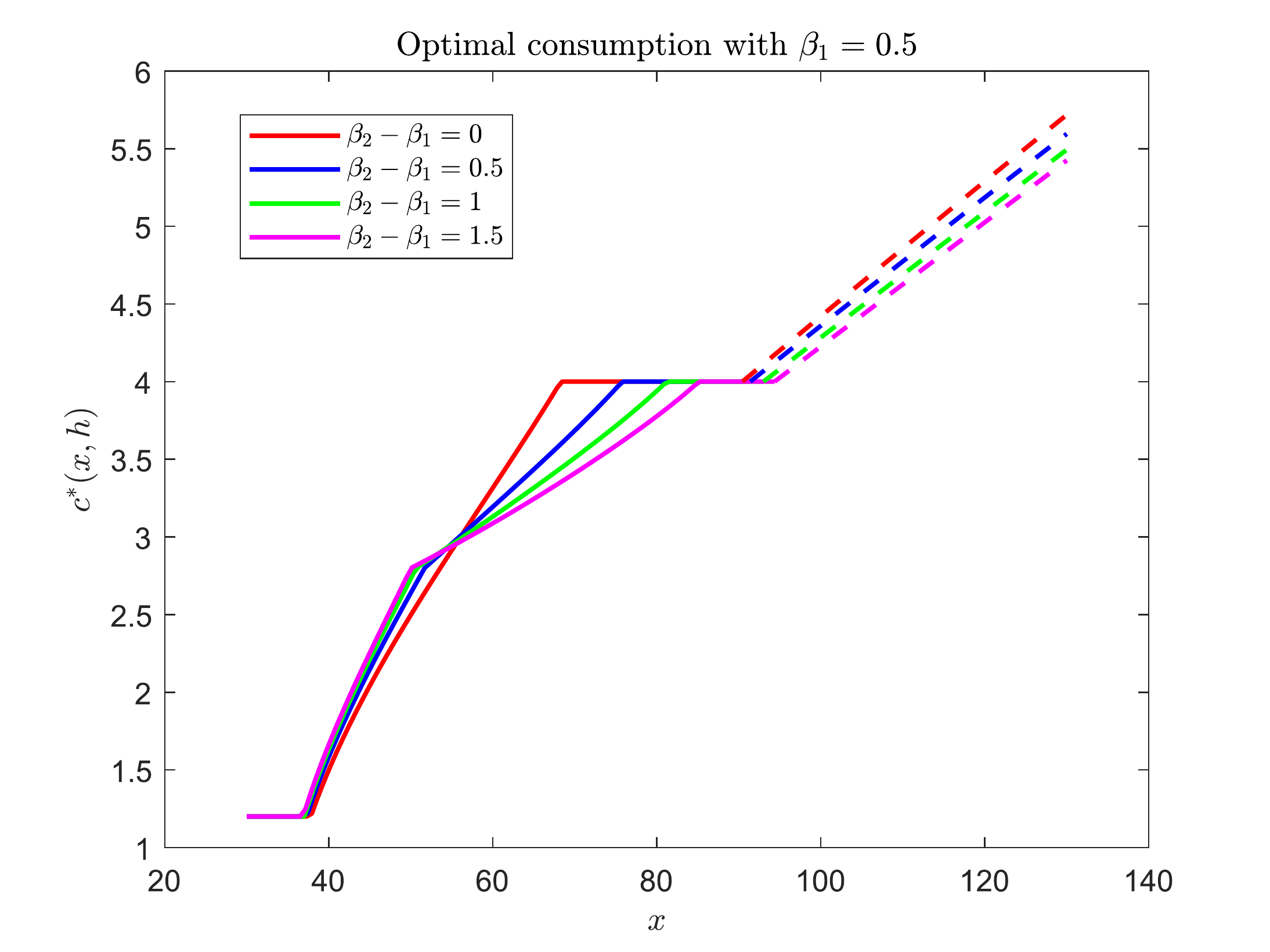}\quad
    \includegraphics[width=.31\linewidth]{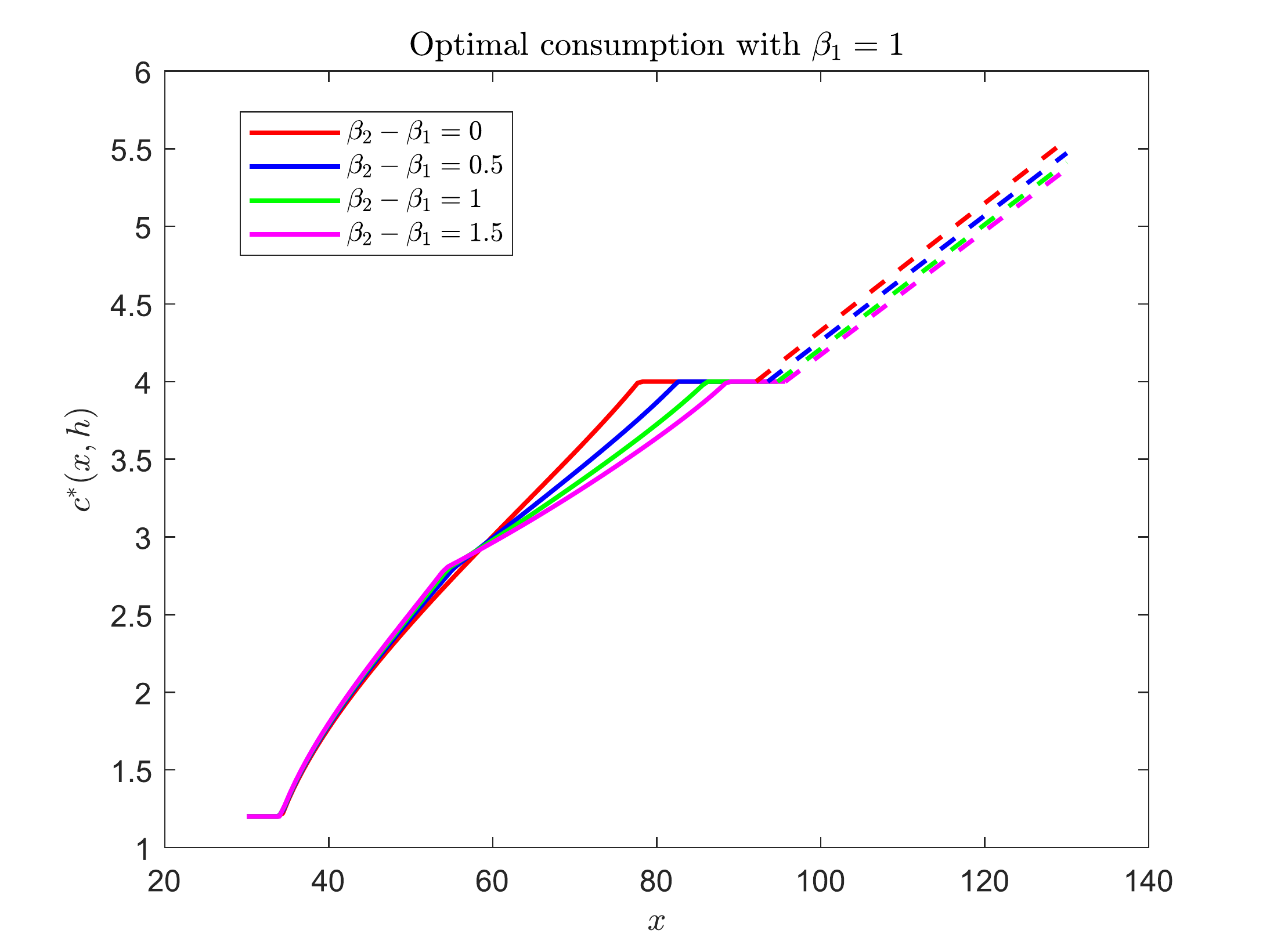}\quad
    \includegraphics[width=.31\linewidth]{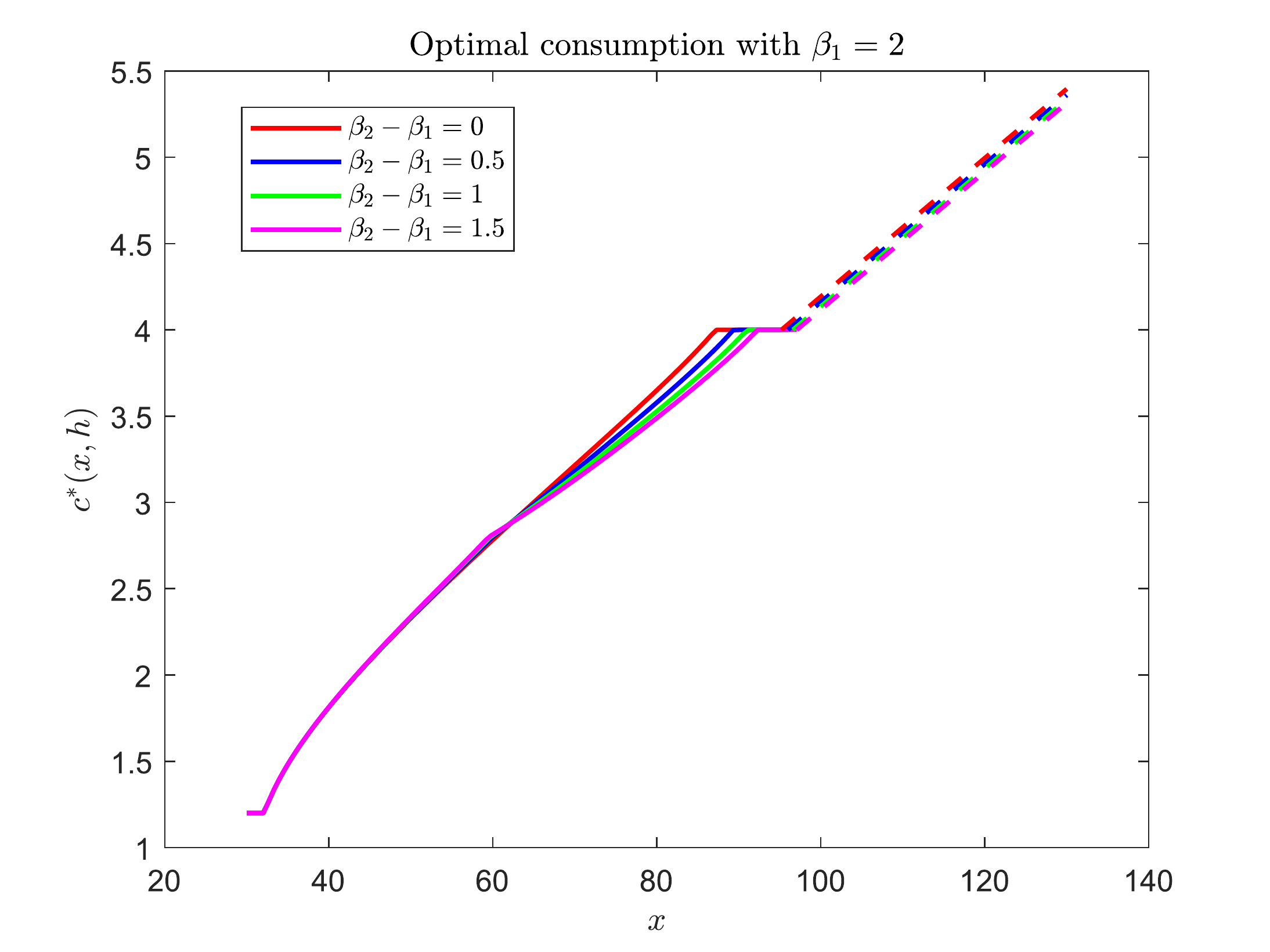}
    }
    \caption{With fixed parameters $\alpha=0.7,\ r=0.04,\ \mu=0.12,\ \sigma=0.3, \ h=4$, impact of $\beta_{2}-\beta_{1}$ on optimal consumption with fixed $\beta_{1}$.}
    \label{betatocon}
\end{figure}

\begin{figure}[ht]
    \centering
    {%
    \includegraphics[width=.31\linewidth]{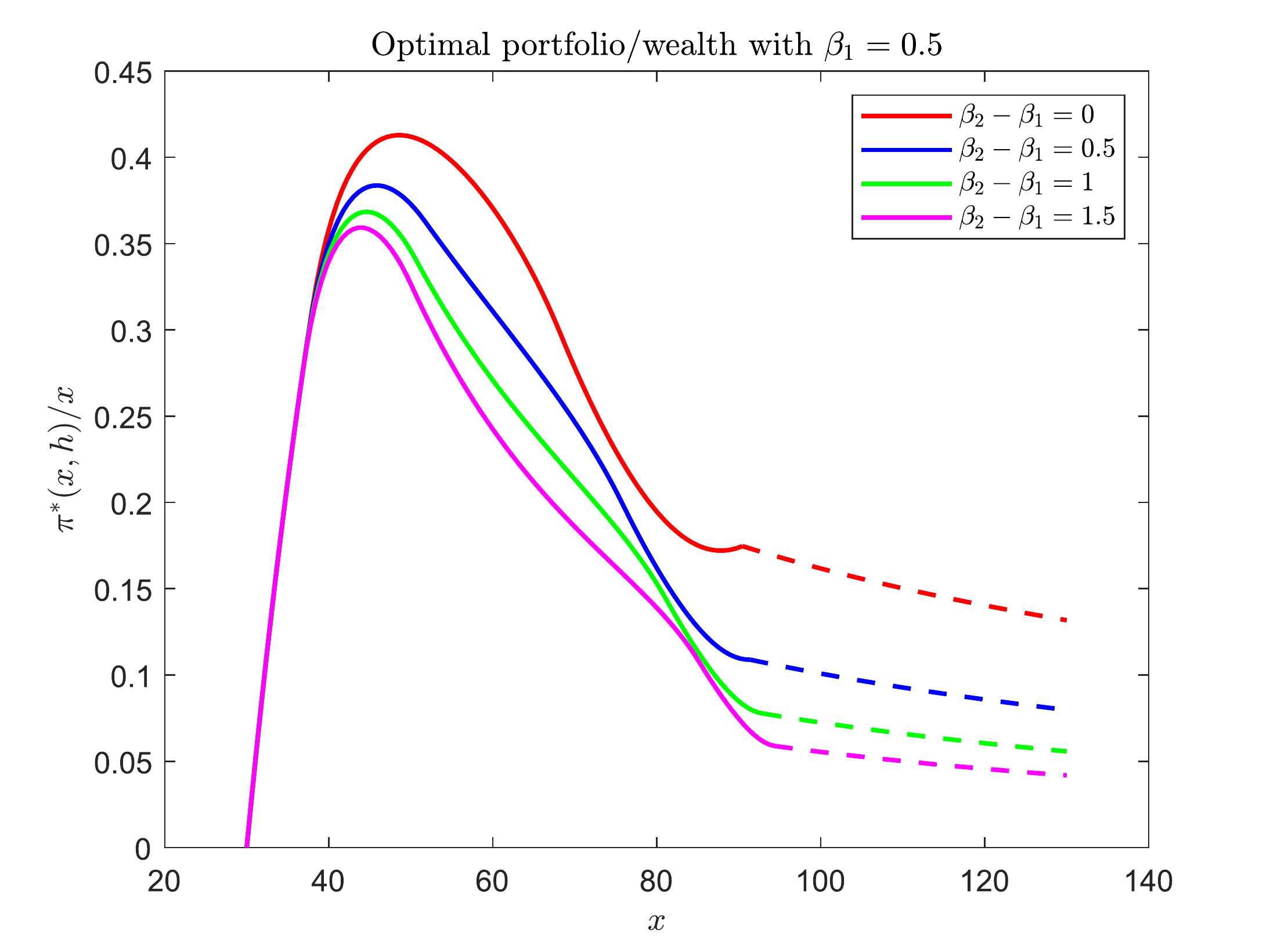}\quad
    \includegraphics[width=.31\linewidth]{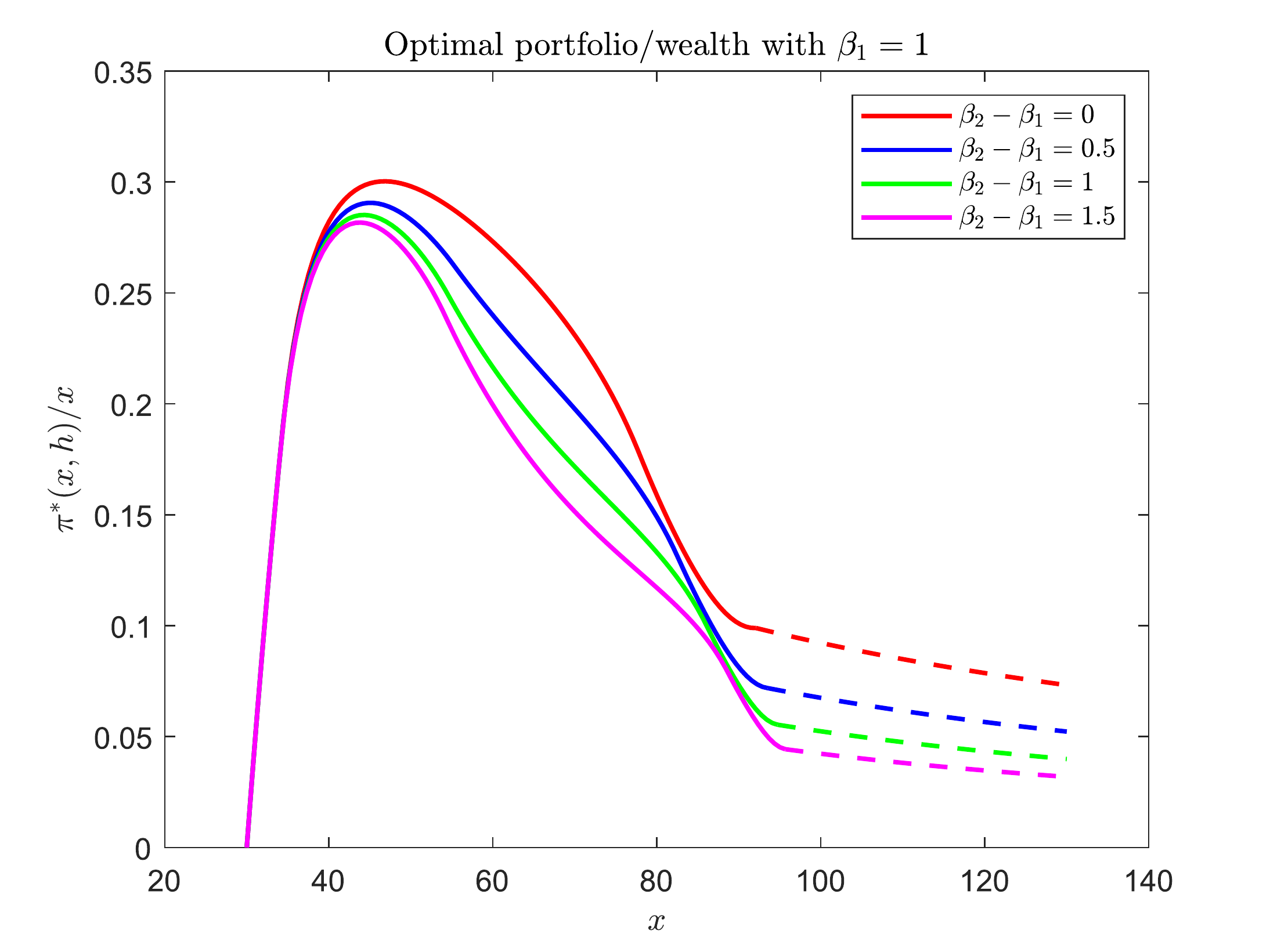}\quad
    \includegraphics[width=.31\linewidth]{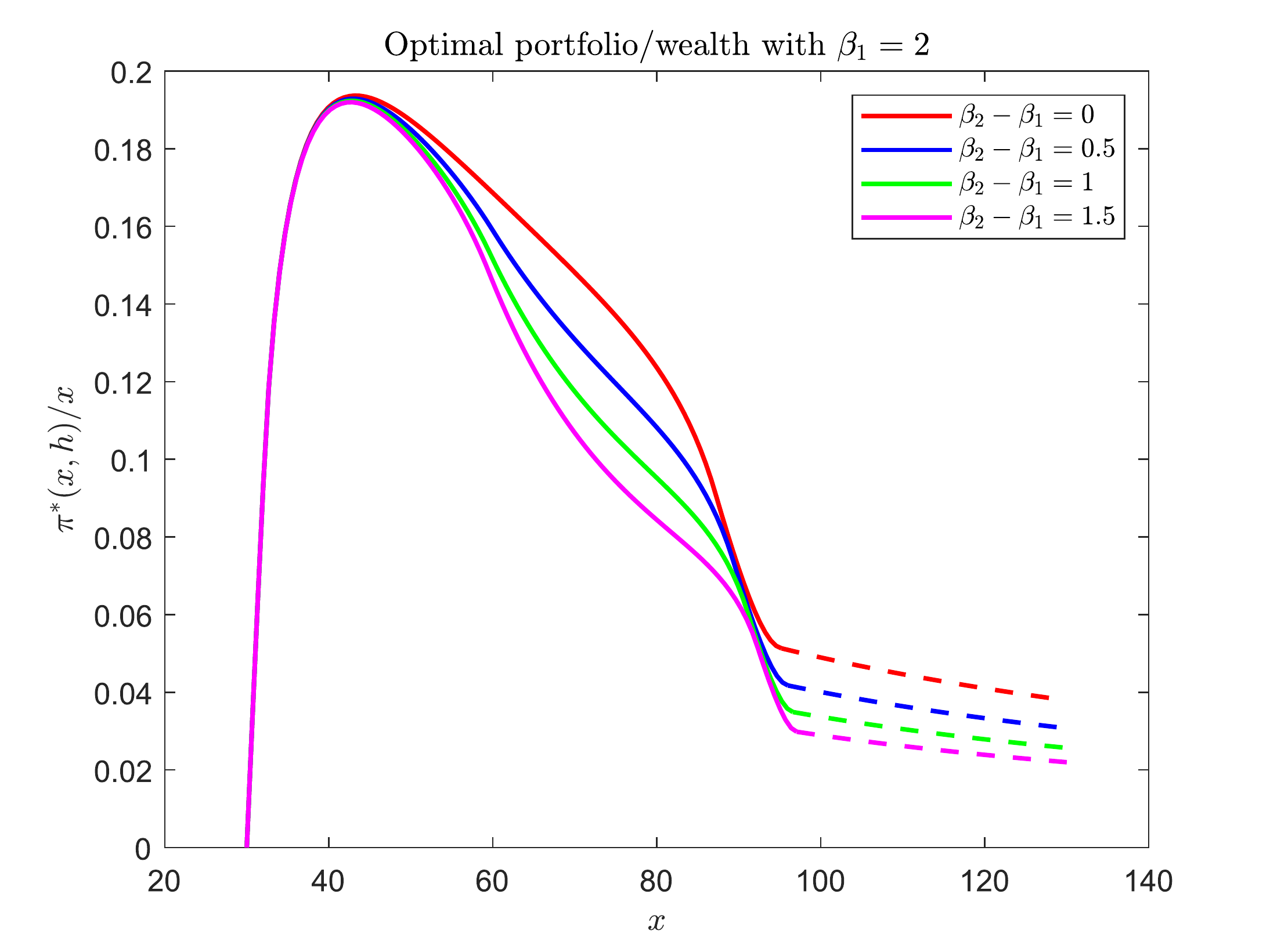}
    }
    \caption{With fixed parameters $\alpha=0.7,\ r=0.04,\ \mu=0.12,\ \sigma=0.3, \ h=4$, impact of $\beta_{2}-\beta_{1}$ on optimal risky investment proportion with fixed $\beta_{1}$.}
    \label{betatopor}
\end{figure}

\subsection{Discussion for $\beta_{1}$ and $\beta_2$}

In this subsection, we focus on the sensitivity analysis of the parameters $\beta_{1}$ and $\beta_{2}$, which are the risk aversion  coefficients below and above the reference $\alpha h$. The difference $\beta_{2}-\beta_{1}$ reflects the magnitude of the risk aversion change over the reference. We mainly illustrate the results with $\beta_{1}\le\beta_{2}$. The differences in results with $\beta_{1}\ge\beta_{2}$ are briefly discussed in \ref{debe}. Two limiting cases of interest are investigated in \ref{lc}.

\subsubsection{$\beta_{1}\ge\beta_{2}$}
\quad\\
In this part, we analyze the influence of $\beta_{2}-\beta_{1}$ when $\beta_{1}\ge\beta_{2}$.

To investigate the influence of $\beta_{2}-\beta_{1}$, we have to fix either $\beta_{1}$ or $\beta_{2}$. One approach is fixing $\beta_{1}$, then $\beta_{2}-\beta_{1}$ is the increase in risk aversion when the consumption exceeds the reference $\alpha h$.

We first investigate the influence on optimal strategies. When $\beta_2-\beta_1$ is enlarges, the main influence on consumption is the decelerating of consumption increase once across the boundary $x=W_{\rm ref}(h)$ (see Figure \ref{betatocon}), which delays the arrival of consumption peak $x=W_{\rm peak}(h)$. The decelerating effect in consumption increase can be explained by the conservative consumption behavior due to higher risk aversion above the reference. Meanwhile, change of $\beta_2$ even influence the consumption in the region $x<W_{\rm ref}(h)$, where only $\beta_1$ seems to be relative. This can be regarded as an important risk allocation behavior: people suppress the consumption when he can tolerate less risks to compensate the consumption elsewhere.

We now consider the risky assets allocation. As $\beta_2-\beta_1$ enlarges, there is a decrease in risky investment proportion once over certain thresholds around $W_{\rm low}(h)$. The amount of decrease varies with different wealth levels $x$ and initial risk aversion $\beta_{1}$ (see Figure \ref{betatopor} for more details). It is reasonably expected that risky investment proportion should decrease due to higher risk aversion above the reference. Again, the decrease is not constrained to those wealth levels in recovery region ($x>W_{\rm ref}(h)$). It instead occurs prior to the threshold $W_{\rm ref}(h)$.

We also investigate the influence of $\beta_2-\beta_1$ on the wealth thresholds. As is shown in Figure \ref{betapeak}, the increase of $\beta_2-\beta_1$ leads to higher threshold $W_{\rm peak}(h)$, especially for small $h$. The influence on other thresholds is negligible.

\begin{figure}[ht]
    \centering
    {%
    \includegraphics[width=.31\linewidth]{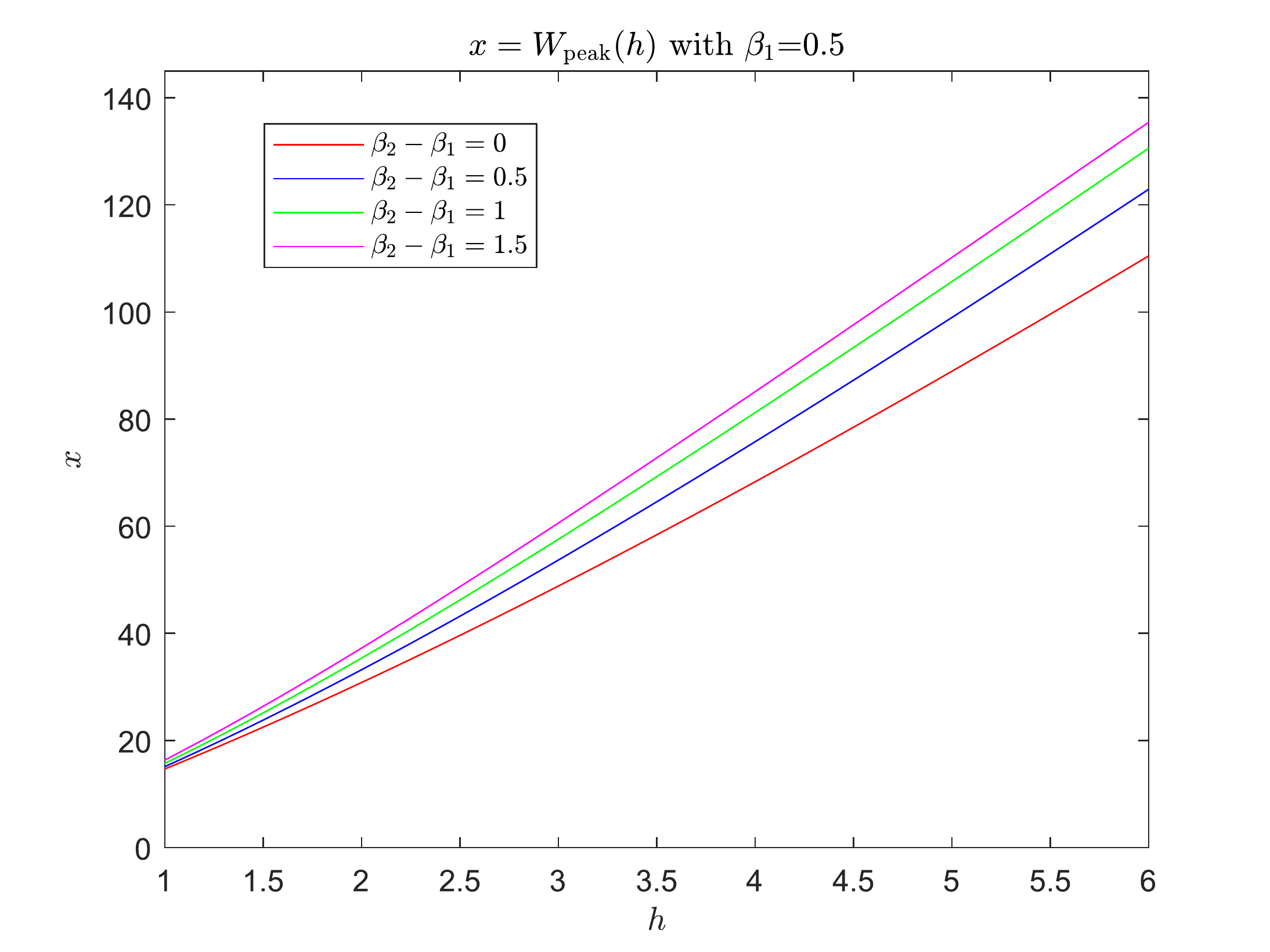}\quad
    \includegraphics[width=.31\linewidth]{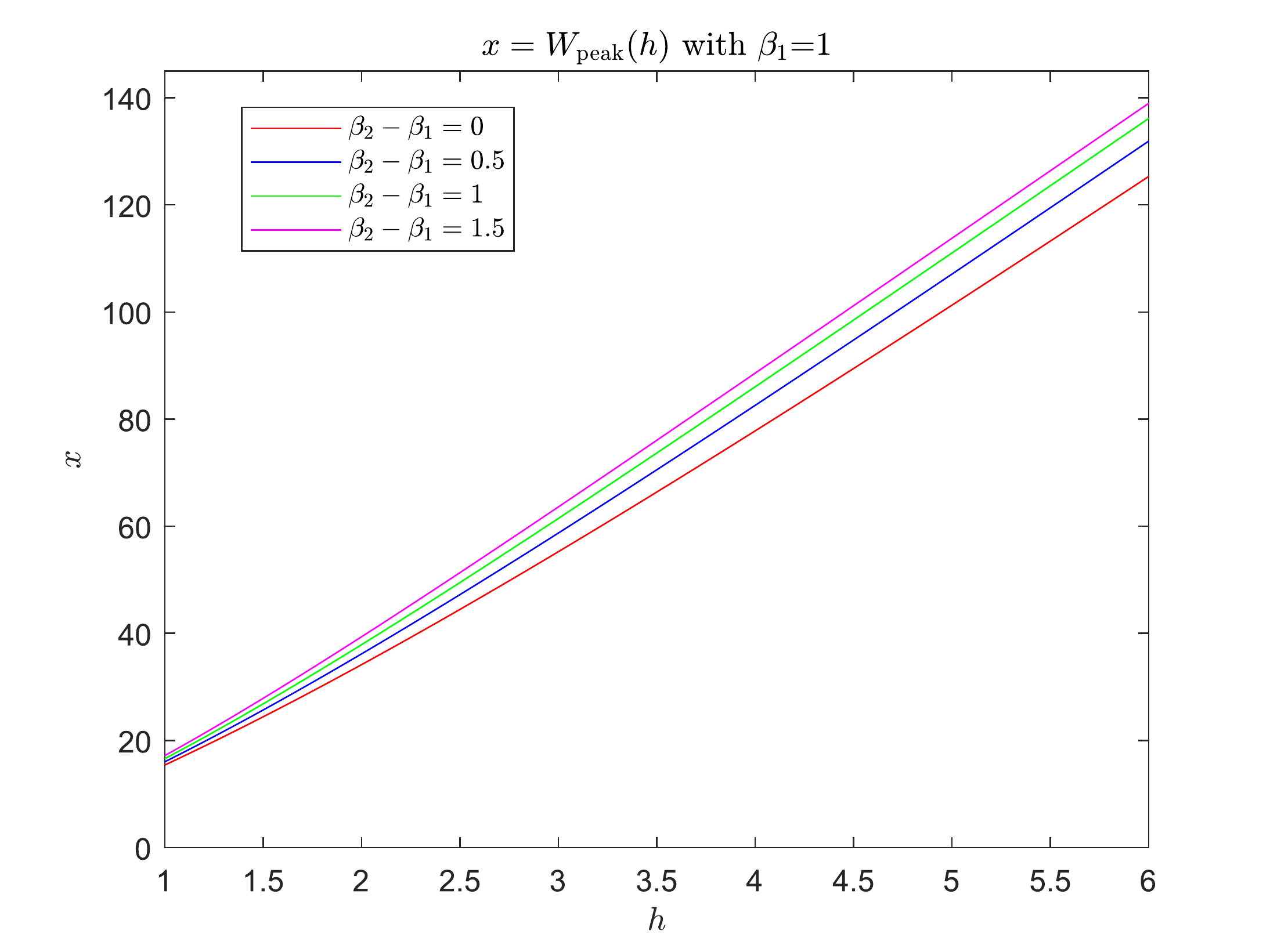}\quad
    \includegraphics[width=.31\linewidth]{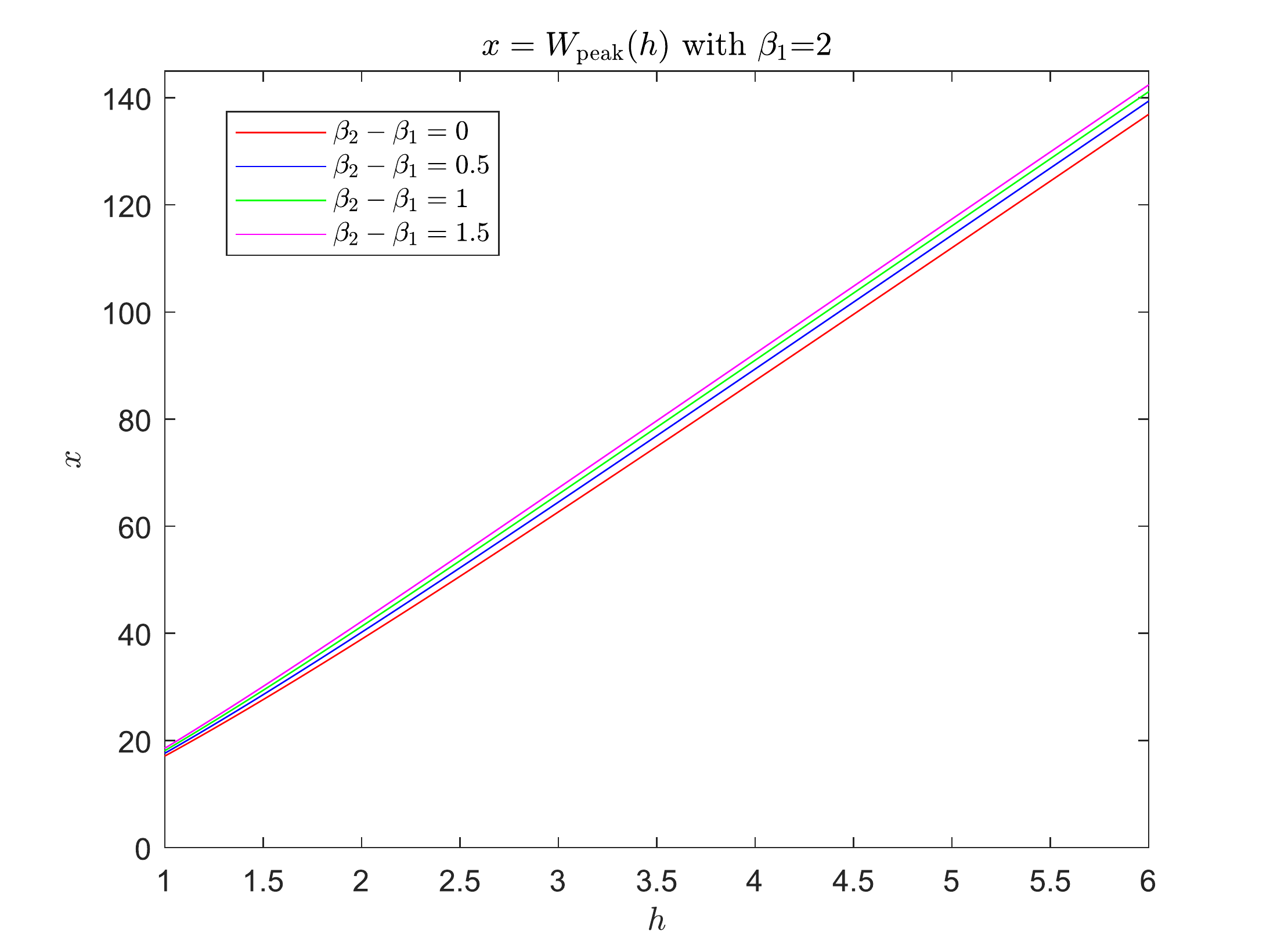}
    }
    \caption{With fixed parameters $\alpha=0.7,\ r=0.04,\ \mu=0.12,\ \sigma=0.3$, impact of $\beta_{2}-\beta_{1}$ on $x=W_{\rm peak}(h)$ with fixed $\beta_{1}$.}
    \label{betapeak}
\end{figure}

Another approach to examine the effect of risk aversion change $\beta_{2}-\beta_{1}$ is fixing $\beta_{2}$ and viewing $\beta_{2}-\beta_{1}$ as the decrease in risk aversion when consumption falls below the reference $\alpha h$.

As $\beta_{2}-\beta_{1}$ enlarges, the decrease of consumption decelerates above $W_{\rm ref}(h)$ but accelerates below $W_{\rm ref}(h)$ (see Figure \ref{beta2tocon}). The acceleration can be reasonably explained by aggressive consumption behavior due to lower risk aversion below the reference. Again, the effect of $\beta_1$ is not limited to $x<W_{\rm ref}(h)$.

The risky investment proportion, on the other hand, sees a significant increase when $\beta_2-\beta_1$ increases for wealth in the depression and recovery regions $[W_{\rm low}(h),W_{\rm peak}(h)]$, especially around $W_{\rm ref}(h)$ (see Figure \ref{beta2topor}). This is due to lower risk aversion below the reference. 

However, the change in $\beta_{2}-\beta_{1}$ for fixed $\beta_{2}$ does not have a significant impact on wealth thresholds. We merely observe a decrease in $W_{\rm ref}(h)$ for small values of $\beta_{2}$ as $\beta_{2}-\beta_{1}$ enlarges (see Figure \ref{betalore}).

\begin{figure}[ht]
    \centering
    {%
    \includegraphics[width=.31\linewidth]{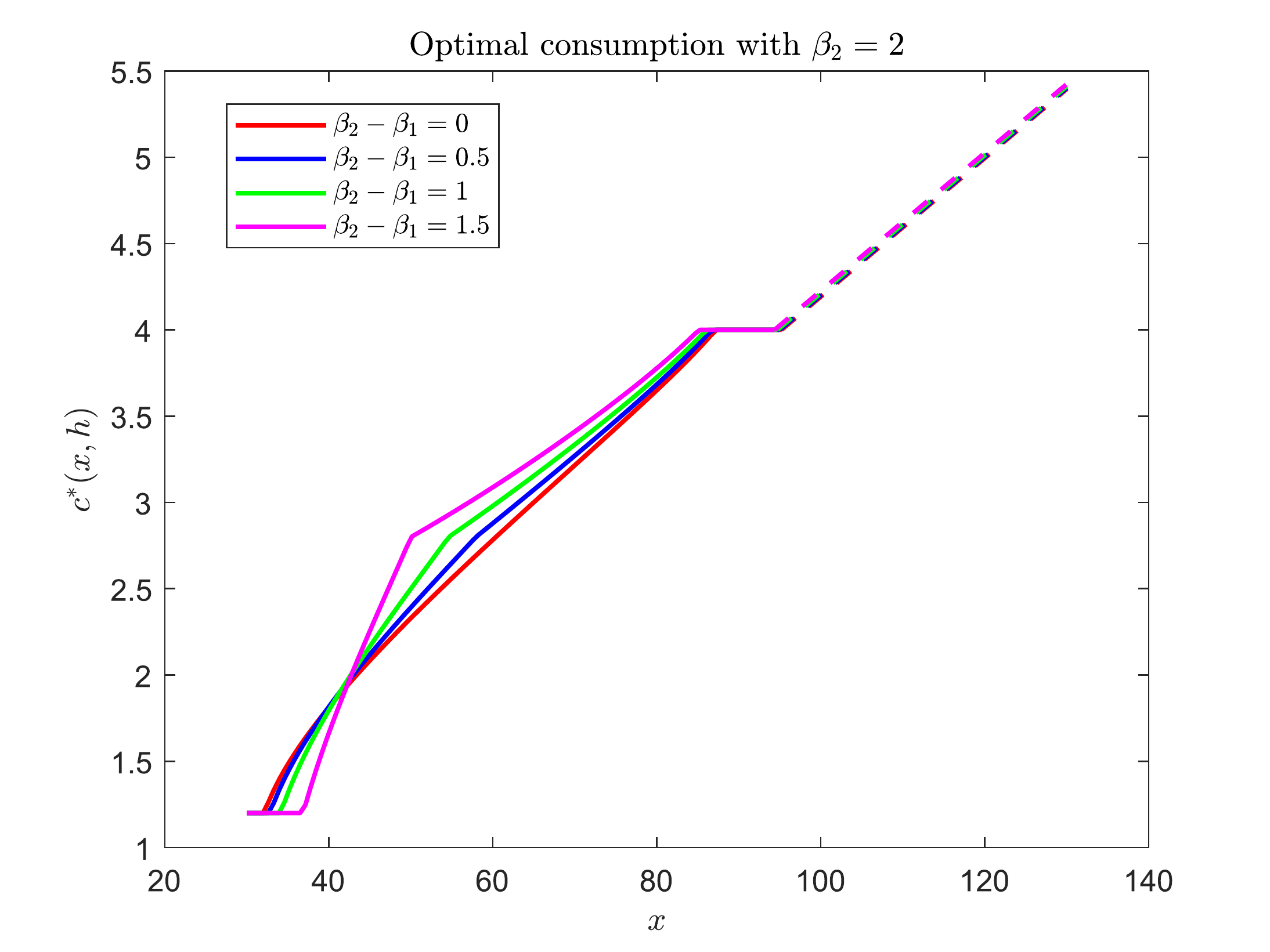}\quad
    \includegraphics[width=.31\linewidth]{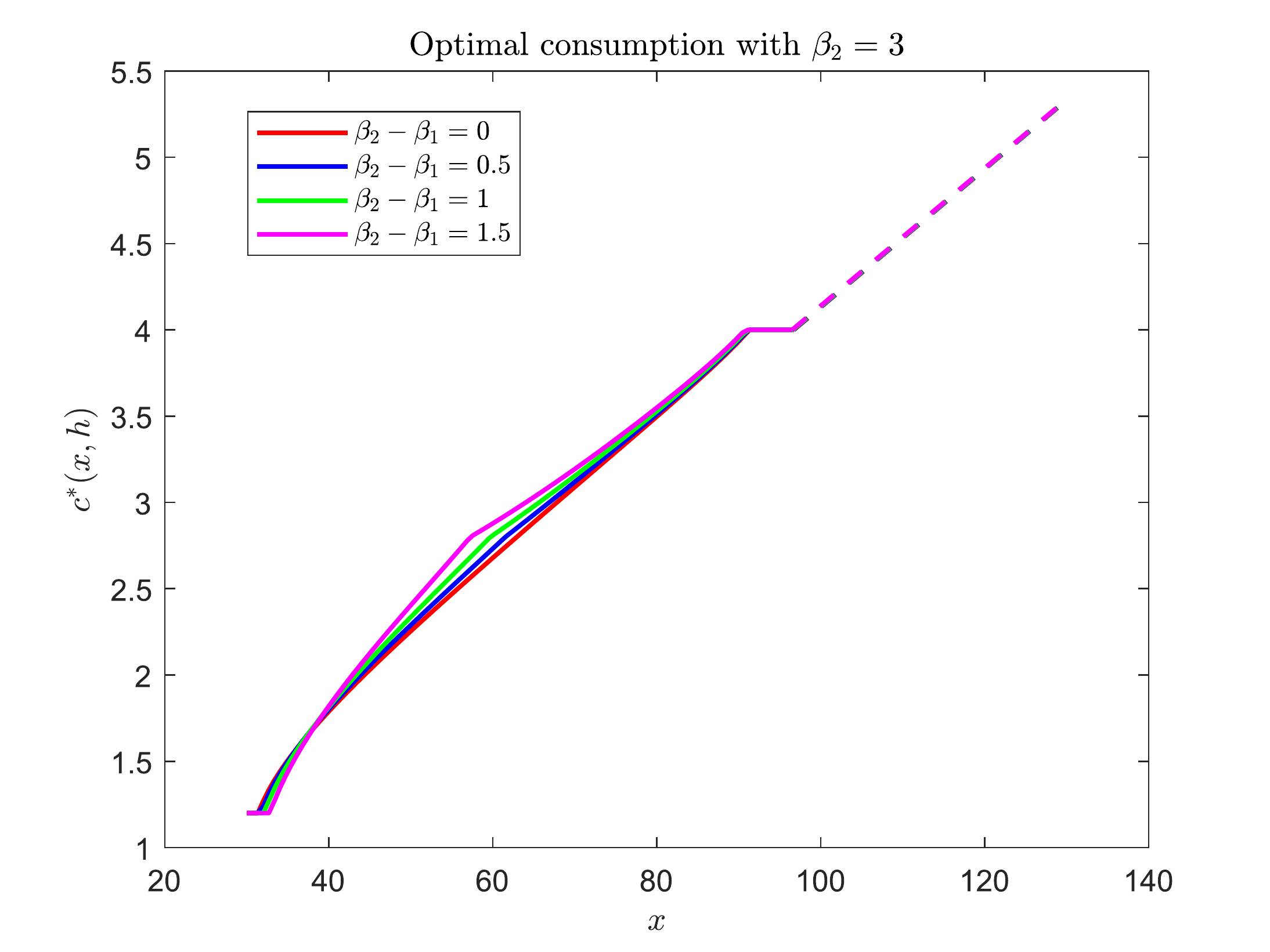}\quad
    \includegraphics[width=.31\linewidth]{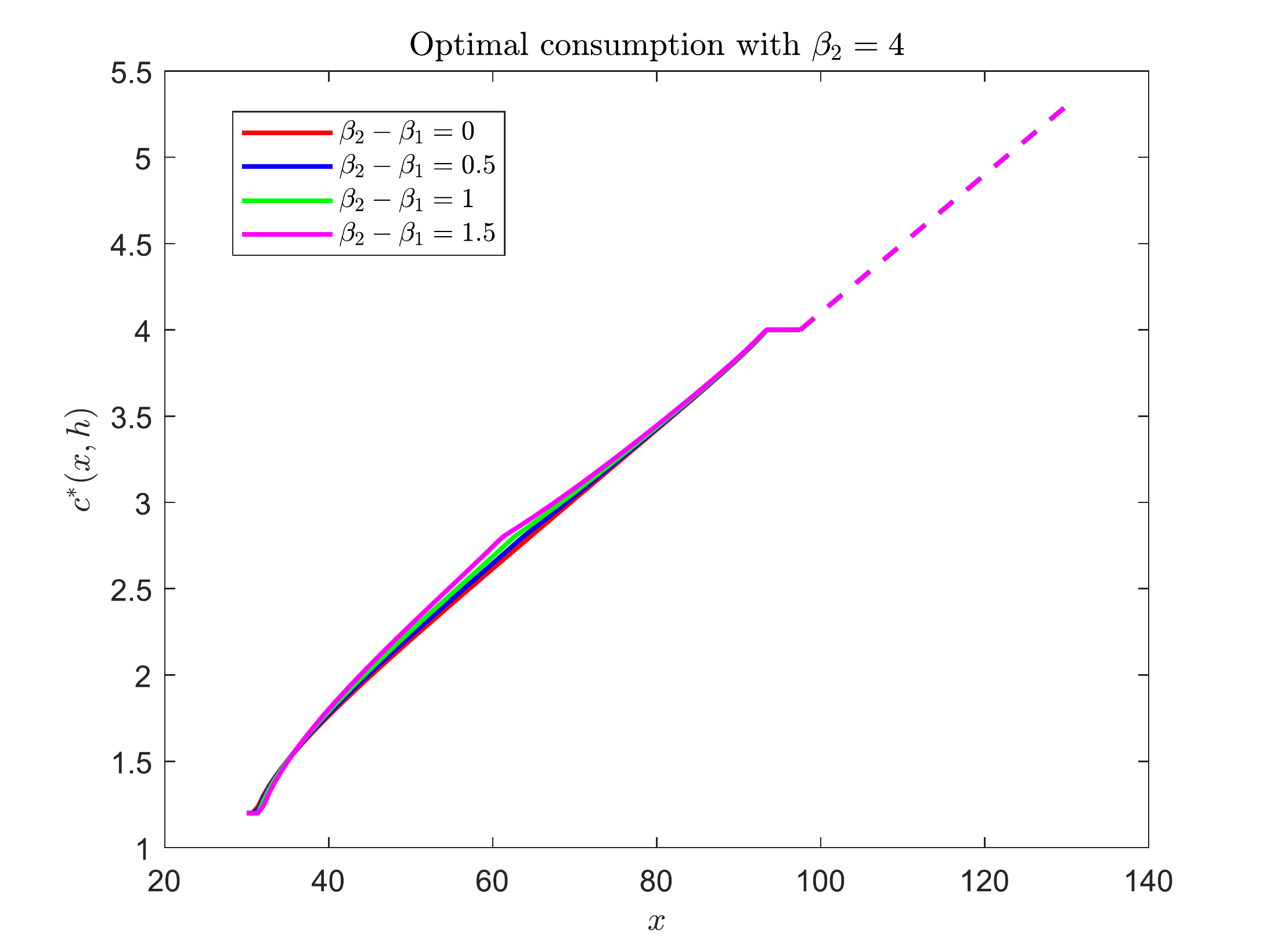}
    }
    \caption{With fixed parameters $\alpha=0.7,\ r=0.04,\ \mu=0.12,\ \sigma=0.3, \ h=4$, impact of $\beta_{2}-\beta_{1}$ on optimal consumption with fixed $\beta_{2}$.}
    \label{beta2tocon}
\end{figure}

\begin{figure}[ht]
    \centering
    {%
    \includegraphics[width=.31\linewidth]{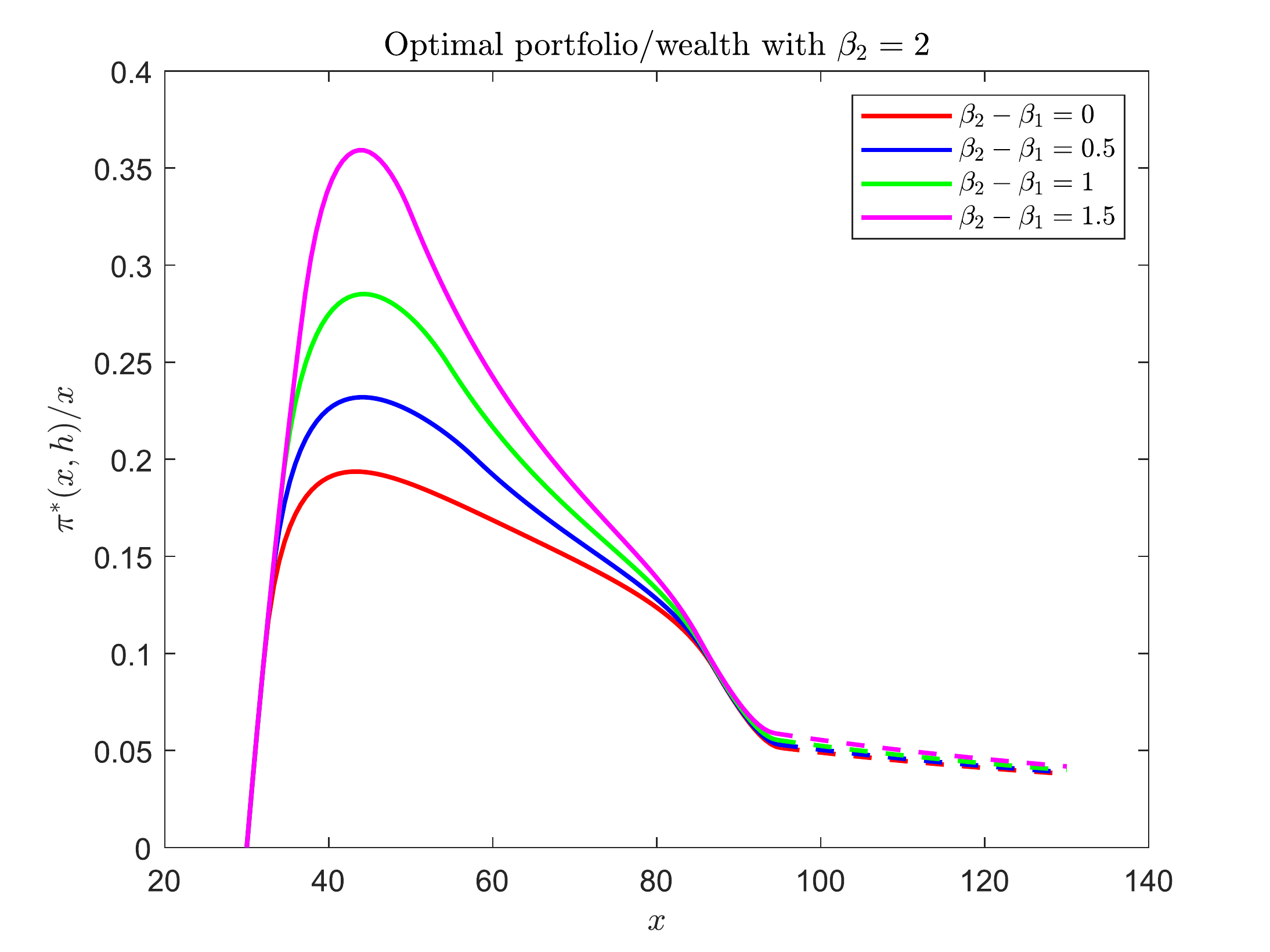}\quad
    \includegraphics[width=.31\linewidth]{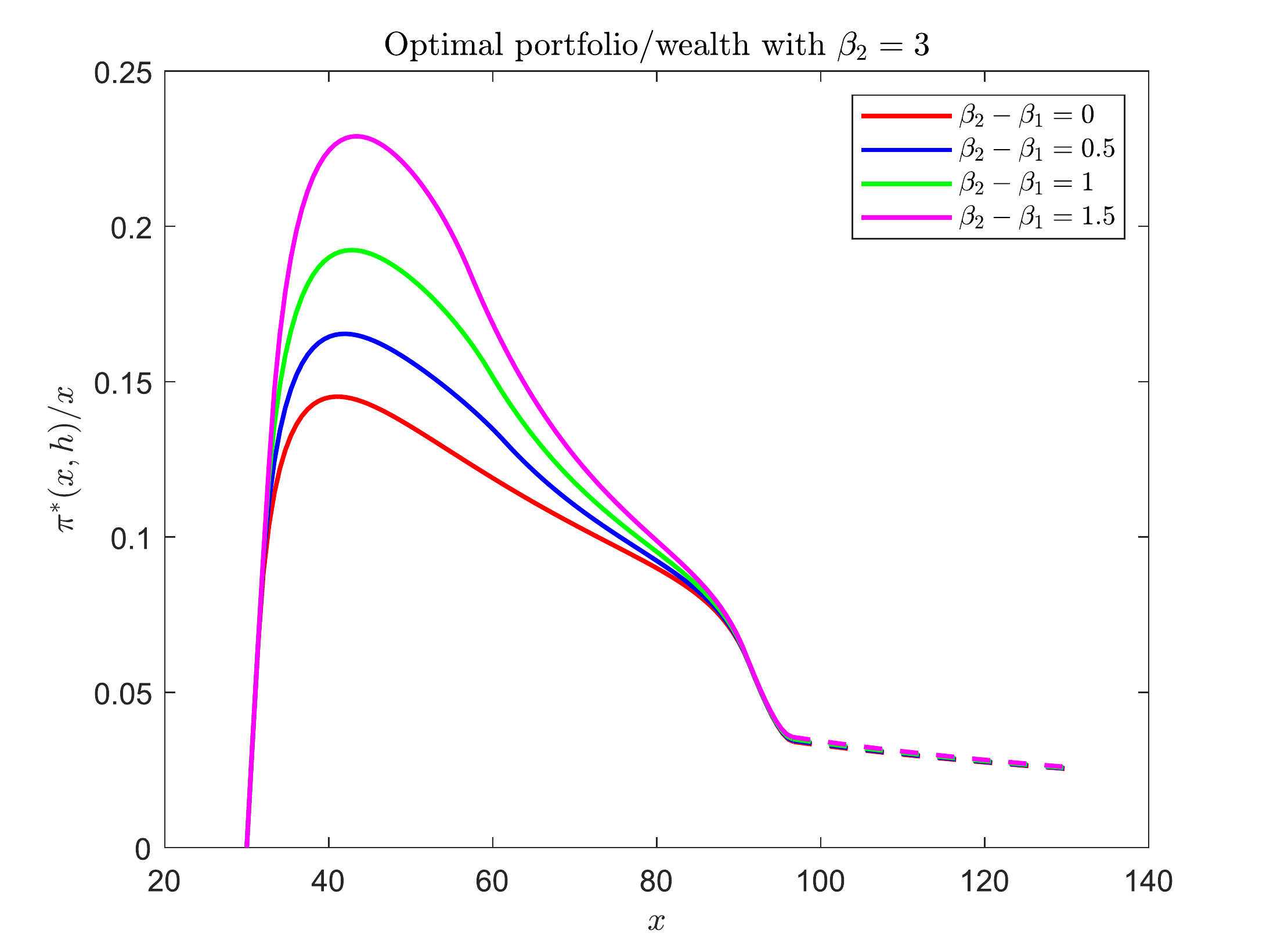}\quad
    \includegraphics[width=.31\linewidth]{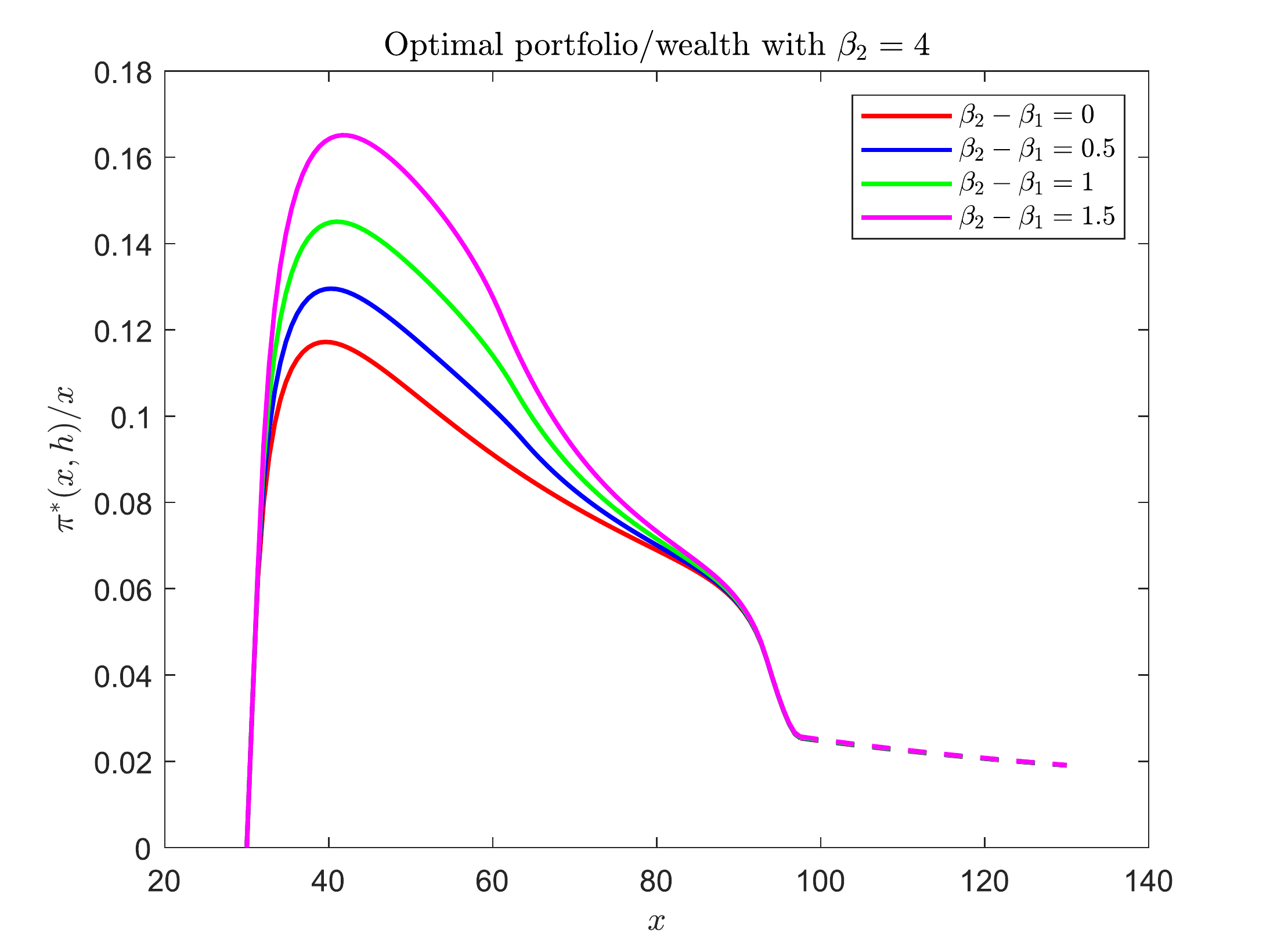}
    }
    \caption{With fixed parameters $\alpha=0.7,\ r=0.04,\ \mu=0.12,\ \sigma=0.3, \ h=4$, impact of $\beta_{2}-\beta_{1}$ on optimal risky investment proportion with fixed $\beta_{2}$.}
    \label{beta2topor}
\end{figure}

\begin{remark}
     The influence of $\beta_{2}-\beta_{1}$ can only be studied fixing either $\beta_{1}$ or $\beta_{2}$. The sensitivity analyses for the influence of $\beta_{2}-\beta_{1}$ on the optimal controls as well as the thresholds are established for relatively small $\beta_{1}$ or $\beta_{2}$ (the fixed one). As numerical results illustrate, When the fixed $\beta_{i}$ is sufficiently large, the optimal controls and the thresholds are nearly not affected by $\beta_{2}-\beta_{1}$. This phenomenon can already be well observed in the right panels in Figure \ref{betatocon}, Figure \ref{betapeak}, Figure \ref{beta2tocon} and Figure \ref{betalore}.
\end{remark}

\begin{figure}[ht]
    \centering
    {%
    \includegraphics[width=.31\linewidth]{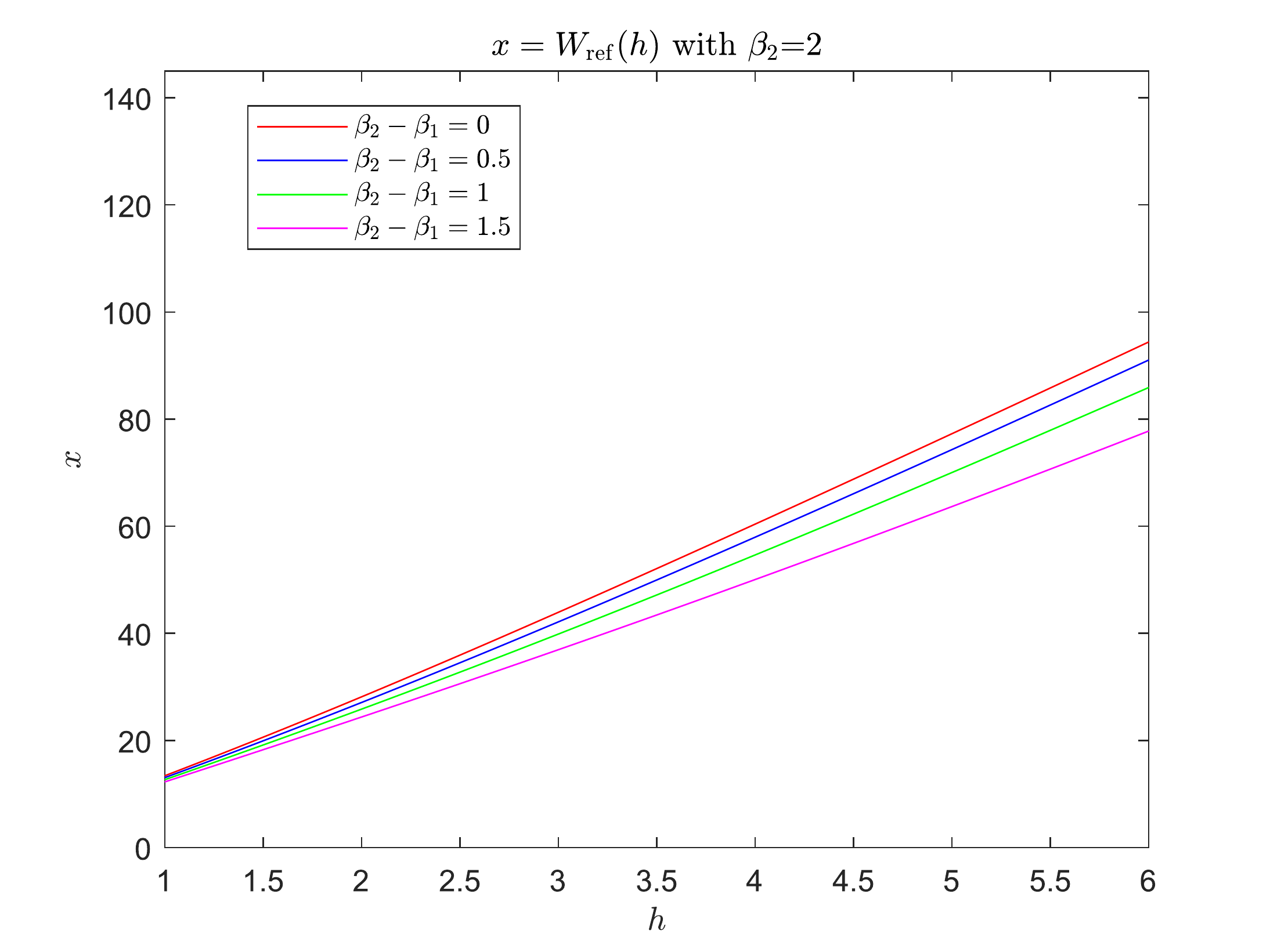}\quad
    \includegraphics[width=.31\linewidth]{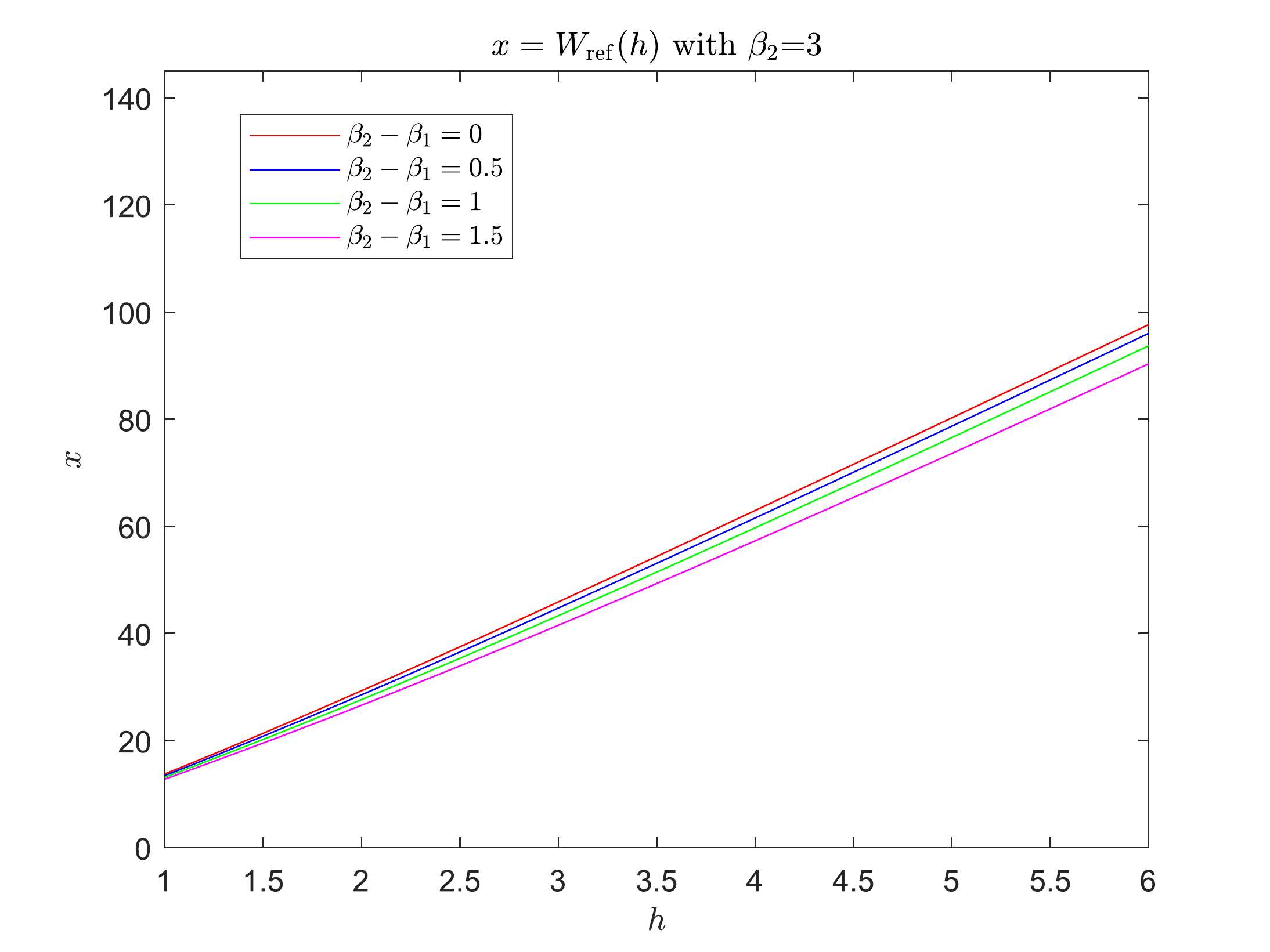}\quad
    \includegraphics[width=.31\linewidth]{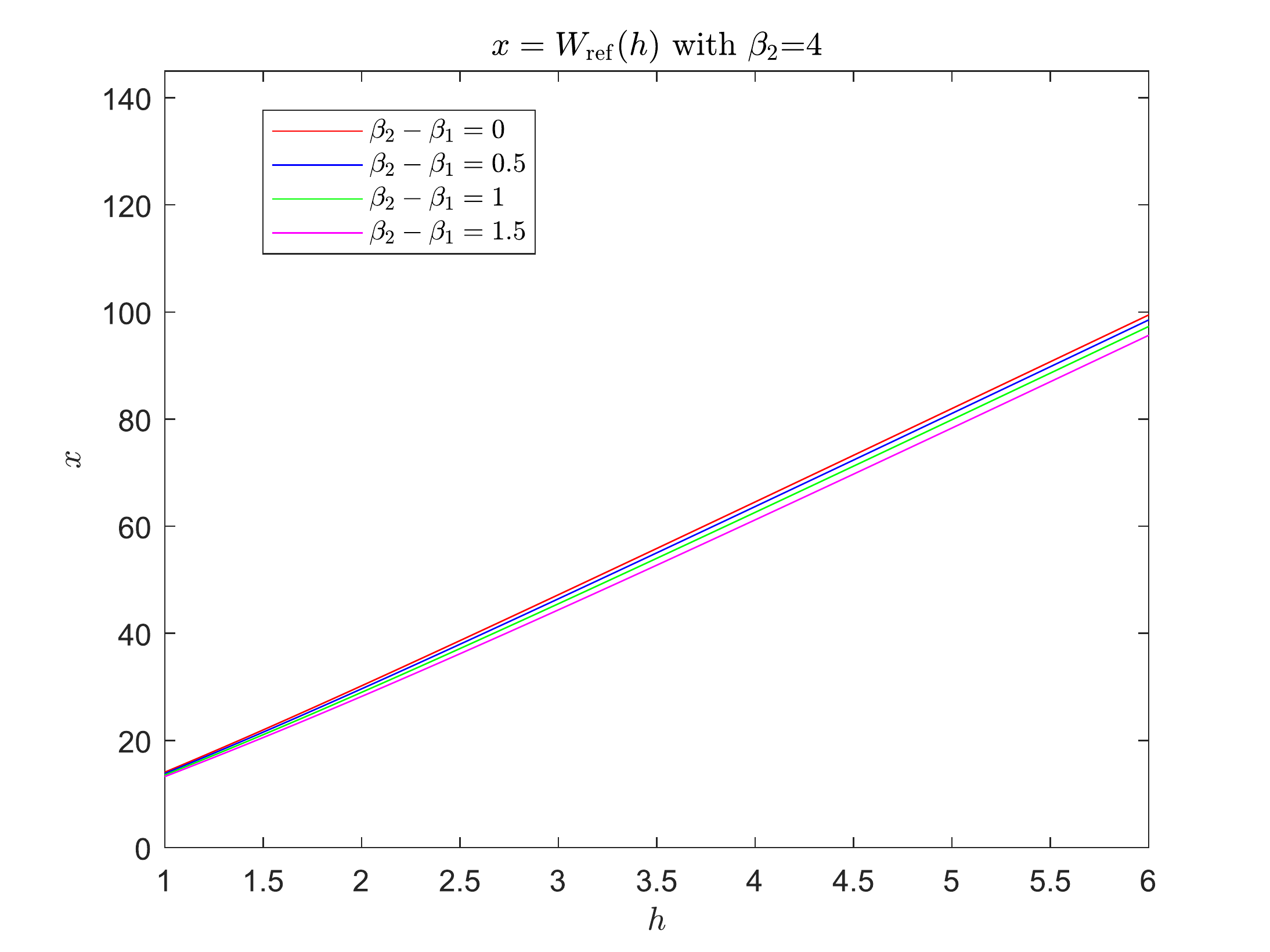}
    }
    \caption{With fixed parameters $\alpha=0.7,\ r=0.04,\ \mu=0.12,\ \sigma=0.3$, impact of $\beta_{2}-\beta_{1}$ on  $x=W_{\rm ref}(h)$ with fixed $\beta_{2}$.}
    \label{betalore}
\end{figure}
A special case in our model is $\beta_{2}-\beta_{1}=0$, which suggests that risk aversion does not change over the reference $\alpha h$. In this case, $W_{\rm ref}$ just exists symbolically but has no economical significance. The depression region ($x\in[W_{\rm low}(h),W_{\rm ref}(h)]$) and recovery region ($x\in[W_{\rm ref}(h),W_{\rm peak}(h)]$) share strategies and one region would merge into another.

\subsubsection{$\beta_{1}>\beta_{2}$}
\label{debe}
\quad\\
In this part, we consider the case $\beta_{1}>\beta_{2}$ and briefly discuss on the impact of $\beta_{1}-\beta_{2}$ on optimal strategies and boundaries. The impact for $\beta_{1}>\beta_{2}$ mainly corresponds to that for $\beta_{1}<\beta_{2}$ and can be well interpreted. However, it is beyond our expectation to observe a new peak around $W_{\rm peak}(h)$ in risky investment proportion that overtakes the peak around $W_{\rm ref}(h)$.

\begin{figure}[ht]
    \centering
    {%
    \includegraphics[width=.48\linewidth]{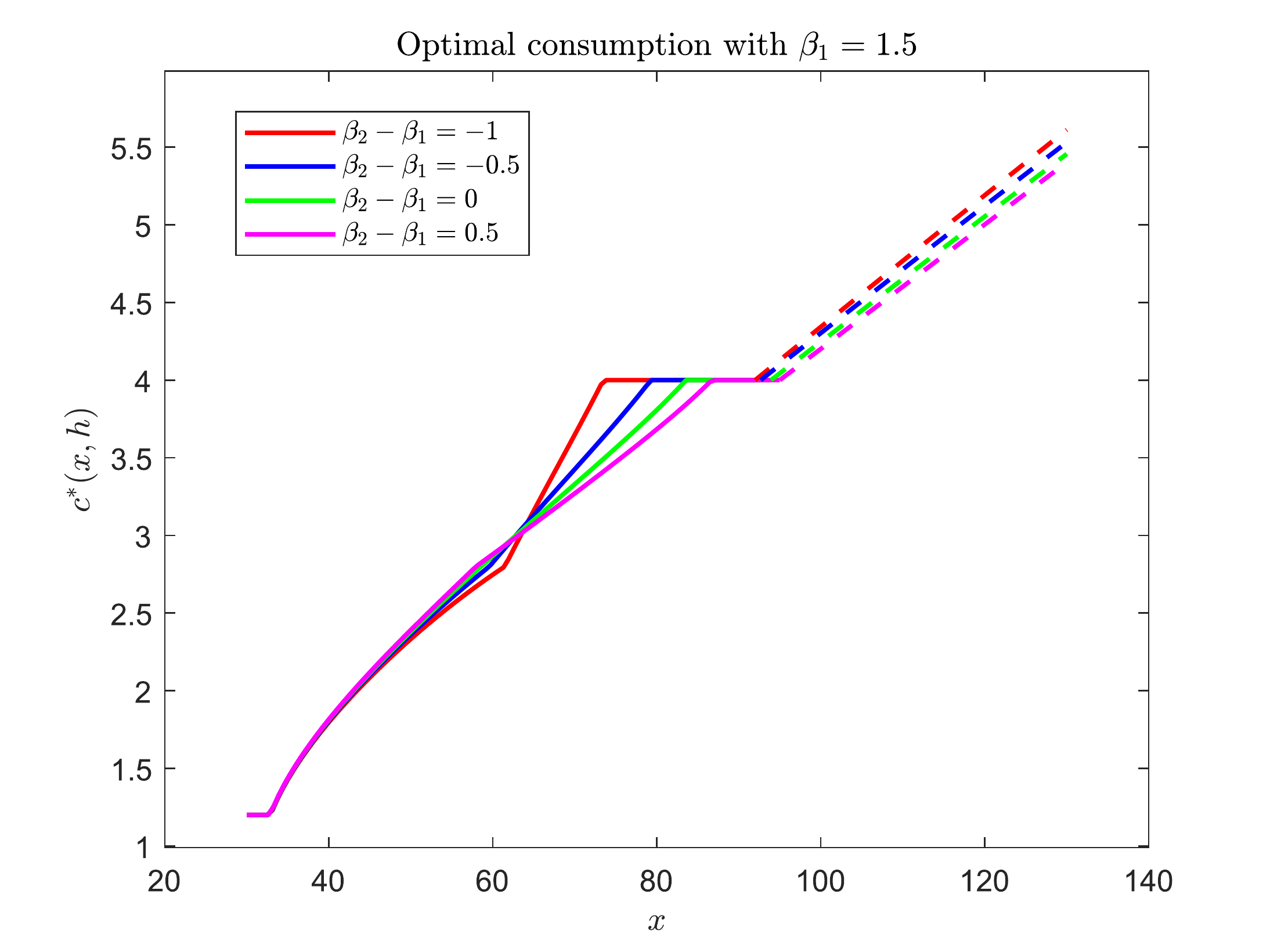}\quad
    \includegraphics[width=.48\linewidth]{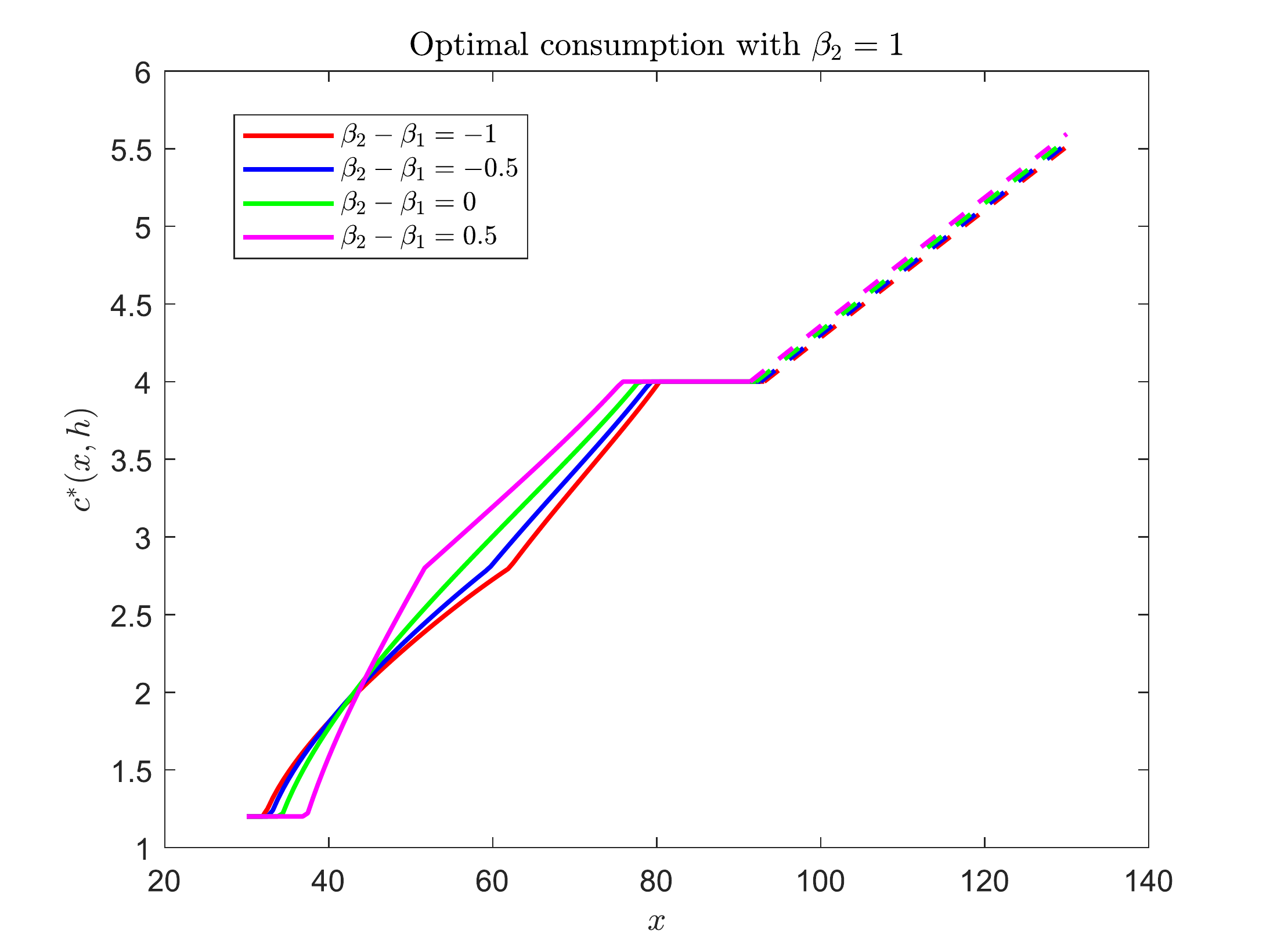}
    }
    \caption{With fixed parameters $\alpha=0.7,\ r=0.04,\ \mu=0.12,\ \sigma=0.3,\ h=4$, impact of $\beta_{2}-\beta_{1}$ on optimal consumption with fixed $\beta_{1}=1.5$ (left) or fixed $\beta_{2}=1$ (right).}
    \label{dibe1co}
\end{figure}

The optimal consumption for $\beta_{1}>\beta_{2}$ is still non-decreasing in $x$. The main difference occurs between $[W_{\rm low}(h),W_{\rm peak}(h)]$. As shown in Figure \ref{dibe1co}, when $\beta_{1}>\beta_{2}$, the risk aversion is lower in the recovery region and the MPC out of wealth is thus generally higher in the recovery region, which results in an upward turn at $W_{\rm ref}(h)$ (it is a downward turn when $\beta_{1}<\beta_{2}$). When fixing $\beta_{1}$, as $\beta_{1}-\beta_{2}$ enlarges, the increase of consumption accelerates over $W_{\rm ref}(h)$, bringing the arrival of consumption peak $W_{\rm peak}(h)$ forward. When fixing $\beta_{2}$, as $\beta_{1}-\beta_{2}$ enlarges, the decrease of consumption accelerates above $W_{\rm ref}(h)$ and decelerates below $W_{\rm ref}(h)$.

\begin{figure}[ht]
    \centering
    {%
    \includegraphics[width=.48\linewidth]{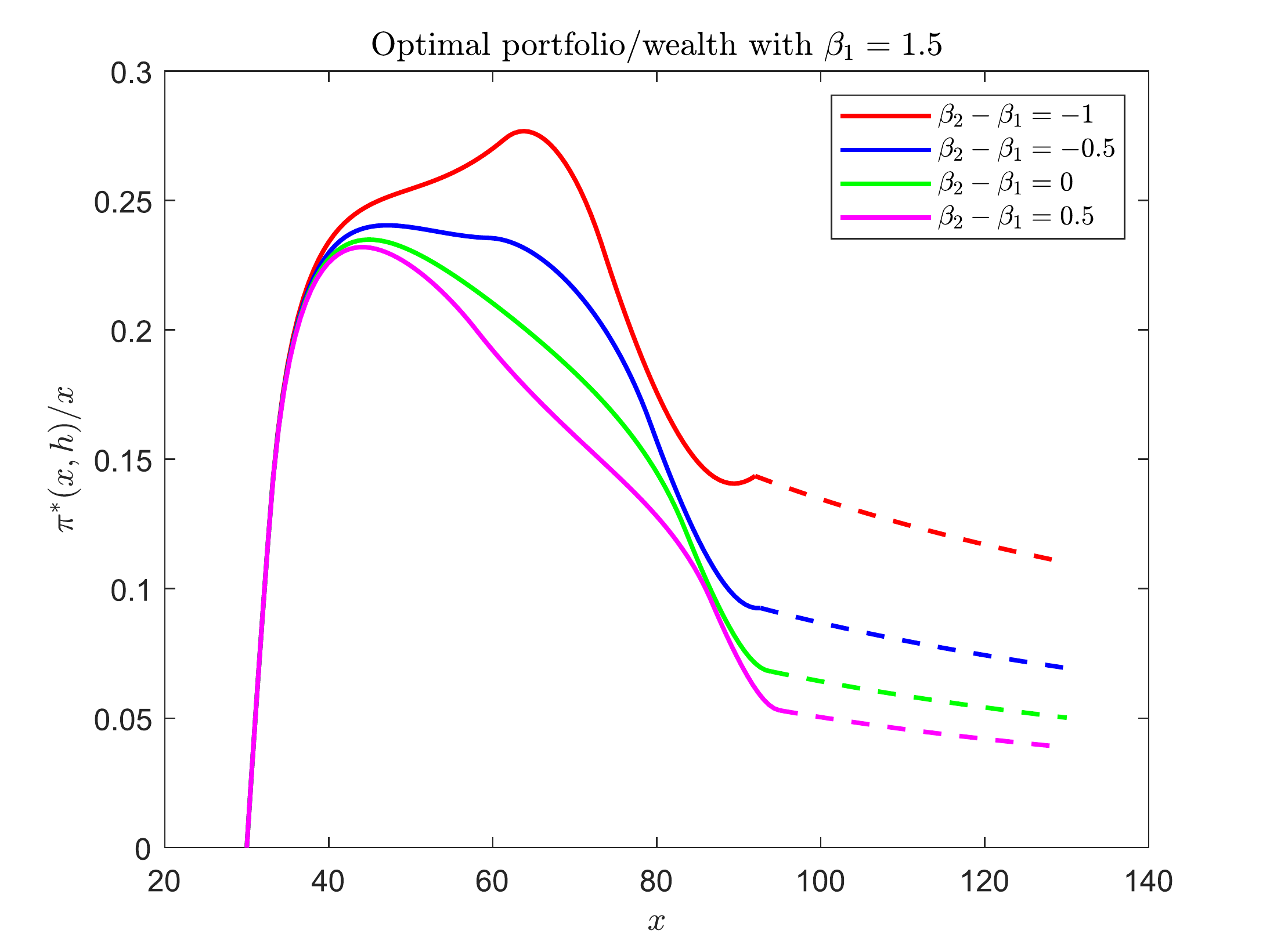}\quad
    \includegraphics[width=.48\linewidth]{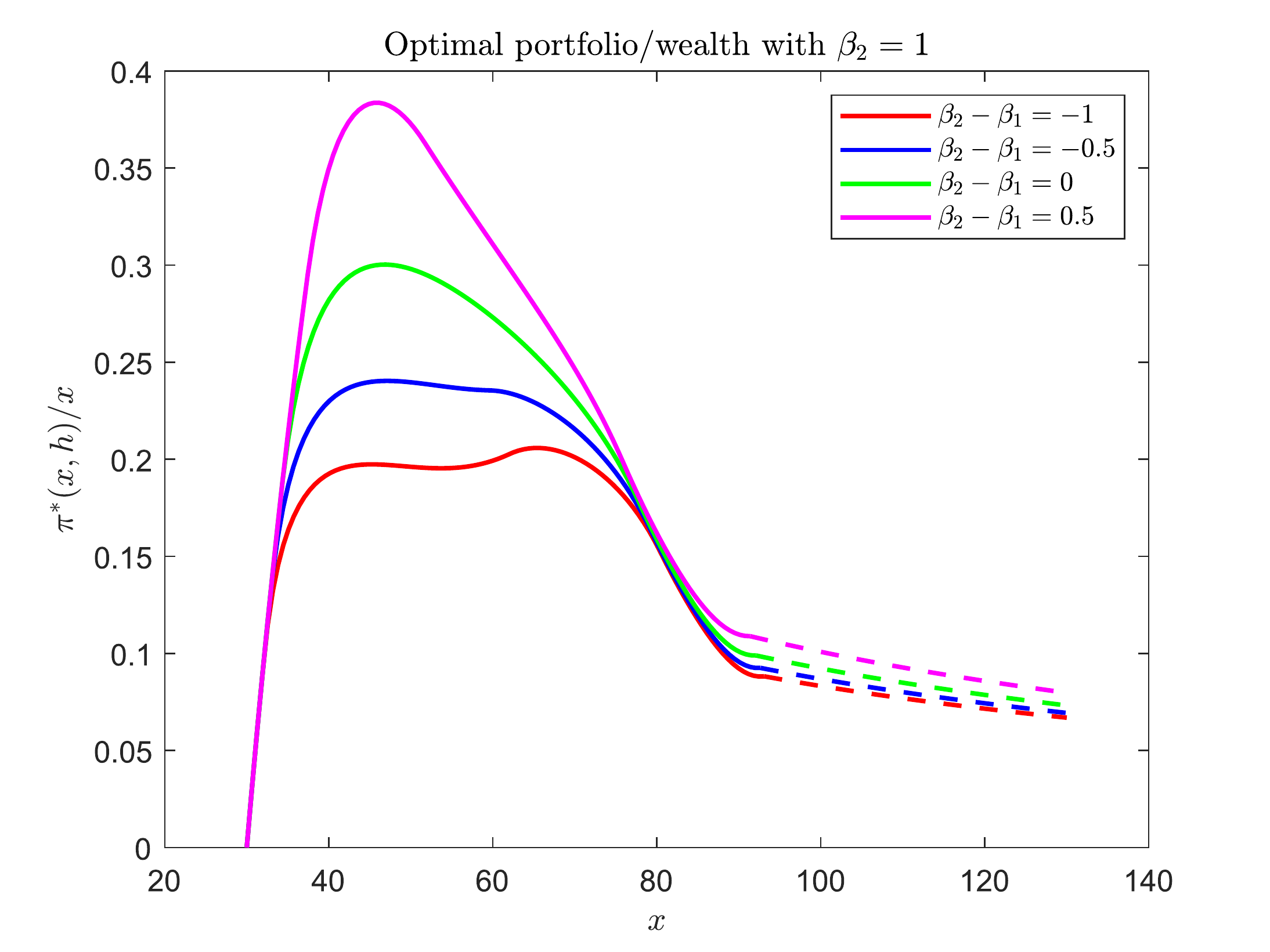}
    }
    \caption{With fixed parameters $\alpha=0.7,\ r=0.04,\ \mu=0.12,\ \sigma=0.3,\ h=4$, impact of $\beta_{2}-\beta_{1}$ on optimal risky investment proportion with fixed $\beta_{1}=1.5$ (left) or fixed $\beta_{2}=1$ (right).}
    \label{dibe1po}
\end{figure}

As shown in Figure \ref{dibe1po}, when $\beta_{1}-\beta_{2}$ enlarges, the optimal investment proportion sees an apparent increase over certain threshold around $W_{\rm low}(h)$ for fixed $\beta_{1}$ while there is an apparent decrease for wealth in depression and recovery regions for fixed $\beta_{2}$. Besides, we observe a new peak around $W_{\rm peak}(h)$ that overtakes the peak around $W_{\rm ref}(h)$ (see the red line in Figure \ref{dibe1po}). It suggests that high proportion of risky investment is recommended for wealth in the recovery region, especially around $W_{\rm ref}(h)$ and $W_{\rm peak}(h)$.

The impact of increasing $\beta_{1}-\beta_{2}$ on the boundaries coincides with the case $\beta_{1}<\beta_{2}$: lower level of $W_{\rm peak}(h)$ with fixed $\beta_{1}$, higher level of $W_{\rm ref}(h)$ with fixed $\beta_{2}$ and negligible effect on others.

\subsubsection{Limiting cases}
\label{lc}
In this subsection, we briefly discuss two limiting cases when one of the $\beta_i$ takes extreme value: fix $\beta_{1}$ and let $\beta_{2}\rightarrow 0$; fix $\beta_{2}$ and let $\beta_{1}\rightarrow 0$.

Fixing $\beta_{1}$ and letting $\beta_{2}\rightarrow 0$ indicates that the agent becomes risk neutral when $c>\alpha h$. Since $C_{1}(h)\rightarrow\infty$ as $\beta_{2}\rightarrow 0$, it is surprising to see that even the lowest constrained threshold $W_{\rm low}(h)$ tends to infinity. The limiting consumption curve is a horizontal line $c\equiv\lambda h$. It suggests that the agent always consumes at the lowest constrained level $\frac{\lambda}{r}h$. He might be saving money from consuming as less as possible in order to aggressively invest in risk asset once he reaches $W_{\rm ref}(h)$. However, the day never comes because $W_{\rm ref}(h)$ tends to infinity as $\beta_{2}\rightarrow 0$. 

Fixing $\beta_{2}$ and letting $\beta_{1}\rightarrow 0$, however, has completely different consequences. This limit corresponds to the case that the agent becomes risk neutral when $c<\alpha h$. A dedicated analysis (see Remark \ref{lims}) shows that all the thresholds have finite limits as $\beta_{1}\rightarrow 0$. In particular, $W_{\rm low}(h)$ and $W_{\rm ref}(h)$ share the same limit. It indicates that the agent never consumes between the lowest constrained level $\frac{\lambda h}{r}$ and the reference level $\alpha h$ in this limiting case. The result is similar to that of an S-shaped utility in \cite{Li2021} where the agent is risk seeking below the reference. However, in S-shaped utility, the optimal consumption jumps from 0 to a level that is strictly higher than reference point $\alpha h$ (see (3.25) in \cite{Li2021}), while our limiting optimal consumption jumps from the lowest level to exactly the reference point. 

\begin{remark}\label{lims}
From (\ref{C1})$\sim$(\ref{C7}), we have, as $\beta_{1}\rightarrow 0$,

\begin{align*}
C_{4}(h)\rightarrow&-\infty,\\
C_{6}(h)\rightarrow&-\frac{k}{\gamma^{2}\beta_{2}}\frac{1-q_{1}}{q_{2}-q_{1}}+\frac{k}{\gamma^{2}}\frac{1-q_{1}}{q_{2}-q_{1}}(\alpha-\lambda)(q_{2}-1)h,\\
C_{8}(h)\rightarrow&\frac{k}{\gamma^{2}\beta_{2}}\frac{1-q_{1}}{q_{2}-q_{1}}(e^{(1-\alpha)(q_{2}-1)\beta_{2}h}-1)+\frac{k}{\gamma^{2}}\frac{1-q_{1}}{q_{2}-q_{1}}(\alpha-\lambda)(q_{2}-1)h,\\
C_{7}(h)\rightarrow&(1-\alpha)^{q_{2}-q_{1}-1}(\alpha-\lambda)\frac{k}{\gamma^{2}}\frac{1-q_{1}}{q_{2}-q_{1}}\frac{q_{2}-1}{q_{2}-q_{1}}e^{-(1-\alpha)(q_{2}-q_{1})\beta_{2}h}\\&+(1-\alpha)^{q_{2}-q_{1}}\frac{k}{\gamma^{2}\beta_{2}}\frac{q_{2}-1}{q_{2}-q_{1}}e^{-(1-\alpha)(1-q_{1})\beta_{2}h},\\
C_{5}(h)\rightarrow&(1-\alpha)^{q_{2}-q_{1}-1}(\alpha-\lambda)\frac{k}{\gamma^{2}}\frac{1-q_{1}}{q_{2}-q_{1}}\frac{q_{2}-1}{q_{2}-q_{1}}e^{-(1-\alpha)(q_{2}-q_{1})\beta_{2}h}\\&+\left[(1-\alpha)^{q_{2}-q_{1}}-1\right]\frac{k}{\gamma^{2}\beta_{2}}\frac{q_{2}-1}{q_{2}-q_{1}}e^{-(1-\alpha)(1-q_{1})\beta_{2}h},\\
C_{3}(h)\rightarrow&-\infty,\\
C_{1}(h)\rightarrow&(1-\alpha)^{q_{2}-q_{1}-1}(\alpha-\lambda)\frac{k}{\gamma^{2}}\frac{1-q_{1}}{q_{2}-q_{1}}\frac{q_{2}-1}{q_{2}-q_{1}}e^{-(1-\alpha)(q_{2}-q_{1})\beta_{2}h}\\&+(1-\alpha)^{q_{2}-q_{1}}\frac{k}{\gamma^{2}\beta_{2}}\frac{q_{2}-1}{q_{2}-q_{1}}e^{-(1-\alpha)(1-q_{1})\beta_{2}h}\\&+\frac{k}{\gamma^{2}\beta_{2}}\frac{q_{2}-1}{q_{2}-q_{1}}+\frac{k}{\gamma^{2}}\frac{q_{2}-1}{q_{2}-q_{1}}(\alpha-\lambda)(1-q_{1})h.\\
\end{align*}
Denote the finite limits of $C_{i}(h),i=1,5,6,7,8$ by $C^{L}_{i}(h)$. Then we have, as $\beta_{1}\rightarrow 0$,

\begin{align*}
W_{\rm low}(h)\rightarrow&-C^{L}_{1}(h)q_{1}+\frac{\lambda h}{\gamma},\\
W_{\rm peak}(h)\rightarrow&-C^{L}_{5}(h)q_{1}e^{(1-\alpha)(1-q_{1})\beta_{2}h}-C^{L}_{6}(h)q_{2}e^{-(1-\alpha)(q_{2}-1)\beta_{2}h}-\frac{k}{\gamma^{2}\beta_{2}}+\frac{h}{\gamma},\\
W_{\rm updt}(h)\rightarrow&-C^{L}_{7}(h)q_{1}(1-\alpha)^{q_{1}-1}e^{(1-\alpha)(1-q_{1})\beta_{2}h}-C^{L}_{8}(h)q_{2}(1-\alpha)^{q_{2}-1}e^{-(1-\alpha)(q_{2}-1)\beta_{2}h}+\frac{h}{\gamma}.
\end{align*}
To show that $W_{\rm ref}(h)$ tends to the same limit $-C^{L}_{1}(h)q_{1}+\frac{\lambda h}{\gamma}$ as $W_{\rm low}(h)$, we prove $W_{\rm ref}(h)-W_{\rm low}(h)\rightarrow 0$. In fact,

\begin{align*}
W_{\rm ref}(h)-W_{\rm low}(h)=&q_{1}C_{1}(h)\left[e^{-(\alpha-\lambda)(1-q_{1})\beta_{1}h}-1\right]+\frac{(\alpha-\lambda)h}{\gamma}\\&+\frac{k}{\gamma^{2}\beta_{1}}\left[q_{1}\frac{q_{2}-1}{q_{2}-q_{1}}e^{(\alpha-\lambda)(1-q_{1})\beta_{1}h}+q_{2}\frac{1-q_{1}}{q_{2}-q_{1}}e^{-(\alpha-\lambda)(q_{2}-1)\beta_{1}h}-1\right]\\
\rightarrow&\frac{(\alpha-\lambda)h}{\gamma}+\frac{k}{\gamma^{2}}\left[q_{1}\frac{q_{2}-1}{q_{2}-q_{1}}(\alpha-\lambda)(1-q_{1})h-q_{2}\frac{1-q_{1}}{q_{2}-q_{1}}(\alpha-\lambda)(q_{2}-1)h\right]\\
=&0.
\end{align*}
\end{remark}

\vskip 15pt
\section{Conclusion}\label{concl}
We establish a new theoretical model focusing on the risky investment and consumption behavior of a sophisticated decision maker. We solve the optimal consumption and investment problem which maximizes the expected total discounted utility with running maximum related reference point and drawdown constraint. Mathematical analysis and computation illustrate that the optimal consumption and investment policy are of semi-explicit forms with five important thresholds classifying different ranks of people. Theoretical and numerical analysis of the solution and sensitivity analysis of the parameters are conducted as well. The results are of economic significance in the following aspects: the MPC out of wealth is generally decreasing but increasing with certain intermediate wealth levels, and it jumps inversely proportional to the risk aversion at the reference point; both DRRA and IRRA are possible and the implied relative risk aversion is roughly a smile in wealth; wealth shocks are more influential on the welfare of the poorer people. As a special feature of our model, risk aversion change results in significant changes in optimal strategies and the impact of local risk aversion change turns out to be global.
\vskip 15pt

{\bf Acknowledgements.}
The authors acknowledge the support from the National Natural Science Foundation of China (Grant No.11871036, and No.12271290). The authors also thank the members of the group of Actuarial Sciences and Mathematical Finance at the Department of Mathematical Sciences, Tsinghua University for their feedback and useful conversations. We are also particularly grateful to the two
anonymous reviewers and the associated editor whose suggestions helped us to greatly improve the quality of the manuscript.
\vskip 20pt
{\bf Data availability statement.} 
Data sharing not applicable to this article as no datasets were generated or analyzed during the current study.

\appendix
  \renewcommand{\appendixname}{Appendix~\Alph{section}}
	\section{Proof of the Verification Theorem}
	\label{appa}
\begin{proof}[Proof of Theorem \ref{verificationth}]
	Define
	
	\begin{equation*}
	\hat{H}_{t}(y)\triangleq h_{0}\vee\frac{1}{(1-\alpha)\beta_{2}}\ln(\frac{1-\alpha}{\inf\limits_{s\le t}Y_{s}(y)}).
	\end{equation*}
		Then, for any $(x_{0},h_{0})\in\mathcal{C}$ and any $y>0$, we have
\begin{eqnarray}
		\mathbb{E}_{x_{0},h_{0}}\int_{0}^{\infty}e^{-\gamma t}U(c_{t},h_{t})dt&=&\mathbb{E}_{x_{0},h_{0}}\int_{0}^{\infty}e^{-\gamma t}\big(U(c_{t},h_{t})-Y_{t}(y)c_{t}\big)dt+y\mathbb{E}_{x_{0},h_{0}}\int_{0}^{\infty}c_{t}M_{t}dt\notag
		\end{eqnarray}
\begin{eqnarray}
		&\le &\mathbb{E}_{x_{0},h_{0}}\int_{0}^{\infty}e^{-\gamma t}\tilde{U}\big(Y_{t}(y),H^{\dagger}_{t}(y)\big)dt+yx_{0}\notag\\
		&=&\mathbb{E}_{x_{0},h_{0}}\int_{0}^{\infty}e^{-\gamma t}\tilde{U}\big(Y_{t}(y),\hat{H}_{t}(y)\big)dt+yx_{0}\notag\\
		&=&\tilde{V}(y,h_{0})+yx_{0},\label{iqofv}
\end{eqnarray}
where the second, the third and the last line hold thanks to Lemma \ref{lemma1}, Lemma \ref{lemma2} and Lemma \ref{lemma4}. And equality holds with $c_{t}=c^{*}(Y_{t}(y),H^{\dagger}_{t}(y))$ and $y=y^{*}$ .
\vskip 5pt		
 Using the explicit expressions of $c^{*}(\cdot,\cdot)$, $Y_{t}(\cdot)$ and $\hat{H}_{t}(\cdot)$, we know that $c^{*}\big(Y_{t}(y),\hat{H}_{t}(y)\big)$ is strictly decreasing in $y$ with $\lim\limits_{y\rightarrow 0^{+}}c^{*}\big(Y_{t}(y),\hat{H}_{t}(y)\big)=\infty$ and $\lim\limits_{y\rightarrow \infty}c^{*}\big(Y_{t}(y),\hat{H}_{t}(y)\big)=\lambda h_{0}$. As such, there exists a unique $y$ to solve
 
		\begin{equation*}
		   \mE_{x_{0},h_{0}}\int_{0}^{\infty}c^{*}\big(Y_{t}(y),\hat{H}_{t}(y)\big)M_{t}dt=x_{0}.
		\end{equation*}
Hence, we deduce from Lemma \ref{lemma2}  that Eq.(\ref{ystar}) has a unique solution $y^{*}$. Using (\ref{iqofv}) yields

\begin{equation*}
		\inf\limits_{y>0}\big\{\tilde{V}(y,h_{0})+yx_{0}\big\}=\sup\limits_{(c,\pi)\in\mathcal{A}}\mathbb{E}_{x_{0},h_{0}}\int_{0}^{\infty}e^{-\gamma t}U(c_{t},h_{t})dt.
\end{equation*}
		To make the left hand side in which $\tilde{V}(\cdot,\cdot)$ is the solution of (\ref{ode}) equal the value function $V(x_{0},h_{0})$ which satisfies (\ref{HJB}), $\tilde{V}(\cdot,\cdot)$ must satisfy (\ref{dHJB}). Recall that (\ref{dHJB}) leads to (\ref{ode}) for $c=c^{*}(y,h)$ and $\pi=\pi^{*}(y,h)$ where $c^{*}(\cdot,\cdot)$ is given by (\ref{c}) which leads to (\ref{cstar}) when restricted to $\mathcal{C}_{d}$ and $\pi^{*}(\cdot,\cdot)$ is given by (\ref{pi}). Plugging the expression of $\tilde{V}(y,h)$ and $q_{i}^{2}-q_{i}=\frac{r}{k},\ i=1,2$ into (\ref{pi}) leads to (\ref{pistar}).
		
		Thus, the dual of $\tilde{V}(\cdot,\cdot)$ given by (\ref{tilv}) is actually the value function of (\ref{problem}) and that $\mathbb{E}_{x_{0},h_{0}}\int_{0}^{\infty}e^{-\gamma t}U(c_{t},h_{t})dt$ attains its maximum at $(c^{*},\pi^{*})$ given in Theorem \ref{verificationth}.
\end{proof}
\vskip 15pt
\section{Auxiliary Lemmas for Theorem \ref{verificationth} and  Their Proofs}
\label{appb}
The following four lemmas are needed in proving Theorem \ref{verificationth}.

		\begin{lemma}
		\label{lemma1}
		The inequality holds in (\ref{iqofv}) and it becomes equality with $c_{t}=c^{*}\big(Y_{t}(y),H^{\dagger}_{t}(y)\big)$ and $y=y^{*}$.
	\end{lemma}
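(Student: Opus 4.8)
The plan is to prove the first inequality in (\ref{iqofv}) — that for every admissible $(c,\pi)$ and every $y>0$ the quantity $\mathbb{E}_{x_0,h_0}\int_0^\infty e^{-\gamma t}\big(U(c_t,h_t)-Y_t(y)c_t\big)dt+y\,\mathbb{E}_{x_0,h_0}\int_0^\infty c_tM_t\,dt$ is bounded above by $\mathbb{E}_{x_0,h_0}\int_0^\infty e^{-\gamma t}\tilde U\big(Y_t(y),H^\dagger_t(y)\big)dt+yx_0$ — and then to check that the choice $c_t=c^*(Y_t(y),H^\dagger_t(y))$ together with $y=y^*$ turns every inequality into an equality. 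I would handle the two summands on the left separately.

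For the term $y\,\mathbb{E}\int_0^\infty c_tM_t\,dt$ I would use the static budget constraint. Since $r=\gamma$ we have $dM_t=M_t(-r\,dt-\tfrac{\mu-r}{\sigma}dB_t)$, and Itô's formula applied to $M_tX_t$, with $X$ solving (\ref{wealproc}), shows that $M_tX_t+\int_0^tM_sc_s\,ds$ is a local martingale; it is nonnegative because $X_t>0$ for admissible strategies, hence a supermartingale, and letting $t\to\infty$ by monotone convergence gives $\mathbb{E}\int_0^\infty c_tM_t\,dt\le x_0$. For $c_t=c^*(Y_t(y^*),H^\dagger_t(y^*))$ this is an equality, precisely because it is the defining relation (\ref{ystar}) of $y^*$.

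For the utility term, admissibility gives $\lambda h_t\le c_t\le h_t$, so $U(c_t,h_t)-Y_t(y)c_t\le\sup_{\lambda h_t\le c\le h_t}\{U(c,h_t)-cY_t(y)\}=\tilde U(Y_t(y),h_t)$, and it remains to show $\mathbb{E}\int_0^\infty e^{-\gamma t}\tilde U(Y_t(y),h_t)\,dt\le\tilde V(y,h_0)$, after which Lemmas \ref{lemma2} and \ref{lemma4} (whose proofs do not use the present lemma) identify $\tilde V(y,h_0)$ with $\mathbb{E}\int_0^\infty e^{-\gamma t}\tilde U(Y_t(y),H^\dagger_t(y))\,dt$. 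To obtain this bound I would apply Itô to $e^{-\gamma t}\tilde V(Y_t(y),h_t)$: from $dY_t(y)=-\tfrac{\mu-r}{\sigma}Y_t(y)\,dB_t$ and the ODE (\ref{ode}) the drift reduces to $-e^{-\gamma t}\tilde U(Y_t(y),h_t)\,dt$, while the finite-variation term $e^{-\gamma t}\tilde V_h(Y_t(y),h_t)\,dh_t$ is nonpositive because $\tilde V_h=V_h\le0$ (value function decreasing in the habit, via (\ref{tranx})) and $h$ is non-decreasing; hence $e^{-\gamma t}\tilde V(Y_t(y),h_t)+\int_0^te^{-\gamma s}\tilde U(Y_s(y),h_s)\,ds$ is a local supermartingale, and localising, taking expectations and letting $t\to\infty$ with the transversality estimate $\lim_t\mathbb{E}[e^{-\gamma t}\tilde V(Y_t(y),h_t)]=0$ yields the claim. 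The equality assertion is then straightforward: if $c_t=c^*(Y_t(y),H^\dagger_t(y))$, the running maximum of $c$ is exactly $H^\dagger_t(y)$ by the fixed-point definition of $H^\dagger$, so $h_t=H^\dagger_t(y)$, and $c^*(\cdot,\cdot)$ is by construction the maximiser in the definition of $\tilde U$, so $U(c_t,h_t)-Y_t(y)c_t=\tilde U(Y_t(y),H^\dagger_t(y))$; combined with the budget equality at $y=y^*$ this forces equality throughout (\ref{iqofv}).

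The main obstacle is the inequality $\mathbb{E}\int_0^\infty e^{-\gamma t}\tilde U(Y_t(y),h_t)\,dt\le\tilde V(y,h_0)$ for an arbitrary admissible strategy. Two points need care. First, the pair $(Y_t(y),h_t)$ need not stay in the dual effective region $\mathcal{C}_d$ — an agent who keeps $c$ close to $\lambda h_0$ has $h_t\equiv h_0$ while $Y_t(y)$ can be driven arbitrarily small — so one must either extend $\tilde V$ across the updating boundary $y=(1-\alpha)e^{-(1-\alpha)\beta_2h}$ in a way that keeps it nonincreasing in $h$ (consistent with $\tilde V_h=0$ there), or argue separately, by monotonicity of $U$ in $h$ and the boundary relation linking $Y$ and $H^\dagger$, that the excursions outside $\mathcal{C}_d$ only help. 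Second, the transversality estimate requires controlling the growth of $h_t$, hence of the exponentially negative lower branch of $\tilde V$, against the discount factor; here I would lean on the bound $X_t\ge\frac{\lambda h_t}{\gamma}$ from Lemma \ref{Lemma1}, on $\mathbb{E}\int_0^\infty c_tM_t\,dt\le x_0$, and on $\int_0^\infty e^{-\gamma t}U(c_t,h_t)\,dt$ being a.s.\ well defined.
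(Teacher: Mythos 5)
The paper never writes this proof out (it is deferred to Lemma 5.2 of \cite{deng2021}), so I am judging your plan on its own terms and against the standard duality argument. The two ``easy'' pieces of your decomposition are correct and complete: the budget inequality $\mathbb{E}\int_0^\infty c_tM_t\,dt\le x_0$ via the nonnegative local martingale $M_tX_t+\int_0^tM_sc_s\,ds$ (with equality for $c^*$ at $y=y^*$ being exactly the defining relation (\ref{ystar})), and the pointwise Fenchel bound $U(c_t,h_t)-Y_t(y)c_t\le\tilde U(Y_t(y),h_t)$ with equality for $c^*$ because then $h_t=H^{\dagger}_t$ and $c^*$ is the maximiser defining $\tilde U$. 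You are also right that the naive pointwise bound $U(c_t,h_t)-Y_tc_t\le\tilde U(Y_t,H^{\dagger}_t)$ cannot work: take $c\equiv\lambda h_0$ on a path where $Y$ first dips (driving $\hat H_t$ strictly above $h_0$) and then rises above $e^{(\alpha-\lambda)\beta_1\hat H_t}$; both $(Y_t,h_0)$ and $(Y_t,\hat H_t)$ then sit on the first branch of $\tilde U$, which is strictly decreasing in $h$, so the inequality reverses on a set of positive probability. Hence an integrated argument of the kind you propose is unavoidable, and your architecture (Fenchel, then $\mathbb{E}\int e^{-\gamma t}\tilde U(Y_t,h_t)\,dt\le\tilde V(y,h_0)$, then Lemmas \ref{lemma2} and \ref{lemma4}) is the right one.

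The gap is that this middle step is where the entire content of the lemma lives, and your proposal reduces it to two claims that are neither established in the paper nor by you. First, you need $\tilde V(y,\cdot)$ nonincreasing on the whole region swept by $(Y_t,h_t)$, including below the updating boundary $y=(1-\alpha)e^{-(1-\alpha)\beta_2h}$ where (\ref{tilv}) is not even defined; your justification ``$\tilde V_h=V_h\le 0$ via (\ref{tranx})'' is circular at this stage, since identifying $\tilde V$ with the conjugate of the true value function is the conclusion of the verification theorem, so the sign must be read off directly from (\ref{tilv}) and (\ref{C1})--(\ref{C7}), branch by branch, which is a substantive unperformed computation. Second, the transversality input you invoke is not Lemma \ref{lemma3}: that lemma is proved only along $\hat H_T(y)$, where $(Y_T,\hat H_T)$ is eventually confined to the third and fourth branches and every term is tied to $Y_T$ through Lemma \ref{hinfty}. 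Along an arbitrary admissible $h_T$ (say $h_T\equiv h_0$ while $Y_T\to 0$) the pair leaves $\mathcal{C}_d$, the extended fourth-branch term $C_7(h_T)Y_T^{q_1}$ blows up, and since $kq_1(q_1-1)=\gamma$ one has $e^{-\gamma T}\mathbb{E}[Y_T^{q_1}]=y^{q_1}$ for every $T$, so the limit $\lim_T\mathbb{E}[e^{-\gamma T}\tilde V(Y_T,h_T)]=0$ you assert is simply false in general. What the supermartingale inequality actually requires is only $\liminf_T\mathbb{E}[e^{-\gamma T}\tilde V(Y_T,h_T)]\ge 0$, i.e.\ control of the negative part of $\tilde V$ along $(Y_T,h_T)$; this appears provable by dominating each negative term on each branch by a power $Y_T^p$ with $q_1<p<q_2$ (using the branch inequalities, Lemma \ref{hinfty} and Lemma \ref{Lemma1}), but it is a different estimate from Lemma \ref{lemma3} and has to be written out. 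Until these two points are supplied, the proposal is a plan rather than a proof.
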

	
		\begin{lemma}
	    \label{lemma2}
	    For any $y>0$ and any $t>0$, we have $H^{\dagger}_{t}(y)=\hat{H}_{t}(y)$.
	\end{lemma}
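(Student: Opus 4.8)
The plan is to recognise that the implicitly defined process $\{H^{\dagger}_{t}(y)\}$ is the \emph{minimal} non-decreasing continuous process that dominates $h_{0}$, keeps its graph inside the dual effective region (that is, $Y_{s}(y)\ge(1-\alpha)e^{-(1-\alpha)\beta_{2}H^{\dagger}_{s}(y)}$ for all $s$), and satisfies the fixed-point relation $H_{t}=h_{0}\vee\sup_{s\le t}c^{*}(Y_{s}(y),H_{s})$ --- equivalently, $H^{\dagger}(y)$ is the Skorokhod-type reflection of the autonomous process $Y(y)$ at the curve $y=(1-\alpha)e^{-(1-\alpha)\beta_{2}h}$. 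The plan is then to check that $\hat{H}(y)$ is exactly this minimal solution. Two elementary facts about the feedback map $c^{*}$ of (\ref{cstar}) will be used: (i) $c^{*}(y,h)\le h$ whenever $(y,h)\in\mathcal{C}_{d}$ (in the first branch $c^{*}=\lambda h\le h$ since $\lambda<1$; in the second and third $\ln y\le 0$ on the relevant ranges gives $c^{*}=-\beta_{i}^{-1}\ln y+\alpha h\le h$; in the fourth $c^{*}=h$); and (ii) on the lower boundary, $c^{*}\big((1-\alpha)e^{-(1-\alpha)\beta_{2}h},h\big)=h$, this being the left endpoint of the fourth branch.

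First I would verify that $\hat{H}(y)$ solves the fixed-point relation. Since $\inf_{s\le u}Y_{s}(y)\le Y_{u}(y)$, the definition of $\hat{H}$ forces $\hat{H}_{u}(y)\ge\frac{1}{(1-\alpha)\beta_{2}}\ln\frac{1-\alpha}{Y_{u}(y)}$, i.e.\ $(Y_{u}(y),\hat{H}_{u}(y))\in\mathcal{C}_{d}$ for every $u$. Combining this with fact (i) and the monotonicity of $\hat{H}$ gives $h_{0}\vee\sup_{s\le t}c^{*}(Y_{s}(y),\hat{H}_{s}(y))\le h_{0}\vee\sup_{s\le t}\hat{H}_{s}(y)=\hat{H}_{t}(y)$. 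For the reverse inequality, the case $\hat{H}_{t}(y)=h_{0}$ is trivial; otherwise choose $\tau\in[0,t]$ attaining $Y_{\tau}(y)=\inf_{s\le t}Y_{s}(y)$ (possible by continuity on the compact interval $[0,t]$). Then $\inf_{s\le\tau}Y_{s}(y)=Y_{\tau}(y)=\inf_{s\le t}Y_{s}(y)$, so $\hat{H}_{\tau}(y)=\hat{H}_{t}(y)>h_{0}$ and hence $Y_{\tau}(y)=(1-\alpha)e^{-(1-\alpha)\beta_{2}\hat{H}_{\tau}(y)}$; by fact (ii), $c^{*}(Y_{\tau}(y),\hat{H}_{\tau}(y))=\hat{H}_{\tau}(y)=\hat{H}_{t}(y)$, whence $h_{0}\vee\sup_{s\le t}c^{*}(Y_{s}(y),\hat{H}_{s}(y))\ge\hat{H}_{t}(y)$. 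Thus $\hat{H}(y)$ is a solution.

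Next I would establish minimality: for \emph{any} non-decreasing solution $H$ with $(Y_{s}(y),H_{s})\in\mathcal{C}_{d}$ for all $s$, the inclusion forces $H_{s}\ge\frac{1}{(1-\alpha)\beta_{2}}\ln\frac{1-\alpha}{Y_{s}(y)}$, and together with $H$ non-decreasing and $H_{s}\ge h_{0}$ this gives $H_{t}\ge h_{0}\vee\sup_{s\le t}\frac{1}{(1-\alpha)\beta_{2}}\ln\frac{1-\alpha}{Y_{s}(y)}=\hat{H}_{t}(y)$ for every $t$. Since $\hat{H}(y)$ is itself such a solution, it is the minimal one, so $H^{\dagger}(y)=\hat{H}(y)$. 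At the initial instant, if $(y,h_{0})\notin\mathcal{C}_{d}$ both $H^{\dagger}_{0}(y)$ and $\hat{H}_{0}(y)$ equal the post-jump level $\frac{1}{(1-\alpha)\beta_{2}}\ln\frac{1-\alpha}{y}$, and the argument above applies with $h_{0}$ replaced by this level.

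The point that needs the most care is the characterisation of $H^{\dagger}$ as the minimal solution. The fixed-point relation alone --- even together with the constraint $(Y_{s},H_{s})\in\mathcal{C}_{d}$, non-decrease and continuity --- does not by itself pin down a unique process: one can let the running maximum climb strictly above the boundary while still remaining inside the open fourth branch, and $H_{t}=h_{0}\vee\sup_{s\le t}c^{*}(Y_{s},H_{s})$ stays satisfied. Hence one must either invoke, or (tracing back the construction in the verification theorem) prove, that $H^{\dagger}$ only increases when the feedback consumption touches the running maximum from below \emph{and} the state sits on the lower boundary of $\mathcal{C}_{d}$ --- the Skorokhod characterisation, the analogue of what is used for the corresponding running-maximum process in \cite{deng2021}. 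Once that is in place, the computations above are routine.
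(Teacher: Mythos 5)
Your argument is correct and is, in substance, exactly the proof the paper omits (it defers to Lemma 5.3 of \cite{deng2021}, whose content is precisely the verification that the explicit formula $\hat H$ satisfies the implicit relation defining $H^{\dagger}$): the two facts you isolate --- $c^{*}(y,h)\le h$ on $\mathcal{C}_{d}$, and $c^{*}=h$ on the lower boundary $y=(1-\alpha)e^{-(1-\alpha)\beta_{2}h}$ --- together with the attainment of $\inf_{s\le t}Y_{s}(y)$ by path continuity, are all that is needed, and both directions of your computation are sound. Two remarks. First, a small slip in your justification of fact (i): on the third branch $e^{-(1-\alpha)\beta_{2}h}\le y<1$ one has $\ln y\le 0$, which gives $c^{*}=-\beta_{2}^{-1}\ln y+\alpha h\ge\alpha h$, not the upper bound; the inequality $c^{*}\le h$ comes instead from the lower constraint $y\ge e^{-(1-\alpha)\beta_{2}h}$, which yields $-\beta_{2}^{-1}\ln y\le(1-\alpha)h$. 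Second, you are right that the fixed-point relation alone does not single out $H^{\dagger}$: already at $t=0$, whenever $y<e^{-(1-\alpha)\beta_{2}h_{0}}$ any $H_{0}$ in $[\,h_{0}\vee\tfrac{1}{(1-\alpha)\beta_{2}}\ln\tfrac{1-\alpha}{y},\ \tfrac{1}{(1-\alpha)\beta_{2}}\ln\tfrac{1}{y})$ satisfies $H_{0}=h_{0}\vee c^{*}(y,H_{0})$, so a selection principle is required. Your minimality lemma --- every non-decreasing solution whose graph stays in $\mathcal{C}_{d}$ dominates $\hat H$ --- is a clean way to close that loop, and the minimal solution is indeed the intended one (the running maximum increases only when forced, i.e.\ only on the lower boundary of $\mathcal{C}_{d}$), matching the Skorokhod-reflection reading used in \cite{deng2021}. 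With that selection made explicit, your proof is complete.
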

	
	\begin{lemma}[Transversality Condition]
		\label{lemma3}
		For any $y>0$,
		
		\begin{equation*}
		\lim\limits_{T\rightarrow\infty}\mathbb{E}_{x_{0},h_{0}}\Big[e^{-\gamma T}\tilde{V}\big(Y_{T}(y),\hat{H}_{T}(y)\big)\Big]=0.
		\end{equation*}
	\end{lemma}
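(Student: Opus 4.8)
The plan is to bound $\bigl|\tilde V\bigl(Y_{T}(y),\hat H_{T}(y)\bigr)\bigr|$ by a deterministic expression that is polynomial in $\hat H_{T}(y)$ and linear in $Y_{T}(y)$, and then kill the resulting expectation with the prefactor $e^{-\gamma T}$.

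\emph{Step 1: a clean pointwise bound for $\tilde V$ on $\mathcal C_{d}$.} Along any dual trajectory the pair $\bigl(Y_{t}(y),\hat H_{t}(y)\bigr)$ lies in $\mathcal C_{d}$: if $\hat H_{t}(y)>h_{0}$ then $\inf_{s\le t}Y_{s}(y)=(1-\alpha)e^{-(1-\alpha)\beta_{2}\hat H_{t}(y)}$, hence $Y_{t}(y)\ge\inf_{s\le t}Y_{s}(y)=(1-\alpha)e^{-(1-\alpha)\beta_{2}\hat H_{t}(y)}$, and if $\hat H_{t}(y)=h_{0}$ the inequality is immediate. I would then estimate $\tilde V$ branch by branch using (\ref{tilv}) and Lemma \ref{hinfty}, invoking in each branch its defining inequality. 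In the two upper branches ($y<1$) the exponentially large coefficients are absorbed by the matching powers $y^{q_{i}}$; for instance in the fourth branch $|C_{7}(h)|\,y^{q_{1}}=O\!\bigl(e^{-(1-\alpha)(1-q_{1})\beta_{2}h}\bigr)\bigl[(1-\alpha)e^{-(1-\alpha)\beta_{2}h}\bigr]^{q_{1}}=O\!\bigl(e^{-(1-\alpha)\beta_{2}h}\bigr)=O(1)$, and one finds $|\tilde V(y,h)|\le C(1+h)$ there. The delicate branches are the two lower ones ($y\ge1$), where $\tilde V$ genuinely contains a term of size $e^{(\alpha-\lambda)\beta_{1}h}$; the point is that the region inequalities turn this into a multiple of $y$. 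In the first branch $y\ge e^{(\alpha-\lambda)\beta_{1}h}$ gives $C_{1}(h)y^{q_{1}}=O\!\bigl(e^{(\alpha-\lambda)\beta_{1}h}\bigr)\le Cy$ and $e^{(\alpha-\lambda)\beta_{1}h}\le y$, hence $|\tilde V(y,h)|\le C(1+h)y$; in the second branch $1\le y<e^{(\alpha-\lambda)\beta_{1}h}$ gives $e^{-(\alpha-\lambda)(q_{2}-1)\beta_{1}h}<y^{1-q_{2}}$, so $|C_{4}(h)|\,y^{q_{2}}=O\!\bigl(e^{-(\alpha-\lambda)(q_{2}-1)\beta_{1}h}y^{q_{2}}\bigr)=O(y)$, and $y\ln y\le(\alpha-\lambda)\beta_{1}h\,y$, so again $|\tilde V(y,h)|\le C(1+h)y$. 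Collecting the four cases,
\begin{equation*}
\bigl|\tilde V(y,h)\bigr|\le C\,(1+h)(1+y),\qquad (y,h)\in\mathcal C_{d},
\end{equation*}
with $C$ depending only on the model parameters.

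\emph{Step 2: moments of $\hat H_{T}$, $Y_{T}$ and $\hat H_{T}Y_{T}$.} Since $\ln Y_{t}(y)=\ln y-kt-\tfrac{\mu-r}{\sigma}B_{t}$ is a Brownian motion with drift $-k\le0$ (if $\mu=r$ then $k=0$, $Y_{t}(y)\equiv y$, $\hat H_{t}(y)$ is constant, and the claim is trivial; so assume $\mu>r$), we have $\hat H_{T}(y)\le C+\tfrac{1}{(1-\alpha)\beta_{2}}\sup_{s\le T}\bigl(ks+\tfrac{\mu-r}{\sigma}B_{s}\bigr)\le C+\tfrac{1}{(1-\alpha)\beta_{2}}\bigl(kT+\tfrac{|\mu-r|}{\sigma}\sup_{s\le T}|B_{s}|\bigr)$, whence $\mathbb E_{x_{0},h_{0}}[\hat H_{T}(y)]=O(T)$ by Doob's inequality, and $\mathbb E_{x_{0},h_{0}}[Y_{T}(y)]=y$ because $e^{\gamma t}M_{t}$ is a mean-one martingale. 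For the cross term I would change measure exactly as in the proof of Lemma \ref{Lemma1}: with $\frac{d\tilde{\mathbb P}}{d\mathbb P}\big|_{\mathcal F_{T}}=e^{\gamma T}M_{T}=Y_{T}(y)/y$ the process $\tilde B_{t}=B_{t}+\tfrac{\mu-r}{\sigma}t$ is a $\tilde{\mathbb P}$-Brownian motion, and since $\hat H_{T}(y)$ is $\mathcal F_{T}$-measurable,
\begin{equation*}
\mathbb E_{x_{0},h_{0}}\bigl[\hat H_{T}(y)\,Y_{T}(y)\bigr]=y\,\tilde{\mathbb E}\bigl[\hat H_{T}(y)\bigr].
\end{equation*}
Under $\tilde{\mathbb P}$ one has $\ln Y_{t}(y)=\ln y+kt-\tfrac{\mu-r}{\sigma}\tilde B_{t}$, a Brownian motion with \emph{positive} drift $k$, so $\inf_{s\le t}Y_{s}(y)$ decreases $\tilde{\mathbb P}$-a.s.\ to a strictly positive limit, $\hat H_{t}(y)$ increases to a finite limit $\hat H_{\infty}(y)$, and $\hat H_{\infty}(y)$ has finite expectation because the all-time supremum of a Brownian motion with negative drift has exponential tails. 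Monotone convergence then gives $\sup_{T\ge0}\tilde{\mathbb E}[\hat H_{T}(y)]=\tilde{\mathbb E}[\hat H_{\infty}(y)]<\infty$, so $\mathbb E_{x_{0},h_{0}}[\hat H_{T}(y)Y_{T}(y)]\le Cy$ uniformly in $T$.

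\emph{Step 3 and the main obstacle.} Combining the two steps,
\begin{equation*}
e^{-\gamma T}\,\mathbb E_{x_{0},h_{0}}\bigl|\tilde V\bigl(Y_{T}(y),\hat H_{T}(y)\bigr)\bigr|\le Ce^{-\gamma T}\bigl(1+\mathbb E_{x_{0},h_{0}}[\hat H_{T}(y)]+\mathbb E_{x_{0},h_{0}}[Y_{T}(y)]+\mathbb E_{x_{0},h_{0}}[\hat H_{T}(y)Y_{T}(y)]\bigr)\le Ce^{-\gamma T}(1+T+y)\longrightarrow0,
\end{equation*}
which is stronger than the asserted convergence. The hard part is Step 1 in the two low-wealth branches: there $\tilde V$ really does blow up like $-e^{(\alpha-\lambda)\beta_{1}h}$ as $h\to\infty$ (the disutility of being pinned at the subsistence rate $\lambda h$), and $\hat H_{T}(y)\to\infty$ $\mathbb P$-a.s., so any bound ignoring the geometry of $\mathcal C_{d}$ is hopeless; the resolution is that the constraints $y\ge e^{(\alpha-\lambda)\beta_{1}h}$ (resp.\ $y<e^{(\alpha-\lambda)\beta_{1}h}$ together with $y\ge1$) trade the dangerous exponential for a factor $y$, after which only elementary moments remain. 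A secondary point is that for $\mathbb E[\hat H_{T}Y_{T}]$ the change of measure by $Y_{T}(y)/y$ — under which the reflected running-maximum habit becomes transient — keeps the expectation bounded in $T$; and that all the ``$O(e^{ch})\!\cdot\! y^{q_{i}}$'' cancellations rely on exactly the exponents recorded in Lemma \ref{hinfty}.
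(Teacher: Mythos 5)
Your proof is correct, and it reaches the conclusion by a genuinely different organization than the paper's. The paper works inside the expectation: it splits $e^{-\gamma T}\tilde V\big(Y_T(y),\hat H_T(y)\big)$ according to the dual regions (in fact only the two regions with $Y_T<1$, on the grounds that $Y_T\to0$ a.s.), uses the region constraints together with Lemma \ref{hinfty} to collapse each product $C_i(\hat H_T)Y_T^{q_i}$ into a constant multiple of $Y_T$, whose expectation is $y$ by the exponential-martingale property, and handles the cross term $\mathbb{E}_{x_0,h_0}[\hat H_T(y)Y_T(y)]$ via Girsanov and the closed-form finite-horizon expectation of the running maximum of a drifted Brownian motion. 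You instead perform the same $e^{ch}\cdot y^{q_i}$ cancellations \emph{pointwise}, obtaining the deterministic envelope $|\tilde V(y,h)|\le C(1+h)(1+y)$ on all of $\mathcal C_d$, after which only three elementary moment bounds are needed; for the cross term you use the same change of measure $d\tilde{\mathbb P}=(Y_T(y)/y)\,d\mathbb P$ but conclude softly from the exponential tail of the all-time supremum of a negatively drifted Brownian motion rather than from the explicit Gaussian formula. Your route buys two things: the envelope is uniform over all four dual regions, so you never need to justify discarding the contribution of $\{Y_T(y)\ge1\}$ (a step that the paper's identity (\ref{traneq}) leaves implicit), and you get the quantitative rate $O\big(e^{-\gamma T}(1+T+y)\big)$. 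Two small caveats, neither a gap: the asymptotic orders of Lemma \ref{hinfty} must be read as bounds valid for all $h\ge h_0$, which follows from continuity of the explicit $C_i(h)$ (the paper uses them the same way); and the degenerate case $\mu=r$ lies outside the formulas for $q_1,q_2$ altogether, so your separate remark there is harmless but not really needed.
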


	\begin{lemma}
		\label{lemma4}\quad
		
		\begin{equation*}
		\tilde{V}(y,h_{0})=\mathbb{E}_{x_{0},h_{0}}\int_{0}^{\infty}e^{-\gamma t}\tilde{U}\big(Y_{t}(y),\hat{H}_{t}(y)\big)dt.
		\end{equation*}
	\end{lemma}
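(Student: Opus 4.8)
The plan is to verify that the function defined on the right-hand side, call it $G(y,h_{0}):=\mathbb{E}_{x_{0},h_{0}}\int_{0}^{\infty}e^{-\gamma t}\tilde{U}\big(Y_{t}(y),\hat{H}_{t}(y)\big)\,dt$, coincides with $\tilde{V}(y,h_{0})$ given by \eqref{tilv}. The natural route is an application of It\^o's formula to the discounted process $e^{-\gamma t}\tilde{V}\big(Y_{t}(y),\hat{H}_{t}(y)\big)$, followed by a limiting argument. First I would record the dynamics of the driving processes. Since $Y_{t}(y)=ye^{\gamma t}M_{t}$ with $M$ the discounted state price density, a direct computation gives $dY_{t}=Y_{t}\big(\tfrac{(\mu-r)^{2}}{\sigma^{2}}\,dt-\tfrac{\mu-r}{\sigma}\,dB_{t}\big)$ — in particular $Y$ has no $dt$-drift after the $e^{\gamma t}$ correction is accounted for — so that the generator acting on $\tilde{V}(\cdot,h)$ produces exactly the term $ky^{2}\tilde{V}_{yy}$ appearing in the dual HJB ODE \eqref{ode}, where $k=\tfrac{(\mu-r)^{2}}{2\sigma^{2}}$. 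The process $\hat{H}_{t}(y)$ is nondecreasing and increases only on the set $\{Y_{t}(y)=(1-\alpha)e^{-(1-\alpha)\beta_{2}\hat{H}_{t}(y)}\}$, i.e. on the lower boundary of $\mathcal{C}_{d}$; on that set the second equation of \eqref{dHJB}, equivalently \eqref{secbound}, forces $\tilde{V}_{h}=0$, so the $d\hat{H}_{t}$-term in It\^o's formula vanishes.

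Carrying this out, It\^o's formula yields
\begin{equation*}
e^{-\gamma T}\tilde{V}\big(Y_{T}(y),\hat{H}_{T}(y)\big)=\tilde{V}(y,h_{0})+\int_{0}^{T}e^{-\gamma t}\big[-\gamma\tilde{V}+k Y_{t}^{2}\tilde{V}_{yy}\big]\big(Y_{t},\hat{H}_{t}\big)\,dt+(\text{local martingale})_{T},
\end{equation*}
and the bracketed term equals $-\tilde{U}\big(Y_{t}(y),\hat{H}_{t}(y)\big)$ by the ODE \eqref{ode}. Taking expectations (after a standard localization to handle the stochastic integral, using the explicit power-plus-polynomial form of $\tilde{V}$ in \eqref{tilv} together with the coefficient order estimates of Lemma \ref{hinfty} and moment bounds on $Y_{t}$ and $\hat H_t$ to justify that the localizing sequence can be removed) and rearranging gives
\begin{equation*}
\tilde{V}(y,h_{0})=\mathbb{E}_{x_{0},h_{0}}\int_{0}^{T}e^{-\gamma t}\tilde{U}\big(Y_{t}(y),\hat{H}_{t}(y)\big)\,dt+\mathbb{E}_{x_{0},h_{0}}\Big[e^{-\gamma T}\tilde{V}\big(Y_{T}(y),\hat{H}_{T}(y)\big)\Big].
\end{equation*}
Letting $T\to\infty$, the last term vanishes by the transversality condition of Lemma \ref{lemma3}, and the integral converges to $G(y,h_{0})$ by dominated/monotone convergence (the integrand is controlled via the growth of $\tilde{U}$ in $y$ and $h$, again using exponential moments of $Y_{t}$ and linear-type bounds on $\hat H_t$), which is exactly the claimed identity.

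The main obstacle is the integrability bookkeeping: one must control $\tilde{U}\big(Y_{t}(y),\hat{H}_{t}(y)\big)$ and $\tilde{V}\big(Y_{t}(y),\hat{H}_{t}(y)\big)$ well enough, uniformly in $t$ (for the transversality step) and in the localization parameter, to push the stochastic-integral term to zero in expectation and to interchange limit and expectation. This requires combining the piecewise formula \eqref{tilv}, the asymptotic orders in Lemma \ref{hinfty} for the coefficients $C_{i}(h)$, the log-normal moments of $Y_{t}(y)$, and an estimate showing $\hat{H}_{t}(y)$ grows at most like $\sup_{s\le t}|B_s|$ up to constants, so that terms such as $e^{(1-\alpha)(q_2-1)\beta_2 \hat H_t}$ remain integrable. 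Once these bounds are in place, the remaining steps are the routine It\^o computation and the two passages to the limit described above; the transversality statement itself is supplied by Lemma \ref{lemma3}, so it may be invoked directly.
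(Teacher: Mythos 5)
Your proposal follows essentially the same route as the paper's proof: It\^o's formula applied to $e^{-\gamma t}\tilde{V}\big(Y_{t}(y),\hat{H}_{t}(y)\big)$, identification of the drift with $-\tilde{U}$ via the dual ODE (\ref{ode}), vanishing of the $d\hat{H}_{t}$-term through the boundary condition $\tilde{V}_{h}=0$, localization by stopping times, and elimination of the boundary term by combining the transversality result of Lemma \ref{lemma3} with the order estimates of Lemma \ref{hinfty} and moment bounds on $Y_{t}$. One small slip worth correcting: the displayed dynamics $dY_{t}=Y_{t}\big(\tfrac{(\mu-r)^{2}}{\sigma^{2}}\,dt-\tfrac{\mu-r}{\sigma}\,dB_{t}\big)$ should carry no drift, i.e.\ $dY_{t}=-\tfrac{\mu-r}{\sigma}Y_{t}\,dB_{t}$, which is consistent with your own parenthetical remark and with the generator term $kY_{t}^{2}\tilde{V}_{yy}$ that you correctly use.
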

As the proofs of the first two lemmas are similar to the proofs of Lemma 5.2 and Lemma 5.3 in \cite{deng2021}, we omit them here and   only prove Lemmas \ref{lemma3} and \ref{lemma4}.

\begin{proof}[Proof of Lemma \ref{lemma3}]
		Based on the definition of $Y_{t}(\cdot)$ and $\ \hat{H}_{t}(\cdot)$, we have
		
		\begin{align}
		&\lim\limits_{T\rightarrow\infty}Y_{T}(y)=\lim\limits_{T\rightarrow\infty}ye^{-\frac{(\mu-r)^{2}}{2\sigma^{2}}T-\frac{\mu-r}{\sigma}B_{T}}=0,\ a.s.,\label{YT}\\
		&\lim\limits_{T\rightarrow\infty}\hat{H}_{T}(y)=\lim\limits_{T\rightarrow\infty}h_{0}\vee\frac{1}{(1-\alpha)\beta_{2}}\ln(\frac{1-\alpha}{\inf\limits_{s\le T}Y_{s}(y)})=\infty,\ a.s..\label{HT}
		\end{align}
Using the expression of $\tilde{V}(\cdot,\cdot)$ yields

\begin{align}
\!\!\!\!\!\!\!\!		\lim\limits_{T\rightarrow\infty}&\mathbb{E}_{x_{0},h_{0}}\Big[e^{-\gamma T}\tilde{V}\big(Y_{T}(y),\hat{H}_{T}(y)\big)\Big]\notag\\
		&=\lim\limits_{T\rightarrow\infty}\mathbb{E}_{x_{0},h_{0}}\bigg\{e^{-\gamma T}\Big[C_{7}\big(\hat{H}_{T}(y)\big)Y_{T}(y)^{q_{1}}
+C_{8}\big(\hat{H}_{T}(y)\big)Y_{T}(y)^{q_{2}}-\frac{1}{\gamma}\hat{H}_{T}(y)Y_{T}(y)\notag\\
&+\frac{1}{\gamma\beta_{2}}\big(1-e^{-(1-\alpha)\beta_{2}\hat{H}_{T}(y)}\big)\Big]
1_{\big\{(1-\alpha)e^{-(1-\alpha)\beta_{2}\hat{H}_{T}(y)}\le Y_{T}(y)<e^{-(1-\alpha)\beta_{2}\hat{H}_{T}(y)}\big\}}\notag\\
		&+e^{-\gamma T}\Big[C_{5}\big(\hat{H}_{T}(y)\big)Y_{T}(y)^{q_{1}}+
C_{6}\big(\hat{H}_{T}(y)\big)Y_{T}(y)^{q_{2}}+\frac{k}{\gamma^{2}\beta_{2}}Y_{T}(y)\notag\\
&+\frac{1}{\gamma\beta_{2}}\Big(1-Y_{T}(y)+Y_{T}(y)\ln\big(Y_{T}(y)\big)\Big)-\frac{1}{\gamma}\alpha \hat{H}_{T}(y)Y_{T}(y)\Big]1_{\big\{e^{-(1-\alpha)\beta_{2}\hat{H}_{T}(y)}\le Y_{T}(y)<1\big\}}\bigg\}.
\label{traneq}
		\end{align}
By Proposition \ref{hinfty}, we have

$C_{7}\big(\hat{H}_{T}(y)\big)=\mathcal{O}(e^{-(1-\alpha)(1-q_{1})\beta_{2}\hat{H}_{T}(y)})$, a.s. as $T\rightarrow\infty$, as such,
		\begin{align*}
		   &\mathbb{E}_{x_{0},h_{0}}e^{-\gamma T}C_{7}\big(\hat{H}_{T}(y)\big)Y_{T}(y)^{q_{1}}1_{\big\{(1-\alpha)e^{-(1-\alpha)\beta_{2}\hat{H}_{T}(y)}\le Y_{T}(y)<e^{-(1-\alpha)\beta_{2}\hat{H}_{T}(y)}\big\}}\\
		   &=\mathcal{O}\bigg(e^{-\gamma T}\mathbb{E}_{x_{0},h_{0}}e^{-(1-\alpha)(1-q_{1})\beta_{2}\hat{H}_{T}(y)}Y_{T}(y)^{q_{1}}
1_{\big\{(1-\alpha)e^{-(1-\alpha)\beta_{2}\hat{H}_{T}(y)}\le Y_{T}(y)<e^{-(1-\alpha)\beta_{2}\hat{H}_{T}(y)}\big\}}\bigg)\\
		   &=\mathcal{O}\bigg(e^{-\gamma T}\mathbb{E}_{x_{0},h_{0}}Y_{T}(y)^{1-q_{1}+q_{1}}\bigg)\\
		   &=\mathcal{O}\bigg(e^{-\gamma T}\mathbb{E}_{x_{0},h_{0}}e^{-\frac{(\mu-r)^{2}}{2\sigma^{2}}T-\frac{\mu-r}{\sigma}B_{T}}\bigg)\\
		   &=\mathcal{O}\bigg(e^{-\gamma T}\bigg),\ \mbox{ as $T\rightarrow\infty$,}
		\end{align*}
then
\begin{equation}
		\label{term1}		   \lim\limits_{T\rightarrow\infty}\mathbb{E}_{x_{0},h_{0}}e^{-\gamma T}C_{7}\big(\hat{H}_{T}(y)\big)Y_{T}(y)^{q_{1}}1_{\big\{(1-\alpha)e^{-(1-\alpha)\beta_{2}\hat{H}_{T}(y)}\le Y_{T}(y)<e^{-(1-\alpha)\beta_{2}\hat{H}_{T}(y)}\big\}}=0.
		\end{equation}
Similarly, 	

		\begin{align*}
		   &\mathbb{E}_{x_{0},h_{0}}e^{-\gamma T}C_{8}\big(\hat{H}_{T}(y)\big)Y_{T}(y)^{q_{2}}1_{\big\{(1-\alpha)e^{-(1-\alpha)\beta_{2}\hat{H}_{T}(y)}\le Y_{T}(y)<e^{-(1-\alpha)\beta_{2}\hat{H}_{T}(y)}\big\}}\\
		   &=\mathcal{O}\bigg(e^{-\gamma T}\mathbb{E}_{x_{0},h_{0}}e^{(1-\alpha)(q_{2}-1)\beta_{2}\hat{H}_{T}(y)}Y_{T}(y)^{q_{2}}1_{\big\{(1-\alpha)e^{-(1-\alpha)\beta_{2}\hat{H}_{T}(y)}\le Y_{T}(y)<e^{-(1-\alpha)\beta_{2}\hat{H}_{T}(y)}\big\}}\bigg)\\
		   &=\mathcal{O}\bigg(e^{-\gamma T}\mathbb{E}_{x_{0},h_{0}}Y_{T}(y)^{-(q_{2}-1)+q_{2}}\bigg)\\
&=\mathcal{O}\bigg(e^{-\gamma T}\mathbb{E}_{x_{0},h_{0}}e^{-\frac{(\mu-r)^{2}}{2\sigma^{2}}T-\frac{\mu-r}{\sigma}B_{T}}\bigg)
		   =\mathcal{O}\bigg(e^{-\gamma T}\bigg), \ \mbox{as $T\rightarrow\infty$, }
		\end{align*}
		then
		\begin{equation}
		\label{term2}
		   \lim\limits_{T\rightarrow\infty}\mathbb{E}_{x_{0},h_{0}}e^{-\gamma T}C_{8}\big(\hat{H}_{T}(y)\big)Y_{T}(y)^{q_{2}}1_{\big\{(1-\alpha)e^{-(1-\alpha)\beta_{2}\hat{H}_{T}(y)}\le Y_{T}(y)<e^{-(1-\alpha)\beta_{2}\hat{H}_{T}(y)}\big\}}=0.
		\end{equation}
Using (\ref{HT}), as $T\rightarrow\infty$,

		\begin{equation*}
		   e^{-\gamma T}\frac{1}{\gamma\beta_{2}}\big(1-e^{-(1-\alpha)\beta_{2}\hat{H}_{T}(y)}\big)
1_{\big\{(1-\alpha)e^{-(1-\alpha)\beta_{2}\hat{H}_{T}(y)}\le Y_{T}(y)<e^{-(1-\alpha)\beta_{2}\hat{H}_{T}(y)}\big\}}=\mathcal{O}\big(e^{-\gamma T}\big),\ a.s.,
		\end{equation*}
from which we get
			\begin{equation}
			\label{term3}
		   \lim\limits_{T\rightarrow\infty}\mathbb{E}_{x_{0},h_{0}}e^{-\gamma T}\frac{1}{\gamma\beta_{2}}\big(1-e^{-(1-\alpha)\beta_{2}\hat{H}_{T}(y)}\big)1_{\big\{(1-\alpha)e^{-(1-\alpha)\beta_{2}\hat{H}_{T}(y)}\le Y_{T}(y)<e^{-(1-\alpha)\beta_{2}\hat{H}_{T}(y)}\big\}}=0.
		\end{equation}
By the same way as in  deriving  (\ref{term1}), we have
		\begin{equation}
		\label{term4}
		  \lim\limits_{T\rightarrow\infty}\mathbb{E}_{x_{0},h_{0}}e^{-\gamma T}C_{5}\big(\hat{H}_{T}(y)\big)Y_{T}(y)^{q_{1}}1_{\big\{e^{-(1-\alpha)\beta_{2}\hat{H}_{T}(y)}\le Y_{T}(y)<1\big\}}=0.
		\end{equation}
By Proposition \ref{hinfty}, we have $C_{6}\big(\hat{H}_{T}(y)\big)=\mathcal{O}\big(1\big)$, a.s. as $T\rightarrow\infty$. The facts (\ref{YT}) and $q_{2}>1$ yield
		\begin{equation}
		\label{term5}
		  \lim\limits_{T\rightarrow\infty}\mathbb{E}_{x_{0},h_{0}}e^{-\gamma T}C_{6}\big(\hat{H}_{T}(y)\big)Y_{T}(y)^{q_{2}}1_{\big\{e^{-(1-\alpha)\beta_{2}\hat{H}_{T}(y)}\le Y_{T}(y)<1\big\}}=0.
		\end{equation}
and		
\begin{equation}
		\label{term6}
		  \lim\limits_{T\rightarrow\infty}\mathbb{E}_{x_{0},h_{0}}e^{-\gamma T}\frac{k}{\gamma^{2}\beta_{2}}Y_{T}(y)1_{\big\{e^{-(1-\alpha)\beta_{2}\hat{H}_{T}(y)}\le Y_{T}(y)<1\big\}}=0.
		\end{equation}
Using (\ref{YT}) again  and the fact  $\lim\limits_{y\rightarrow 0^{+}}(1-y+y\ln y)=1$, we obtain
		\begin{equation}
		\label{term7}
		  \lim\limits_{T\rightarrow\infty}\mathbb{E}_{x_{0},h_{0}}e^{-\gamma T}\frac{1}{\gamma\beta_{2}}\Big[1-Y_{T}(y)+Y_{T}(y)\ln\big(Y_{T}(y)\big)\Big]1_{\big\{e^{-(1-\alpha)\beta_{2}\hat{H}_{T}(y)}\le Y_{T}(y)<1\big\}}=0.
		\end{equation}
Based on Girsanov's theorem, we have

		\begin{align*}
		    &e^{-\gamma T}\mathbb{E}_{x_{0},h_{0}}\hat{H}_{T}(y)Y_{T}(y)\\
		    &=\mathcal{O}\bigg(e^{-\gamma T}\mathbb{E}_{x_{0},h_{0}}\Big[\sup\limits_{s\le T}\big\{\frac{(\mu-r)^{2}}{2\sigma^{2}}s+\frac{\mu-r}{\sigma}B_{s}\big\}
e^{-\frac{(\mu-r)^{2}}{2\sigma^{2}}T-\frac{\mu-r}{\sigma}B_{T}}\Big]\bigg)\\
		   &=\mathcal{O}\bigg(\frac{\mu-r}{\sigma}e^{-\gamma T}\Big\{\sqrt{\frac{T}{2\pi}}e^{-\frac{(\mu-r)^{2}}{8\sigma^{2}}T}-\frac{\mu-r}{2\sigma}T\Phi(-\frac{\mu-r}{2\sigma}\sqrt{T})\\&\qquad+\frac{\sigma}{\mu-r}\big[\Phi(\frac{\mu-r}{2\sigma}\sqrt{T})-\Phi(-\frac{\mu-r}{2\sigma}\sqrt{T})\big]\Big\}\bigg).
		\end{align*}
		It follows that
		\begin{equation}
		\label{term8}
		   \lim\limits_{T\rightarrow\infty}\mathbb{E}_{x_{0},h_{0}}e^{-\gamma T}\hat{H}_{T}(y)Y_{T}(y)=0.
		\end{equation}
 Thus, using (\ref{traneq})-(\ref{term8}), we have
 
		\begin{equation*}
		  \lim\limits_{T\rightarrow\infty}\mathbb{E}_{x_{0},h_{0}}\Big[e^{-\gamma T}\tilde{V}\big(Y_{T}(y),\hat{H}_{T}(y)\big)\Big]=0.
		\end{equation*}
\end{proof}
\begin{proof}[Proof of Lemma \ref{lemma4}]
			Applying Eq. (\ref{ode}) and It\^{o}'s rule, we obtain
			
			\begin{align}
			d\left\{e^{-\gamma t}\tilde{V}\big[Y_{t}(y),\hat{H}_{t}(y)\big]\right\}=&-e^{-\gamma t}\tilde{U}\big[Y_{t}(y),\hat{H}_{t}(y)\big]dt\notag\\
&-\frac{\mu-r}{\sigma}e^{-\gamma t}\tilde{V}_{y}\big[Y_{t}(y),\hat{H}_{t}(y)\big]Y_{t}(y)dB_{t}\notag\\
			&+e^{-\gamma t}\tilde{V}_{h}\big[Y_{t}(y),\hat{H}_{t}(y)\big]d\hat{H}_{t}(y).
			\label{eq}
			\end{align}
Define the stopping times: $\forall n\ge 1$,

			\begin{equation*}
			\tau_{n}\triangleq\inf\Big\{t\ge 0\Big|Y_{t}(y)\ge n\ {\rm or}\ \hat{H}_{t}(y)\ge \frac{1}{(1-\alpha)\beta_{2}}\ln\big[(1-\alpha)n\big]\Big\}.
			\end{equation*}
 It follows that $\lim\limits_{n\rightarrow\infty}\tau_{n}=\infty$ and for $ \forall$ $n>\frac{1}{1-\alpha}e^{(1-\alpha)\beta_{2}h_{0}}$, $ \forall$ $\kappa\ge 1$ and  $ \forall$  $T>0$,
 
			\begin{align}
			\mathbb{E}_{x_{0},h_{0}}1_{\{\tau_{n}\le T\}}&\le \mathbb{P}_{x_{0},h_{0}}\Big(\big\{\sup\limits_{t\in [0,T]}Y_{t}(y)\ge n\big\}\bigcup\big\{\sup\limits_{t\in [0,T]}\hat{H}_{t}(y)\ge \frac{1}{(1-\alpha)\beta_{2}}\ln\big[(1-\alpha)n\big]\big\}\Big)\notag\\
			&\le \mathbb{P}_{x_{0},h_{0}}\big(\sup\limits_{t\in [0,T]}Y_{t}(y)\ge n\big)+\mathbb{P}_{x_{0},h_{0}}\big(\inf\limits_{t\in [0,T]}Y_{t}(y)\le \frac{1}{n}\big)\notag\\
			&=\mathbb{P}_{x_{0},h_{0}}\big(\sup\limits_{t\in [0,T]}Y_{t}(y)\ge n\big)+\mathbb{P}_{x_{0},h_{0}}\big(\sup\limits_{t\in [0,T]}Y_{t}(y)^{-1}\ge n\big)\notag\\
			&\le n^{-2\kappa}\mathbb{E}_{x_{0},h_{0}}\sup_{t\in [0,T]}Y_{t}(y)^{2\kappa}+n^{-2\kappa}\mathbb{E}_{x_{0},h_{0}}\sup_{t\in [0,T]}Y_{t}(y)^{-2\kappa}\notag\\
			&=\mathcal{O}\big(n^{-2\kappa}(1+y^{2\kappa})e^{MT}\big)
			\label{esttaun}
			\end{align}
for some constants $M$.
			
Integrating (\ref{eq}) from $0$ to $T\wedge\tau_{n}$ and taking expectation on both sides, we obtain

			\begin{align}
			\tilde{V}(y,h_{0})=&\mathbb{E}_{x_{0},h_{0}}\Big[e^{-\gamma T\wedge\tau_{n}}\tilde{V}\big(Y_{T\wedge\tau_{n}}(y),\hat{H}_{T\wedge\tau_{n}}(y)\big)\Big]\notag\\
&+\mathbb{E}_{x_{0},h_{0}}\int_{0}^{T\wedge\tau_{n}}e^{-\gamma t}\tilde{U}\big(Y_{t}(y),\hat{H}_{t}(y)\big)dt\notag\\
			&+\mathbb{E}_{x_{0},h_{0}}\int_{0}^{T\wedge\tau_{n}}\frac{\mu-r}{\sigma}e^{-\gamma t}\tilde{V}_{y}\big(Y_{t}(y),\hat{H}_{t}(y)\big)Y_{t}(y)dB_{t}\notag\\
			&-\mathbb{E}_{x_{0},h_{0}}\int_{0}^{T\wedge\tau_{n}}e^{-\gamma t}\tilde{V}_{h}\big(Y_{t}(y),\hat{H}_{t}(y)\big)d\hat{H}_{t}(y).
			\label{formular1}
			\end{align}
			The four terms on the right hand side of (\ref{formular1}) can be evaluated as follows:
			
			The first term in (\ref{formular1}) can be split into two parts:
			
\begin{align}
			&\mathbb{E}_{x_{0},h_{0}}\Big[e^{-\gamma T\wedge\tau_{n}}\tilde{V}\big(Y_{T\wedge\tau_{n}}(y),\hat{H}_{T\wedge\tau_{n}}(y)\big)\Big]\notag\\
			&=\mathbb{E}_{x_{0},h_{0}}\Big[e^{-\gamma T}\tilde{V}\big(Y_{T}(y),\hat{H}_{T}(y)\big)1_{\{T\le\tau_{n}\}}\Big]\notag\\
&+\mathbb{E}_{x_{0},h_{0}}\Big[e^{-\gamma \tau_{n}}\tilde{V}\big(Y_{\tau_{n}}(y),\hat{H}_{\tau_{n}}(y)\big)1_{\{T>\tau_{n}\}}\Big].
			\label{firstterm}
\end{align}
Based on Lemma \ref{lemma3}, as $n\uparrow\infty$, the first term in (\ref{firstterm}) tends to $\mathbb{E}_{x_{0},h_{0}}\Big[e^{-\gamma T}\tilde{V}\big(Y_{T}(y),\hat{H}_{T}(y)\big)\Big]$ , which converges to zero as $T\uparrow\infty$. To handle the second term in (\ref{firstterm}), we first observe from the definition of $\tau_{n}$ that either

$$Y_{\tau_{n}}(y)\le n,\ \hat{H}_{\tau_{n}}(y)=\frac{1}{(1-\alpha)\beta_{2}}\ln\big((1-\alpha)n\big)$$
or

$$Y_{\tau_{n}}(y)=n,\ \hat{H}_{\tau_{n}}(y)\le \frac{1}{(1-\alpha)\beta_{2}}\ln\big((1-\alpha)n\big)$$ holds. In addition, it follows from the fact  $\big(Y_{\tau_{n}}(y),\hat{H}_{\tau_{n}}(y)\big)\in \mathcal{C}_{d}$ that either

$$\frac{1}{n}\le Y_{\tau_{n}}(y)\le n,\ \hat{H}_{\tau_{n}}(y)=\frac{1}{(1-\alpha)\beta_{2}}\ln\big((1-\alpha)n\big)$$
or

$$Y_{\tau_{n}}(y)=n,\ 0<\hat{H}_{\tau_{n}}(y)\le \frac{1}{(1-\alpha)\beta_{2}}\ln\big((1-\alpha)n\big)$$ holds.\\

Applying Proposition \ref{hinfty}  with (\ref{tilv}),  we obtain the  order estimate of $\tilde{V}\big(Y_{\tau_{n}}(y),\hat{H}_{\tau_{n}}(y)\big)$ as follows:

For $\frac{1}{n}\le Y_{\tau_{n}}(y)\le n,\ \hat{H}_{\tau_{n}}(y)=\frac{1}{(1-\alpha)\beta_{2}}\ln\big[(1-\alpha)n\big]$, we have, for sufficiently large $n$,

$$\hat{H}_{\tau_{n}}(y)=\mathcal{O}\big(\ln n\big),\ e^{\hat{H}_{\tau_{n}}(y)}=\mathcal{O}\big(n^{\frac{1}{(1-\alpha)\beta_{2}}}\big).$$
If $(1-\alpha)e^{-(1-\alpha)\beta_{2}\hat{H}_{\tau_{n}}(y)}\le Y_{\tau_{n}}(y)<e^{-(1-\alpha)\beta_{2}\hat{H}_{\tau_{n}}(y)}$, then we have,  for sufficiently large $n$, 

$$\tilde{V}\big(Y_{\tau_{n}}(y),\hat{H}_{\tau_{n}}(y)\big)=\mathcal{O}\big(n^{-q_{1}}\big).$$
If $e^{-(1-\alpha)\beta_{2}\hat{H}_{\tau_{n}}(y)}\le Y_{\tau_{n}}(y)<1$, then we have, for sufficiently large $n$,

$$\tilde{V}\big(Y_{\tau_{n}}(y),\hat{H}_{\tau_{n}}(y)\big)=\mathcal{O}\big(n^{-q_{1}}\big).$$
 If $1\le Y_{\tau_{n}}(y)<e^{(\alpha-\lambda)\beta_{1}\hat{H}_{\tau_{n}}(y)}$, then we have, for sufficiently large $n$,
 
 $$\tilde{V}\big(Y_{\tau_{n}}(y),\hat{H}_{\tau_{n}}(y)\big)=\mathcal{O}\big(n^{q_{2}}\big).$$
 If $e^{(\alpha-\lambda)\beta_{1}\hat{H}_{\tau_{n}}(y)}\le Y_{\tau_{n}}(y)$, then we have, for sufficiently large $n$,
 
$$\tilde{V}\big(Y_{\tau_{n}}(y),\hat{H}_{\tau_{n}}(y)\big)=\mathcal{O}
\big(n^{\frac{(\alpha-\lambda)\beta_{1}}{(1-\alpha)\beta_{2}}}\big).$$
			
For $Y_{\tau_{n}}(y)=n,\ 0<\hat{H}_{\tau_{n}}(y)\le \frac{1}{(1-\alpha)\beta_{2}}\ln\big[(1-\alpha)n\big]$, we have,  for sufficiently large $n$, either 

$$1\le Y_{\tau_{n}}(y)<e^{(\alpha-\lambda)\beta_{1}\hat{H}_{\tau_{n}}(y)}$$ or

$$e^{(\alpha-\lambda)\beta_{1}\hat{H}_{\tau_{n}}(y)}\le Y_{\tau_{n}}(y).$$
 If $1\le Y_{\tau_{n}}(y)<e^{(\alpha-\lambda)\beta_{1}\hat{H}_{\tau_{n}}(y)}$, then we have, for sufficiently large $n$,
 
 $$\tilde{V}\big(Y_{\tau_{n}}(y),\hat{H}_{\tau_{n}}(y)\big)=\mathcal{O}\big(n^{q_{2}}\big).$$
If $e^{(\alpha-\lambda)\beta_{1}\hat{H}_{\tau_{n}}(y)}\le Y_{\tau_{n}}(y)$, then we have, for sufficiently large $n$,

$$\tilde{V}\big(Y_{\tau_{n}}(y),\hat{H}_{\tau_{n}}(y)\big)=
\mathcal{O}\big(n^{q_{1}+(1-q_{1})\frac{(\alpha-\lambda)\beta_{1}}{(1-\alpha)\beta_{2}}}\big).$$
			
In summary, we have $\tilde{V}\big(Y_{\tau_{n}}(y),\hat{H}_{\tau_{n}}(y)\big)=\mathcal{O}\big(n^{(-q_{1})\vee q_{2}\vee(1-q_{1})\frac{(\alpha-\lambda)\beta_{1}}{(1-\alpha)\beta_{2}}}\big)$. Applying (\ref{esttaun}) with $\kappa\geq \frac{1}{2}\big((-q_{1})\vee q_{2}\vee(1-q_{1})\frac{(\alpha-\lambda)\beta_{1}}{(1-\alpha)\beta_{2}}\big)$, we deduce that the second term in (\ref{firstterm}) converges to zero as $n\uparrow\infty$. Hence the first term in (\ref{formular1}) tends to zero by first letting $n\uparrow\infty$ and then $T\uparrow\infty$.
Because $\tau_{n}$ tends to $\infty$ as $n\rightarrow \infty$,  based on monotone convergence theorem, we obtain  that the second term in (\ref{formular1}) tends to $\mathbb{E}_{x_{0},h_{0}}\int_{0}^{\infty}e^{-\gamma t}\tilde{U}\big(Y_{t}(y),\hat{H}_{t}(y)\big)dt$, as $n\rightarrow \infty$ and $T\rightarrow \infty$. The third term in (\ref{formular1}) vanishes because the integral is a martingale. If $\hat{H}_{t}(y)$ strictly increases, then $Y_{t}(y)$ must be strictly decreasing, hence $\big(Y_{t}(y),\hat{H}_{t}(y)\big)$ is on the boundary. Using
the boundary condition, we have $\tilde{V}_{h}\big(Y_{t}(y),\hat{H}_{t}(y)\big)=0$, as such, the last term in (\ref{formular1}) vanishes. Thus, the proof follows.
\end{proof}
\vskip 15pt		
		\section{Proofs of Other Results in Section 4}	\label{appc}
\begin{proof}[Proof of Lemma \ref{vyypo}]
	
\vskip 5pt
If $e^{(\alpha-\lambda)\beta_{1}h}\le y$, then

	    \begin{equation*}
	        \tilde{V}_{yy}(y,h)=C_{1}(h)\frac{r}{k}y^{q_{1}-2}+C_{2}(h)\frac{r}{k}y^{q_{2}-2}.
	    \end{equation*}
As $C_{1}(h)>0$ and $C_{2}(h)=0$, we have $\tilde{V}_{yy}(y,h)>0$ for $e^{(\alpha-\lambda)\beta_{1}h}\le y$.
	
If $1\le y<e^{(\alpha-\lambda)\beta_{1}h}$, then

	    \begin{equation*}
	        y\tilde{V}_{yy}(y,h)=C_{3}(h)\frac{r}{k}y^{q_{1}-1}+C_{4}(h)\frac{r}{k}y^{q_{2}-1}+\frac{1}{\gamma\beta_{1}}.
	    \end{equation*}
	    Let $\psi(y)=y\tilde{V}_{yy}(y,h)$, then
	    
	    $$\psi'(y)=C_{3}(h)\frac{r}{k}(q_{1}-1)y^{q_{1}-2}+C_{4}(h)\frac{r}{k}(q_{2}-1)y^{q_{2}-2}.$$ Noting that $C_{4}(h)<0$,  $\psi(y)$ is either increasing, decreasing or first increasing then decreasing, we only need to show $\psi(1)>0$ and $\psi(e^{(\alpha-\lambda)\beta_{1}h})>0$. Precisely, using $C_{7}(h)>0$,
	    
\begin{align*}
	        \psi(1)=&C_{3}(h)\frac{r}{k}+C_{4}(h)\frac{r}{k}+\frac{1}{\gamma\beta_{1}}\\
	        =&\frac{r}{k}C_{7}(h)+\frac{1}{\gamma\beta_{1}}
\frac{1-q_{1}}{q_{2}-q_{1}}\left [1-e^{-(\alpha-\lambda)(q_{2}-1)\beta_{1}h}\right ]\\
&+\frac{1}{\gamma\beta_{2}}\frac{q_{2}-1}{q_{2}-q_{1}}\left[1-e^{-(1-\alpha)(1-q_{1})\beta_{2}h}\right]\\
	       >&0
\end{align*}
and

\begin{align*}	        \psi(e^{(\alpha-\lambda)\beta_{1}h})=&C_{3}(h)\frac{r}{k}e^{(q_{1}-1)
(\alpha-\lambda)\beta_{1}h}+C_{4}(h)\frac{r}{k}e^{(q_{2}-1)(\alpha-\lambda)
\beta_{1}h}+\frac{1}{\gamma\beta_{1}}\\
=&\frac{r}{k}C_{7}(h)e^{(q_{1}-1)(\alpha-\lambda)\beta_{1}h}+
\frac{1}{\gamma\beta_{1}}\frac{q_{2}-1}{q_{2}-q_{1}}\left[1-e^{-(\alpha-\lambda)(1-q_{1})\beta_{1}h}\right]\\
&+\frac{1}{\gamma\beta_{2}}\frac{q_{2}-1}{q_{2}-q_{1}}
\left[1-e^{-(1-\alpha)(1-q_{1})\beta_{2}h}\right]e^{(q_{2}-1)(\alpha-\lambda)\beta_{1}h}\\
>&0.
\end{align*}
Thus $\tilde{V}_{yy}(y,h)>0$ for $1\le y<e^{(\alpha-\lambda)\beta_{1}h}$.
\vskip 5pt	
If $e^{-(1-\alpha)\beta_{2}h}\le y<1$, then

\begin{align*}
	        y\tilde{V}_{yy}(y,h)=&C_{5}(h)\frac{r}{k}y^{q_{1}-1}+C_{6}(h)\frac{r}{k}y^{q_{2}-1}+\frac{1}{\gamma\beta_{2}}\\
	       =&\left[C_{5}(h)\frac{r}{k}y^{q_{1}-1}+\frac{1}{\gamma\beta_{2}}\frac{q_{2}-1}{q_{2}-q_{1}}\right]
+\left[C_{6}(h)\frac{r}{k}y^{q_{2}-1}+\frac{1}{\gamma\beta_{2}}\frac{1-q_{1}}{q_{2}-q_{1}}\right].
\end{align*}
For any fixed $h>0$, if $C_{5}(h)\ge 0$, then $C_{5}(h)\frac{r}{k}y^{q_{1}-1}+\frac{1}{\gamma\beta_{2}}\frac{q_{2}-1}{q_{2}-q_{1}}>0$; If $C_{5}(h)<0$, then $C_{5}(h)\frac{r}{k}y^{q_{1}-1}+\frac{1}{\gamma\beta_{2}}\frac{q_{2}-1}{q_{2}-q_{1}}$ is increasing in $y$ and

\begin{align*}
	        &C_{5}(h)\frac{r}{k}y^{q_{1}-1}+\frac{1}{\gamma\beta_{2}}\frac{q_{2}-1}{q_{2}-q_{1}}\\\ge& C_{5}(h)\frac{r}{k}e^{(1-\alpha)(1-q_{1})\beta_{2}h}+\frac{1}{\gamma\beta_{2}}\frac{q_{2}-1}{q_{2}-q_{1}}\\
	        =&\frac{(1-\alpha)^{q_{2}-q_{1}}\frac{1}{\gamma}\frac{1-q_{1}}{q_{2}-q_{1}}(\alpha-\lambda)(q_{2}-1)}{(1-\alpha)(q_{2}-q_{1})\beta_{2}+(\alpha-\lambda)(q_{2}-1)\beta_{1}}e^{-[(1-\alpha)(q_{2}-1)\beta_{2}+(\alpha-\lambda)(q_{2}-1)\beta_{1}]h}\\
	        &+(1-\alpha)^{q_{2}-q_{1}}\frac{1}{\gamma\beta_{2}}\frac{q_{2}-1}{q_{2}-q_{1}}\\
	        >&0.
	    \end{align*}
Similarly,  we have

	    \begin{equation*}
	        C_{6}(h)\frac{r}{k}y^{q_{2}-1}+\frac{1}{\gamma\beta_{2}}\frac{1-q_{1}}{q_{2}-q_{1}}>0.
	    \end{equation*}
It follows that $\tilde{V}_{yy}(y,h)>0$ for $e^{-(1-\alpha)\beta_{2}h}\le y<1$.
\vskip 5pt
Finally, If $(1-\alpha)e^{-(1-\alpha)\beta_{2}h}\le y<e^{-(1-\alpha)\beta_{2}h}$, then

	    \begin{equation*}
	        \tilde{V}_{yy}(y,h)=C_{7}(h)\frac{r}{k}y^{q_{1}-2}+C_{8}(h)\frac{r}{k}y^{q_{2}-2}.
	    \end{equation*}
As $C_{7}(h)>0$ and

	    \begin{equation*}
	        C_{8}(h)=\frac{k}{\gamma^{2}\beta_{1}}\frac{1-q_{1}}{q_{2}-q_{1}}
\big[1-e^{-(\alpha-\lambda)(q_{2}-1)\beta_{1}h}\big]+\frac{k}{\gamma^{2}\beta_{2}}
\frac{1-q_{1}}{q_{2}-q_{1}}\big[e^{(1-\alpha)(q_{2}-1)\beta_{2}h}-1\big]>0,
	    \end{equation*}
we have  $\tilde{V}_{yy}(y,h)>0$ for $(1-\alpha)e^{-(1-\alpha)\beta_{2}h}\le y<e^{-(1-\alpha)\beta_{2}h}$. Thus the proof is complete.
\end{proof}

\begin{proof}[Proof of Lemma \ref{lipsc}]
	    By Lemma \ref{fc1}, $f$ is $C^{1}$ and continuous at the boundaries, as such, using the fact that $C_{i}(h),\ 1\le i\le 8$, are $C^{1}$, we conclude that $c^{*}_{\rm primal}$ and $\pi^{*}_{\rm primal}$ given in Theorem \ref{optpolicy} are locally Lipschitz on $\mathcal{C}$.
		
		Now to prove that $\pi^{*}_{\rm primal}$ is Lipschitz, we only need to show that $\frac{\partial\pi^{*}_{\rm primal}}{\partial x}$ and $\frac{\partial\pi^{*}_{\rm primal}}{\partial h}$ are both bounded.
		
		Step 1: $\frac{\partial\pi^{*}_{\rm primal}}{\partial x}$ is bounded:
		
		By definition of $\pi^{*}_{\rm primal}$ given in Theorem \ref{optpolicy}, we have
		
		\begin{equation*}
		   \frac{\partial\pi^{*}_{\rm primal}}{\partial x}=\frac{\mu-r}{\sigma^{2}}\frac{r}{k}\left\{
		\begin{array}{l}
		C_{1}(h)(q_{1}-1)\big[f_{1}(x,h)\big]^{q_{1}-2}\frac{\partial f_{1}(x,h)}{\partial x}\\
+C_{2}(h)(q_{2}-1)\big[f_{1}(x,h)\big]^{q_{2}-2}\frac{\partial f_{1}(x,h)}{\partial x},\hskip 10pt \frac{\lambda h}{\gamma}\le x\le W_{\rm low}(h),\\
	C_{3}(h)(q_{1}-1)\big[f_{2}(x,h)\big]^{q_{1}-2}\frac{\partial f_{2}(x,h)}{\partial x}\\
+C_{4}(h)(q_{2}-1)\big[f_{2}(x,h)\big]^{q_{2}-2}\frac{\partial f_{2}(x,h)}{\partial x},\hskip 10pt W_{\rm low}(h)<x\le W_{\rm ref}(h),\\
	C_{5}(h)(q_{1}-1)\big[f_{3}(x,h)\big]^{q_{1}-2}\frac{\partial f_{3}(x,h)}{\partial x}\\
+C_{6}(h)(q_{2}-1)\big[f_{3}(x,h)\big]^{q_{2}-2}\frac{\partial f_{3}(x,h)}{\partial x},\hskip 10pt W_{\rm ref}(h)<x\le W_{\rm peak}(h),\\
	C_{7}(h)(q_{1}-1)\big[f_{4}(x,h)\big]^{q_{1}-2}\frac{\partial f_{4}(x,h)}{\partial x}\\
+C_{8}(h)(q_{2}-1)\big[f_{4}(x,h)\big]^{q_{2}-2}\frac{\partial f_{4}(x,h)}{\partial x},\hskip 10pt W_{\rm peak}(h)<x\le W_{\rm updt}(h).
	\end{array}
	\right.
		\end{equation*}
Differentiating (\ref{f1}), we obtain
\begin{eqnarray}
    1&=&-C_{1}(h)q_{1}(q_{1}-1)\big[f_{1}(x,h)\big]^{q_{1}-2}\frac{\partial f_{1}(x,h)}{\partial x}\nonumber\\
    &&-C_{2}(h)q_{2}(q_{2}-1)\big[f_{1}(x,h)\big]^{q_{2}-2}\frac{\partial f_{1}(x,h)}{\partial x}.\label{pf1}
\end{eqnarray}
It follows from (\ref{fx}) and (\ref{pf1}) that,  for $\frac{\lambda h}{\gamma}\le x\le W_{\rm low}(h)$,

\begin{eqnarray*}
		   \frac{\partial\pi^{*}_{\rm primal}}{\partial x}&=&\frac{\mu-r}{\sigma^{2}}\left[(1-q_{2})+\frac{r}{k}(q_{1}-q_{2})C_{1}(h)
\big[f_{1}(x,h)\big]^{q_{1}-2}\frac{\partial f_{1}(x,h)}{\partial x}\right]\\
		   &=&\frac{\mu-r}{\sigma^{2}}\left[(1-q_{2})+\frac{A_{1}(x,h)}{B_{1}(x,h)}\right],
		\end{eqnarray*}
		where
\begin{eqnarray*}
A_{1}(x,h)&=&\frac{r}{k}(q_{1}-q_{2})C_{1}(h)\left[f_{1}(x,h)\right]^{q_{1}-2},\\		 B_{1}(x,h)&=&\frac{r}{k}\left\{-C_{1}(h)\big[f_{1}(x,h)\big]^{q_{1}-2}-C_{2}(h)\big[f_{1}(x,h)\big]^{q_{2}-2}\right\}.
\end{eqnarray*}
 As $C_{2}(h)=0$, $\frac{\partial\pi^{*}_{\rm primal}}{\partial x}$ is constant for $\frac{\lambda h}{\gamma}\le x\le W_{\rm low}(h)$. Differentiating (\ref{f2}), we obtain
\begin{eqnarray}
1&=&-C_{3}(h)q_{1}(q_{1}-1)\big[f_{2}(x,h)\big]^{q_{1}-2}\frac{\partial f_{2}(x,h)}{\partial x}\nonumber\\
&&-C_{4}(h)q_{2}(q_{2}-1)\big[f_{2}(x,h)\big]^{q_{2}-2}\frac{\partial f_{2}(x,h)}{\partial x}-\frac{1}{\gamma\beta_{1}f_{2}(x,h)}\frac{\partial f_{2}(x,h)}{\partial x}.\label{pf2}
\end{eqnarray}
Using (\ref{fx}) and (\ref{pf2}), we have, for $W_{\rm low}(h)\le x\le W_{\rm ref}(h)$,
\begin{eqnarray*}
\!\!\!\!\!\!\frac{\partial\pi^{*}_{\rm primal}}{\partial x}&=&\frac{\mu-r}{\sigma^{2}}\left\{(1-q_{2})+\frac{r}{k}(q_{1}-q_{2})C_{3}(h)\big[f_{2}(x,h)\big]^{q_{1}-2}
\frac{\partial f_{2}(x,h)}{\partial x}\right.\\
&&\left.+(1-q_{2})\frac{1}{\gamma\beta_{1}f_{2}(x,h)}\frac{\partial f_{2}(x,h)}{\partial x}\right\}\\
		   &=&\frac{\mu-r}{\sigma^{2}}\left[(1-q_{2})+\frac{A_{2}(x,h)}{B_{2}(x,h)}\right],
\end{eqnarray*}
		where
\begin{eqnarray*}
A_{2}(x,h)&=&\frac{r}{k}(q_{1}-q_{2})C_{3}(h)\left[f_{2}(x,h)\right]^{q_{1}-1}+(1-q_{2})\frac{1}{\gamma\beta_{1}},\\
B_{2}(x,h)&=&-\frac{r}{k}C_{3}(h)\big[f_{2}(x,h)\big]^{q_{1}-1}-\frac{r}{k}C_{4}(h)\big[f_{2}(x,h)\big]^{q_{2}-1}
-\frac{1}{\gamma\beta_{1}}.
\end{eqnarray*}
		As $e^{(\alpha-\lambda)\beta_{1}h}>f_{2}(x,h)\ge 1$ for $W_{\rm low}(h)\le x\le W_{\rm ref}(h)$ and $C_{3}(h)=\mathcal{O}(1)$,  there exists a constant $\bar{A}_{2}$ such that $|A_{2}(x,h)|\le \bar{A}_{2}$ for $W_{\rm low}(h)\le x\le W_{\rm ref}(h)$.
 As $\tilde{V}_{yy}(y,h)>0$, we know that $B_{2}(x,h)<0$ for any $h\ge\bar{h}$. Using $C_{4}<0$ and $q_{1}<0<1<q_{2}$ implies that $\frac{r}{k}C_{3}(h)y^{q_{1}-1}+\frac{r}{k}C_{4}(h)y^{q_{2}-1}+\frac{1}{\gamma\beta_{1}}$ as a function of $y$ is either decreasing, or increasing, or first increasing then decreasing. Hence for $W_{\rm low}(h)\le x\le W_{\rm ref}(h)$,
\begin{eqnarray}
-B_{2}(x,h)&\ge & \min\left\{\frac{r}{k}C_{3}(h)+\frac{r}{k}C_{4}(h)+\frac{1}{\gamma\beta_{1}}, \frac{r}{k}C_{3}(h)e^{-(1-q_{1})(\alpha-\lambda)\beta_{1}h}\right.\notag\\
&&\left.+\frac{r}{k}C_{4}(h)
e^{(q_{2}-1)(\alpha-\lambda)\beta_{1}h}+\frac{1}{\gamma\beta_{1}}\right\}\label{b2ineq}\\
		    &>& 0.\notag
\end{eqnarray}
Plugging the expressions of  $C_{3}(h)$ and $C_{4}(h)$ in (\ref{C1}) and (\ref{C6}) into (\ref{b2ineq}), we obtain that the expression in (\ref{b2ineq}) is continuous in $h$ with a limit $\frac{1}{\gamma\beta_{1}}\frac{q_{2}-1}{q_{2}-q_{1}}>0$ as $h\rightarrow\infty$. Hence there exists a constant $\bar{B}_{2}>0$ such that $-B_{2}(x,h)>\bar{B}_{2}>0$, i.e., $\frac{\partial\pi^{*}_{\rm primal}}{\partial x}$ is bounded for $W_{\rm low}(h)\le x\le W_{\rm ref}(h)$. Similarly, $\frac{\partial\pi^{*}_{\rm primal}}{\partial x}$ is also bounded for the rest two regions $W_{\rm ref}(h)\le x\le W_{\rm peak}(h)$ and $W_{\rm peak}(h)\le x\le W_{\rm updt}(h)$.
		
Step 2: $\frac{\partial\pi^{*}_{\rm primal}}{\partial h}$ is bounded:\\
By definition of $\pi^{*}_{\rm primal}$, denote $ c_{11}\triangleq \frac{\mu-r}{\sigma^{2}}\frac{r}{k} $, we obtain

\begin{equation*}
\!\!\!\!\!\!\!\!\!		   \frac{\partial\pi^{*}_{\rm primal}}{\partial h}=c_{11}\left\{
		   \begin{array}{l}
		   C_{1}(h)(q_{1}-1)\big[f_{1}(x,h)\big]^{q_{1}-2}\frac{\partial f_{1}(x,h)}{\partial h}+C_{1}'(h)\big[f_{1}(x,h)\big]^{q_{1}-1}\\+C_{2}(h)(q_{2}-1)\big[f_{1}(x,h)\big]^{q_{2}-2}\frac{\partial f_{1}(x,h)}{\partial h}+C_{2}'(h)\big[f_{1}(x,h)\big]^{q_{2}-1},\ \frac{\lambda h}{\gamma}\le x\le W_{\rm low}(h),\\
		   C_{3}(h)(q_{1}-1)\big[f_{2}(x,h)\big]^{q_{1}-2}\frac{\partial f_{2}(x,h)}{\partial h}+C_{3}'(h)\big[f_{2}(x,h)\big]^{q_{1}-1}\\+C_{4}(h)(q_{2}-1)\big[f_{2}(x,h)\big]^{q_{2}-2}\frac{\partial f_{2}(x,h)}{\partial h}+C_{4}'(h)\big[f_{2}(x,h)\big]^{q_{2}-1},\ W_{\rm low}(h)\le x\le W_{\rm ref}(h),\\
		   C_{5}(h)(q_{1}-1)\big[f_{3}(x,h)\big]^{q_{1}-2}\frac{\partial f_{3}(x,h)}{\partial h}+C_{5}'(h)\big[f_{3}(x,h)\big]^{q_{1}-1}\\+C_{6}(h)(q_{2}-1)\big[f_{3}(x,h)\big]^{q_{2}-2}\frac{\partial f_{3}(x,h)}{\partial h}+C_{6}'(h)\big[f_{3}(x,h)\big]^{q_{2}-1},\ W_{\rm ref}(h)\le x\le W_{\rm peak}(h),\\
		   C_{7}(h)(q_{1}-1)\big[f_{4}(x,h)\big]^{q_{1}-2}\frac{\partial f_{4}(x,h)}{\partial h}+C_{7}'(h)\big[f_{4}(x,h)\big]^{q_{1}-1}\\+C_{8}(h)(q_{2}-1)\big[f_{4}(x,h)\big]^{q_{2}-2}\frac{\partial f_{4}(x,h)}{\partial h}+C_{8}'(h)\big[f_{4}(x,h)\big]^{q_{2}-1},\ W_{\rm peak}(h)\le x\le W_{\rm updt}(h).
		   \end{array}
		   \right.
		\end{equation*}
Differentiating (\ref{f1}),

		\begin{eqnarray}
		   0&=&-C_{1}(h)q_{1}(q_{1}-1)\big[f_{1}(x,h)\big]^{q_{1}-2}\frac{\partial f_{1}(x,h)}{\partial h}-C_{1}'(h)q_{1}\big[f_{1}(x,h)\big]^{q_{1}-1}\notag\\
&&-C_{2}(h)q_{2}(q_{2}-1)\big[f_{1}(x,h)\big]^{q_{2}-2}\frac{\partial f_{1}(x,h)}{\partial h}-C_{2}'(h)q_{2}\big[f_{1}(x,h)\big]^{q_{2}-1}+\frac{\lambda}{\gamma}.\label{phf1}
		\end{eqnarray}
Then, using  $C_{2}(h)=0$, we have for $\frac{\lambda h}{\gamma}\le x\le W_{\rm low}(h)$,

		\begin{eqnarray*}
		   \frac{\partial\pi^{*}_{\rm primal}}{\partial h}&=&c_{11}\frac{\lambda}{\gamma q_{1}}.
		\end{eqnarray*}
		Hence $\frac{\partial\pi^{*}_{\rm primal}}{\partial h}$ is constant for $\frac{\lambda h}{\gamma}\le x\le W_{\rm low}(h)$.\\
Differentiating (\ref{f2}),

		\begin{eqnarray}
		   0&=&-C_{3}(h)q_{1}(q_{1}-1)\big[f_{2}(x,h)\big]^{q_{1}-2}\frac{\partial f_{2}(x,h)}{\partial h}-C_{3}'(h)q_{1}\big[f_{2}(x,h)\big]^{q_{1}-1}\notag\\
&&-C_{4}(h)q_{2}(q_{2}-1)\big[f_{2}(x,h)\big]^{q_{2}-2}\frac{\partial f_{2}(x,h)}{\partial h}-C_{4}'(h)q_{2}\big[f_{2}(x,h)\big]^{q_{2}-1}\notag\\&&-\frac{1}{\gamma\beta_{1}f_{2}(x,h)}\frac{\partial f_{2}(x,h)}{\partial h}+\frac{\alpha}{\gamma}.\label{phf2}
		\end{eqnarray}
Using (\ref{phf2}), we have for $W_{\rm low}(h)\le x\le W_{\rm ref}(h)$,

		\begin{eqnarray*}
		   \frac{\partial\pi^{*}_{\rm primal}}{\partial h}&=&c_{11}\left\{C_{3}'(h)\big[f_{2}(x,h)\big]^{q_{1}-1}(1-\frac{q_{1}}{q_{2}})+\frac{\alpha}{\gamma q_{2}}\right.\\&&\left.+\Big[C_{3}(h)\big[f_{2}(x,h)\big]^{q_{1}-1}(q_{1}-q_{2})-\frac{1}{\gamma \beta_{1}q_{2}}\Big]\frac{1}{f_{2}}\frac{\partial f_{2}(x,h)}{\partial h}\right\}\\
		   &=&\frac{\mu-r}{\sigma^{2}}\frac{r}{k}\left\{C_{3}'(h)\big[f_{2}(x,h)
\big]^{q_{1}-1}(1-\frac{q_{1}}{q_{2}})+\frac{\alpha}{\gamma q_{2}}\right.\\
&&\left.+\frac{k}{r}A_{2}(x,h)
\frac{C_{3}'(h)q_{1}[f_{2}(x,h)]^{q_{1}-1}+
C_{4}'(h)q_{2}[f_{2}(x,h)]^{q_{2}-1}-\frac{\alpha}{\gamma}}{B_{2}(x,h)}\right\}.
		\end{eqnarray*}
Thus, using the expression of $C_{3}(h)$ and $C_{4}(h)$ in (\ref{C1}) and (\ref{C6}), we obtain  that $C_{3}'(h)$ and $C_{4}'(h)$ are both bounded. Moreover, we have $1\le f_{2}(x,h)<e^{(\alpha-\lambda)\beta_{1}h}$ for $W_{\rm low}(h)<x\le W_{\rm ref}(h)$. Applying the estimates of $A_{2}(x,h)$ and $B_{2}(x,h)$ again, we know that $\frac{\partial\pi^{*}_{\rm primal}}{\partial h}$ is bounded for $W_{\rm low}(h)<x\le W_{\rm ref}(h)$. 	For the rest two regions $W_{\rm ref}(h)\le x\le W_{\rm peak}(h)$ and $W_{\rm peak}(h)\le x\le W_{\rm updt}(h)$, the proof is similar and omitted.
\end{proof}

\vskip 15pt		
		\section{Results with general reference}
		\label{genpre}
		We consider the alternative endogenous reference point $\alpha[\varphi(h)c+(1-\varphi(h))h]$, which is a fraction $\alpha$ of the convex combination of the current consumption and consumption peak. $\varphi(h)$ is the proportion assigned to current consumption (it is a function of $h$) and we assume that the proportion function $\varphi$ is non-decreasing and smooth with values in $[0,1]$. The non-decreasing property suggests that once the maximum is updated, its weight decreases. This assumption aims to capture the insight that, upon updating the consumption peak, the agent tends to put more emphasis on the current consumption $c$ instead of the past peak. We further assume that $\varphi'(h)h+\varphi(h)\le 1$ for all $h\ge h_{0}$, which implies the non-increasing property of the utility $U(c,h)$ w.r.t $h$. 
		\begin{remark}
		$\varphi(h)=0$ reduces to the case in the main part of the paper. $\varphi(h)=1$ reduces to a non-habit model. A non-trivial choice of $\varphi(h)$ satisfying the mentioned assumptions might be the fractional function $\varphi(h)=\bar{\varphi}\frac{h}{h+\hat{h}}$ where constant $\hat{h}>0$ is a benchmark level and $\bar{\varphi}\in[0,1]$ is a scaling constant.
		\end{remark}
		The optimal dual feedback form $c^{*}(y,h)$ is replaced by
		
		\begin{equation*}		
	c^{*}(y,h)=\left\{
	\begin{array}{l}
	\lambda h,\hskip 5.8cm 1\vee(1-\alpha\varphi(h))e^{[(\alpha-\lambda)-(1-\lambda)\alpha \varphi(h)]\beta_{1}h}\le y,\\
	-\frac{1}{\beta_{1}}\frac{1}{1-\alpha\varphi(h)}\ln(\frac{y}{1-\alpha\varphi(h)})+\alpha h\frac{1-\varphi(h)}{1-\alpha\varphi(h)} ,\ \ 1\le y<1\vee(1-\alpha\varphi(h))e^{[(\alpha-\lambda)-(1-\lambda)\alpha \varphi(h)]\beta_{1}h},\\
	-\frac{1}{\beta_{2}}\frac{1}{1-\alpha\varphi(h)}\ln(\frac{y}{1-\alpha\varphi(h)})+\alpha h\frac{1-\varphi(h)}{1-\alpha\varphi(h)} ,\ \  (1-\alpha\varphi(h))e^{-(1-\alpha)\beta_{2}h}\le y<1,\\
	h,\ \ \hskip 5.8cm (1-\alpha)e^{-(1-\alpha)\beta_{2}h}\le y<(1-\alpha\varphi(h))e^{-(1-\alpha)\beta_{2}h}.
	\end{array}
	\right.
	\end{equation*}
	We need the following assumption on the upper bound of $\varphi(h)$.
	\begin{assumption}
	\label{asvar}
	$\varphi(\infty)<\frac{\alpha-\lambda}{\alpha(1-\lambda)}$.
	\end{assumption}
	This assumption is reasonable. On the one hand, we give a lower bound of the weight given to $h$ in the reference point, so that consumption peak is always taken into consideration by the agent. On the other hand, under Assumption \ref{asvar}, for large $h$ we always have $(1-\alpha\varphi(h))e^{[(\alpha-\lambda)-(1-\lambda)\alpha \varphi(h)]\beta_{1}h}>1$ so that every region in the expression of optimal consumption is not null, indicating that people with higher standard of living have more complicated behavior. Finally, it is interesting that Assumption \ref{asvar} also serves as a convenient sufficient condition for verification theorem. See Remark \ref{c8po}.
	 
   The general solution to the dual HJB equation becomes
   
   \begin{equation*}
		\!\!\!  \!\!\! \!\!\!  \tilde{V}(y,h)\!\!=\!\!\left\{
		    \begin{array}{l}
		   C_{1}(h)y^{q_{1}}\!\!+\!\!C_{2}(h)y^{q_{2}}\!\!-\!\!\frac{1}{\gamma}\lambda hy\!\!+\!\!\frac{1}{\gamma\beta_{1}}\big[1-e^{[(\alpha-\lambda)-(1-\lambda)\alpha \varphi(h)]\beta_{1}h}\big]\\,\  1\vee(1-\alpha\varphi(h))e^{[(\alpha-\lambda)-(1-\lambda)\alpha \varphi(h)]\beta_{1}h}\le y,\\
		   C_{3}(h)y^{q_{1}}\!\!+\!\!C_{4}(h)y^{q_{2}}\!\!+\!\!\frac{1}{\gamma\beta_{1}}(1\!\!-\!\!\frac{y}{1-\alpha\varphi(h)})\!\!+\!\!\frac{y}{\gamma\beta_{1}}\frac{1}{1-\alpha\varphi(h)}\ln(\frac{y}{1-\alpha\varphi(h)})\!\!-\!\!\frac{\alpha h y}{\gamma}\frac{1-\varphi(h)}{1-\alpha\varphi(h)} \\ \!\!+\!\!
\frac{k}{\gamma^{2}\beta_{1}}\frac{1}{1-\alpha\varphi(h)}y,\ 1\le y<1\vee(1-\alpha\varphi(h))e^{[(\alpha-\lambda)-(1-\lambda)\alpha \varphi(h)]\beta_{1}h},\\
		   C_{5}(h)y^{q_{1}}\!\!+\!\!C_{6}(h)y^{q_{2}}\!\!+\!\!\frac{1}{\gamma\beta_{2}}(1\!\!-\!\!\frac{y}{1-\alpha\varphi(h)})\!\!+\!\!\frac{y}{\gamma\beta_{2}}\frac{1}{1-\alpha\varphi(h)}\ln(\frac{y}{1-\alpha\varphi(h)})\!\!-\!\!\frac{\alpha h y}{\gamma}\frac{1-\varphi(h)}{1-\alpha\varphi(h)} \\ \!\!+\!\!
\frac{k}{\gamma^{2}\beta_{2}}\frac{1}{1-\alpha\varphi(h)}y,\ (1-\alpha\varphi(h))e^{-(1-\alpha)\beta_{2}h}\le y<1,\\
		   C_{7}(h)y^{q_{1}}\!\!+\!\!C_{8}(h)y^{q_{2}}\!\!-\!\!\frac{1}{\gamma}hy\!\!+
\!\!\frac{1}{\gamma\beta_{2}}\big[1\!-\!e^{-(1\!-\!\alpha)\beta_{2}h}\big],\ (1-\alpha)e^{-(1-\alpha)\beta_{2}h}\le y<(1-\alpha\varphi(h))e^{-(1-\alpha)\beta_{2}h}.
		    \end{array}
		    \right.
\end{equation*}
   By the same approach, we obtain the expressions of $C_{i}(h),1\le i\le 8$ in the following two cases:
  
   {\bf Case 1.} For $h$ such that $1<(1-\alpha\varphi(h))e^{[(\alpha-\lambda)-(1-\lambda)\alpha \varphi(h)]\beta_{1}h}$,
   
   \begin{align*}
	C_{2}(h)=&0, \ \ \   C_{4}(h)=-\frac{k}{\gamma^{2}\beta_{1}}\frac{1-q_{1}}{q_{2}-q_{1}}\frac{1}{(1-\alpha\varphi(h))^{q_{2}}}e^{-[(\alpha-\lambda)-(1-\lambda)\alpha \varphi(h)](q_{2}-1)\beta_{1}h},\\
	    C_{6}(h)=&C_{4}(h)+\frac{k}{\gamma^{2}}\frac{\beta_{2}-\beta_{1}}{\beta_{1}\beta_{2}}\frac{1-q_{1}}{q_{2}-q_{1}}\frac{1}{1-\alpha\varphi(h)}+\frac{1}{\gamma}\frac{\beta_{2}-\beta_{1}}{\beta_{1}\beta_{2}}\frac{q_{1}}{q_{2}-q_{1}}\frac{\alpha\varphi(h)}{1-\alpha\varphi(h)}\\&+\frac{1}{\gamma}\frac{\beta_{2}-\beta_{1}}{\beta_{1}\beta_{2}}\frac{1-q_{1}}{q_{2}-q_{1}}\frac{1}{1-\alpha\varphi(h)}\ln(\frac{1}{1-\alpha\varphi(h)}),
	    \end{align*}
	    \begin{align*}
	    C_{8}(h)=&C_{6}(h)+\frac{k}{\gamma^{2}\beta_{2}}\frac{1-q_{1}}{q_{2}-q_{1}}\frac{1}{(1-\alpha\varphi(h))^{q_{2}}}e^{(1-\alpha)(q_{2}-1)\beta_{2}h},\\
	    C_{7}(h)=&\frac{(1-\alpha)^{q_{2}-q_{1}}\frac{k}{\gamma^{2}}\frac{1-q_{1}}{q_{2}-q_{1}}
(\alpha-\lambda)(q_{2}-1)}{(1-\alpha)(q_{2}-q_{1})\beta_{2}
+(\alpha-\lambda)(q_{2}-1)\beta_{1}}e^{-\big[(1-\alpha)(q_{2}-q_{1})\beta_{2}
+(\alpha-\lambda)(q_{2}-1)\beta_{1}\big]h}\notag\\
	    &+(1-\alpha)^{q_{2}-q_{1}}\frac{k}{\gamma^{2}\beta_{2}}\frac{q_{2}-1}{q_{2}-q_{1}}
e^{-(1-\alpha)(1-q_{1})\beta_{2}h},\\
	    C_{5}(h)=&C_{7}(h)-\frac{k}{\gamma^{2}\beta_{2}}\frac{q_{2}-1}{q_{2}-q_{1}}\frac{1}{(1-\alpha\varphi(h))^{q_{1}}}e^{-(1-\alpha)(1-q_{1})\beta_{2}h},\\
	    C_{3}(h)=&C_{5}(h)-\frac{k}{\gamma^{2}}
\frac{\beta_{2}-\beta_{1}}{\beta_{1}\beta_{2}}\frac{q_{2}-1}{q_{2}-q_{1}}\frac{1}{1-\alpha\varphi(h)}+\frac{1}{\gamma}\frac{\beta_{2}-\beta_{1}}{\beta_{1}\beta_{2}}\frac{q_{2}}{q_{2}-q_{1}}\frac{\alpha\varphi(h)}{1-\alpha\varphi(h)}\\&-\frac{1}{\gamma}\frac{\beta_{2}-\beta_{1}}{\beta_{1}\beta_{2}}\frac{q_{2}-1}{q_{2}-q_{1}}\frac{1}{1-\alpha\varphi(h)}\ln(\frac{1}{1-\alpha\varphi(h)}),\\
	    C_{1}(h)=&C_{3}(h)+\frac{k}{\gamma^{2}\beta_{1}}
\frac{q_{2}-1}{q_{2}-q_{1}}\frac{1}{(1-\alpha\varphi(h))^{q_{1}}}e^{[(\alpha-\lambda)-(1-\lambda)\alpha \varphi(h)](1-q_{1})\beta_{1}h}.
	\end{align*}
	In this case, as $\varphi(h)\in[0,1]$ and $e^{-[(\alpha-\lambda)-(1-\lambda)\alpha \varphi(h)]\beta_{1}h}<1-\alpha\varphi(h)\le 1$, the estimates of $C_{i}(h),1\le i\le 8$ is the same as Section \ref{dersol} except that the order estimate of $C_{4}(h)$ is replaced by $C_{4}(h)=\mathcal{O}(1)$.
	
	{\bf Case 2.} For $h$ such that $1\ge (1-\alpha\varphi(h))e^{[(\alpha-\lambda)-(1-\lambda)\alpha \varphi(h)]\beta_{1}h}$ (the second region $1\le y<1\vee (1-\alpha\varphi(h))e^{[(\alpha-\lambda)-(1-\lambda)\alpha \varphi(h)]\beta_{1}h}$ is null), we have:
	
		\begin{align*}
	C_{2}(h)=&0, \\   
	    C_{6}(h)=&-\frac{1-q_{1}}{q_{2}-q_{1}}\frac{\lambda h}{\gamma}-\frac{q_{1}}{q_{2}-q_{1}}\frac{1}{\gamma\beta_{1}}(1-e^{[(\alpha-\lambda)-(1-\lambda)\alpha \varphi(h)]\beta_{1}h})-\frac{1-q_{1}}{q_{2}-q_{1}}\frac{1}{\gamma\beta_{2}}\frac{1}{1-\alpha \varphi(h)}\ln(\frac{1}{1-\alpha\varphi(h)})\\&+\frac{q_{1}}{q_{2}-q_{1}}\frac{1}{\gamma\beta_{2}}(1-\frac{1}{1-\alpha\varphi(h)})+\frac{1-q_{1}}{q_{2}-q_{1}}\frac{\alpha h}{\gamma}\frac{1-\varphi(h)}{1-\alpha\varphi(h)}-\frac{1-q_{1}}{q_{2}-q_{1}}\frac{k}{\gamma^{2}\beta_{2}}\frac{1}{1-\alpha\varphi(h)},\\
	    C_{8}(h)=&C_{6}(h)+\frac{k}{\gamma^{2}\beta_{2}}\frac{1-q_{1}}{q_{2}-q_{1}}\frac{1}{(1-\alpha\varphi(h))^{q_{2}}}e^{(1-\alpha)(q_{2}-1)\beta_{2}h},\\
	    C_{7}(h)=&\frac{(1-\alpha)^{q_{2}-q_{1}}\frac{k}{\gamma^{2}}\frac{1-q_{1}}{q_{2}-q_{1}}
(\alpha-\lambda)(q_{2}-1)}{(1-\alpha)(q_{2}-q_{1})\beta_{2}
+(\alpha-\lambda)(q_{2}-1)\beta_{1}}e^{-\big[(1-\alpha)(q_{2}-q_{1})\beta_{2}
+(\alpha-\lambda)(q_{2}-1)\beta_{1}\big]h}\notag\\
	    &+(1-\alpha)^{q_{2}-q_{1}}\frac{k}{\gamma^{2}\beta_{2}}\frac{q_{2}-1}{q_{2}-q_{1}}e^{-(1-\alpha)(1-q_{1})\beta_{2}h},\\
	    C_{5}(h)=&C_{7}(h)-\frac{k}{\gamma^{2}\beta_{2}}\frac{q_{2}-1}{q_{2}-q_{1}}\frac{1}{(1-\alpha\varphi(h))^{q_{1}}}e^{-(1-\alpha)(1-q_{1})\beta_{2}h},
	    \end{align*}
        \begin{align*}
	    C_{1}(h)=&C_{5}(h)+\frac{q_{2}-1}{q_{2}-q_{1}}\frac{\lambda h}{\gamma}-\frac{q_{2}}{q_{2}-q_{1}}\frac{1}{\gamma\beta_{1}}(1-e^{[(\alpha-\lambda)-(1-\lambda)\alpha \varphi(h)]\beta_{1}h})+\frac{q_{2}-1}{q_{2}-q_{1}}\frac{1}{\gamma\beta_{2}}\frac{1}{1-\alpha\varphi(h)}\ln(\frac{1}{1-\alpha\varphi(h)})\\&+\frac{q_{2}}{q_{2}-q_{1}}\frac{1}{\gamma\beta_{2}}(1-\frac{1}{1-\alpha\varphi(h)})-\frac{q_{2}-1}{q_{2}-q_{1}}\frac{\alpha h}{\gamma}\frac{1-\varphi(h)}{1-\alpha\varphi(h)}+\frac{q_{2}-1}{q_{2}-q_{1}}\frac{k}{\gamma^{2}\beta_{2}}\frac{1}{1-\alpha\varphi(h)}.
	\end{align*}
	Under Assumption \ref{asvar}, Case 2 does not happen for sufficiently large $h$ so that the asymptotic estimates are not necessary. In Case 2, we still have 
	
\begin{align*}
	        &C_{1}(h)>0,\ \ C_{7}(h)>0,\\
	        &C_{8}(h)>0.
	        \end{align*}
	   Assumption \ref{asvar} is actually a sufficient condition for $C_{8}(h)>0$.     
		\begin{remark}
		\label{c8po}
		 Under Assumption \ref{asvar}, the proof of $C_{8}(h)>0$ in Case 2 is as follows:\\
		We write $C_{8}(h)$ as the sum of three terms given by
		
		\begin{align*}
		    C_{8}(h)=&\frac{k}{\gamma^{2}\beta_{2}}\frac{1-q_{1}}{q_{2}-q_{1}}\frac{1}{(1-\alpha\varphi(h))^{q_{2}}}[e^{(1-\alpha)(q_{2}-1)\beta_{2}h}-(1-\alpha\varphi(h))^{q_{2}-1}]\\
		    &+[\frac{1-q_{1}}{q_{2}-q_{1}}\frac{h}{\gamma}\frac{(\alpha-\lambda)-(1-\lambda)\alpha\varphi(h)}{1-\alpha\varphi(h)}-\frac{q_{1}}{q_{2}-q_{1}}\frac{1}{\gamma\beta_{1}}(1-e^{[(\alpha-\lambda)-(1-\lambda)\alpha\varphi(h)]\beta_{1}h})]\\
		    &+[-\frac{1-q_{1}}{q_{2}-q_{1}}\frac{1}{\gamma\beta_{2}}\frac{1}{1-\alpha\varphi(h)}\ln(\frac{1}{1-\alpha\varphi(h)})+\frac{q_{1}}{q_{2}-q_{1}}\frac{1}{\gamma\beta_{2}}(1-\frac{1}{1-\alpha\varphi(h)})].
		\end{align*}
		The first term is positive due to $e^{(1-\alpha)(q_{2}-1)\beta_{2}h}>1\ge (1-\alpha\varphi(h))^{q_{2}-1}$.
		
		For the second term, it can be directly verified that
		
		\begin{equation*}
		    f(x):=\frac{1-q_{1}}{q_{2}-q_{1}}\frac{h}{\gamma}\frac{x}{1-\alpha\varphi(h)}-\frac{q_{1}}{q_{2}-q_{1}}\frac{1}{\gamma\beta_{1}}(1-e^{x\beta_{1}h})
		\end{equation*}
		is increasing for $x\ge 0$. Thus $f((\alpha-\lambda)-(1-\lambda)\alpha\varphi(h))>f(0)=0$, which implies that the second term is positive.
		
		For the last term, let
		
		\begin{equation*}
		    g(x):=-(1-q_{1})x\ln(x)+q_{1}(1-x),
		\end{equation*}
		then one can directly show that $g(x)$ is decreasing for $x\in[1,\frac{1}{1-\alpha}]$. Hence $g(\frac{1}{1-\alpha\varphi(h)})\ge g(1)=0$ and the last term is non-negative.
		\end{remark}
	
		 Under all the aforementioned assumptions on $\varphi$, we can establish the verification theorem and apply duality to obtain the optimal strategy given in the following theorem.

		\begin{theorem}	
		For $(x_{0},h_{0})\in\mathcal{C}$, where $\mathcal{C}$ is the effective region given by (\ref{effective}), let $c^{*}_{\rm primal}(\cdot,\cdot)$ and $ \pi^{*}_{\rm primal}(\cdot,\cdot)$ be the feedback functions in terms of primal variable given respectively by
		\[
		    c_{\rm primal}^{*}(x,h)=\left\{
	\begin{array}{l}
	\lambda h,\hskip 5.8cm \frac{\lambda h}{\gamma}\le x\le W_{\rm low}(h)\wedge W_{\rm ref}(h),\\
	-\frac{1}{\beta_{1}}\frac{1}{1-\alpha\varphi(h)}\ln(\frac{f_{2}(x,h)}{1-\alpha\varphi(h)})+\alpha h\frac{1-\varphi(h)}{1-\alpha\varphi(h)} ,\ \ W_{\rm low}(h)\wedge W_{\rm ref}(h)<x\le W_{\rm ref}(h),\\
	-\frac{1}{\beta_{2}}\frac{1}{1-\alpha\varphi(h)}\ln(\frac{f_{3}(x,h)}{1-\alpha\varphi(h)})+\alpha h\frac{1-\varphi(h)}{1-\alpha\varphi(h)} ,\ \  W_{\rm ref}(h)<x\le W_{\rm peak}(h),\\
	h,\ \ \hskip 5.8cm W_{\rm peak}(h)<x\le W_{\rm updt}(h).
	\end{array}
	\right.\\
	\]
	\[
	\pi_{\rm primal}^{*}(x,h)\!\!=\!\!\frac{\mu-r}{\sigma^{2}}\left\{
	\begin{array}{l}
	\frac{r}{k}\left\{C_{1}(h)\big[f_{1}(x,h)\big]^{q_{1}-1}\!\!+\!\!C_{2}(h)\big[f_{1}(x,h)\big]^{q_{2}-1}\right\},\hskip 0.4cm \frac{\lambda h}{\gamma}\!\!\le\!\! x\!\!\le\!\! W_{\rm low}(h)\!\!\wedge\!\! W_{\rm ref}(h),\\
	\frac{r}{k}\left\{C_{3}(h)\big[f_{2}(x,h)\big]^{q_{1}-1}\!\!+\!\!C_{4}(h)\big[f_{2}(x,h)\big]^{q_{2}-1}\right\}
\!\!+\!\!\frac{1}{\gamma\beta_{1}}\frac{1}{1-\alpha\varphi(h)},\ W_{\rm low}(h)\!\!\wedge\!\! W_{\rm ref}(h)\!\!<\!\!x\!\!\le\!\! W_{\rm ref}(h),\\
	\frac{r}{k}\left\{C_{5}(h)\big[f_{3}(x,h)\big]^{q_{1}-1}\!\!+\!\!C_{6}(h)\big[f_{3}(x,h)\big]^{q_{2}-1}\right\}\!\!+\!\!\frac{1}{\gamma\beta_{2}}\frac{1}{1-\alpha\varphi(h)},\ W_{\rm ref}(h)\!\!<\!\!x\!\!\le\!\! W_{\rm peak}(h),\\
	\frac{r}{k}\left\{C_{7}(h)\big[f_{4}(x,h)\big]^{q_{1}-1}\!\!+\!\!C_{8}(h)\big[f_{4}(x,h)\big]^{q_{2}-1}\right\},\hskip 0.3cm W_{\rm peak}(h)\!\!<\!\!x\!\!\le\!\! W_{\rm updt}(h).
	\end{array}
	\right.
	\]
	where $f_{i}(x,h),\ 1\le i\le 4$ are uniquely determined by 
	
	\begin{align*}
	    x=&-C_{1}(h)q_{1}\big[f_{1}(x,h)\big]^{q_{1}-1}-C_{2}(h)q_{2}\big[f_{1}(x,h)\big]^{q_{2}-1}+\frac{\lambda h}{\gamma},\\
	    x=&-C_{3}(h)q_{1}\big[f_{2}(x,h)\big]^{q_{1}-1}-C_{4}(h)q_{2}\big[f_{2}(x,h)\big]^{q_{2}-1}-\frac{1}{\gamma\beta_{1}}\frac{1}{1-\alpha\varphi(h)}\ln\big[\frac{f_{2}(x,h)}{1-\alpha\varphi(h)}\big]\\&-\frac{k}{\gamma^{2}\beta_{1}}\frac{1}{1-\alpha\varphi(h)}+\frac{\alpha h}{\gamma}\frac{1-\varphi(h)}{1-\alpha\varphi(h)},\\
	    x=&-C_{5}(h)q_{1}\big[f_{3}(x,h)\big]^{q_{1}-1}-C_{6}(h)q_{2}\big[f_{3}(x,h)\big]^{q_{2}-1}-\frac{1}{\gamma\beta_{2}}\frac{1}{1-\alpha\varphi(h)}\ln\big[\frac{f_{3}(x,h)}{1-\alpha\varphi(h)}\big]\\&-\frac{k}{\gamma^{2}\beta_{2}}\frac{1}{1-\alpha\varphi(h)}+\frac{\alpha h}{\gamma}\frac{1-\varphi(h)}{1-\alpha\varphi(h)},\\
	    x=&-C_{7}(h)q_{1}\big[f_{4}(x,h)\big]^{q_{1}-1}-C_{8}(h)q_{2}\big[f_{4}(x,h)\big]^{q_{2}-1}+\frac{h}{\gamma},
	    \end{align*}
	    and $W_{\rm low}(h),W_{\rm ref}(h),W_{\rm peak}(h)$ and $W_{\rm updt}(h)$ are given by
	    
	    \begin{align*}
	    W_{\rm low}(h)=&-C_{1}(h)q_{1}\big[(1-\alpha\varphi(h))e^{[(\alpha-\lambda)-(1-\lambda)\alpha \varphi(h)]\beta_{1}h}\big]^{q_{1}-1}\\&-C_{2}(h)q_{2}\big[(1-\alpha\varphi(h))e^{[(\alpha-\lambda)-(1-\lambda)\alpha \varphi(h)]\beta_{1}h}\big]^{q_{2}-1}+\frac{\lambda h}{\gamma},\\
	    W_{\rm ref}(h)=&-C_{3}(h)q_{1}-C_{4}(h)q_{2}-\frac{1}{\gamma\beta_{1}}\frac{1}{1-\alpha\varphi(h)}\ln\big[\frac{1}{1-\alpha\varphi(h)}\big]\\&-\frac{k}{\gamma^{2}\beta_{1}}\frac{1}{1-\alpha\varphi(h)}+\frac{\alpha h}{\gamma}\frac{1-\varphi(h)}{1-\alpha\varphi(h)},\\
	    W_{\rm peak}(h)=&-C_{5}(h)q_{1}\big[(1-\alpha\varphi(h))e^{-(1-\alpha)\beta_{2}h}\big]^{q_{1}-1}-C_{6}(h)q_{2}\big[(1-\alpha\varphi(h))e^{-(1-\alpha)\beta_{2}h}\big]^{q_{2}-1}\\&+\frac{h}{\gamma}\frac{1-\alpha}{1-\alpha\varphi(h)}-\frac{k}{\gamma^{2}\beta_{2}}\frac{1}{1-\alpha\varphi(h)}+\frac{\alpha h}{\gamma}\frac{1-\varphi(h)}{1-\alpha\varphi(h)},\\
	    W_{\rm updt}(h)=&-C_{7}(h)q_{1}\big[(1-\alpha)e^{-(1-\alpha)\beta_{2}h}\big]^{q_{1}-1}-C_{8}(h)q_{2}\big[(1-\alpha)e^{-(1-\alpha)\beta_{2}h}\big]^{q_{2}-1}+\frac{h}{\gamma}.
	\end{align*}
		Then SDE
		\begin{equation}
		\label{sde}
		\left\{
		\begin{array}{l}
		dX_{t}=rX_{t}dt+\pi^{*}_{\rm primal}(X_{t},H^{*}_{t})(\mu-r)dt+\pi^{*}_{\rm primal}(X_{t},H^{*}_{t})\sigma dW_{t}-c^{*}_{\rm primal}(X_{t},H^{*}_{t})dt,\\
		X_{0}=x_{0}
		\end{array}
		\right.
		\end{equation}
		with $H^{*}_{t}\triangleq h_{0}\vee \sup\limits_{s\le t}c^{*}_{\rm primal}(X_{s},H^{*}_{s})$ and $H^{*}_{0}=h_{0}$,  has a unique strong solution $\{X^{*}_{t},\ t\ge 0\}$. The optimal consumption and investment policy is
		
		\begin{equation*}
		\Big\{\big(c^{*}_{\rm primal}(X^{*}_{t},H^{*}_{t}),\pi^{*}_{\rm primal}(X^{*}_{t},H^{*}_{t})\big),\ t\ge 0\Big\}.
		\end{equation*}
	\end{theorem}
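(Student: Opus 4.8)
The plan is to follow exactly the same route as for Theorem \ref{optpolicy} in the main text, which itself reduces (per the authors' remark) to the argument in \cite{deng2021}, with the general reference point $\alpha[\varphi(h)c+(1-\varphi(h))h]$ replacing $\alpha h$ throughout. The two ingredients one needs are: (i) a verification theorem in dual form, establishing that the dual value function $\tilde V(y,h)$ displayed above — built from the coefficients $C_i(h)$ of Case~1 and Case~2 — is indeed the conjugate of the value function, and that the feedback controls $c^*(Y_t,H_t^\dagger)$, $\pi^*(Y_t,H_t^\dagger)$ attain it; and (ii) a well-posedness statement for the primal SDE (\ref{sde}) driven by the feedback maps $c^*_{\rm primal},\pi^*_{\rm primal}$, together with the identification of its solution as the optimal pair. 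Step (ii) is standard once one has (i) plus local Lipschitz bounds on $c^*_{\rm primal}$ and a global Lipschitz bound on $\pi^*_{\rm primal}$, so the work is concentrated in (i) and in the regularity lemmas.

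First I would re-derive the dual HJB ODE (\ref{ode}): the generator is unchanged (the market dynamics do not involve $\varphi$), so $\tilde V(\cdot,h)$ still solves $-\gamma\tilde V+k y^2\tilde V_{yy}=-\tilde U(y,h)$ on each region, where now $\tilde U(y,h)=\sup_{\lambda h\le c\le h}\{U(c,\alpha[\varphi(h)c+(1-\varphi(h))h],h)-cy\}$ — one recomputes $c^*(y,h)$ by first-order conditions (this is the displayed four-branch formula) and substitutes. The particular solutions give the non-homogeneous terms in the displayed $\tilde V$, and the eight functions $C_i(h)$ are pinned down by: $C_2(h)=0$ and $C_1(h)>0$ from the bankruptcy boundary condition (\ref{boundcon2}), the updating/Neumann condition (\ref{secbound}) at $y=(1-\alpha)e^{-(1-\alpha)\beta_2 h}$, the boundary behaviour as $h\to\infty$, and $C^0\wedge C^1$ smooth-fit at the three interior interfaces $y=1$, $y=(1-\alpha\varphi(h))e^{-(1-\alpha)\beta_2 h}$, and $y=1\vee(1-\alpha\varphi(h))e^{[(\alpha-\lambda)-(1-\lambda)\alpha\varphi(h)]\beta_1 h}$. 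This is the bookkeeping that produces Case~1 and Case~2, and the key sign facts $C_1(h),C_7(h),C_8(h)>0$ (the last in Case~2 requiring Assumption \ref{asvar}, with the three-term decomposition already given in Remark \ref{c8po}) are what keep $\tilde V_{yy}>0$, i.e. the analogue of Lemma \ref{vyypo}, which makes $-\tilde V_y(\cdot,h)$ invertible with inverse $f(\cdot,h)$. With that in hand, the primal value function and optimal feedbacks follow by the duality transform (\ref{tranx}) exactly as in Theorems \ref{vfunction}--\ref{optpolicy}, yielding the stated $f_i$, $W_{\rm low},W_{\rm ref},W_{\rm peak},W_{\rm updt}$.

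For the verification theorem I would reprove the four auxiliary lemmas of Appendix \ref{appb}: the Fenchel inequality $\mathbb E\int e^{-\gamma t}U\le \tilde V(y,h_0)+yx_0$ with equality at $c^*$ and $y=y^*$ (Lemma \ref{lemma1}); the identification $H_t^\dagger=\hat H_t$ of the running maximum generated by the feedback with the one read off from the running infimum of $Y$ (Lemma \ref{lemma2}, whose statement is unchanged since the updating threshold $(1-\alpha)e^{-(1-\alpha)\beta_2 h}$ is the same); the transversality condition $\lim_T \mathbb E[e^{-\gamma T}\tilde V(Y_T,\hat H_T)]=0$ (Lemma \ref{lemma3}); and the Feynman--Kac identity $\tilde V(y,h_0)=\mathbb E\int e^{-\gamma t}\tilde U(Y_t,\hat H_t)\,dt$ (Lemma \ref{lemma4}) via Itô on $e^{-\gamma t}\tilde V$, localization by $\tau_n$, and the Neumann condition killing the $d\hat H_t$ term. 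The growth estimates needed here are precisely the $\mathcal O$-bounds on $C_i(h)$ as $h\to\infty$, which in Case~1 are the same as Lemma \ref{hinfty} except $C_4(h)=\mathcal O(1)$, and in Case~2 are irrelevant because Assumption \ref{asvar} forces $h$ into Case~1 for all large $h$; this is exactly why the authors flag that Assumption \ref{asvar} is "a convenient sufficient condition for verification". Finally, the analogues of Lemmas \ref{fc1} and \ref{lipsc} — $f$ is piecewise $C^1$ and continuous across interfaces, $c^*_{\rm primal}$ locally Lipschitz, $\pi^*_{\rm primal}$ globally Lipschitz — follow from the same implicit-differentiation-plus-sign-bounds argument, feeding the standard strong-existence/uniqueness theorem for (\ref{sde}).

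The main obstacle is the extra $h$-dependence that $\varphi$ injects into the non-homogeneous terms and hence into the $C_i(h)$: with $\varphi=0$ the defining relations for $C_i$ are exponentials in $h$ with fixed rates, but here the factor $(1-\alpha\varphi(h))$ and the exponent $(\alpha-\lambda)-(1-\lambda)\alpha\varphi(h)$ both move with $h$, so the $h$-derivatives $C_i'(h)$ acquire $\varphi'(h)$ terms, and one must check (a) that the interface curve $1\vee(1-\alpha\varphi(h))e^{[(\alpha-\lambda)-(1-\lambda)\alpha\varphi(h)]\beta_1 h}$ is genuinely non-degenerate — exactly the role of Assumption \ref{asvar}, which guarantees the exponent is eventually positive so the second consumption region is non-null for large $h$ — and (b) that the sign $C_8(h)>0$ survives in Case~2, which is the delicate point and is handled by the three-term split of Remark \ref{c8po} using $\varphi(h)\le\varphi(\infty)<\frac{\alpha-\lambda}{\alpha(1-\lambda)}$ together with monotonicity of the auxiliary functions $f(x)$ and $g(x)$ there. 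The assumptions $\varphi$ non-decreasing and $\varphi'(h)h+\varphi(h)\le1$ enter to keep $U(c,h)$ non-increasing in $h$ (so the Neumann condition $\tilde V_h=0$ on the updating boundary is the correct first-order condition for singular control) and to keep $C_i'(h)$ bounded in the Lipschitz estimate; beyond tracking these, the proof is the region-wise duality argument of \cite{deng2021} verbatim, so I would, as the authors do elsewhere, carry out the new sign and growth estimates explicitly and then refer to \cite{deng2021} for the remaining steps.
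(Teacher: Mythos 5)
Your proposal is correct and follows essentially the same route the paper takes: the authors explicitly omit the proof, stating that it mirrors the main-text argument (Theorems \ref{verificationth}--\ref{optpolicy} via \cite{deng2021}), with the only genuine modifications being the convexity lemma (the analogue of Lemma \ref{vyypo}, handled by the decomposition of Remark \ref{c8po} under Assumption \ref{asvar}) and the Lipschitz lemma (the analogue of Lemma \ref{lipsc}, using the boundedness of $\varphi$ and $\varphi'$ to control $C_i'(h)$) — precisely the two points you single out as the new work.
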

	\begin{remark}
	     It is interesting to notice that the depression region can possibly vanish under the current model. As has been mentioned, Assumption \ref{asvar} ensures that this will not happen for large $h$. Specific characterizations of scenarios depends crucially on the form of $\varphi$. There are similar phenomena in \cite{Li2021}, though due to completely different reasons. 
	\end{remark}
	 \begin{remark}
		  The proofs of main results in this generalization are similar and thus omitted. The difference in proofs mainly lies in the proof of Lemma \ref{vyypo} and Lemma \ref{lipsc}. In the proof of Lemma \ref{vyypo} for this generalization, we need to apply a similar decomposition as in Remark \ref{c8po} to show that $\frac{r}{k}C_{6}(h)y^{q_{2}-1}+\frac{1}{\gamma\beta_{2}}\frac{1-q_{1}}{q_{2}-q_{1}}\frac{1}{1-\alpha\varphi(h)}>0$ for $(1-\alpha\varphi(h))e^{-(1-\alpha)\beta_{2}h}\le y<1$ in case $1\ge (1-\alpha\varphi(h))e^{[(\alpha-\lambda)-(1-\lambda)\alpha \varphi(h)]\beta_{1}h}$. While in the proof of Lemma \ref{lipsc}, we need to apply the boundedness of $\varphi(h)$ and $\varphi'(h)$ ( $0\le\varphi'(h)<\frac{1-\varphi(h)}{h}\le\frac{1-\frac{\alpha-\lambda}{\alpha(1-\lambda)}}{h_{0}}$) to show that $C_{i}'(h),1\le i\le 8$ are bounded.
		 \end{remark}
	
\vskip 15pt		
		\section{Results for $r\neq\gamma$}
		\label{genpara}
For $r\neq\gamma$, equation (\ref{ode}) should be replaced by
\begin{equation}
    -\gamma\tilde{V}(y,h)+(\gamma-r)y\tilde{V}_{y}(y,h)+\frac{(r-\mu)^{2}}{2\sigma^{2}}y^{2}\tilde{V}_{yy}(y,h)=-\tilde{U}(y,h),
    \label{odegen}
\end{equation}
and $q_{i},i=1,2$ should be defined instead by $q_{1}=\frac{k+r-\gamma-\sqrt{(k+r-\gamma)^{2}+4k\gamma}}{2k}$, $q_{2}=\frac{k+r-\gamma+\sqrt{(k+r-\gamma)^{2}+4k\gamma}}{2k}$. It still holds that $q_{1}<0<1<q_{2}$.

The general solution to (\ref{odegen}) is 

\begin{equation*}
		\!\!\!  \!\!\! \!\!\!  \tilde{V}(y,h)\!\!=\!\!\left\{
		    \begin{array}{l}
		   C_{1}(h)y^{q_{1}}+C_{2}(h)y^{q_{2}}-\frac{1}{r}\lambda hy+\frac{1}{\gamma\beta_{1}}\big[1-e^{(\alpha-\lambda)\beta_{1}h}\big],\ \ \ \ \  e^{(\alpha-\lambda)\beta_{1}h}\le y,\\
    	   C_{3}(h)y^{q_{1}}\!\!+\!\!C_{4}(h)y^{q_{2}}\!\!+\!\!\frac{1}{r\beta_{1}}y\ln(y)\!\!+\!\!\frac{\gamma-2r+k}{r^{2}\beta_{1}}y\!\!-\!\!\frac{\alpha h}{r}y\!\!+\!\!\frac{1}{\gamma\beta_{1}},\ 1\le y<e^{(\alpha-\lambda)\beta_{1}h},\\
		   C_{5}(h)y^{q_{1}}\!\!+\!\!C_{6}(h)y^{q_{2}}\!\!+\!\!\frac{1}{r\beta_{2}}y\ln(y)\!\!+\!\!\frac{\gamma-2r+k}{r^{2}\beta_{2}}y\!\!-\!\!\frac{\alpha h}{r}y\!\!+\!\!\frac{1}{\gamma\beta_{2}},\ e^{-(1-\alpha)\beta_{2}h}\le y<1,\\
		   C_{7}(h)y^{q_{1}}\!\!+\!\!C_{8}(h)y^{q_{2}}\!\!-\!\!\frac{1}{r}hy\!\!+
\!\!\frac{1}{\gamma\beta_{2}}\big[1\!-\!e^{-(1\!-\!\alpha)\beta_{2}h}\big],\ (1\!-\!\alpha)e^{-(1\!-\!\alpha)\beta_{2}h}\!\le\! y\!<\!e^{-(1-\alpha)\beta_{2}h}.
		    \end{array}
		    \right.
\end{equation*}
We can obtain $C_{i}(h),1\le i\le 8$ in the same way as in Section \ref{dersol}.

\begin{align*}
    C_{2}(h)=&0, \ \ \   C_{4}(h)=\frac{1}{(q_{2}-q_{1})\beta_{1}}\Big[\frac{q_{1}}{\gamma}-\frac{1}{r}+\frac{\gamma-2r+k}{r^{2}}(q_{1}-1)\Big]e^{-(\alpha-\lambda)(q_{2}-1)\beta_{1}h},\\
	    C_{6}(h)=&C_{4}(h)+\frac{\beta_{2}-\beta_{1}}{(q_{2}-q_{1})\beta_{1}\beta_{2}}\Big[-\frac{q_{1}}{\gamma}+\frac{1}{r}-\frac{\gamma-2r+k}{r^{2}}(q_{1}-1)\Big],\\
	    C_{8}(h)=&C_{6}(h)+\frac{1}{(q_{2}-q_{1})\beta_{2}}\Big[-\frac{q_{1}}{\gamma}+\frac{1}{r}-\frac{\gamma-2r+k}{r^{2}}(q_{1}-1)\Big]e^{(1-\alpha)(q_{2}-1)\beta_{2}h},\\
	    C_{7}(h)=&\frac{1}{q_{2}-q_{1}}\frac{(1-\alpha)^{q_{2}-q_{1}}\big[-\frac{q_{1}}{\gamma}+\frac{1}{r}-\frac{\gamma-2r+k}{r^{2}}(q_{1}-1)\big]
(\alpha-\lambda)(q_{2}-1)}{(1-\alpha)(q_{2}-q_{1})\beta_{2}
+(\alpha-\lambda)(q_{2}-1)\beta_{1}}e^{-\big[(1-\alpha)(q_{2}-q_{1})\beta_{2}
+(\alpha-\lambda)(q_{2}-1)\beta_{1}\big]h}\notag\\
	    &+(1-\alpha)^{q_{2}-q_{1}}\frac{-\frac{q_{1}}{\gamma}+\frac{1}{r}-\frac{\gamma-2r+k}{r^{2}}(q_{1}-1)}{(1-q_{1})\beta_{2}}\frac{q_{2}-1}{q_{2}-q_{1}}
e^{-(1-\alpha)(1-q_{1})\beta_{2}h},\\
	    C_{5}(h)=&C_{7}(h)+\frac{1}{(q_{2}-q_{1})\beta_{2}}\Big[-\frac{q_{2}}{\gamma}+\frac{1}{r}-\frac{\gamma-2r+k}{r^{2}}(q_{2}-1)\Big]e^{-(1-\alpha)(1-q_{1})\beta_{2}h},\\
	    C_{3}(h)=&C_{5}(h)+\frac{\beta_{2}-\beta_{1}}{(q_{2}-q_{1})\beta_{1}\beta_{2}}\Big[-\frac{q_{2}}{\gamma}+\frac{1}{r}-\frac{\gamma-2r+k}{r^{2}}(q_{2}-1)\Big],\\
	    C_{1}(h)=&C_{3}(h)+\frac{1}{(q_{2}-q_{1})\beta_{1}}\Big[\frac{q_{2}}{\gamma}-\frac{1}{r}+\frac{\gamma-2r+k}{r^{2}}(q_{2}-1)\Big]e^{(\alpha-\lambda)(1-q_{1})\beta_{1}h}.
\end{align*}

It can verified that $q_{1}<\frac{\frac{\gamma-2r+k}{r^{2}}+\frac{1}{r}}{\frac{\gamma-2r+k}{r^{2}}+\frac{1}{\gamma}}<q_{2}$. In fact, an equivalent condition is $k(\frac{\frac{\gamma-2r+k}{r^{2}}+\frac{1}{r}}{\frac{\gamma-2r+k}{r^{2}}+\frac{1}{\gamma}})^2-(k+r-\gamma)\frac{\frac{\gamma-2r+k}{r^{2}}+\frac{1}{r}}{\frac{\gamma-2r+k}{r^{2}}+\frac{1}{\gamma}}-\gamma<0$, which is equivalent(by direct computation) to the trivial inequality $-\frac{k^{2}}{\gamma r^{2}(\frac{\gamma-2r+k}{r^{2}}+\frac{1}{\gamma})^{2}}<0$. Then we obtain from $q_{1}<\frac{\frac{\gamma-2r+k}{r^{2}}+\frac{1}{r}}{\frac{\gamma-2r+k}{r^{2}}+\frac{1}{\gamma}}<q_{2}$ that

\begin{align*}
    &-\frac{q_{1}}{\gamma}+\frac{1}{r}-\frac{\gamma-2r+k}{r^{2}}(q_{1}-1)>0,\\
    &-\frac{q_{2}}{\gamma}+\frac{1}{r}-\frac{\gamma-2r+k}{r^{2}}(q_{2}-1)<0.\\
\end{align*}
 
The difference of $C_{i}(h),1\le i\le 8$ between $r=\gamma$ and $r\neq\gamma$ is that certain positive coefficients such as $\frac{k}{\gamma^{2}}\frac{q_{2}-1}{q_{2}-q_{1}}$ in case $r=\gamma$ are replaced by other positive coefficients such as $\frac{1}{q_{2}-q_{1}}\Big[\frac{q_{2}}{\gamma}-\frac{1}{r}+\frac{\gamma-2r+k}{r^{2}}(q_{2}-1)\Big]$ in case $r\neq\gamma$. As a result, the estimates of $C_{i}(h),1\le i\le 8$ is completely the same as in Section \ref{dersol} and the main results are similar. The optimal strategy is as follows.

\begin{theorem}
	For $(x_{0},h_{0})\in\mathcal{C}$, where $\mathcal{C}$ is the effective region given by (\ref{effective}), let $c^{*}_{\rm primal}(\cdot,\cdot)$ and $ \pi^{*}_{\rm primal}(\cdot,\cdot)$ be the feedback functions in terms of primal variable given respectively by
	
		\begin{align*}
		    &c_{\rm primal}^{*}(x,h)=\left\{
	\begin{array}{l}
	\lambda h,\hskip 3.3cm \frac{\lambda h}{\gamma}\le x\le W_{\rm low}(h),\\
	-\frac{1}{\beta_{1}}\ln\big[f_{2}(x,h)\big]+\alpha h,\ W_{\rm low}(h)<x\le W_{\rm ref}(h),\\
	-\frac{1}{\beta_{2}}\ln\big[f_{3}(x,h)\big]+\alpha h,\ W_{\rm ref}(h)<x\le W_{\rm peak}(h),\\
	h,\hskip 3.3cm W_{\rm peak}(h)<x\le W_{\rm updt}(h),
	\end{array}
	\right.\\
	&\pi_{\rm primal}^{*}(x,h)=\frac{\mu-r}{\sigma^{2}}\!\!\left\{
	\begin{array}{l}
	q_{1}(q_{1}\!\!-\!\!1)C_{1}(h)\big[f_{1}(x,h)\big]^{q_{1}-1}\!\!+\!\!q_{2}(q_{2}\!\!-\!\!1)C_{2}(h)\big[f_{1}(x,h)\big]^{q_{2}-1}\\,\ \frac{\lambda h}{\gamma}\le x\le W_{\rm low}(h),\\
	q_{1}(q_{1}\!\!-\!\!1)C_{3}(h)\big[f_{2}(x,h)\big]^{q_{1}-1}\!\!+\!\!q_{2}(q_{2}\!\!-\!\!1)C_{4}(h)\big[f_{2}(x,h)\big]^{q_{2}-1}\!\!+\!\!\frac{1}{r\beta_{1}}\\,\ W_{\rm low}(h)<x\le W_{\rm ref}(h),\\
	q_{1}(q_{1}\!\!-\!\!1)C_{5}(h)\big[f_{3}(x,h)\big]^{q_{1}-1}\!\!+\!\!q_{2}(q_{2}\!\!-\!\!1)C_{6}(h)\big[f_{3}(x,h)\big]^{q_{2}-1}\!\!+\!\!\frac{1}{r\beta_{2}}\\,\ W_{\rm ref}(h)<x\le W_{\rm peak}(h),\\
	q_{1}(q_{1}\!\!-\!\!1)C_{7}(h)\big[f_{4}(x,h)\big]^{q_{1}-1}\!\!+\!\!q_{2}(q_{2}\!\!-\!\!1)C_{8}(h)\big[f_{4}(x,h)\big]^{q_{2}-1}\\,\ W_{\rm peak}(h)<x\le W_{\rm updt}(h),
	\end{array}
	\right.
	\end{align*}
	where $f_{i}(x,h),\ 1\le i\le 4$ are uniquely determined by 
	
	\begin{align*}
	    x=&-C_{1}(h)q_{1}\big[f_{1}(x,h)\big]^{q_{1}-1}-C_{2}(h)q_{2}\big[f_{1}(x,h)\big]^{q_{2}-1}+\frac{\lambda h}{r},\\
	    x=&-C_{3}(h)q_{1}\big[f_{2}(x,h)\big]^{q_{1}-1}-C_{4}(h)q_{2}\big[f_{2}(x,h)\big]^{q_{2}-1}-\frac{1}{r\beta_{1}}\ln[f_{2}(x,h)]-\frac{\gamma-r+k}{r^{2}\beta_{1}}+\frac{\alpha h}{r},\\
	    x=&-C_{5}(h)q_{1}\big[f_{3}(x,h)\big]^{q_{1}-1}-C_{6}(h)q_{2}\big[f_{3}(x,h)\big]^{q_{2}-1}-\frac{1}{r\beta_{2}}\ln[f_{3}(x,h)]-\frac{\gamma-r+k}{r^{2}\beta_{2}}+\frac{\alpha h}{r},\\
	    x=&-C_{7}(h)q_{1}\big[f_{4}(x,h)\big]^{q_{1}-1}-C_{8}(h)q_{2}\big[f_{4}(x,h)\big]^{q_{2}-1}+\frac{h}{r},
	    \end{align*}
	    and $W_{\rm low}(h),W_{\rm ref}(h),W_{\rm peak}(h)$ and $W_{\rm updt}(h)$ are given by
	    
	    \begin{align*}
	    W_{\rm low}(h)=&-C_{1}(h)q_{1}e^{-(\alpha-\lambda)(1-q_{1})\beta_{1}h}-C_{2}(h)q_{2}e^{(\alpha-\lambda)(q_{2}-1)\beta_{1}h}+\frac{\lambda h}{r},\\
	    W_{\rm ref}(h)=&-C_{3}(h)q_{1}-C_{4}(h)q_{2}-\frac{\gamma-r+k}{r^{2}\beta_{1}}+\frac{\alpha h}{r},
	    \end{align*}
	    \begin{align*}
	    W_{\rm peak}(h)=&-C_{5}(h)q_{1}e^{(1-\alpha)(1-q_{1})\beta_{2}h}-C_{6}(h)q_{2}e^{-(1-\alpha)(q_{2}-1)\beta_{2}h}-\frac{\gamma-r+k}{r^{2}\beta_{1}}+\frac{ h}{r},\\
	    W_{\rm updt}(h)=&-C_{7}(h)q_{1}(1-\alpha)^{q_{1}-1}e^{(1-\alpha)(1-q_{1})\beta_{2}h}-C_{8}(h)q_{2}(1-\alpha)^{q_{2}-1}e^{-(1-\alpha)(q_{2}-1)\beta_{2}h}+\frac{h}{\gamma}.
	\end{align*}
		Then SDE
		\begin{equation*}
		\left\{
		\begin{array}{l}
		dX_{t}=rX_{t}dt+\pi^{*}_{\rm primal}(X_{t},H^{*}_{t})(\mu-r)dt+\pi^{*}_{\rm primal}(X_{t},H^{*}_{t})\sigma dW_{t}-c^{*}_{\rm primal}(X_{t},H^{*}_{t})dt,\\
		X_{0}=x_{0}
		\end{array}
		\right.
		\end{equation*}
		with $H^{*}_{t}\triangleq h_{0}\vee \sup\limits_{s\le t}c^{*}_{\rm primal}(X_{s},H^{*}_{s})$ and $H^{*}_{0}=h_{0}$,  has a unique strong solution $\{X^{*}_{t},\ t\ge 0\}$. The optimal consumption and investment policy is
		
		\begin{equation*}
		\Big\{\big(c^{*}_{\rm primal}(X^{*}_{t},H^{*}_{t}),\pi^{*}_{\rm primal}(X^{*}_{t},H^{*}_{t})\big),\ t\ge 0\Big\}.
		\end{equation*}
\end{theorem}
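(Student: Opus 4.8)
The plan is to follow exactly the path used for the case $\varphi\equiv 0$, i.e.\ the combination of Theorem \ref{verificationth} (dual verification), Theorem \ref{vfunction}/Theorem \ref{optpolicy} (passage back to primal variables), Lemma \ref{vyypo} (invertibility of the dual transform) and Lemma \ref{lipsc}/Lemma \ref{fc1} (regularity of the feedback maps and well-posedness of the controlled SDE). Concretely, I would proceed in four stages: (i) record the explicit solution $\tilde V(y,h)$ of the dual HJB equation together with the coefficients $C_i(h)$ already displayed above in Case~1 and Case~2; (ii) prove the dual verification inequality $\tilde V(y,h_0)+yx_0\ge \mathbb{E}_{x_0,h_0}\int_0^\infty e^{-\gamma t}U(c_t,h_t)\,dt$ with equality at the stated optimizer; (iii) prove $\tilde V_{yy}>0$ on $\mathcal{C}_d$, which makes the primal feedback maps well defined, and obtain $V(x,h)=\tilde V(f(x,h),h)+xf(x,h)$ with $c^*_{\rm primal},\pi^*_{\rm primal}$ in the stated forms; (iv) show $c^*_{\rm primal}$ is locally Lipschitz and $\pi^*_{\rm primal}$ is Lipschitz, deduce existence and uniqueness of the strong solution of SDE (\ref{sde}), and close the loop by an It\^o argument showing that control attains $V(x_0,h_0)$.

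For stage (ii) I would mimic the proof of Theorem \ref{verificationth}: decompose $\mathbb{E}_{x_0,h_0}\int_0^\infty e^{-\gamma t}U(c_t,h_t)\,dt$ as $\mathbb{E}_{x_0,h_0}\int_0^\infty e^{-\gamma t}\big(U(c_t,h_t)-Y_t(y)c_t\big)\,dt + y\,\mathbb{E}_{x_0,h_0}\int_0^\infty c_tM_t\,dt$, bound the first integrand by the new conjugate $\tilde U\big(Y_t(y),H^\dagger_t(y)\big)$, and use the budget identity for the second term. The supporting facts — the pointwise inequality with equality at $c^*$ (analog of Lemma \ref{lemma1}), the identity $H^\dagger_t(y)=\hat H_t(y)$ (analog of Lemma \ref{lemma2}), strict monotonicity of $y\mapsto c^*\big(Y_t(y),\hat H_t(y)\big)$ with limits $\infty$ and $\lambda h_0$ (so that (\ref{ystar}) has a unique root), the transversality estimate $\lim_{T\to\infty}\mathbb{E}_{x_0,h_0}\big[e^{-\gamma T}\tilde V(Y_T(y),\hat H_T(y))\big]=0$ (analog of Lemma \ref{lemma3}), and the Feynman--Kac representation of $\tilde V$ (analog of Lemma \ref{lemma4}) — all transfer with the same stopping-time truncation. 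The only new ingredient here is that one invokes Assumption \ref{asvar}: under it, the Case~2 regime $1\ge(1-\alpha\varphi(h))e^{[(\alpha-\lambda)-(1-\lambda)\alpha\varphi(h)]\beta_1h}$ does not occur for large $h$, so for the $h\to\infty$ order estimates needed in the transversality/integrability bookkeeping one only needs the Case~1 bounds, which agree with those of Lemma \ref{hinfty} except that $C_4(h)=\mathcal{O}(1)$ — a strictly more favourable estimate that does no harm.

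Stage (iii) is where I expect the real obstacle to lie. On the two outer regions $\tilde V_{yy}>0$ is immediate from $C_1(h)>0,\ C_2(h)=0,\ C_7(h)>0,\ C_8(h)>0$. On each of the two middle regions, writing $\psi(y):=y\tilde V_{yy}(y,h)=a y^{q_1-1}+b y^{q_2-1}+(\text{const})$, which is monotone or single-humped, reduces positivity of $\tilde V_{yy}$ to a check at the two endpoints of the region, as in Lemma \ref{vyypo}. The difficulty is precisely in Case~2, where the $\beta_1$-middle region degenerates and the sign of the relevant constant is no longer transparent: there I would import the three-term decomposition used in Remark \ref{c8po} to prove $\frac{r}{k}C_6(h)y^{q_2-1}+\frac{1}{\gamma\beta_2}\frac{1-q_1}{q_2-q_1}\frac{1}{1-\alpha\varphi(h)}>0$ for $(1-\alpha\varphi(h))e^{-(1-\alpha)\beta_2h}\le y<1$, exploiting $\varphi(h)\le\varphi(\infty)<\frac{\alpha-\lambda}{\alpha(1-\lambda)}$ to keep $\tfrac{1}{1-\alpha\varphi(h)}$ bounded on the relevant range. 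This is the one place where positivity is not a routine endpoint computation and where Assumption \ref{asvar} is genuinely used.

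Finally, for stage (iv), differentiating the defining relations $x=-\tilde V_y(f_i(x,h),h)$ region by region gives the $C^1$ property of $f$, continuity across the boundaries $W_{\rm low}(h),W_{\rm ref}(h),W_{\rm peak}(h)$, and the explicit formulas for $f_x,f_h$, exactly as in Lemma \ref{fc1}; this yields the stated expressions for $V,\,c^*_{\rm primal},\,\pi^*_{\rm primal}$. Local Lipschitz continuity of $c^*_{\rm primal}$ and global Lipschitz continuity of $\pi^*_{\rm primal}$ then follow as in Lemma \ref{lipsc}, the extra input being that $C_i(h)$ and $C_i'(h)$, $1\le i\le 8$, are bounded on $[h_0,\infty)$; for the primed coefficients this uses boundedness of $\varphi$ and of $\varphi'$, the latter from the standing assumption $\varphi'(h)h+\varphi(h)\le 1$, which gives $0\le\varphi'(h)<\frac{1-\varphi(h)}{h}\le\frac{1}{h_0}\big(1-\tfrac{\alpha-\lambda}{\alpha(1-\lambda)}\big)$. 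The Lipschitz bounds give existence and uniqueness of the strong solution $\{X_t^*\}$ of SDE (\ref{sde}) by standard SDE theory, and a now-routine application of It\^o's rule to $e^{-\gamma t}V(X_t^*,H_t^*)+\int_0^te^{-\gamma s}U(c_s^*,h_s^*)\,ds$, combined with the verification inequality from stage (ii), shows that $\big(c^*_{\rm primal}(X_t^*,H_t^*),\pi^*_{\rm primal}(X_t^*,H_t^*)\big)$ attains $V(x_0,h_0)$, which completes the argument.
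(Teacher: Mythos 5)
Your four-stage skeleton (dual verification, invertibility of $-\tilde V_y(\cdot,h)$ via $\tilde V_{yy}>0$, Lipschitz regularity of the feedback maps, well-posedness of the closed-loop SDE) is indeed the architecture the paper uses for every version of this result, and it is the right architecture here. But you have written the proof for the wrong theorem. The statement you were asked to prove is the $r\neq\gamma$ generalization of Appendix~\ref{genpara} (note the $\tfrac{\lambda h}{r}$, $\tfrac{1}{r\beta_1}$ and $\tfrac{\gamma-r+k}{r^2\beta_i}$ terms and the modified exponents $q_{1,2}=\tfrac{k+r-\gamma\mp\sqrt{(k+r-\gamma)^2+4k\gamma}}{2k}$), whereas every ``new ingredient'' you identify --- Assumption~\ref{asvar}, the Case~1/Case~2 dichotomy $1\gtrless(1-\alpha\varphi(h))e^{[(\alpha-\lambda)-(1-\lambda)\alpha\varphi(h)]\beta_1h}$, the three-term decomposition of Remark~\ref{c8po}, the bound on $\varphi'$ --- belongs to the general-reference-point theorem of Appendix~\ref{genpre} and is irrelevant to the present statement, in which $\varphi$ does not appear at all.

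The consequence is that the one step the paper actually writes out anew for this theorem is missing from your argument. You correctly reduce positivity of $\tilde V_{yy}$ on the middle regions to endpoint checks of $\psi(y)=y\tilde V_{yy}(y,h)$, but for $r\neq\gamma$ those checks (and the fact that $C_4(h)<0$, which is what makes $\psi$ monotone or single-humped) are not routine: they rest on the identity
\begin{equation*}
\frac{1}{r}=\frac{q_{1}(q_{1}-1)}{q_{2}-q_{1}}\Big[\frac{q_{2}}{\gamma}-\frac{1}{r}+\frac{\gamma-2r+k}{r^{2}}(q_{2}-1)\Big]+\frac{q_{2}(q_{2}-1)}{q_{2}-q_{1}}\Big[-\frac{q_{1}}{\gamma}+\frac{1}{r}-\frac{\gamma-2r+k}{r^{2}}(q_{1}-1)\Big]
\end{equation*}
together with the sign facts $-\frac{q_{1}}{\gamma}+\frac{1}{r}-\frac{\gamma-2r+k}{r^{2}}(q_{1}-1)>0$ and $-\frac{q_{2}}{\gamma}+\frac{1}{r}-\frac{\gamma-2r+k}{r^{2}}(q_{2}-1)<0$, which in turn follow from showing $q_{1}<\big(\frac{\gamma-2r+k}{r^{2}}+\frac{1}{r}\big)\big/\big(\frac{\gamma-2r+k}{r^{2}}+\frac{1}{\gamma}\big)<q_{2}$ by evaluating the characteristic polynomial $kq^2-(k+r-\gamma)q-\gamma$ at that point. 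None of this appears in your proposal, and without it the positivity of $\tilde V_{yy}$ --- hence the very definition of $f_i(x,h)$ and of the primal feedback maps --- is unproven. The remaining stages (transversality, order estimates of $C_i(h)$, Lipschitz bounds, the closing It\^o argument) do transfer essentially verbatim from the $r=\gamma$ case, as you and the paper both note, since the asymptotics of the $C_i(h)$ are unchanged; so once you replace your Appendix~\ref{genpre} material with the computation above, the argument closes.
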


Here we just give the proof of Lemma \ref{vyypo} with $r\neq\gamma$. The proofs of other results with $r\neq\gamma$ are very similar to those with $r=\gamma$.

\begin{proof}[Proof of Lemma \ref{vyypo} with $r\neq\gamma$]

If $e^{(\alpha-\lambda)\beta_{1}h}\le y$, then

	    \begin{equation*}
	        \tilde{V}_{yy}(y,h)=C_{1}(h)q_{1}(q_{1}-1)y^{q_{1}-2}+C_{2}(h)q_{2}(q_{2}-1)y^{q_{2}-2}.
	    \end{equation*}
As $C_{1}(h)>0$ and $C_{2}(h)=0$, we have $\tilde{V}_{yy}(y,h)>0$ for $e^{(\alpha-\lambda)\beta_{1}h}\le y$.
	
If $1\le y<e^{(\alpha-\lambda)\beta_{1}h}$, then

	    \begin{equation*}
	        y\tilde{V}_{yy}(y,h)=C_{3}(h)q_{1}(q_{1}-1)y^{q_{1}-1}+C_{4}(h)q_{2}(q_{2}-1)y^{q_{2}-1}+\frac{1}{r\beta_{1}}.
	    \end{equation*}
	    Let $\psi(y)=y\tilde{V}_{yy}(y,h)$, then $$\psi'(y)=C_{3}(h)q_{1}(q_{1}-1)^{2}y^{q_{1}-2}+C_{4}(h)q_{2}(q_{2}-1)^{2}y^{q_{2}-2}.$$ Noting that $C_{4}(h)<0$,  $\psi(y)$ is either increasing, decreasing or first increasing then decreasing, we only need to show $\psi(1)>0$ and $\psi(e^{(\alpha-\lambda)\beta_{1}h})>0$. Precisely, using $C_{7}(h)>0$ and the fact that
	    
	    \begin{equation*}
     \frac{1}{r}=\frac{q_{1}(q_{1}-1)}{q_{2}-q_{1}}\Big[\frac{q_{2}}{\gamma}-\frac{1}{r}+\frac{\gamma-2r+k}{r^{2}}(q_{2}-1)\Big]+\frac{q_{2}(q_{2}-1)}{q_{2}-q_{1}}\Big[-\frac{q_{1}}{\gamma}+\frac{1}{r}-\frac{\gamma-2r+k}{r^{2}}(q_{1}-1)\Big],
 \end{equation*}
	     we have
	     
\begin{align*}
	        \psi(1)=&C_{3}(h)q_{1}(q_{1}-1)+C_{4}(h)q_2(q_{2}-1)+\frac{1}{r\beta_{1}}\\
	        =&q_{1}(q_{1}-1)C_{7}(h)+\frac{q_{1}(q_{1}-1)}{(q_{2}-q_{1})\beta_{2}}\Big[\frac{q_{2}}{\gamma}-\frac{1}{r}+\frac{\gamma-2r+k}{r^{2}}(q_{2}-1)\Big]\left[1-e^{-(1-\alpha)(1-q_{1})\beta_{2}h}\right]\\
&+\frac{q_{2}(q_{2}-1)}{(q_{2}-q_{1})\beta_{1}}\Big[-\frac{q_{1}}{\gamma}+\frac{1}{r}-\frac{\gamma-2r+k}{r^{2}}(q_{1}-1)\Big]\left[1-e^{-(\alpha-\lambda)(q_{2}-1)\beta_{1}h}\right ]\\
	        >&0
\end{align*}
and

\begin{align*}	        \psi(e^{(\alpha-\lambda)\beta_{1}h})=&C_{3}(h)q_{1}(q_{1}-1)e^{(q_{1}-1)
(\alpha-\lambda)\beta_{1}h}+C_{4}(h)q_{2}(q_{2}-1)e^{(q_{2}-1)(\alpha-\lambda)
\beta_{1}h}+\frac{1}{r\beta_{1}}\\
=&q_{1}(q_{1}-1)C_{7}(h)e^{(q_{1}-1)(\alpha-\lambda)\beta_{1}h}\\&+\frac{q_{1}(q_{1}-1)}{(q_{2}-q_{1})\beta_{2}}\Big[\frac{q_{2}}{\gamma}-\frac{1}{r}+\frac{\gamma-2r+k}{r^{2}}(q_{2}-1)\Big]\left[1-e^{-(1-\alpha)(1-q_{1})\beta_{2}h}\right]e^{(q_{1}-1)
(\alpha-\lambda)\beta_{1}h}
\\
&+\frac{q_{1}(q_{1}-1)}{(q_{2}-q_{1})\beta_{2}}\Big[\frac{q_{2}}{\gamma}-\frac{1}{r}+\frac{\gamma-2r+k}{r^{2}}(q_{2}-1)\Big]
\left[1-e^{-(\alpha-\lambda)(1-q_{1})\beta_{1}h}\right]\\
>&0.
\end{align*}
Thus $\tilde{V}_{yy}(y,h)>0$ for $1\le y<e^{(\alpha-\lambda)\beta_{1}h}$.
\vskip 5pt	
If $e^{-(1-\alpha)\beta_{2}h}\le y<1$, then

\begin{align*}
	        y\tilde{V}_{yy}(y,h)=&C_{5}(h)q_{1}(q_{1}-1)y^{q_{1}-1}+C_{6}(h)q_{2}(q_{2}-1)y^{q_{2}-1}+\frac{1}{r\beta_{2}}\\
	       =&q_{1}(q_{1}-1)\left\{C_{5}(h)y^{q_{1}-1}+\frac{1}{(q_{2}-q_{1})\beta_{2}}\Big[\frac{q_{2}}{\gamma}-\frac{1}{r}+\frac{\gamma-2r+k}{r^{2}}(q_{2}-1)\Big]\right\}
\\&+q_{2}(q_{2}-1)\left\{C_{6}(h)y^{q_{2}-1}+\frac{1}{(q_{2}-q_{1})\beta_{2}}\Big[-\frac{q_{1}}{\gamma}+\frac{1}{r}-\frac{\gamma-2r+k}{r^{2}}(q_{1}-1)\Big]\right\}.
\end{align*}
For any fixed $h>0$, if $C_{5}(h)\ge 0$, then $C_{5}(h)y^{q_{1}-1}+\frac{1}{(q_{2}-q_{1})\beta_{2}}\Big[\frac{q_{2}}{\gamma}-\frac{1}{r}+\frac{\gamma-2r+k}{r^{2}}(q_{2}-1)\Big]>0$; If $C_{5}(h)<0$, then $C_{5}(h)y^{q_{1}-1}+\frac{1}{(q_{2}-q_{1})\beta_{2}}\Big[\frac{q_{2}}{\gamma}-\frac{1}{r}+\frac{\gamma-2r+k}{r^{2}}(q_{2}-1)\Big]$ is increasing in $y$ and

\begin{align*}
	        &C_{5}(h)y^{q_{1}-1}+\frac{1}{(q_{2}-q_{1})\beta_{2}}\Big[\frac{q_{2}}{\gamma}-\frac{1}{r}+\frac{\gamma-2r+k}{r^{2}}(q_{2}-1)\Big]\\\ge& C_{5}(h)e^{(1-\alpha)(1-q_{1})\beta_{2}h}+\frac{1}{(q_{2}-q_{1})\beta_{2}}\Big[\frac{q_{2}}{\gamma}-\frac{1}{r}+\frac{\gamma-2r+k}{r^{2}}(q_{2}-1)\Big]\\
	        =&C_{7}(h)e^{(1-\alpha)(1-q_{1})\beta_{2}h}\\
	        >&0.
	    \end{align*}
Similarly,  we have

	    \begin{equation*}
	        C_{6}(h)y^{q_{2}-1}+\frac{1}{(q_{2}-q_{1})\beta_{2}}\Big[-\frac{q_{1}}{\gamma}+\frac{1}{r}-\frac{\gamma-2r+k}{r^{2}}(q_{1}-1)\Big]>0.
	    \end{equation*}
It follows that $\tilde{V}_{yy}(y,h)>0$ for $e^{-(1-\alpha)\beta_{2}h}\le y<1$.
\vskip 5pt
Finally, If $(1-\alpha)e^{-(1-\alpha)\beta_{2}h}\le y<e^{-(1-\alpha)\beta_{2}h}$, then

	    \begin{equation*}
	        \tilde{V}_{yy}(y,h)=C_{7}(h)q_{1}(q_{1}-1)y^{q_{1}-2}+C_{8}(h)q_{2}(q_{2}-1)y^{q_{2}-2}.
	    \end{equation*}
As $C_{7}(h)>0$ and $C_{8}(h)>0$, we have  $\tilde{V}_{yy}(y,h)>0$ for $(1-\alpha)e^{-(1-\alpha)\beta_{2}h}\le y<e^{-(1-\alpha)\beta_{2}h}$. Thus, the proof is complete.
\end{proof}

\bibliographystyle{plainnat}
\bibliography{reference.bib}

 \end{document}